\newcommand*\linenomathpatch[1]{%
  \cspreto{#1}{\linenomath}%
  \cspreto{#1*}{\linenomath}%
  \csappto{end#1}{\endlinenomath}%
  \csappto{end#1*}{\endlinenomath}%
}
\theoremstyle{plain}
\newtheorem{theorem}{Theorem}[section]
\newtheorem{lemma}[theorem]{Lemma}
\newtheorem{corollary}[theorem]{Corollary}
\newtheorem{claim}[theorem]{Claim}
\newtheorem{observation}[theorem]{Observation}
\theoremstyle{definition}
\newtheorem{definition}[theorem]{Definition}
\newtheorem{remark}[theorem]{Remark}
\setlist[enumerate]{nosep, topsep=1ex}
\setlist[itemize]{nosep, topsep=1ex}
\setlist[description]{nosep}
\def\ShowAuthNotes{1}
\newcommand{\authnote}[2]{\ \\ \textcolor{red}{\parbox{0.9\linewidth}{[{\footnotesize {\bf #1:} { {#2}}}]}}\newline}
\newcommand{\authnote}[2]{}
\newcommand{\eps}{\varepsilon}
\renewcommand{\Pr}{\operatorname*{\mathbf{Pr}}}
\newcommand{\Ex}{\operatorname*{\mathbf{E}}}
\newcommand{\Var}{\operatorname*{\mathbf{Var}}}
\newcommand{\poly}{\operatorname{\mathrm{poly}}}
\newcommand{\supp}{\operatorname{\mathrm{supp}}}
\newcommand{\polylog}{\poly\log}
\newcommand{\polyloglog}{\poly\log\log}
\newcommand{\F}{\mathbb{F}}
\newcommand{\R}{\mathbb{R}}
\newcommand{\N}{\mathbb{N}}
\newcommand{\Z}{\mathbb{Z}}
\newcommand{\C}{\mathbb{C}}
\renewcommand{\tilde}{\widetilde}
\newcommand{\dd}{\mathinner{.\,.\allowbreak}}
\newcommand{\caP}{\mathcal{P}}
\newcommand{\caR}{\mathcal{R}}
\newcommand{\caX}{\mathcal{X}}
\def\tO{\widetilde{O}}
\title{Shaving Logs via Large Sieve Inequality: \\Faster Algorithms for Sparse Convolution and More\thanks{Part of this work was done while the authors were visiting Simons Institute for the Theory of Computing.}}
\author{
Ce Jin\thanks{cejin@mit.edu. Supported by  NSF grants CCF-2129139
and CCF-2127597.}\\MIT \and Yinzhan Xu\thanks{xyzhan@mit.edu. Partially supported by NSF grants CCF-2330048 and BSF Grant 2020356.}\\ MIT
}
\date{\vspace{-1cm}}
\begin{document}

\maketitle
\thispagestyle{empty}
\begin{abstract}
In sparse convolution-type problems, a common technique is to hash the input integers modulo a random prime $p\in [Q/2,Q]$ for some parameter $Q$, which reduces the range of the input integers while preserving their additive structure. However, this hash family suffers from two drawbacks, which led to bottlenecks in many state-of-the-art algorithms: (1) The collision probability of two elements from $[N]$ is $O(\frac{\log N}{Q})$ rather than $O(\frac{1}{Q})$; (2) It is difficult to derandomize the choice of $p$; known derandomization techniques lead to super-logarithmic overhead [Chan, Lewenstein STOC'15]. 
	
In this paper, we partially overcome these drawbacks in certain scenarios, via novel applications of the \emph{large sieve inequality} from analytic number theory. Consequently, we obtain the following improved algorithms for various problems (in the standard word RAM model):
\begin{itemize}
    \item \textbf{Sparse Nonnegative Convolution:} We obtain an $O(t\log t)$-time Las Vegas algorithm that computes the convolution $A\star B$ of two nonnegative integer vectors $A,B$, where $t$ is the output sparsity $\|A\star B\|_0$.
    Moreover, our algorithm terminates in $O(t\log t)$ time with $1-1/\mathrm{poly}(t)$ probability. This simultaneously improves  the $O(t\log t \log \log t)$-time Las Vegas algorithm [Bringmann, Fischer, Nakos SODA'22] and the Monte Carlo $O(t\log t)$-time algorithm with failure probability $2^{-\sqrt{\log t}}$ [Bringmann, Fischer, Nakos STOC'21].
    \item \textbf{Text-to-Pattern Hamming Distances:} Given a length-$m$ pattern $P$ and a length-$n$ text $T$, we obtain a deterministic $O(n\sqrt{m\log \log m})$-time algorithm that exactly computes the Hamming distance between $P$ and every length-$m$ substring of $T$. This improves the previous $O(n\sqrt{m}(\log m\log \log m)^{1/4})$-time deterministic algorithm [Chan, Jin, Vassilevska Williams, Xu FOCS'23] and nearly matches their $O(n\sqrt{m})$-time Las Vegas algorithm. 
    \item \textbf{Sparse General Convolution:}  For sparse convolution with possibly negative input, all previous approaches required $\Omega(t\log ^2 t)$ time, where $t$ is the maximum of input and output sparsity, and an important question left open by [Bringmann, Fischer, Nakos STOC'21] is whether this can be improved. We make partial progress towards solving this question by giving a Monte Carlo $O(t\log t)$ time algorithm in the restricted case where the length $N$ of the input vectors satisfies $N\le t^{1.99}$. \end{itemize}
\end{abstract}

\newpage
\setcounter{page}{0}
\thispagestyle{empty}
\tableofcontents

\newpage

\section{Introduction}
\label{sec:intro}

In this paper we study various problems related to sparse pattern matching, such as Sparse Convolution, Text-to-Pattern Hamming Distances, and the Constellation problem (their definitions will be given later in this introduction). 

Many of these problems take sparse integer sets as input, and it is often useful to first reduce them to dense instances which are easier to handle. 
In our cases, a common tool to implement this sparse-to-dense reduction is the simple mod-prime hash. That is, we pick a random prime $p$ from $[Q / 2, Q]$ for some parameter $Q$, and map each input integer $x$ (coming from a large universe $[N]$\footnote{We use $[N]$ to denote $\{0, 1, \ldots, N - 1\}$. })
to $h(x) := x \bmod{p}$. One major benefit of this hash function is that it is perfectly linear, i.e., $h(x)+h(y)\equiv h(x+y)\pmod{p}$ for any $x, y$. 
This feature makes it useful in convolution and pattern matching scenarios, where we want to compress the input universe while preserving the additive structure.
However, this mod-prime hash family has two main drawbacks:
\begin{itemize}
  \item 
  For any $x \ne y \in [N]$, the probability that $h(x) = h(y)$ is $O(\frac{\log N}{Q})$.\footnote{$h(x) = h(y)$ if and only if $p$ divides $x - y$. The number of prime divisors of $x - y$ from $[Q / 2, Q]$ is $O(\log N / \log Q)$, so a random prime from $[Q / 2, Q]$ is among them with probability $O(\frac{\log N / \log Q}{\#\text{primes in } [Q / 2, Q]}) = O(\frac{\log N / \log Q}{Q / \log Q}) = O((\log N) /  Q)$ by the prime number theorem. } In contrast, the optimal collision probability is $\frac{1}{Q}$ by a uniformly random hash function.
  This means that the hash family may have bad load-balancing that leads to logarithmic factor loss in the running times of the algorithms.
  
    An alternative is to use almost linear hash functions (such as Dietzfelbinger's hash function~\cite{Dietzfelbinger96} or the classic textbook hash function $h(x) = ((ax + b) \bmod{p}) \bmod{m}$), which have optimal (up to constant factor) collision probability. 
    However, these hash families do not \emph{perfectly} achieve the linearity condition, but rather achieve it with small errors. 
    These hard-to-predict errors often pose serious challenges in algorithm design (e.g., \cite{BringmannFN21,BringmannFN22}).
    \item This hash family is not easily applicable to deterministic algorithms, since it is hard to derandomize the choice of the prime $p$. 
The naive way that tries all primes in the given range is too slow. A smarter way of derandomization used in \cite{ChanL15,ChanHe,focs23,fischer3sum}
is to hash modulo $q=p_1\cdots p_k$, where each factor $p_i$ is small and is deterministically chosen by brute force. However, this multi-level approach leads to super-logarithmic overhead \cite{ChanL15}. 
(Some recent deterministic algorithms~\cite{BringmannFN22,nickconstellation} removed this super-logarithmic overhead by using algebraic techniques instead of hashing, but these algebraic techniques were tailored towards the specific applications, and it is not clear whether they are as widely applicable as hashing-based techniques.)
\end{itemize}

In this paper, we partially overcome these two drawbacks in certain scenarios. 
Our main tool is the large sieve inequality (and its variants)  from analytic number theory (described in more detail in \cref{subsec:analytic-number-theory}).
We use this tool to analyze the mod-prime hash function and obtain better bounds. This leads to improved hashing-based algorithms for many sparse convolution and pattern matching problems. 
To the best of our knowledge, we are the first to apply this number-theoretic tool in the context of algorithm design. %

Before discussing this number-theoretic tool, we first introduce the problems we consider in this paper, known results, and our improved results. 
We use the standard word RAM model of computation throughout this paper (see formal definitions in \cref{subsec:model}). 

\paragraph*{Sparse Convolution}
Computing the convolution $A\star B$ of two integer vectors $A,B$, defined as $(A\star B)_k = \sum_{i+j=k}A_i\cdot B_j$, is a fundamental task in computer science. It is useful in various fields of engineering: signal processing, computer vision, symbolic computation, discrete algorithm design, computational complexity theory, and so on.   
In most scenarios of discrete algorithm design, we care about \emph{nonnegative} convolution where $A,B$ are nonnegative vectors (sometimes even Boolean convolution suffices). Applications include $k$SUM \cite{ChanL15}, Subset Sum \cite{Bringmann17, KoiliarisX19, BringmannN20, icalpBringmannN21}, Knapsack \cite{bc23} and pattern matching problems \cite{fispat, Indyk98a}. %

It is well-known that convolution of two length-$N$ vectors can be computed in deterministic $O(N\log N)$ time using Fast Fourier Transform (FFT). This running time is widely conjectured to be optimal, and there is some partial evidence in favor of the conjecture \cite{Ailon13, AfshaniFKL19}.

In many algorithmic applications, the input and output vectors are often sparse (e.g., \cite{CardozeS98, ColeH02, AmirKP07, ChanL15, BringmannN20,BringmannN21, JinX23,AbboudBF23,ChanWX23,bc23,clmz24}), so it is important to have algorithms that can exploit the  sparsity of the input and output. Motivated by this, there has been a long line of work studying the Sparse Convolution problem \cite{Muthukrishnan95, ColeH02, roche2008adaptive,  MonaganP09, HoevenL12, ChanL15, ArnoldR15, Roche18, Nakos20, GiorgiGC20mult, BringmannFN21, BringmannFN22}. A major open question is whether we can solve Sparse Convolution as fast as dense convolution \cite{BringmannFN21}.

One interesting case of Sparse Convolution which is particularly useful for algorithmic applications is when the input integers are all nonnegative. This is known as the Sparse Nonnegative Convolution problem. More formally, let $A,B\in \N^{N}$ be two input nonnegative integer vectors.
We denote the sparsity by $t = \max\{\|A\|_0, \|B\|_0, \|A\star B\|_0\}$. When $A, B$ are nonnegative vectors, $t = \|A\star B\|_0$. The Sparse Nonnegative Convolution problem asks to compute $A \star B$, and preferably the running time should mainly depend on the sparsity $t$ instead of the length $N$. 

Previously, the fastest algorithm for Sparse Nonnegative Convolution was given by Bringmann, Fischer and Nakos \cite{BringmannFN21}. Their algorithm  runs in $O(t \log t+ \polylog(N\Delta))$ time where $\Delta = \max\{\|A\|_\infty, \| B\|_\infty\}$ denotes the maximum entry size, is Monte Carlo randomized, and has error probability $2^{-\sqrt{\log t}}$. When $N\Delta$ is reasonably small, their running time is $O(t \log t)$, matching the running time of dense convolution. 

However, compared to dense convolution (which has a deterministic algorithm), one weakness of~\cite{BringmannFN21}'s result is that the success probability is only $1 - 2^{-\sqrt{\log t}}$. In algorithm design, it is often desirable to achieve $1 - 1/\poly(t)$ error probability, which is typically called ``with high probability'' in algorithmic literature. 

Resolving this issue was explicitly asked as the first open question in \cite{BringmannFN21}:
\begin{center}
\cite{BringmannFN21}: \emph{Can Sparse Nonnegative Convolution be solved by a randomized algorithm in $O(t\log t)$ time with $1-1/\poly(t)$ success probability?}
\end{center}
In a follow-up work \cite{BringmannFN22}, the authors mentioned\footnote{See the last paragraph on page 2 of \cite{BringmannFN22}.} this as a motivation to study Las Vegas and deterministic algorithms for Sparse Nonnegative Convolution.
However, \cite{BringmannFN22} only obtained an $O(t\log t\log \log t)$ time Las Vegas algorithm, and a much slower $O(t\log^5(N\Delta)\poly\log \log (N\Delta))$ time deterministic algorithm. \cite{BringmannFN22} raised the following open question:
\begin{center}
\cite{BringmannFN22}: \emph{Can Sparse Nonnegative  Convolution be solved by a Las Vegas randomized algorithm in $O(t\log t)$ expected time?}
\end{center}

In this paper, we resolve both open questions in the affirmative. In fact, we get the best of both worlds: We obtain a Las Vegas algorithm that achieves $O(t\log t)$ running time with high probability.

\begin{restatable}[Las Vegas Sparse Nonnegative Convolution w.h.p.]{theorem}{lasvegasmain}
\label{thm:lasvegas-main}
Given two vectors $A,B\in \N^N$, one can compute $A\star B$ by a Las Vegas algorithm that terminates in $O(t\log t)$ time with at least $1-1/t$ probability, where $t=\|A\star B\|_0$.
\end{restatable}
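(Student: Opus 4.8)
The plan is a \emph{hash-and-recover} scheme: repeatedly fold the (implicit) residual vector modulo a prime $p=\Theta(s)$, where $s$ upper-bounds the current support size, use FFT to extract all output monomials that land alone in their bucket, subtract them off, and recurse on the monomials that collided. The new ingredient, and the hard part, is showing this makes geometric progress quickly enough; that is where the large sieve inequality comes in.

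\textbf{The recovery loop and its Las Vegas certification.} We do not know $t=\|A\star B\|_0$ in advance, so we run the algorithm with a guessed bound $t\in\{2,4,8,\dots\}$ and verify at the end using $\|A\star B\|_1=\|A\|_1\|B\|_1$ (which holds since $A,B\ge 0$), doubling the guess on failure; the final run dominates. By a standard preprocessing step we may assume $N,\Delta\le t^{O(1)}$ (reducing the universe by recursive mod-prime hashing and CRT reconstruction, possibly at an additive $\polylog(N\Delta)$ cost), so all arithmetic is $O(1)$ time. In round $r$ we hold an implicit residual $R=A\star B-(\text{monomials identified so far})$, which stays coordinatewise nonnegative with $\|R\|_0\le s_r$; we pick a prime $p_r=\Theta(s_r)$ and compute in $O(s_r\log s_r)$ time, via $O(1)$ length-$p_r$ FFTs, the folded sums $\sum_{k\equiv a}R_k,\ \sum_{k\equiv a}kR_k,\ \sum_{k\equiv a}k^2R_k$ over each bucket $a\in[p_r]$ (each obtained from folds of $A$, $B$ and of the list of identified monomials, an extra $O(t)$ per round and $O(t\log t)$ overall). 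By nonnegativity a bucket is empty iff its first sum is $0$, and a nonempty bucket equals a single monomial $c\,x^k$ iff $(\sum k^2R_k)(\sum R_k)=(\sum kR_k)^2$ (Cauchy--Schwarz is strict as soon as two distinct exponents collide), whence $k=(\sum kR_k)/(\sum R_k)$ and $c=\sum R_k$. So every singleton bucket is extracted with certainty; the residual support then becomes exactly the set $B_{p_r}$ of collided elements, $s_{r+1}:=|B_{p_r}|$ is read off from the number of extractions, and we recurse, brute-forcing once $s_r$ drops to $\polylog(t)$. The whole procedure is deterministic and self-certifying, hence Las Vegas.

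\textbf{The crux: geometric progress via the large sieve.} The runtime is $O(t\log t)+\sum_r O(s_r\log s_r)$, which is $O(t\log t)$ provided $s_{r+1}\le s_r/2$ in each round, with failure probability small enough that --- allowing a fresh retry whenever a round fails to halve the support --- the whole computation runs in $O(t\log t)$ time with probability $1-1/\poly(t)$. The obstacle is exactly the $O(\log N/Q)$-collision defect of the mod-prime hash: the bound $\Pr_p[x\equiv y\ (\mathrm{mod}\ p)]=O(\tfrac{\log N}{p})$ gives only $\Ex_p[\,|B_{p_r}|\,]=O(s_r\log N)$, which is vacuous. The remedy is to analyze the mod-$p$ hash with the large sieve inequality: with $a_n=\mathds 1[n\in\supp(R)]$ and $p$ ranging over a suitable interval around $\Theta(s_r)$, it controls $\sum_p\sum_{b\not\equiv 0\,(p)}\bigl|\sum_n a_n e(bn/p)\bigr|^2$, hence the aggregate number of mod-$p$ collisions of $\supp(R)$, tightly enough that a random modulus from that interval isolates all but a small constant fraction of $\supp(R)$ with the probability we need. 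I expect this translation --- from the analytic large-sieve estimate to the precise ``most elements get isolated, with high enough probability'' statement, and in particular taming the worst-case supports (those concentrated in a few residue classes modulo some prime near $\Theta(s_r)$) for which the per-pair bound really does fail --- to be the main technical difficulty. The surrounding pieces (FFT extraction, nonnegativity-based certification, geometric accounting over rounds and retries, and the universe-reduction / exponential-guessing wrappers) are routine.
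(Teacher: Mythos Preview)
Your plan has the right high-level shape, but there is a genuine gap at the ``crux'' step, and a second gap for the high-probability claim.

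\textbf{Singleton extraction does not make constant-fraction progress.} Wolke's large sieve (\cref{thm:largesievewolke}, \cref{lem:largesievenumbercollision}) gives, for a random prime $p\in[Q/2,Q]$ and $S\subseteq[N]$ with $N\le Q^{2-\eps}$,
\[
\Ex_p\Big[\sum_{k}|B_k|^2\Big]=\Ex_p\Big[\sum_{a,b\in S}\mathbf{1}[p\mid a-b]\Big]\le |S|\cdot O\big(\tfrac{|S|}{Q}+\log\log N\big).
\]
With your choice $Q=\Theta(s_r)$ and $|S|=s_r$ this says only that $\sum_k|B_k|^2=O(s_r\log\log N)$, i.e.\ typical buckets have size $\Theta(\log\log N)$. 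It does \emph{not} force a constant fraction of elements into singletons: if every nonempty bucket has size exactly $\Theta(\log\log N)$ the bound is met and there are zero singletons. The additive $\log\log N$ term is a floor that persists for every $Q$, so enlarging the prime does not help (and the conjectured removal of this term is open). The paper instead takes $p=\Theta(s_r/\log\log s_r)$ and uses Prony's method together with the moment-matrix sparsity test (\cref{lem:momentmatrix}, \cref{lem:unique-moment}) to certifiably recover \emph{every} bucket of size $\le c\log\log s_r$; your Cauchy--Schwarz test is exactly the $s=1$ case of that test. Then \cref{cor:largesievehash} shows $\ge 0.9$ of the mass lives in such light buckets with probability $\ge 0.9$, and the $2s=O(\log\log s_r)$ FFTs at length $p$ still cost $O(s_r\log s_r)$.

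\textbf{The large sieve only gives constant probability per prime.} Even with the corrected light-bucket recovery, the large sieve yields an expectation bound and hence, via Markov, only a constant-probability guarantee that a sampled prime is good. Your retry strategy then gives $O(t\log t)$ \emph{expected} time, but to terminate in $O(t\log t)$ time with probability $1-1/t$ you must select a good prime from among $\Theta(\log t/\log\log t)$ candidates without paying the full FFT-and-recovery cost on each (folding the already-recovered list alone is $O(t)$ per candidate per round, which is already too much). This is the entire content of \cref{sec:highprob}: the paper builds an $O(t\log\log t)$-time tester for prime quality (\cref{cor:est}, \cref{lem:findpwhp}), which in turn rests on a separate $O(t\log\log t)$-time sparse convolution over tiny prime fields $\F_q$, $q\le\polylog t$ (\cref{lem:modq}), implemented with almost-linear hashing and heavy bit-packing. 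None of this is routine, and without it the retry overhead blows the running time up by a $\log t/\log\log t$ factor.
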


\cref{thm:lasvegas-main} also affirmatively resolves a secondary open question asked in \cite{BringmannFN21} on whether the $\polylog(N\Delta)$ term in their $O(t\log t+ \polylog(N\Delta))$-time Monte Carlo algorithm can be removed.

As a straightforward application, we improve the best known Las Vegas algorithm for the Constellation problem, which was recently considered by Fischer \cite{nickconstellation} as a representative problem for a rich class of sparse pattern matching problems (which have applications in computational geometry).
In this problem, we are given two integer sets $A,B\subseteq [N]$ of size $|A|,|B|\le n$, and we need to output all shifts $s$ such that $A+s\subseteq B$. The best Monte Carlo algorithm for this problem runs in $O(n \log n)$ time \cite{CardozeS98}, and the best Las Vegas algorithm runs in $O(n \log^2 N)$ time \cite{ColeH02}. Fischer~\cite{nickconstellation} gave an $O(n\polylog N)$ time deterministic algorithm, improving an $n \cdot 2^{O(\sqrt{\log n \log \log N})}$ time algorithm by Chan and Lewenstein \cite{ChanL15}. 

Our algorithm is a straightforward combination of the Monte Carlo algorithm by Cardoze and Schulman \cite{CardozeS98} and our Las Vegas Sparse Nonnegative Convolution algorithm.\footnote{In fact, our method shows that the previous $O(t \log t \log \log t)$ time Las Vegas   Sparse Nonnegative Convolution algorithm \cite{BringmannFN22} would already imply an $O(n \log n \log \log n)$ time Las Vegas algorithm for the Constellation problem, which is faster than the previously known $O(n \log^2 N)$ time Las Vegas algorithm \cite{ColeH02}.}

\begin{restatable}[Constellation]{theorem}{Constellation}
\label{thm:constellation-las-vegas}
  Given two integer sets $A,B\subseteq [N]$ of size $|A|,|B|\le n$, one can output all shifts $s$ such that $A+s\subseteq B$ by a Las Vegas algorithm that terminates in $O(n\log n)$ time with $1-1/n$ probability.
\end{restatable}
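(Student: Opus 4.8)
The plan is to take Cardoze and Schulman's Monte Carlo algorithm for Constellation~\cite{CardozeS98} and replace its internal convolution calls by the Las Vegas algorithm of \cref{thm:lasvegas-main}, then add a cheap verification wrapper to make the whole thing Las Vegas. Recall the structure of their reduction (assume $A\ne\emptyset$; otherwise every shift is valid, a degenerate case). Fix any $a_0\in A$: every valid shift $s$ satisfies $a_0+s\in B$, so the set $V=\{s:A+s\subseteq B\}$ is contained in $B-a_0$ and thus $|V|\le n$. Their algorithm maintains a candidate set $S$ that is always a subset of $B-a_0$ and always a superset of $V$, and it discards a candidate $s$ only after discovering a genuine witness $a\in A$ with $a+s\notin B$; the discarding is done in bulk using convolutions on random subsamples of $A$, so that with high probability over the subsampling every invalid candidate is eventually discarded and $S=V$. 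Each such convolution is a Sparse Nonnegative Convolution instance of input and output sparsity $\tilde O(n)$, and the surrounding bookkeeping (building candidate sets, intersecting with $B-a_0$ via hash tables, sampling, etc.) costs $O(n\log n)$ in total. Substituting \cref{thm:lasvegas-main} for these convolutions makes them exact, so both structural guarantees — no valid shift is ever discarded, and a candidate is discarded only with a witness — are preserved; the only remaining randomness now affects the running time.

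To obtain a genuine Las Vegas algorithm, I would run the above procedure and then verify its output $S$ as follows. Since every $s\in S$ is valid iff the Minkowski sumset $A+S=\{a+s:a\in A,s\in S\}$ satisfies $A+S\subseteq B$, compute $A+S=\supp(\mathbf 1_A\star\mathbf 1_S)$ by one more call to \cref{thm:lasvegas-main}, \emph{aborting} that call as soon as its output sparsity exceeds $2n$; if it does not abort, test $A+S\subseteq B$ using a hash table on $B$ in $O(n)$ time. If all shifts in $S$ are valid, then $A+S\subseteq B$, hence $|A+S|\le|B|\le n$, so the call does not abort and the membership test passes; combined with $S\supseteq V$ this gives $S=V$, and we output $S$. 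If instead some $s^\ast\in S$ is invalid, there is $a\in A$ with $a+s^\ast\in(A+S)\setminus B$, so the verification necessarily fails (either the abortion triggers, or the membership test detects $A+S\not\subseteq B$); in that case we restart the entire procedure with fresh randomness. Because an output is produced only after a passing verification and valid shifts are never discarded, the algorithm always returns exactly $V$.

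For the running time: in one round, the internal convolution calls and the verification call each terminate within their $O(\cdot\log\cdot)$ bounds with probability $1-1/\poly(n)$; since $S\subseteq B-a_0$ always holds we have $|S|\le n$, so even on a bad subsample the verification call is cut off at sparsity $2n$ and runs in $O(n\log n)$; and the subsampling eliminates all invalid candidates with probability $1-1/\poly(n)$. Hence a single round runs in $O(n\log n)$ time and succeeds (passes verification, no restart) with probability at least $1-1/n$ for suitable constants, so the algorithm terminates in $O(n\log n)$ time with probability at least $1-1/n$; in the complementary event it restarts and still returns the correct answer, so it is Las Vegas. The main obstacle, beyond invoking \cref{thm:lasvegas-main} as a black box, is exactly this Las Vegas wrapper: checking that Cardoze–Schulman's discarding step indeed never produces false negatives (so the only errors are false positives, which the $A+S\subseteq B$ test provably catches), and arguing that the verification step stays within the $O(n\log n)$ budget on every input via early abortion once the sparsity target is exceeded.
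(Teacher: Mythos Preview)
Your approach is essentially the paper's: run Cardoze--Schulman to get a candidate set $S$ that always contains all valid shifts, then certify $S$ by computing $A+S$ via \cref{thm:lasvegas-main} (with a cutoff) and testing $A+S\subseteq B$, restarting on failure. The paper simplifies your first step by using Cardoze--Schulman as a black box via its one-sided guarantee (\cref{thm:constellation-monte-carlo}: $S$ \emph{always} contains all valid shifts) rather than replacing its internal convolutions---that replacement is unnecessary and its time analysis would require opening up \cite{CardozeS98}; the paper also cuts off the verification convolution by \emph{time} rather than output sparsity, which is cleaner to argue.
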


\paragraph*{Sparse General Convolution}
Our \cref{thm:lasvegas-main}, as well as the  previous works~\cite{ColeH02,ArnoldR15, BringmannFN21,BringmannFN22}, crucially requires that the input vectors are nonnegative. For Sparse General Convolution (with input vectors from $\Z^N$ rather than $\N^N$), various previous techniques are all stuck at $O(t\log^2 N)$  (see \cite[Section 3.3]{BringmannFN21} and \cite[Section 2.2.2]{nickphd} for related discussions). Getting faster algorithms for Sparse General Convolution is explicitly mentioned in \cite{BringmannFN21} as an open problem: 
\begin{center}
\cite{BringmannFN21}: \emph{Can Sparse General Convolution be solved in $O(t\log t + \polylog(N\Delta))$ time?}
\end{center}

In this paper, we give a partial answer to this question: For the special case of $N\le t^{1.99}$ and $\Delta \le 2^{2^{\log t/\poly\log \log t}}$, we obtain an $O(t\log t)$ Monte Carlo randomized algorithm for general convolution. 

\begin{restatable}[Sparse General Convolution in small universe]{theorem}{generalconvo}
\label{thm:generalconvo-main}
Let $\eps>0$ be a fixed constant.
Given two vectors $A,B\in \Z^N$ with the promise that $N<t^{2-\eps}$, where $t=\max\{\|A\|_0, \|B\|_0, \|A\star B\|_0\}$,
one can compute $A\star B$ in $O(t\log t + t(\log\log t)^{O(1)}\log\log \Delta + \poly \log \log \Delta) $ expected time by a Monte Carlo algorithm with $0.9$ success probability, where $\Delta = \max\{\|A\|_\infty,\|B\|_\infty\}$.
\end{restatable}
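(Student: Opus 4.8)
The plan is to reduce Sparse General Convolution to a sequence of sparse–recovery subproblems, each solvable by FFT on hashed vectors, and to peel off the support of $C := A\star B$ in geometrically shrinking batches so that the total cost telescopes to $O(t\log t)$. First write $A = A^+ - A^-$ and $B = B^+ - B^-$ with $A^\pm,B^\pm\in\N^N$, so that $C = P - M$ where $P := A^+\star B^+ + A^-\star B^-$ and $M := A^+\star B^- + A^-\star B^+$ are both nonnegative. We cannot simply invoke \cref{thm:lasvegas-main} on the four products, since each can be far denser than $t$; instead, for a prime $p$ we compute the length-$p$ cyclic folds $\tilde P,\tilde M$ (and $O(1)$ ``twisted'' variants described below) by reducing the four input vectors mod $p$ and running FFT, which costs $O(p\log p)$ regardless of how dense $P,M$ are. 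Then $\tilde C := \tilde P - \tilde M$ is exactly the fold $\tilde C_r = \sum_{k\equiv r \bmod p} C_k$, and the task becomes: recover the $t$-sparse vector $C$ over $[2N]$ from such folds and a few moments.

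For a random prime $p = \Theta(t)$, most of the (at most $t$) support elements of $C$ fall into their own bucket. Alongside $\tilde C$ we compute, by the same hashing-plus-FFT recipe, the moment folds $\tilde C^{(i)}_r = \sum_{k\equiv r \bmod p} k^i C_k$ for a constant number of $i$, together with a randomly twisted fold $\sum_{k\equiv r \bmod p}\omega^k C_k$ over a field $\F$; from these $O(1)$ sketches, Prony's method / Berlekamp--Massey recovers the position and value of every bucket that genuinely holds a single element, while every bucket with two or more elements is flagged and set aside. The only failure mode is a spurious cancellation making a multi-element bucket look like a singleton, which happens with probability $O(N/|\F|)$ over the random twist; this is exactly where the hypothesis $N < t^{2-\eps}$ is used, as it lets us take $|\F| = \poly(t)$, so that $\F$-arithmetic costs $O(1)$ word operations and the per-bucket ghost probability is $t^{-\Omega(1)}$, small enough to union-bound over the $\le t$ buckets. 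We then subtract the recovered elements (equivalently, update all the sketches) and recurse on the remaining unrecovered support with a fresh prime of size proportional to the new sparsity.

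For this recursion to contract we need, at each level, a prime $p = \Theta(t')$ (where $t'$ is the current sparsity) for which only a small constant fraction of the $t'$ surviving elements collide; the naive estimate for the mod-$p$ hash loses a $\log N$ factor here and is precisely what forces the $\Omega(t\log^2 t)$ barrier for general convolution, so instead we control the collision count through the large sieve inequality, using $N < t^{2-\eps}$ to certify that a random prime in the relevant range — or one of a constant number of independent such primes, which is all we can afford — isolates a constant fraction of the elements with good probability. Granting this, the sparsity drops by a constant factor per level, there are $O(\log t)$ levels with costs $O\big((t/2^i)\log(t/2^i)\big)$, the geometric sum is $O(t\log t)$, and a final check — evaluating $A$, $B$, and the candidate output at a random point modulo a random $\poly(t)$-bit prime — detects any error with constant probability, giving the Monte Carlo guarantee. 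Large entries $\Delta$ are handled separately: run everything modulo a $\poly(t)$-bit prime to pin down all positions, then reconstruct each of the $\le t$ output values to full $O(\log\Delta)$-bit precision by a doubling / Hensel-lifting scheme, which contributes the $t(\log\log t)^{O(1)}\log\log\Delta + \poly\log\log\Delta$ term.

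I expect the crux to be the collision control: the mod-$p$ hash really does have $\Theta(\log N/p)$ collision probability in the worst case, so turning the large sieve bound (which controls a sum over primes, not a single prime) together with $N < t^{2-\eps}$ into a constant-fraction isolation guarantee — strong enough to drive the peeling while keeping each level's sketch computations at $O(t\log t)$ and every ghost detectable without resorting to $\Theta(\log t)$ independent repetitions — is the delicate step, and it is what ties the small-universe hypothesis to the running time.
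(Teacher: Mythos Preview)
Your plan has a real gap at exactly the step you flag as ``the crux'': with $p = \Theta(t')$ and only $O(1)$ moments (so you can recover only \emph{singleton} buckets), the large sieve does not give constant-fraction progress. The large sieve (\cref{lem:largesievenumbercollision}) says that for random prime $p\in[Q/2,Q]$ with $N\le Q^{2-\eps}$,
\[
\Ex_p\Bigl[\sum_{a,b\in A}\mathbf{1}[p\mid a-b]\Bigr]\;\le\; |A|\cdot O\!\left(\frac{|A|}{Q}+\log\log N\right).
\]
With $|A|=t'$ and $Q=\Theta(t')$ this is $O(t'\log\log N)$, which is perfectly consistent with \emph{every} element sitting in a bucket of size $2$ (then the sum is $2t'$), i.e.\ zero singletons. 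The improvement over the naive $O(\log N)$ collision probability is to $O(\log\log N)$, not to $O(1)$; trying $O(1)$ independent primes does not change this. If you enlarge $p$ to force singletons, the $\log\log N$ term in the bound still dominates, so the FFT length grows and you lose the $O(t\log t)$ total.

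The paper's actual fix is to abandon singleton recovery. It takes $p=\Theta(m/\log\log m)$, computes $2s$ folds for $s=\Theta(\log\log m)$ (not $O(1)$), and uses Prony's method in each bucket to recover \emph{all} elements in buckets of size $\le s$. \cref{cor:largesievehash} then guarantees that at most $0.1m$ elements lie in buckets heavier than $s$, so a constant fraction is peeled per round. The $2s$ FFTs of length $p$ cost $O(s\cdot p\log p)=O(m\log m)$, and the geometric sum gives $O(t\log t)$. The extra ingredients you are missing are: (i) a finite field $\F_q$ of small characteristic $p_0$ so that factoring the Prony polynomial of degree $s$ costs $\poly(s)\log p_0 = (\log\log t)^{O(1)}\log\log\Delta$ per bucket rather than $\poly(s)\log q$; and (ii) a separate large prime field $\F_{p_b}$ to read off the actual integer coefficients once the exponents are known. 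Your $A^\pm,B^\pm$ decomposition is unnecessary; the paper works directly with $A,B\in\Z^N$ via the hashed FFTs. Your account of the $\log\log\Delta$ term (Hensel lifting on values) is also off: it arises from choosing the characteristic $p_0=\Theta(\log^2 t\cdot\log(t\Delta))$ and constructing $\F_q=\F_{p_0^\kappa}$.
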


The upper bound $2^{2^{\log t/\poly\log \log t}}$ on $\Delta$ is much higher than $\poly(t)$, which is the interesting case for most algorithmic applications. The constraint $N\le t^{1.99}$ is more restricted. In \cite{BringmannFN21}, they actually reduced Sparse Nonnegative  Convolution with general $N$ to Sparse Nonnegative  Convolution with a similar constraint $N \le \poly(t)$. Unfortunately, their reduction does not work for Sparse General Convolution. If one could design a similar reduction that works for Sparse General Convolution, then \cref{thm:generalconvo-main} would extend to general values of $N$. In light of such reductions, we hope \cref{thm:generalconvo-main} can eventually lead to an $O(t \log t)$ time algorithm for Sparse General Convolution.

\paragraph*{Text-to-Pattern Hamming Distances}

In the classic \emph{Text-to-Pattern Hamming Distances} problem, we are given a pattern $P$ of length $m$ and a text $T$ of length $n$, both over a polynomial-size alphabet, and we need to compute the Hamming distance between $P$ and $T[i \mathinner{.\,.} i + m-1]$ for every shift $i$.

For a long time, the fastest algorithm for this problem had $O(n\sqrt{m\log m})$ deterministic time  by Abrahamson \cite{Abrahamson87}. 
Very recently, Chan, Jin, Vassilevska Williams and Xu \cite{focs23} gave a  Las Vegas algorithm achieving improved $O(n\sqrt{m})$ running time with high probability.
Although Text-to-Pattern Hamming Distances a priori is not a sparse problem,  the improvement of \cite{focs23} crucially used hashing-based techniques similar to those used in Sparse Convolution and sparse recovery.
Thus, it is difficult to derandomize their
 hashing-based algorithms with small overhead. 
Using a derandomization technique by Chan and Lewenstein \cite{ChanL15} (mentioned in the beginning of this introduction), they obtained a slower deterministic algorithm in $O(n\sqrt{m}(\log m\log \log m)^{1/4})$ time. 

In this work, we obtain an improved deterministic algorithm for the Text-to-Pattern Hamming Distances problem, matching the state-of-the-art randomized algorithms \cite{focs23} up to a $\sqrt{\log \log m}$ factor.

\begin{restatable}{theorem}{HammingDistances}
\label{thm:TtPHD}
       The Text-to-Pattern Hamming Distances problem can be solved by a deterministic algorithm in $O(n\sqrt{m\log \log m})$ time.
\end{restatable}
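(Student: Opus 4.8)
The plan is to run the algorithm of Chan, Jin, Vassilevska Williams and Xu~\cite{focs23} essentially verbatim, and to replace the single randomized ingredient it uses --- a mod-prime hash for compressing a relevant universe of integers --- by a deterministic choice of modulus obtained from the large sieve inequality. First I would recall their framework: after the standard preprocessing of cutting the text into $O(n/m)$ overlapping blocks of length $2m$ (so that it suffices to handle $n=\Theta(m)$), their algorithm broadly follows Abrahamson's rare/frequent split --- rare symbols (those with few occurrences in $P$) are handled by direct per-position counting in deterministic $O(n\tau)$ time for a threshold $\tau$, while the frequent symbols are handled by a convolution-type subroutine whose efficiency hinges on hashing the relevant positions modulo a prime $p$ of the appropriate order so that the induced buckets are load-balanced. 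The only obstruction to a fully deterministic algorithm is the choice of $p$; in~\cite{focs23} this is done via the Chan--Lewenstein multi-level brute force~\cite{ChanL15}, and balancing its super-logarithmic overhead against the collision penalty incurred by using a coarser hash is exactly what costs them the extra $(\log m\log\log m)^{1/4}$ factor.

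Next I would plug in the large-sieve analysis of the mod-prime hash from \cref{subsec:analytic-number-theory}. Its upshot is that, for any fixed multiset $S$ of size $s$ lying in a universe that is not too large relative to $s$ (a condition one can arrange after the block reduction, if necessary by a preliminary coarse hash), all but a $o(1)$ fraction of primes $p$ in the relevant range make the hash load-balanced in the $\ell_2$/collision-count sense, with quality matching a truly random hash up to constants --- thereby avoiding the $\log N$-factor loss inherent to the mod-prime hash. Since this guarantee is a second-moment condition on $S$, a candidate $p$ can be certified good in $\tO(s)$ deterministic time. To actually locate a good $p$ within the time budget of the frequent-symbol subroutine --- rather than by scanning all $\Theta(s/\log s)$ primes --- I would build the modulus as a product of a bounded number of small prime factors chosen greedily, as in~\cite{ChanL15}, but now each factor only needs to satisfy the mild averaged large-sieve condition, so $O(\log\log m)$ levels suffice and each level's brute force stays within budget. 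With the modulus-selection overhead brought down to an $O(\log\log m)$ factor, re-optimizing the rare/frequent threshold --- setting $n\tau$ equal to $nm\log\log m/\tau$, i.e.\ $\tau=\Theta(\sqrt{m\log\log m})$ --- yields total deterministic running time $O(n\sqrt{m\log\log m})$.

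I expect the main obstacle to be making the deterministic modulus search simultaneously fast and provably correct: the large sieve only gives an averaged guarantee over primes, so a naive scan is too slow, and one must argue that a short greedy composition of small-modulus hashes --- each certified by a cheap second-moment test --- still produces a modulus with the load-balancing property the subroutine requires, losing only $\poly\log\log m$ overall. A secondary point to check carefully is that the frequent-symbol subroutine of~\cite{focs23} is robust to the weaker, second-moment form of load-balancing the large sieve provides (rather than a worst-case max-load bound), adjusting its bookkeeping if needed, and that the universe-size hypothesis of the large-sieve bound genuinely holds after the $O(n/m)$-block reduction.
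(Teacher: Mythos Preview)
Your proposal has a real gap at the modulus-finding step. You want to build the modulus greedily as a product of small primes, invoking the large sieve at each level to argue that only $O(\log\log m)$ levels are needed. But Wolke's prime large sieve (\cref{thm:largesievewolke}) applies only when the universe $N$ satisfies $N\le Q^{2-\eps}$, where $Q$ is the range from which the prime is drawn. At the outermost level the universe is $[\Theta(m)]$, so the hypothesis forces $Q\ge m^{1/2+\Omega(1)}$ --- and brute-forcing all primes of that magnitude against the set already exceeds the per-character time budget. Shrinking $Q$ to make the brute force affordable voids the $N\le Q^{2-\eps}$ hypothesis, so the $\log\log$ saving disappears and you are back to the Chan--Lewenstein overhead; your ``preliminary coarse hash'' escape is circular, since choosing that hash deterministically is exactly the problem at hand.

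The paper takes a different route that does not use Wolke's inequality at all for this result. It shows (\cref{thm:deterministic_preprocess}) that there exists a set $V$ of only $\polylog N$ \emph{composite} moduli, each a product of $O(1)$ primes, such that the Farey fractions with denominators in $V$ are well-spaced; the \emph{basic} large sieve (\cref{lem:largesieve_v2}) then guarantees that for every input set, some $v\in V$ achieves the needed collision bound. Crucially, $V$ depends only on $N$, not on the instance. It is computed once in $O(N^{1.01})$ deterministic time by the method of conditional expectation and amortized across all frequent characters. Each invocation of the $X{+}Y$ lemma then brute-forces only over $|V|=\polylog N$ candidates at cost $O(N\log\log N)$, and balancing the rare/frequent threshold gives $O(n\sqrt{m\log\log m})$.
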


We also obtain a similar improvement for the Text-to-Pattern Dominance Matching problem (see definition in \cref{sec:texttopattern}).

The main technical component behind \cref{thm:TtPHD}, which may be of independent interest, is a variant of a lemma that computes the convolution between two nonnegative vectors that appeared in \cite{focs23}, which they call the $X+Y$ lemma.

\begin{restatable}[Deterministic $X+Y$ lemma]{theorem}{XPlusY}
\label{thm:x-plus-y-advice}
Given two nonnegative vectors $X, Y \in \N^N$ with $\|X\|_1, \|Y\|_1 \le s$ for some $s \ge \sqrt{2N}$, there is a nonuniform deterministic  algorithm  that can compute $X \star Y$ in $O(N \log(s^2 / N) + N \log \log N)$ time. 

The advice (non-uniformity) of the algorithm only depends on $N$ and has $O(\polylog N)$ bits. Moreover, this advice can be computed in $O(N^{1.01})$ time deterministically.
\end{restatable}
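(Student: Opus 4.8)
The plan is to derandomize the randomized $X+Y$ lemma of~\cite{focs23}, whose only source of randomness is the choice of a hashing prime. Recall the shape of that algorithm: to compute $Z=X\star Y$ (a nonnegative vector with $\|Z\|_1=\|X\|_1\|Y\|_1\le s^2$ supported on $[2N-1]$), one picks a prime $p$ from a pool $[Q/2,Q]$, folds $X,Y$ modulo $p$ into length-$p$ vectors, recovers the folded convolution by an FFT of size $O(p)$, and then uses nonnegativity (empty buckets certify zeros) together with a bounded number of auxiliary folds — weighted/shifted coordinates, or a few more primes — to locate and peel off the positions of $Z$ that are not isolated; summing over the peeling rounds gives the $O(N\log(s^2/N))$ bound. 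The role of $p$ is that the folded instance must have few colliding pairs; a uniformly random prime only guarantees the weak $O(\log N/Q)$ collision rate, which is exactly the source of the earlier logarithmic losses. The improvement comes from the large sieve inequality (\cref{subsec:analytic-number-theory}): the number of colliding pairs summed over all primes $p\in[Q/2,Q]$ is bounded by a quantity depending only on $\|X\|_1,\|Y\|_1\le s$ and $N$, so for every admissible input a constant fraction of the pool primes are good.

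The first step is to upgrade ``a constant fraction of primes is good for each input'' to ``a single short list of primes is good for all inputs''. A naive union bound over inputs is hopeless, so the key point is that whether a prime is good depends only on a bounded-size summary of $(X,Y)$ — the residues of $\supp X,\supp Y$ modulo the pool primes, or equivalently the instance after an initial cheap universe-compression down to $\polylog N$ — and only $N^{O(1)}$ (after compression, far fewer) distinct summaries can occur. A greedy/pigeonhole selection over the pool then yields a sublist $\mathcal P$ of $O(\polylog N)$ primes such that every admissible input has a good prime in $\mathcal P$. Together with an FFT-friendly modulus and the precomputed sieve tables, $\mathcal P$ is the advice: it depends only on $N$ and occupies $O(\polylog N)$ bits.

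The algorithm itself then follows~\cite{focs23} verbatim except that, in place of sampling, it scans $\mathcal P$ and uses the first good prime, verifying goodness in $O(N)$ time; arranging the verification at the compressed scale keeps the $|\mathcal P|=O(\polylog N)$ trials inside the additive $O(N\log\log N)$ term, while the dominant FFT work stays at $O(N\log(s^2/N))$. Finally, to produce the advice in $O(N^{1.01})$ deterministic time, fix $Q=N^{\delta}$ for a small constant $\delta$: there are $\tO(Q)$ primes in $[Q/2,Q]$, each can be scored against the canonical summaries (or by a single large-sieve-driven pass) in $N^{1+o(1)}$ time, for a total of $\tO(Q\cdot N^{1+o(1)})=O(N^{1.01})$, after which $\mathcal P$ is extracted greedily. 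I expect the crux to be the middle step: bounding the number of ``essentially different'' bad configurations sharply enough that $|\mathcal P|$ stays polylogarithmic, and confining both the verification overhead and the $\log\log N$-level bookkeeping to the stated budgets — the large sieve handles the averaged collision count cleanly, but distilling it into one universal, $N$-only advice list of polylogarithmic size is the delicate part.
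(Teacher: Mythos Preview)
Your high-level picture is right — the advice is a short list of moduli, the algorithm tries each one and keeps the first that certifies few collisions for the current instance — but the heart of the argument, how to build such a list, is not actually supplied, and the sketch you give does not work as stated.

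The gap is the ``greedy/pigeonhole'' step. You correctly note that a union bound over all inputs is hopeless, and propose instead to union-bound over ``bounded-size summaries''. But the relevant summary of an instance for this purpose is (essentially) the support set $T\subseteq[2N]$ whose self-collision count modulo $v$ you want to control; there are $\binom{2N}{|T|}$ such sets, far too many for a list of polylogarithmic size to cover. The ``cheap universe compression down to $\polylog N$'' you invoke does not exist here — the whole difficulty is that the indices live in $[N]$. Relatedly, setting the pool scale to $Q=N^{\delta}$ puts the moduli at the wrong order of magnitude: the algorithm needs $v=\Theta(N/\polylog N)$ so that the folded FFT has size $\Theta(N/\polylog N)$ and each element of $T$ lands in its own bucket; with $v=N^{\delta}$ every bucket has $N^{1-\delta}$ preimages and no peeling is possible.

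The paper's route is genuinely different from a covering argument. It never tries to find moduli that are simultaneously good for all inputs; instead it constructs a list $V$ of size $N/Q=\polylog N$ whose Farey fractions $\{h/v:v\in V,\,1\le h<v\}$ are nearly $\tfrac1N$-spaced (\cref{thm:deterministic_preprocess}). This geometric property, plugged into the basic large sieve inequality (\cref{lem:largesieve_v2}), yields for \emph{every} set $A\subseteq[N]$ the averaged bound $\sum_{a,b\in A}\Pr_{v\in V}[\,v\mid a-b\,]\le |A|\cdot O(|A|/Q+\log^t N)$ (\cref{cor:largesieve_det}); Markov then gives, for each input separately, some $v\in V$ with small collision count, which the algorithm finds by scanning $V$. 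The list $V$ is built deterministically by the method of conditional expectations on the spacing objective, and to push the preprocessing time down to $\tilde O(N^{1+1/t})$ the moduli are taken to be products of $t$ small primes rather than single primes — a point your sketch misses and which is essential for the $O(N^{1.01})$ bound. Finally, the algorithm itself (\cref{lem:sparse_conv_minus_C}, \cref{cor:x-plus-y}) is not exactly the peeling of \cite{focs23}: it first computes $X\star Y\bmod p$ for a small prime $p$ to localize the large entries, then recursively halves $N$ to obtain a superset $T$ of the unresolved support, and only then uses $V$ (together with one more small prime) to finish in $O(N)$ time per level.
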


We can compare this result with \cite{focs23} who gave two algorithms for the $X+Y$ lemma. 
Their Las Vegas algorithm runs in $O(N \log(s^2 / N))$ expected time, and their deterministic algorithm runs in  $O(N \log(s^2 / N) + N \sqrt{\log s \log \log N})$ time. Their deterministic running time is about a $\sqrt{\log s} = \Omega(\sqrt{\log N})$ factor higher than ours. Although our algorithm additionally requires a small $O(N^{1.01})$ preprocessing time to compute the advice, it becomes negligible in applications where we need to apply the $X+Y$ lemma many times (such as the Text-to-Pattern Hamming Distances problem). 

\subsection{Technical overview}
\label{sec:overview}

\paragraph*{Mod-prime hash and large sieve inequality.} Let $N$ denote the size of the universe, and let $p$ be a random prime from $[Q / 2, Q]$. The mod-prime hash is defined as $h(a) := a \bmod{p}$ for $a \in [N]$. In many applications of the mod-prime hash, one is interested in the probability that two hashes collide, i.e., $h(a) = h(b)$ for $a \ne b \in [N]$. Note that the number of prime factors in $[Q / 2, Q]$ of $a - b$ is upper bounded by $O(\frac{\log N}{\log Q})$, and  the total number of primes in $[Q / 2, Q]$ is $\Theta(\frac{Q}{\log Q})$, so the probability that a random prime can cause $h(a) = h(b)$, which is equivalent to $p \mid a - b$, is $O(\frac{\log N}{Q})$. 

This $O(\frac{\log N}{Q})$ probability is affected by the number of prime factors of $a - b$ in $[Q / 2, Q]$. In the worst-case, it is certainly possible that $a - b$ has $\Theta(\frac{\log N}{\log Q})$ prime factors in the range. However, on average, each number in $[N]$ only has $O(1 / \log Q)$ prime factors from $[Q / 2, Q]$.\footnote{This is because each prime $p \in [Q / 2, Q]$ is the divisor of only $O(N / Q)$ numbers in $[N]$, so on average, each number in $[N]$ has $O(\frac{1}{N} \cdot \frac{N}{Q} \cdot \frac{Q}{\log Q}) = O(1/\log Q)$ prime divisors from the range.} This suggests that, when $a - b$ is not selected in a worst-case way, it could potentially be possible to improve the $O(\frac{\log N}{Q})$ collision probability. 

In fact, when applying the mod-prime hash to a subset of numbers $A \subseteq [N]$, we are often interested in the number of  pairs $a, b  \in A$ such that $h(a) = h(b)$. Equivalently, this is the expected number of elements $x$ in the multiset $A - A$ with $p \mid x$. If we apply mod-prime hashing naively, we get an $O(|A|^2 \cdot (\frac{\log N}{Q}) + |A|) = O(|A| \cdot (\frac{|A| \log N}{Q} + 1))$ upper bound (the $+|A|$ accounts for the fact that $a \in A$ always collides with itself). However, intuitively, it is difficult to construct $A$, so that every number in the set $A - A$ has the largest possible $\Theta(\frac{\log N}{\log Q})$ number of prime factors from $[Q / 2, Q]$. 

This is where the large sieve inequality plays a role in our paper. A standard application of a version of the large sieve inequality \cite{wolke} formalizes this intuition for certain regime of parameters. More specifically, it can be shown that, as long as $10 \le Q \le N \le Q^{1.99}$, the expected number of  pairs $a, b  \in A$ such that $h(a) = h(b)$ can be bounded by $O\left(|A| \cdot \left(\frac{|A|}{Q} + \log \log N\right)\right)$. In the natural $Q = \Theta(|A|)$ setting, this bound achieves an $O(\log N / \log \log N)$ improvement over the naive $O(|A| \cdot (\frac{|A| \log N}{Q} + 1))$ bound. In fact, we can even decrease $Q$ to $|A| / \log \log N$, while maintaining the same bound,  and in this case, the $O\left(|A| \log \log N\right)$ bound matches the  bound achieved by the fully random hash function. Thus, for each $i \in [p]$ (we typically call such $i$ a bucket), if we count how many $a \in A$ are mapped to $i$, we should expect to see $O(\log \log N)$ of them, at least for most buckets. This improvement lies in the heart of our $O(t\log t)$-time Sparse Nonnegative  Convolution and Sparse General  Convolution algorithms.\footnote{This improvement alone leads to $O(t\log t)$ expected time Las Vegas algorithm. The high success probability of \cref{thm:lasvegas-main} is made possible by other techniques, which will be explained later.} 

The definitions of the large sieve inequality and the formal version of the  above bound can be found in \cref{subsec:analytic-number-theory}.

\paragraph*{Sparse General Convolution.}
For the Sparse Convolution problem, we follow a series of previous works (e.g., \cite{Nakos20,BringmannFN21,BringmannFN22}) which used the iterative peeling algorithm (similar to techniques used in sparse recovery literature) combined with (almost-)linear hashing.  Roughly speaking, given input vectors $A,B$, we maintain an approximate answer $C$, and gradually decrease the number of errors, $\|A\star B - C\|_0$.  In each round, hash the non-zero entries of the difference vector $A\star B - C$ into buckets (where the number of buckets is comparable to the support size $\|A\star B - C\|_0$), and try to recover the entries in light buckets.
Here it is important that the hash function is linear (or almost linear), i.e., $h(x)+h(y) \approx h(x+y)$, so that it (approximately) respects the convolution operator $\star$.
In this way, each round becomes a dense convolution problem with length comparable to $\|A\star B - C\|_0$, and can be solved via FFT.
This is a common technique in designing algorithms for additive problems.

A common choice for the hash function is the mod-prime hash function, which satisfies perfect linearity $h(x)+h(y)\equiv h(x+y)\pmod{p}$. It suffers from an extra logarithmic factor in the collision probability, so using it in the iterative peeling algorithm only led to an $O(t\log^2 N)$ algorithm. See \cite[Section 2.2.2]{nickphd} for a detailed exposition of this algorithm; see also \cite{Nakos20}. Another choice is to use almost linear hash families, which has smaller collision probability. However, the unpredictability of the error of the hash function led to difficulties when implementing the iterative peeling algorithm, namely we do not know where to subtract the $C$ entries. See \cite[page 8]{BringmannFN21} for a discussion of this challenge. In \cite{BringmannFN21}, they overcame this challenge for  Sparse Nonnegative Convolution, but their techniques completely break down when the input vectors have possibly negative entries that might lead to cancellation, due to the errors in almost linear hashing; see discussion  in \cite[Section 3.3]{BringmannFN21}.

Now we explain how we use mod-prime hashing to obtain faster algorithm for Sparse General Convolution in certain regimes.

Recall that in our algorithm for Sparse General Convolution, we  assume $N \le t^{1.99}$, that is, the size of the universe is less than quadratic of the size of the number of elements we apply hash to. This perfectly meets the condition for applying the results from the large sieve inequality (i.e., $N \le Q^{1.99}$). Therefore, if we pick the prime $p$ for the mod-prime hash from the range $p = \Theta(t / \log \log N)$, we expect most buckets to have size $O(\log \log N)$. 

Then the high-level idea is to use Prony's method \cite{prony1795}, to recover all numbers in each bucket. This was suggested before in \cite{BringmannFN21}; however, without the improved bounds for mod-prime hashing from the large sieve inequality, this idea would be too slow. Prony's method is a way to recover sparse polynomial with $s$ nonzero terms from $2s$ evaluations and its running time is \emph{roughly} $\poly(s)$ given the evaluation results. Now the idea is to view each bucket as a sparse polynomial, and use Prony's method to recover each bucket and we can set $s = O(\log \log N)$. 

We can compute these $2s$ evaluations for all buckets simultaneously, and each evaluation uses an FFT on array of size $O(p)$. Overall, these evaluations take $O(s p \log p) = O(t \log t)$ time. Running Prony's method for every bucket would roughly take $O(t \polyloglog t)$ time, so the overall running time would be $O(t \log t)$. 

The above intuition ignores some details about the running time of Prony's method. As it turns out, the current best algorithm for Prony's method over finite fields is only fast enough over fields with relatively small characteristics. Therefore, we will choose some finite field with small characteristics to run Prony's method; this will make some nonzero entries in the output vector vanish, i.e., equivalent to $0$ in the finite field, but most entries will remain. Running Prony's method over such finite fields gives a large fraction of the support of the output vector, but does not give the exact values of the coefficients. Therefore, we need to add an additional step to recover the coefficients, over finite fields with large characteristics (or over $\C$).

\paragraph*{Las Vegas Sparse Nonnegative Convolution.}
We can adapt the above idea for Sparse General Convolution to also work for Las Vegas Sparse Nonnegative Convolution that terminates in $O(t\log t)$ time with constant probability. We briefly mention the main differences in the following. 

The first difference is that, since now we are working with Sparse Nonnegative Convolution, we can apply the universe reduction steps in \cite{BringmannFN21}. As a result, this algorithm does not have the restriction that $N \le t^{2-\eps}$ any more. 

Secondly, we use a sparsity test that can verify whether a bucket indeed has at most $s$ terms. \cite{BringmannFN22} used this sparsity test in the case $s = 1$, and they suggested to use it for larger $s$.
The sparsity test uses the following result: For a nonnegative vector $V$, $\|V\|_0\ge  s$ if and only if the $s\times s$ matrix $A$ defined as $A_{i,j} = \sum_k k^{i+j} V_k$ is nonsingular \cite{karlin1968total}. As this result only works for nonnegative vectors, it could not be applied to Sparse General Convolution. Furthermore, via a folklore result, two distinct $s$-sparse vectors cannot lead to the same matrix defined above. This enables us to verify whether we have correctly recovered all terms in a bucket, and make our algorithm Las Vegas. 

One technicality in using this sparsity test is that the entries in the  matrix $A$ becomes too large (cannot fit into one word) due to the $k^{i+j}$ factor when $i+j\in [2s]$ is super-constant, since $k$ can be as large as $N\le t^{2-\eps}$ and $s = O(\log \log N)$.  To resolve this issue, we need to make $k$ smaller. This is made possible by first reducing $N$ down to $t\polylog(t)$ (in a similar way as \cite{BringmannFN21}, but they did this for a different purpose), and then observing that the items hashed to the $r$-th bucket ($r\in [p]$) can only belong to $\{k\in [N]: k\equiv r\pmod{p}\}$, which allows us to treat $k$ as effectively 
 coming from the range $[N/p] = [\polylog(t)]$ instead of from $[N]$.

\paragraph*{Sparse Nonnegative Convolution with high success probability.}
To improve our Las Vegas algorithm from $O(t\log t)$ time in expectation to $O(t\log t)$ time with $1-1/t$ probability, we borrow techniques from \cite{BringmannFN21} and combine with a few more ideas. 
The main techniques of this part do not rely on the large sieve inequality.

The main obstacle to achieving high success probability lies in the choice of the prime $p$ for the mod-prime hash. We already know that a random $p$ achieves the desired load-balancing properties with constant probability, and thus a sample of $O(\log t)$ primes will contain a good prime with $1-1/\poly(t)$ probability. However, in order to pick the good prime, the naive way would require running the whole procedure of FFT and bucket-recovery for all the $O(\log t)$ sampled primes, which would blow up the time complexity by a $O(\log t)$ factor.  (A similar issue also occurs in a different step in the Las Vegas algorithm, namely the step of reducing $N$ down to $t\polylog t$ by hashing modulo a prime $p \approx t\polylog t$: It is known that a random prime is good with $1-1/\polylog(t)$ probability, so the naive way of boosting to high probability incurs $O(\log t/\log \log t)$ blow-up.)
In order to avoid this blow-up, our goal is to use a cheaper way to test the load-balancing property of a mod-prime hash function.

As it turns out, in order to do this testing, it suffices to have the following subroutines given input $A,B\in \{0,1\}^{t\polylog t}$ with $|\supp(A\star B)|\le t$:
(1) a high-success-probability $O(t\log t)$-time algorithm for computing the support of Sparse Convolution $\supp(A\star B)$ and (2) an $O(t\log \log t)$-time algorithm for estimating the support size $|\supp(A\star_p B)|$ to high accuracy with $1-1/\polylog(t)$ success probability.\footnote{$\star_p$ denotes cyclic convolution of array length $p$. } 
These subroutines do not need to be Las Vegas.
See \cref{subsec:estimatesupports} for more formal statements. Intuitively, (1) is useful because we can prepare 
$\supp(A\star B)$ at the beginning (correct with high probability) and every time we can tell whether $p$ is a good prime by directly hashing $\supp(A\star B)$ modulo $p$ (instead of running FFT and bucket recovery). (2) is useful, because if the remaining support size of $A\star_p B$ is large, it means fewer hash collisions happened modulo $p$.

To implement the above two subroutines, we use a random algebrization idea similar to Freivald's algorithm,  and thus we need a version of Sparse Convolution algorithm over $\F_q$ for very small prime $q\le \polylog(t)$, which runs in very fast $O(t\log \log t)$ time matching the best known $\F_q$ convolution algorithm in the dense case (\cref{thm:packfft}).
See \cref{subsec:estimatesupports} for more formal statements. We design such an algorithm using techniques from \cite{BringmannFN21}, namely an almost 3-wise independent and almost linear hash family $h(x) = ((\sigma x+\tau)\bmod p) \bmod m$. 
As a main difference, \cite{BringmannFN21} avoided using Prony's method and replaced it with other techniques (which only apply to nonnegative convolution instead of our $\F_q$ case), while we still use Prony's method, but with a more time-efficient implementation powered by a lot of bit-packing and table look-ups in word RAM.
Another challenge happens in the bucket-recovery step: Since the field $\F_q$ is small, we do not have primitive roots of high enough order to let us recover the exponent of a term (which can be as high as $t\polylog t$). We work around this by effectively treating the exponent as $\polylog(t)$ (similar to how we reduced $k$ in the Las Vegas algorithm), but this requires some technicalities dealing with the errors of almost-linear hash functions.

\paragraph*{Text-to-Pattern Hamming Distances.}
Now we very briefly describe the ideas behind the new deterministic $X+Y$ lemma (\cref{thm:x-plus-y-advice}), which implied
our improved deterministic Text-to-Pattern Hamming Distance algorithm by the same reduction as \cite{focs23}.

\cite{focs23} proved their slower deterministic $X+Y$ lemma  using hashing based techniques,
which required them to deterministically find a modulus $q$ so that the mod-$q$ hash function achieves good load-balancing property (we know that a random $q$ will have good load-balancing). To achieve this, they used the derandomized hashing idea from \cite{ChanL15} of constructing $q$ as the product of super-constant many small prime factors, which led to large overhead. 

In our work, we achieve this desired derandomization with smaller overhead than \cite{ChanL15,focs23} in the regime of interest. 
At a high level, we show that in our scenario there exists a small subset $\caP$ of candidate primes in some interval $[T/2,T]$, such that sampling a random prime from $\caP$ achieves almost as good load-balancing as sampling a prime from the full interval $[T/2,T]$.
Hence, for fast derandomization, we only need to deterministically check all candidates in the small set $\caP$, instead of all primes in the interval $[T/2,T]$.

The existence of such a small set $\caP$ is highly nontrivial (the set $\caP$ only depends on the input size, not the input instance itself).
The load-balancing quality of set $\caP$ for this purpose is analyzed using the large sieve inequality: Set $\caP$ is good if the Farey fractions with denominator from $\caP$ are well-spaced. Then, the existence of good $\caP$ follows from the probabilistic method; this already yields a (non-uniform) deterministic algorithm with small advice. 
To further remove the advice, we use the method of conditional expectation to deterministically compute the advice as a preprocessing stage. 
For technical reasons, we need to use mod composite (product of constant  many small primes) instead mod prime to further speed up the running time of this preprocessing stage.

\subsection{Further related works}
To the best of our knowledge, our work is the first (direct) algorithmic application of the large sieve inequality. 
As for indirect algorithmic applications in the literature, the large sieve inequality and its variants have been used to show a wide range of results in analytic number theory, such as the Bombieri--Vinogradov theorem and Linnik's theorem, which in turn have been used in algorithm design (mostly for the purpose of constructing finite fields of certain desired order). For example, \cite{arnold2016sparse,JinVW21,DudekGGS22,BhargavaGG0U22} used the Bombieri--Vinogradov theorem and \cite{BlaserHLV09, DeKSS13, BlaserJ14} used the Linnik's theorem.

Other results from analytic number theory were also used in algorithmic applications. For instance, \cite{LiNakos20, Xu11, bourgain2011explicit} explored the relationship between explicit construction of RIP matrices and the
exponential sum of characters.

\subsection{Open questions}
An interesting open question is whether we can relax the $t\le N^{2-\eps}$ assumption in our $O(t\log t)$ algorithm 
 for Sparse General Convolution (\cref{thm:generalconvo-main}). It would  also be interesting to see whether our derandomization techniques can lead to better deterministic algorithms for Sparse Nonnegative  Convolution \cite{BringmannFN22} and Constellation \cite{nickconstellation}; currently these deterministic algorithms rely on algebraic techniques instead of hashing-based techniques, and have many logarithmic factors in the running time.
 Finally, it would be interesting to explore other applications of tools from analytic number theory in algorithm design.

\paragraph*{Paper organization.}
In \cref{sec:prelim} we introduce useful notations and tools, in particular the large sieve inequality.

In \cref{sec:generalconvo} we describe a Monte Carlo $O(t\log t + t(\log \log t)^{O(1)} \log \log \Delta + \polylog(N\Delta))$-time Sparse General Convolution algorithm in small universe (\cref{thm:generalconvo-main}).
In \cref{sec:lasvegasnonnega}, we describe a Las Vegas algorithm for Sparse Nonnegative Convolution in expected $O(t\log t + \polylog(N\Delta))$ time.
In \cref{sec:numerical} we remove the $\polylog(N\Delta)$ dependency from these algorithms.
In \cref{sec:highprob} we describe how to boost the Las Vegas algorithm to high success probability, fully proving \cref{thm:lasvegas-main}.

In \cref{sec:determi} and \cref{sec:texttopattern}, we give an improved  deterministic algorithm for the Text-to-pattern Hamming Distances problem.

In \cref{sec:conste} we show an application of our result to the Constellation problem.

\section{Preliminaries}

\label{sec:prelim}
\subsection{Notations}
Let $\Z,\N, \R, \C$ denote the integers, the nonnegative integers, the reals, and the complex numbers, respectively.
Let $[N]=\{0,1,\dots,N-1\}$.
We use $\tilde O (f)$ to denote $f\cdot (\log f)^{O(1)}$.

The \emph{sumset} of two sets $A,B\in \Z$ is defined as $A+B = \{a+b:a\in A,b\in B\}$.

For a vector (an array) $A$ of length $N$ and an index $i\in [N]$, we use $A_i$ or $A[i]$ to refer to the $i$-th coordinate of $A$.  
The \emph{support} of $A$ is denoted as $\supp(A)=\{i\in [N] : A[i]\neq 0\}$. Denote $\|A\|_0 = |\supp(A)|$, $\|A\|_\infty = \max_{i\in [N]} |A[i]|$, and $\|A\|_1 = \sum_{i\in [N]}|A[i]|$.
We say vector $A$ is $s$\emph{-sparse} if $\|A\|_0\le s$.

For two vectors $A,B\in \R^N$ we write $A\le B$ if $A[i] \le B[i]$ for all $i\in [N]$. We say $A$ is a \emph{nonnegative vector} if $A[i]\ge 0$ for all $i\in [N]$.

The \emph{convolution} of two length-$N$ vectors $A,B$ is defined as the vector $A\star B$ of length $2N-1$ with
\[ (A\star B)[k] = \sum_{\substack{i,j\in [N]\\i+j=k}}A[i]\cdot B[j].\]
The \emph{cyclic convolution} $A\star_m B$ is the length-$m$ vector with 
\[ (A\star_m B)[k] = \sum_{\substack{i,j\in [N]\\i+j\equiv k\,(\mathrm{mod}\, m)}}A[i]\cdot B[j].\]

We also view vectors as (univariate) polynomials: A length-$N$ vector $A$ corresponds to the polynomial $a(x)=\sum_{i=0}^{N-1}A[i] x^i$. In this way, the convolution $A\star B$ corresponds to polynomial multiplication $a(x)b(x)$, and the cyclic convolution $A \star_m B$ corresponds to $a(x)b(x) \bmod (x^m -1)$.
We say a polynomial is $s$\emph{-sparse} if it has at most $s$ non-zero terms.

We will use the following (non-standard but convenient) notations borrowed from \cite{BringmannFN22}: 
For a vector $A$, we define vector $\partial A$ with \[(\partial A)[i] = i\cdot A[i],\]
and more generally define \[(\partial^d A)[i] = i^d\cdot A[i].\]
Given a hash function $h\colon [N] \to [m]$, define the length-$m$ vector $h(A)$ by \[h(A)[j] = \sum_{i\in [N]: h(i)=j}A[i].\]

\subsection{Machine model}
\label{subsec:model}
We work in the standard word RAM model where each word is a $w$-bit binary integer, and standard bitwise operations and arithmetic operations  (including integer multiplication and division with remainders) on words take unit time, and accessing an element by its memory address also takes unit time. 
The word length $w$ is chosen so that 
each input integer fits into one machine word, and each input integer can be indexed by an address that fits into one machine word.

In particular, for the sparse convolution problem of two length-$N$ input integer vectors $A,B$ with coefficient bound $\|A\|_\infty,\|B\|_\infty \le \Delta$ and sparsity bound $\max \{\|A\|_0,\|B\|_0,\|A\star B\|_0\} \le t$, we assume the word length is $w = \Omega(\log t + \log(N\Delta))$. Note that the total input and output size is $O(t\log(N\Delta))$ bits and thus $O(t)$ words. 

This assumption on word length was also used in the previous works \cite{BringmannFN21,BringmannFN22,nickphd} that we build on. The most interesting regime for our results is where $\log(N\Delta)\le O(\log t)$, which is satisfied in almost all combinatorial applications of sparse convolution algorithms mentioned in \cref{sec:intro}.

For the scenario where $\log(N\Delta)$ can be much larger than $t$ (which is the focus of papers in symbolic computation such as \cite{GiorgiGC20mult}), an alternative choice is to consider the bit complexity, or word RAM with only $\log t+\log\log (N\Delta)$ bit words. The complexity of our algorithms can be measured in this model by taking into account the complexity of arithmetic operations on $\log t + \log (N\Delta)$-bit integers. 

\subsection{Large sieve inequality and its consequences}
\label{subsec:analytic-number-theory}
We need several versions of the large sieve inequality from analytic number theory (see e.g.\ the survey of Montgomery \cite{Montgomery}). 

For $x\in \R$, let $e(x) = \exp(2\pi \sqrt{-1} x)$, which only depends on $(x\bmod 1) \in \R/\Z$. Let $\|x\|$ denote the distance from $x$ to the nearest integer.  A basic version of the large sieve inequality is as follows.

\begin{theorem}[see e.g., {\cite[Theorem 7.7]{iwanieckowalski}}]
   \label{thm:largesieve-basic}
Let $\caX\subset \R/\Z$ be a set of $\delta$-spaced numbers ($0<\delta \le 1/2$), i.e.,  
$\|\alpha -\beta\|\ge \delta$ for all distinct $\alpha,\beta\in \caX$. 
Let $a_0,\dots,a_{N-1}$ be complex numbers. Then,
\[ \sum_{\alpha \in \caX} \Big \lvert \sum_{n=0}^{N-1} a_n \,  e(n\alpha ) \Big \rvert^2 \le (\delta^{-1} +N)\sum_{n=0}^{N-1} |a_n|^2.\]
\end{theorem}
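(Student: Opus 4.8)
The plan is to prove this by the classical duality-plus-Poisson-summation argument. The first step I would take is the \emph{duality principle}: writing $T\colon\C^N\to\C^{\caX}$ for the linear map $(Ta)_\alpha=\sum_{n=0}^{N-1}a_n\,e(n\alpha)$, the asserted inequality is exactly the statement that the squared operator norm satisfies $\|T\|^2\le\delta^{-1}+N$, and since $\|T\|=\|T^*\|$ this is equivalent to the \emph{dual inequality}
\[
\sum_{n=0}^{N-1}\Big\lvert \sum_{\alpha\in\caX}c_\alpha\,e(n\alpha)\Big\rvert^2 \ \le\ (\delta^{-1}+N)\sum_{\alpha\in\caX}|c_\alpha|^2
\]
for arbitrary complex coefficients $(c_\alpha)_{\alpha\in\caX}$. (A one-line computation gives $(T^*c)_n=\sum_{\alpha\in\caX}c_\alpha\,e(-n\alpha)$, and then $|z|=|\bar z|$ together with the fact that $-\caX$ is again $\delta$-spaced lets one pass between the two forms.) So it suffices to establish the dual inequality.

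For the dual inequality I would set $V(x):=\sum_{\alpha\in\caX}c_\alpha\,e(\alpha x)$ for $x\in\R$, and introduce a \emph{Beurling--Selberg majorant}: a function $W\colon\R\to\R_{\geq 0}$ with $W\ge 1$ on $[0,N-1]$, whose Fourier transform $\widehat W$ vanishes outside the open interval $(-\delta,\delta)\subset(-1,1)$, and with $\widehat W(0)=\int_\R W\le (N-1)+\delta^{-1}$. Granting such a $W$, I would estimate
\[
\sum_{n=0}^{N-1}|V(n)|^2 \ \le\ \sum_{n\in\Z}W(n)\,|V(n)|^2 \ =\ \sum_{\alpha,\beta\in\caX}c_\alpha\,\overline{c_\beta}\sum_{n\in\Z}W(n)\,e\big((\alpha-\beta)n\big),
\]
and apply Poisson summation to the inner sum, $\sum_{n\in\Z}W(n)\,e(\theta n)=\sum_{k\in\Z}\widehat W(k-\theta)$. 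For $\alpha=\beta$ this equals $\widehat W(0)$ (only $k=0$ survives, since $\widehat W$ is supported in $(-1,1)$); for $\alpha\ne\beta$ it is $0$, because $\|\alpha-\beta\|\ge\delta$ forces $|k-(\alpha-\beta)|\ge\delta$ for every $k\in\Z$, so $\widehat W$ is always evaluated outside its support. Hence the double sum collapses to $\widehat W(0)\sum_{\alpha\in\caX}|c_\alpha|^2\le\big((N-1)+\delta^{-1}\big)\sum_{\alpha\in\caX}|c_\alpha|^2$, which is the dual inequality (in fact with the slightly sharper constant $N-1+\delta^{-1}$).

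The one genuinely nontrivial ingredient is the existence of the majorant $W$, and I expect this to be the main obstacle: it is the classical extremal problem solved by Beurling and Selberg, where one assembles, from the Beurling function that majorizes $\operatorname{sgn}$ and is Fourier-band-limited, a majorant of $\mathbf 1_{[0,N-1]}$ that is band-limited to $(-\delta,\delta)$ with $L^1$-norm exactly $(N-1)+\delta^{-1}$. For the present purpose this may simply be quoted (e.g.\ from Vaaler's survey) rather than reproved. If one is willing to lose the precise constant---which is irrelevant for all the applications in this paper---there is a more self-contained route avoiding the extremal function: apply the periodic form of \emph{Gallagher's lemma}, $\sum_{\alpha\in\caX}\lvert S(\alpha)\rvert^2\le\delta^{-1}\int_0^1|S|^2+2\big(\int_0^1|S|^2\big)^{1/2}\big(\int_0^1|S'|^2\big)^{1/2}$, directly to $S(x)=\sum_{n}a_n e(nx)$ after recentering the frequencies into $[-(N-1)/2,(N-1)/2]$, and invoke Parseval; this already yields the inequality with $\delta^{-1}+N$ replaced by $\delta^{-1}+O(N)$.
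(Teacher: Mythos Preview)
Your proposal is correct and follows the standard duality-plus-majorant route (the Beurling--Selberg argument yielding the sharp constant $\delta^{-1}+N-1$, and the Gallagher fallback for a constant-factor version). Note, however, that the paper does not prove this theorem at all: it is simply cited from Iwaniec--Kowalski, with the remark that the sharp constant is due to \cite{MR337775} and that for the paper's applications any constant $C(\delta^{-1}+N)$ suffices. So there is no ``paper's own proof'' to compare against; your write-up would serve as a self-contained justification where the paper defers to the literature.
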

The best possible factor  on the right hand side is $(\delta^{-1}+N-1)$ \cite{MR337775}.
For our application it is sufficient to use estimates with $C(\delta^{-1}+N)$ for a constant $C$; such estimates were first shown in~\cite{MR197427}.

Specializing \cref{thm:largesieve-basic} to the Farey fractions $\mathcal{X} = \{h/q : 1\le h\le q\le Q, \gcd(h,q)=1\}$,
which have spacing $\|h/q-h'/q'\|=\big \|\frac{hq'-h'q}{qq'}\big \| \ge (qq')^{-1} \ge Q^{-2}$, yields the following well-known estimate (see e.g., \cite[Theorem 7.11]{iwanieckowalski}):
\begin{equation}
   \label{eqn:largesieve-useless}
 \sum_{1\le q\le Q}\sum_{\substack{1\le h\le q,\\  \gcd(h,q)=1}} \Big \lvert \sum_{n=0}^{N-1} a_n \,  e(nh/q) \Big \rvert^2 \le (Q^2 +N)\sum_{n=0}^{N-1} |a_n|^2.
\end{equation}

However, \cref{eqn:largesieve-useless} is not sufficient for our applications. Instead, we need a more refined version of \cref{eqn:largesieve-useless} which restricts to prime denominators, proved by Wolke \cite{wolke0,wolke}. 
\begin{theorem}[\cite{wolke}]
\label{thm:largesievewolke}
Let $Q\ge 10, 0<\eps < 1, N\le Q^{2-\eps}$. 
Let $a_0,\dots,a_{N-1}$ be complex numbers.
Then,
\begin{equation}
   \label{eqn:largesieve-prime}
 \sum_{\text{prime $p\le Q$}} \sum_{h=1}^{p-1} 
\Big \lvert \sum_{n=0}^{N-1} a_n \,  e(nh/q) \Big \rvert^2 \le \frac{C}{\eps} \frac{Q^2 \ln \ln Q}{\ln Q}\sum_{n=0}^{N-1} |a_n|^2, 
\end{equation}
for some absolute constant $C$.
\end{theorem}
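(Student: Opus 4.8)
The plan is to combine an elementary orthogonality reduction with the large sieve inequality (\cref{thm:largesieve-basic} and \cref{eqn:largesieve-useless}), the point being to beat, by a factor $\asymp \ln Q/\ln\ln Q$, the bound that a black-box application of the large sieve to the $Q^{-2}$-spaced set $\{h/p\}$ would give.

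\textbf{Step 1 (reduction to a variance; extraction of the main term).} Set $T(\alpha)=\sum_{n=0}^{N-1}a_n e(n\alpha)$ and, for a prime $p$, $S_{p,s}=\sum_{0\le n<N,\,n\equiv s\ (p)}a_n$. Orthogonality of the additive characters mod $p$ gives the Parseval-type identity $\sum_{h=0}^{p-1}|T(h/p)|^2 = p\sum_{s=0}^{p-1}|S_{p,s}|^2$, so the left side of \cref{eqn:largesieve-prime} equals $\sum_{p\le Q}\bigl(p\sum_{s}|S_{p,s}|^2-|T(0)|^2\bigr)$. Writing $c_\ell=\sum_{m-n=\ell}a_m\bar a_n$ (so $|c_\ell|\le\sum_n|a_n|^2$ by Cauchy--Schwarz), we have $p\sum_s|S_{p,s}|^2 = p\sum_n|a_n|^2 + p\sum_{0<|\ell|<N,\,p\mid\ell}c_\ell$; summing the first piece over $p\le Q$ contributes $\bigl(\sum_n|a_n|^2\bigr)\sum_{p\le Q}p\le \bigl(\sum_n|a_n|^2\bigr)\cdot 2Q^2/\ln Q$ by Chebyshev's bound on $\sum_{p\le Q}p$, which is already comfortably below the target and does not even use the $\ln\ln Q$ factor. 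Thus the whole content of the theorem is the estimate $\sum_{p\le Q}\sum_{h=1}^{p-1}|T(h/p)|^2 \le \frac{C}{\eps}\frac{Q^2\ln\ln Q}{\ln Q}\sum_n|a_n|^2$; writing $\phi^\ast(d)=\sum_{\gcd(h,d)=1,\ 1\le h\le d}|T(h/d)|^2\ge 0$, this is the bound $\sum_{p\le Q}\phi^\ast(p)\le\frac{C}{\eps}\frac{Q^2\ln\ln Q}{\ln Q}\sum_n|a_n|^2$.

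\textbf{Step 2 (the large sieve input).} Applying \cref{thm:largesieve-basic} directly to the $Q^{-2}$-spaced set $\{h/p:\ p\le Q\ \text{prime},\ 1\le h<p\}$ only gives $\sum_{p\le Q}\phi^\ast(p)\le (Q^2+N)\sum_n|a_n|^2\le 2Q^2\sum_n|a_n|^2$, short of the target by $\asymp\ln Q/\ln\ln Q$. To do better one uses that, although two fractions $h/p,h'/p'$ with $p\ne p'$ can be as close as $1/(pp')\ge Q^{-2}$, for \emph{most} pairs they are much further apart. Concretely, enlarge the family to also include fractions with denominator a product of a prime $p\le Q$ and a prime $r$ from an auxiliary range $(R,2R]$, $R=Q^{o(1)}$; apply \cref{eqn:largesieve-useless} to this enlarged family (whose denominators are $\le Q^{1+o(1)}$, giving a bound $\le (Q^{2+o(1)}+N)\sum_n|a_n|^2$); and relate $\phi^\ast(p)$ to it through the divisor decomposition $\sum_{h=0}^{pr-1}|T(h/(pr))|^2 = \phi^\ast(1)+\phi^\ast(p)+\phi^\ast(r)+\phi^\ast(pr)$, averaged over the auxiliary primes $r$ and over $p$. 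The hypothesis $N\le Q^{2-\eps}$ enters exactly here: it ensures the ``$+N$'' error in the large sieve is dominated by the spacing term (quantitatively $N\le Q^{2-\eps}\le Q^2/(\eps\ln Q)$, using $e^y\ge y$), which is also the source of the $1/\eps$ in the final constant; and the residual $\ln\ln Q$ reflects the density of the primes ($\sum_{p\le Q}1/p\asymp\ln\ln Q$), incurred when $R$ is optimized.

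\textbf{Expected main obstacle.} The crux --- and the reason a black-box invocation of \cref{eqn:largesieve-useless} does not immediately suffice --- is that the naive comparison $\phi^\ast(p)\le\sum_{h=0}^{pr-1}|T(h/(pr))|^2$ is vacuous, the right-hand side literally containing $\phi^\ast(p)$ as a summand, so discarding positive terms there yields nothing. Extracting a genuine saving requires a \emph{weighted/bilinear} form of the large sieve --- morally a Selberg-sieve-weighted version of the Parseval identity of Step 1 --- showing that for almost all auxiliary $r$ the fractions of denominator \emph{exactly} $pr$ already carry a comparable share of the $\ell^2$-mass. This weighted estimate is precisely the refinement of \cref{eqn:largesieve-useless} to prime denominators due to Wolke \cite{wolke0,wolke}; for this core step I would follow his argument, the remaining ingredients (the orthogonality reduction, the choice of $R$, the small-prime range $p\le R$ handled directly by \cref{thm:largesieve-basic}, and the bookkeeping of $\eps$) being routine.
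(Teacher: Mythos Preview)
The paper does not prove this statement: \cref{thm:largesievewolke} is simply imported from Wolke \cite{wolke0,wolke} as a known result in analytic number theory and used as a black box. So there is no ``paper's own proof'' to compare against; any proof you give is strictly extra.

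That said, two comments on your sketch. First, your Step~1 is not a reduction: after expanding via Parseval and isolating the diagonal $p\sum_n|a_n|^2$, you conclude that ``the whole content of the theorem is the estimate $\sum_{p\le Q}\sum_{h=1}^{p-1}|T(h/p)|^2\le\ldots$'' --- which is literally the statement of \cref{eqn:largesieve-prime} again. The diagonal piece does contribute $O(Q^2/\ln Q)\sum_n|a_n|^2$, but the remaining off-diagonal terms together with the $-|T(0)|^2$ subtraction reconstitute exactly the original sum, so nothing has been gained. Second, your Step~2 correctly identifies the amplification idea (introduce auxiliary primes $r$, use the divisor decomposition of $\sum_{h\bmod pr}|T(h/(pr))|^2$, and apply the full-moduli large sieve \cref{eqn:largesieve-useless} at level $\asymp QR$), and you are right that the naive comparison $\phi^\ast(p)\le\phi^\ast(1)+\phi^\ast(p)+\phi^\ast(r)+\phi^\ast(pr)$ is vacuous. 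But your ``Expected main obstacle'' paragraph then resolves this by invoking Wolke's own weighted estimate --- which is the theorem you are trying to prove. As written, the proposal is circular: it outlines the shape of Wolke's argument and then cites Wolke for the substance. If you intend to actually prove the theorem, you would need to carry out the weighted large sieve (or duality/Selberg-weight) argument explicitly rather than pointing back to \cite{wolke}.
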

By restricting to prime denominators $p\le Q$, \cref{eqn:largesieve-prime} improves over \cref{eqn:largesieve-useless} by an order of $\frac{\ln \ln Q}{\ln Q}$.
 It is conjectured by Elliott \cite{elliott} that this factor in \cref{eqn:largesieve-prime} can be further improved to $\frac{1}{\ln Q}$ , matching the density of primes $p\le Q$. This conjecture remains open; see recent partial results by Iwaniec \cite{iwaniec}.
Balog, Rivat, and S\'{a}rk\"{o}zy
\cite{largesievesumsets} obtained better estimate than \cref{eqn:largesieve-prime} when $N\le Q^{1+o(1)}$.

Our derandomization result is based on the following immediate corollary of \cref{thm:largesieve-basic}. 
See also \cite[Theorem 2.1]{montgomerytextbook} for a stronger statement. 
\begin{corollary}
\label{lem:largesieve_v2}
Let $\mathcal{X}\subset \R/\Z$ be a finite set of real numbers.
Suppose there are $0 < \delta \le \frac{1}{2}$ and $K\ge 1$ such that for all $\alpha \in \mathcal{X}$ it holds that $|\{\beta \in \mathcal{X}: \|\alpha - \beta\| < \delta\}| \le K$. Then,
\[ \sum_{\alpha \in \caX} \Big \lvert \sum_{n=0}^{N-1} a_n \,  e(n \alpha ) \Big \rvert^2 \le K(\delta^{-1} +N)\sum_{n=0}^{N-1} |a_n|^2.\]
\end{corollary}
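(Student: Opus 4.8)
The plan is to reduce to \Cref{thm:largesieve-basic} by partitioning the set $\caX$ into a bounded number of subsets, each of which is genuinely $\delta$-spaced. First I would observe that the hypothesis---every point $\alpha \in \caX$ has at most $K$ points of $\caX$ within distance $\delta$ (in the $\|\cdot\|$ metric on $\R/\Z$)---is a bounded-clustering condition. The natural move is a greedy coloring argument: build an auxiliary graph $G$ on vertex set $\caX$ where $\alpha \sim \beta$ iff $0 < \|\alpha - \beta\| < \delta$. By hypothesis $G$ has maximum degree at most $K-1$, so $G$ is $K$-colorable (greedily), giving a partition $\caX = \caX_1 \sqcup \cdots \sqcup \caX_K$ where each color class $\caX_\ell$ contains no edge of $G$, i.e.\ any two distinct points of $\caX_\ell$ are at distance $\ge \delta$. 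Thus each $\caX_\ell$ is a $\delta$-spaced set in the sense required by \Cref{thm:largesieve-basic} (padding with empty classes if $G$ uses fewer than $K$ colors is harmless).

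Next I would apply \Cref{thm:largesieve-basic} to each class $\caX_\ell$ with the same coefficients $a_0,\dots,a_{N-1}$, obtaining
\[
\sum_{\alpha \in \caX_\ell} \Big\lvert \sum_{n=0}^{N-1} a_n\, e(n\alpha) \Big\rvert^2 \le (\delta^{-1} + N) \sum_{n=0}^{N-1} |a_n|^2
\]
for each $\ell \in \{1,\dots,K\}$. Summing over $\ell$ and using that the classes partition $\caX$ yields
\[
\sum_{\alpha \in \caX} \Big\lvert \sum_{n=0}^{N-1} a_n\, e(n\alpha) \Big\rvert^2 = \sum_{\ell=1}^{K} \sum_{\alpha \in \caX_\ell} \Big\lvert \sum_{n=0}^{N-1} a_n\, e(n\alpha) \Big\rvert^2 \le K(\delta^{-1} + N) \sum_{n=0}^{N-1} |a_n|^2,
\]
which is exactly the claimed bound.

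The only genuinely nontrivial point---and the one I would be most careful about---is that the clustering hypothesis translates into a \emph{degree} bound on $G$ rather than something weaker: the set $\{\beta \in \caX : \|\alpha-\beta\| < \delta\}$ includes $\alpha$ itself, so the number of \emph{neighbors} of $\alpha$ in $G$ is at most $K-1$, which is what licenses $K$-colorability via the standard greedy argument (order the vertices arbitrarily; each vertex sees at most $K-1$ already-colored neighbors, so a free color among $\{1,\dots,K\}$ always exists). One subtlety to note is that $\|\cdot\|$ on $\R/\Z$ is a genuine metric satisfying the triangle inequality, so "distance $< \delta$" behaves as expected; and since $\delta \le 1/2$, the ball of radius $\delta$ around a point is an honest arc, so there is no wraparound pathology. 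Everything else is bookkeeping, and no number-theoretic input beyond \Cref{thm:largesieve-basic} is needed.
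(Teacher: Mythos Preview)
Your proposal is correct and follows essentially the same approach as the paper: greedily decompose $\caX$ into at most $K$ $\delta$-spaced subsets (the paper states this in one line without spelling out the graph-coloring justification you give) and apply \cref{thm:largesieve-basic} to each.
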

\cref{lem:largesieve_v2} is easily derived from \cref{thm:largesieve-basic} by greedily decomposing $\caX$ into $\delta$-spaced sets $\caX_1,\caX_2,\dots,\caX_K$ and applying  \cref{thm:largesieve-basic} separately.

Using \cref{thm:largesievewolke}, the following lemma follows from standard techniques (such as \cite{largesievesumsets}).
\begin{lemma}
   \label{lem:largesievenumbercollision}
Let $0<\eps<1$ be a constant. Let $10\le Q\le N\le Q^{2-\eps}$.
Let $A \subseteq [N]$.
Then, over a uniform random prime $p\in [Q/2,Q]$,
        \[\sum_{a,b\in A} \Pr_{p}[p \mid a-b] \le   
        |A| \cdot O\left ( \frac{|A|}{Q} + \log \log N \right ).
        \]
\end{lemma}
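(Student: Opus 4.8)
The plan is to express the left-hand side in terms of exponential sums and then apply \cref{thm:largesievewolke}. First I would note that $\sum_{a,b \in A} \Pr_p[p \mid a-b]$ counts, in expectation over the random prime $p$, the number of ordered pairs $(a,b) \in A \times A$ with $a \equiv b \pmod p$. Grouping $A$ by residue class modulo $p$, this equals $\Ex_p\big[\sum_{r=0}^{p-1} |\{a \in A : a \equiv r \pmod p\}|^2\big]$. The standard way to detect the congruence $a \equiv b \pmod p$ is via additive characters: $\mathbf{1}[a \equiv b \pmod p] = \frac1p \sum_{h=0}^{p-1} e(h(a-b)/p)$. Hence
\begin{equation}
\sum_{a,b \in A} \mathbf{1}[p \mid a-b] = \frac1p \sum_{h=0}^{p-1} \Big| \sum_{a \in A} e(ha/p) \Big|^2.
\end{equation}
The $h=0$ term contributes $\frac{|A|^2}{p}$, so summing over primes $p \in [Q/2,Q]$ and using that there are $\Theta(Q/\log Q)$ such primes, the diagonal ($h=0$) part contributes $O\big(|A|^2 \cdot \frac{1}{Q/2} \big)$ on average, i.e.\ $O(|A|^2/Q)$, which matches the first term in the claimed bound (after dividing by the number of primes to turn the sum into an expectation).

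Next I would handle the off-diagonal terms $h \in \{1,\dots,p-1\}$. Setting $a_n = \mathbf{1}[n \in A]$ for $n \in [N]$ (so $\sum_n |a_n|^2 = |A|$), the contribution of these terms, summed over all primes $p \le Q$, is exactly the left-hand side of \cref{eqn:largesieve-prime}. Applying \cref{thm:largesievewolke} with the hypothesis $N \le Q^{2-\eps}$ (which is given), this is bounded by $O_\eps\big(\frac{Q^2 \log\log Q}{\log Q} \cdot |A|\big)$. Restricting the outer sum to primes in $[Q/2,Q]$ only decreases it, so the total off-diagonal contribution over $p \in [Q/2,Q]$ is $O_\eps\big(\frac{Q^2 \log\log Q}{\log Q}|A|\big)$, and each such term has a $\frac1p = \Theta(1/Q)$ prefactor, giving $O_\eps\big(\frac{Q \log\log Q}{\log Q}|A|\big)$. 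Dividing by the number of primes $\Theta(Q/\log Q)$ in $[Q/2,Q]$ to convert the sum over $p$ into an expectation $\Pr_p$ over a uniform random prime, the off-diagonal part becomes $O_\eps(|A| \log\log Q) = O_\eps(|A|\log\log N)$ since $Q \le N$. Combining with the diagonal part yields $|A| \cdot O\big(\frac{|A|}{Q} + \log\log N\big)$ as desired, with the implied constant depending on the fixed $\eps$.

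One technical point I would be careful about: \cref{thm:largesievewolke} as stated sums over all primes $p \le Q$ and over $h = 1,\dots,p-1$, whereas I want the expectation over $p$ uniform in the dyadic range $[Q/2,Q]$. This is fine because (i) the full sum over $p \le Q$ only upper-bounds the sub-sum over $p \in [Q/2,Q]$, and (ii) by the prime number theorem the number of primes in $[Q/2,Q]$ is $\Theta(Q/\log Q)$ (using $Q \ge 10$, so this count is positive and of the right order). I should also double-check the exponent convention: the statement of \cref{eqn:largesieve-prime} writes $e(nh/q)$ where presumably $q = p$; I will treat this as $e(nh/p)$, consistent with the Farey-fraction specialization. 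The only real obstacle is bookkeeping the constants through the conversion from ``sum over primes'' to ``expectation over a random prime in a dyadic interval'' and confirming that the $\eps$-dependent constant from \cref{thm:largesievewolke} is harmless since $\eps$ is a fixed constant; there is no deep difficulty beyond invoking Wolke's theorem as a black box.
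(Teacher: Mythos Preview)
Your proposal is correct and follows essentially the same approach as the paper: both express the collision count via additive characters, separate the $h=0$ term (giving the $|A|^2/Q$ contribution), and apply Wolke's theorem (\cref{thm:largesievewolke}) to bound the off-diagonal sum, then divide by the number of primes in $[Q/2,Q]$ via the prime number theorem. The paper's proof is organized slightly more compactly as a single chain of inequalities, but the ideas and even the technical caveats you raise (extending the prime sum to $p\le Q$, the $q$ vs.\ $p$ typo) match exactly.
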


\begin{proof}
    Apply \cref{thm:largesievewolke} with $a_n = \small{ \begin{cases}1 & n \in A \\ 0 & n\notin A\end{cases}}$.
    Then $\sum_{n=0}^{N-1} |a_n|^2 = |A|$. 
    Denote $S(\alpha) =
    \sum_{n=0}^{N-1} a_n \,  e(n\alpha)=
    \sum_{a\in A} e(a\alpha )$, and note that $S(0)=|A|$.

    Let $\caP$ denote the set of primes in $[Q/2,Q]$, with size $|\caP| = \Omega(Q/\log Q)$ by the prime number theorem.
   We have 
\begin{align*}
    \sum_{a,b\in A} \Pr_{p\in \caP}[p \mid a-b] &= \sum_{a,b\in A}\Ex_{p\in \caP} \left [\frac{1}{p}\sum_{h=0}^{p-1}e\left (\frac{h(a-b)}{p}\right )\right ]\\
    & =   \Ex_{p\in \caP}\left [  \frac{1}{p}\sum_{h=0}^{p-1} \left \lvert S(h/p) \right  \rvert^2\right ]\\
    & \le \frac{|A|^2}{Q/2} + \frac{1}{|\caP| Q/2}\sum_{p\in \caP}\sum_{h=1}^{p-1} \left \lvert S\left (h/p\right )\right \rvert ^2\\
    & \le \frac{|A|^2}{Q/2} + \frac{1}{|\caP| Q/2}\cdot O \Big (\frac{Q^2 \ln\ln Q}{\ln Q} |A|\Big ) \tag{by \cref{thm:largesievewolke}}\\
     & = |A| \cdot O\left ( \frac{|A|}{Q} + \ln \ln Q \right ) \tag{$|\caP| = \Omega(Q/\log Q)$}. 
\end{align*}
\end{proof}

\begin{corollary}[Few items land in heavy buckets]
   \label{cor:largesievehash}
   Let $m\le N\le m^{2-\eps'}$, where $0<\eps'<1$ is some fixed constant, and $m$ is large enough.

  Let $A \subseteq [N]$ with $|A|\le m$. Then, for a uniformly random prime $p\in [m/\log \log m, 2m/\log\log m]$,
  if we define buckets $B_k = \{a\in A: a\equiv k \pmod{p}\}$ for all $k\in [p]$, then 
  \[ \Pr_{p}\Big [ \big |\big \{a\in A :  |B_{a\bmod p}| \ge c\cdot \log \log m\big \}\big | < 0.1|A|\Big ] \ge 0.9,\]
  where $c>0$ is some constant.
\end{corollary}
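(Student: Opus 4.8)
\textbf{Proof plan for \cref{cor:largesievehash}.}
The plan is to derive this corollary from \cref{lem:largesievenumbercollision} by a Markov-type argument on the number of colliding pairs. First I would set $Q = 2m/\log\log m$, so that the interval $[m/\log\log m, 2m/\log\log m] = [Q/2, Q]$ matches the form required by \cref{lem:largesievenumbercollision}, and I would check that the hypotheses $10 \le Q \le N \le Q^{2-\eps}$ are satisfied for some constant $\eps$: since $|A| \le m$ and $m \le N \le m^{2-\eps'}$, we have $Q = \Theta(m/\log\log m)$ and $N \le m^{2-\eps'} \le Q^{2-\eps}$ for, say, $\eps = \eps'/2$ once $m$ is large enough (the $\log\log m$ factors are absorbed). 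Then \cref{lem:largesievenumbercollision} gives
\[
\Ex_p\Big[\sum_{a,b\in A}\mathbb{1}[p\mid a-b]\Big] \le |A|\cdot O\Big(\frac{|A|}{Q} + \log\log N\Big) = |A|\cdot O(\log\log m),
\]
using $|A|/Q = O(\log\log m)$ and $\log\log N = O(\log\log m)$ (since $N \le m^{2-\eps'}$ implies $\log\log N \le \log\log m + O(1)$).

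Next I would translate the pair count into a per-element bucket-load statement. Note that for $a \in A$, $|B_{a\bmod p}| = |\{b \in A : p \mid a - b\}|$, so $\sum_{a\in A}|B_{a\bmod p}| = \sum_{a,b\in A}\mathbb{1}[p\mid a-b]$ is exactly the quantity bounded above. By Markov's inequality applied to the nonnegative random variable $\sum_{a\in A}|B_{a\bmod p}|$, with probability at least $0.9$ over $p$ we have $\sum_{a\in A}|B_{a\bmod p}| \le 10\cdot\Ex_p[\cdots] \le C'|A|\log\log m$ for an appropriate constant $C'$. Condition on this event. Now within this fixed good $p$, I would apply Markov's inequality a second time, this time over the uniform choice of $a \in A$: the average of $|B_{a\bmod p}|$ over $a\in A$ is at most $C'\log\log m$, so the fraction of $a\in A$ with $|B_{a\bmod p}| \ge 10C'\log\log m$ is at most $1/10$. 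Setting $c = 10C'$ gives exactly $|\{a\in A : |B_{a\bmod p}| \ge c\log\log m\}| \le 0.1|A|$, which is the complement of the bad event in the statement. Hence the bad event has probability at most $0.1$, i.e., the good event has probability at least $0.9$.

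The only mildly delicate point — and the one I would state carefully rather than wave at — is the chaining of the two Markov applications: the first is over the randomness of $p$ and produces a high-probability event on which a sum is small; the second is a deterministic counting argument inside that event. Since both only lose constant factors, there is room to tune the constants ($10$, $c$, the $0.9$) so that everything lines up; one should double-check that the constant hidden in the $O(\cdot)$ of \cref{lem:largesievenumbercollision} does not depend on $\eps$ in a way that blows up, but since $\eps = \eps'/2$ is a fixed constant this is fine. I do not anticipate a genuine obstacle here — the real content is already packed into \cref{lem:largesievenumbercollision} (and thus into Wolke's \cref{thm:largesievewolke}); this corollary is just the convenient "few items land in heavy buckets" repackaging needed downstream.
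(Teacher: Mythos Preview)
Your proposal is correct and follows essentially the same route as the paper's proof: set $Q = 2m/\log\log m$, verify the hypotheses of \cref{lem:largesievenumbercollision} with $\eps = \eps'/2$, use Markov over $p$ to get $\sum_{a\in A}|B_{a\bmod p}| \le c'|A|\log\log m$ with probability $\ge 0.9$, then a counting/averaging argument to bound the number of $a$ in heavy buckets by $0.1|A|$. The paper's argument is identical up to notation.
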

\begin{proof}
We apply \cref{lem:largesievenumbercollision} with $Q := 2m/\log \log m$, $N:=N$ and $\eps := \eps'/2$. Assume $m$ is large enough so that $m^{2-2\eps} \le (2m/\log \log m)^{2-\eps}$, which implies $N\le m^{2-\eps'}\le Q^{2-\eps}$ as required in \cref{lem:largesievenumbercollision}.
Then \cref{lem:largesievenumbercollision} states that
        \[\sum_{a,b\in A} \Pr_{p}[p \mid a-b] \le   
        |A| \cdot O\left ( \frac{|A|}{2m/\log \log m} + \log \log N \right ) \le O(|A|\cdot \log \log m).
        \]
        By linearity of expectation and Markov's inequality, we know that with at least $0.9$ probability over the choice of $p$, it holds that
        \[ \sum_{a,b\in A} \mathbf{1}[p\mid a-b] \le c'\cdot |A|\cdot \log \log m\]
        for some constant $c'$. Since the left hand side equals $\sum_{a\in A}|B_{a\bmod p}|$, we know the number of $a\in A$ such that $|B_{a\bmod p}| \ge 10c'\log \log m$ is at most $\frac{c'\cdot |A|\cdot \log \log m}{10c'\log \log m} \le 0.1|A|$.
  Hence setting $c=10c'$ satisfies the desired statement.
\end{proof}

\subsection{Basic algebraic tools}
We will perform arithmetic operations over finite fields $\F_q = \F_{p^\kappa}$, where an $\F_q$ element fits in constant many machine words.
\begin{lemma}
\label{lem:tablefinitefield}
Let $q=p^\kappa$ where $p$ is prime and $\kappa\ge 2$ is an integer.  On a machine with word length $\Omega(\log q)$, after deterministic preprocessing in $\tilde O(q^{0.04} + \kappa^4p^{1/2})$ time, we can perform additions, subtractions and multiplications over $\F_q$ in $O(1)$ time, and multiplicative inversions in $O(\log \kappa)$ time.
\end{lemma}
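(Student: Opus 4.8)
The goal is to set up $\F_q$ with $q = p^\kappa$ so that arithmetic is $O(1)$ after cheap preprocessing. My plan is to split into two regimes depending on whether $p$ is small or large relative to $q$. The key tension is that a full multiplication table for $\F_q$ has size $q^2$, which is far too large; we need tables of size roughly $q^{o(1)}$. The workhorse will be representing $\F_q$ as $\F_p[x]/(f(x))$ for an explicit irreducible $f$ of degree $\kappa$, and then viewing each field element as a sequence of $\kappa$ digits in $[p]$, packed into $O(\log q / w) = O(1)$ machine words.

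\textbf{Step 1: construct the field.} First I would deterministically find a monic irreducible polynomial $f \in \F_p[x]$ of degree $\kappa$. Using Shoup's deterministic algorithm (or a brute-force search that is affordable in this budget), this costs $\tilde O(\kappa^4 p^{1/2})$ time, which matches the stated bound; the $p^{1/2}$ reflects the cost of finding a quadratic nonresidue / irreducible factors deterministically. Addition and subtraction in $\F_p[x]/(f)$ are then just coordinate-wise operations mod $p$ on the $\kappa$ digits; if $p$ is tiny these can be done with bit-packed word operations and a small table for the mod-$p$ reduction of sums, giving $O(1)$ time.

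\textbf{Step 2: fast multiplication via small tables.} For multiplication, I would precompute a table for multiplying two elements of $\F_p$ (size $p^2$, affordable when $p$ is small) and a table giving $x^i \bmod f(x)$ for $\kappa \le i \le 2\kappa-2$ (the reduction table, size $\kappa^2$). Then a product of two degree-$<\kappa$ polynomials is computed by the schoolbook method and reduced — but this is $O(\kappa^2)$ operations, not $O(1)$. To get $O(1)$, pack blocks of $\Theta(\log q / \log p)$ consecutive digits into a single word and precompute a table indexed by pairs of such packed blocks giving their polynomial product (carefully choosing block size $b$ so that $p^{2b} \le q^{0.04}$, giving table size $q^{0.04}$); combining $O(1)$ such block-products with $O(1)$ shifts and the reduction table yields the product in $O(1)$ time. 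When $p$ is large (say $p > q^{0.01}$, so $\kappa = O(1)$), the schoolbook computation is already $O(\kappa^2) = O(1)$, using a constant number of word multiplications for the $\F_p$-arithmetic, so no large table is needed there.

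\textbf{Step 3: inversion.} For multiplicative inverse, the cleanest route is $a^{-1} = a^{q-2}$ via repeated squaring, which is $O(\log q)$ multiplications — too slow. Instead I would use the identity $a^{-1} = a^{p^\kappa - 2}$ but exploit Frobenius: compute the norm-type product $a \cdot a^p \cdot a^{p^2} \cdots a^{p^{\kappa-1}} \in \F_p$, invert it in $\F_p$ (one table lookup or one word operation), and multiply back by $a^p a^{p^2}\cdots a^{p^{\kappa-1}}$. Each application of Frobenius $a \mapsto a^p$ is $\F_p$-linear and can be precomputed as a $\kappa \times \kappa$ matrix over $\F_p$ (applied in $O(1)$ time via bit-packed table lookups as in Step 2), and iterating it $\kappa-1$ times naively is $O(\kappa)$; to reach $O(\log \kappa)$ I would compose Frobenius powers by repeated doubling, precomputing $a \mapsto a^{p^{2^j}}$ as matrices and accumulating the partial products $\prod_{i<2^j} a^{p^i}$ in the same doubling, so the whole inversion uses $O(\log \kappa)$ field multiplications.

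\textbf{Main obstacle.} The delicate part is the bookkeeping in Step 2: choosing the block size so the lookup table stays within the $\tilde O(q^{0.04})$ budget while ensuring that a constant number of block-multiplications plus the reduction table genuinely reconstructs the full product in $O(1)$ word operations, and making this work uniformly across the whole range of $(p,\kappa)$ with $p^\kappa = q$ — in particular handling the crossover between the "small $p$, large $\kappa$" regime (where we lean on big tables) and the "large $p$, constant $\kappa$" regime (where we lean on native word multiplication). The preprocessing-time accounting — showing the irreducible-polynomial search dominates at $\tilde O(\kappa^4 p^{1/2})$ and all table constructions fit in $\tilde O(q^{0.04})$ — is routine once the block sizes are fixed.
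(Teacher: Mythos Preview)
Your proposal is correct and follows essentially the same approach as the paper's proof sketch: Shoup's $\tilde O(\kappa^4 p^{1/2})$ construction of the irreducible polynomial, a case split on whether $\kappa$ is bounded (where schoolbook arithmetic is already $O(1)$) versus large (where one partitions each element into $O(1)$ chunks of $\approx \kappa/100$ coefficients and tabulates all $p^{2\lceil\kappa/100\rceil}\le q^{0.04}$ chunk-pair products), and an $O(\log\kappa)$-multiplication inversion. The only cosmetic difference is that the paper cites the fast Euclidean recursion of \cite[Theorem~11.10(i)]{von2013modern} for inversion, whereas you spell out the Itoh--Tsujii norm-via-Frobenius-doubling variant; both achieve the same $O(\log\kappa)$ bound.
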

\begin{proof}[Proof Sketch]
We need a degree-$\kappa$ monic irreducible polynomial over $\F_{p}$ that represents $\F_q$, which can be found by a deterministic algorithm in $\tilde O(\kappa^4 p^{1/2} ) $ time~\cite{Shoup88}.

By our machine assumption, additions, subtractions, and multiplications over $\F_p$ can be done in $O(1)$ time each. If $\kappa \le 100$, then we can naively perform arithmetic operations on $\F_q$-elements (viewed as $\F_p$-polynomials of degree $<\kappa$) in $\kappa^{O(1)} = O(1)$ time.

Otherwise, $\kappa > 100$, and we partition each $\F_q$-element (viewed as an $\F_p$-polynomial of degree $<\kappa$) into $100$ chunks each containing $\le \lceil \kappa/100\rceil $ terms. Each chunk has $p^{\lceil \kappa/100\rceil }\le p^{2\kappa/100} = q^{0.02}$ possibilities.
We prepare an addition table and an multiplication table for the chunks in $(q^{0.02})^2 \cdot \kappa^{O(1)} \le  \tilde O(q^{0.04})$ time.  Then, addition and multiplication over $\F_q$ takes $O(1)$ table look-ups.

It is known that computing multiplicative inverses in $\F_{p^{\kappa}}$ amounts to $O(\log \kappa)$ many multiplications of degree-$\kappa$ $\F_p$-polynomials \cite[Theorem 11.10 (i)]{von2013modern} and can hence be done in $O(\log \kappa)$ time via table look-ups (the algorithm \cite[Theorem 11.10 (i)]{von2013modern} uses recursion; in our word RAM adaptation here, once the problem size drops below a small constant fraction of $\kappa$, we can directly look up the answers from another precomputed table instead).
\end{proof}

\begin{theorem}[\cite{Kaltofen92}]
   \label{thm:ringdeter}
   There is a deterministic algorithm that 
   computes the determinant of an $n\times n$ matrix with entries from an arbitrary commutative ring in $O(n^3\sqrt{n}\cdot \polylog(n))$ ring additions, subtractions, and multiplications. 
\end{theorem}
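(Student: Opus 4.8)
The plan is to follow Kaltofen's route \cite{Kaltofen92} for eliminating divisions: reduce the determinant over $R$ to a division-containing computation over the truncated power series ring $S=R[z]/(z^{n+1})$, and then exploit the structure of $S$ together with a carefully chosen homotopy so that every division that arises is by a unit of $S$ whose constant term is $\pm1$. First I would fix the nilpotent companion matrix $\Gamma\in R^{n\times n}$ of $x^n$ (the down-shift, $\Gamma e_i=e_{i+1}$ for $i<n$ and $\Gamma e_n=0$) and consider $M(z):=\Gamma+zA$ over $S$. Selecting the $zA$ term from every factor in the Leibniz expansion of $\det(\Gamma+zA)$ shows $[z^n]\det(\Gamma+zA)=\det(A)$, and since the characteristic polynomial $\chi(x)\in S[x]$ of $M(z)$ satisfies $\chi(0)=(-1)^n\det(M(z))$, it suffices to compute $\chi$ over $S$ and output $(-1)^n$ times the $z^n$-coefficient of its constant term.

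To compute $\chi$ over $S$ I would use the coordinate-recurrence (Wiedemann) method with projections adapted to $\Gamma$: form the Krylov scalars $\sigma_i:=e_n^{\top}M(z)^i e_1\in S$ for $i=0,\dots,2n-1$ and recover $\chi$ as the minimal generating polynomial of $(\sigma_i)$ via Berlekamp--Massey. The reason for choosing $\Gamma,e_n,e_1$ this way is that $\sigma_i\equiv e_n^{\top}\Gamma^i e_1=\mathbf{1}[i=n-1]\pmod z$, so the Hankel matrix $H=[\sigma_{i+j}]_{0\le i,j<n}$ reduces mod $z$ to the anti-diagonal permutation matrix; hence $\det H\equiv\pm1\pmod z$ and $H\in\mathrm{GL}_n(S)$. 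This forces the minimal generator to have degree exactly $n$, so it equals $\chi$, and, \emph{crucially}, it makes the Berlekamp--Massey run over $S$ divide only by its successive leading discrepancies, each of which reduces mod $z$ to $\pm1\in R^{\times}$. Inverting such a power series (with $\pm1$ constant term) costs $\tilde O(n)$ ring additions, subtractions, and multiplications and uses no ring division, so the whole computation is genuinely division-free over $R$.

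For the running time I would compute $(\sigma_i)_{i<2n}$ by a baby-step/giant-step scheme with block length $r:=\lceil\sqrt n\,\rceil$: precompute the vectors $M(z)^l e_1$ for $l<r$, then the matrix $M(z)^r$ over $S$ by repeated squaring, then the covectors $e_n^{\top}M(z)^{jr}$ for $j<\lceil 2n/r\rceil$, and finally assemble the $\sigma_i$ by dot products. The key accounting point is that $M(z)^i$ only involves powers of $z$ up to $z^{\min(i,n)}$, so the early steps are carried out at reduced precision; combining this with polynomial multiplication over the arbitrary ring $R$ in $\tilde O(d)$ ring operations (the Cantor--Kaltofen algorithm), the baby steps cost $\tilde O(n^3)$, computing $M(z)^r$ costs $\tilde O(n^3 r)=\tilde O(n^{3.5})$, the giant steps cost $\tilde O(n^4/r)=\tilde O(n^{3.5})$, the assembly costs $\tilde O(n^3)$, and Berlekamp--Massey over $S$ costs $\tilde O(n^3)$, for a total of $O(n^3\sqrt n\cdot\polylog(n))$ ring additions, subtractions, and multiplications.

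The part I expect to be the main obstacle is the cost analysis, not the correctness: arithmetic in $S=R[z]/(z^{n+1})$ carries an inherent $\tilde O(n)$ overhead per ring operation, so a naive simulation of any $\Theta(n^3)$-operation determinant algorithm over $S$ would cost $\tilde O(n^4)$ (which is already what Strassen's general division-elimination yields here); getting down to $\tilde O(n^{3.5})$ needs both the baby-step/giant-step reorganization and the variable-precision bookkeeping, and one must check that the truncation degrees in the giant steps and in the Hankel/Berlekamp--Massey step never discard information needed for the $z^n$-coefficient of $\chi(0)$. The secondary delicate point is verifying the ``only divide by units with constant term $\pm1$'' claim in full: this amounts to running Berlekamp--Massey (equivalently, a fast Hankel solve for $H$) modulo $z$ on the concrete sequence $0^{\,n-1},1,0^{\,n}$ and checking that every discrepancy (resp.\ pivot) it encounters is $\pm1$, which is exactly what the anti-diagonal structure of $H\bmod z$ guarantees.
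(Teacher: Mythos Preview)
The paper does not prove this statement at all; it is quoted as a black box from \cite{Kaltofen92} in the preliminaries and never revisited. Your proposal is a faithful reconstruction of Kaltofen's original argument (the nilpotent homotopy $\Gamma+zA$, Wiedemann/Berlekamp--Massey over $R[z]/(z^{n+1})$ with projections chosen so the mod-$z$ Hankel is antidiagonal, and the baby-step/giant-step Krylov computation), so there is nothing to compare against on the paper's side.
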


The following is a word RAM adaptation of Harvey, van der Hoeven, and Lecerf's algorithm \cite{HarveyHL17} for multiplying polynomials over small prime fields. 
\begin{theorem}[{\cite[Appendix B]{focs23}}]
   \label{thm:packfft}
   For any prime $p$, computing the product of two degree-$n$ polynomials from $\F_p[x]$ can be solved in $O(n \log p)$ deterministic time on a machine with word length $\Omega(\log n + \log p)$ bits.
\end{theorem}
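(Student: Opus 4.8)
The plan is to re-implement the polynomial multiplication algorithm of Harvey, van der Hoeven and Lecerf \cite{HarveyHL17} in the word RAM model, exploiting two sources of speedup unavailable in the multitape Turing model it was designed for. Their algorithm multiplies two degree-$n$ polynomials over $\F_p$ in $O(n\log p\cdot\log(n\log p))$ bit operations, up to a $K^{\log^*(n\log p)}$ factor for an absolute constant $K$; the goal is to remove the extra $\log(n\log p)\approx\log n$ factor and the $\log^*$-type overhead, reaching $O(n\log p)$ \emph{word} operations. The interesting regime is small $p$ (say $\log p=O(\log n)$); for large $p$ one may simply use Kronecker substitution to pass to an integer product of $O(n\log p)$ total bits and multiply with a word-RAM integer-multiplication routine, so I focus on the small-$p$ case. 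The first speedup is word-level parallelism: since $w=\Omega(\log n+\log p)$, we can pack $\Theta(w/\log p)$ residues modulo $p$ into a single machine word, leaving a few guard bits per residue, and perform componentwise additions, subtractions, multiplications, and reductions modulo $p$ on such packed vectors in $O(1)$ time each, after precomputing tables of size $p^{O(1)}=n^{O(1)}$; the only care needed is to keep enough guard bits that carries never cross residue boundaries before the next reduction. The second speedup is Four-Russians-style table lookups: any subproblem whose input and output together occupy at most $\delta w$ bits for a small constant $\delta$ is solved by one lookup into a table of size $2^{\delta w}$ (which is $n^{O(\delta)}$ in the parameter range of interest), and this replaces the slowly-shrinking recursion of \cite{HarveyHL17} that produces the $K^{\log^*}$ factor.

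Concretely, I would first pad the degrees to a power of two and reduce to a cyclic convolution of length $m=\Theta(n)$ over $\F_p$; since every coefficient of the true product lies in $[0,mp^2)$, it suffices to compute this convolution over any ring large enough to represent such integers. Following \cite{HarveyHL17}, I would pass to an extension ring of degree $d=O(\log_p m)$ --- either $\F_{p^d}$ or a cyclotomic quotient of $\F_p[x]$ --- chosen so that it contains a root of unity whose order is a convenient multiple of $m$ and at most $\poly(m)$; such a $d$, along with the required degree-$d$ irreducible polynomial over $\F_p$, can be found deterministically within budget via the subroutine of \cref{lem:tablefinitefield}, and since $d\log p=O(\log n+\log p)=O(w)$, each arithmetic operation in this ring costs $O(1)$ word operations by \cref{lem:tablefinitefield}. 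Then I would compute the length-$m$ DFT over this ring using the recursive block decomposition (``$b$-adic chunking'') of \cite{HarveyHL17}: split a length-$m$ DFT into an outer DFT of length $m/b$ over blocks of $b$ coefficients and a batch of inner length-$b$ DFTs, with $b=\Theta(w/\log p)$ chosen so that an inner DFT operates on $O(w)$ bits of data and runs in $O(1)$ time from a precomputed table, while the outer DFT is handled recursively on the packed representation. Unwinding the recursion, essentially all of the $\Theta(m\log m)$ butterflies act on packed words holding $\Theta(w/\log p)$ coefficients each, so the DFT costs $O(m\log m\cdot\log p/w)=O(m\log p)$ word operations, using $\log m=\Theta(w)$ in the small-$p$ regime. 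Finally I would perform the pointwise products, the inverse DFT, the descent from the extension ring back to $\F_p$, and the reduction of the $m$ output coefficients modulo $p$, all within the same budget.

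The main obstacle is making the bit-packing coexist with the data-movement pattern of the FFT. Twiddle-factor multiplications inside a butterfly are componentwise and cause no problem for packed arithmetic, but the bit-reversal permutations and the matrix transposes of a four-step / six-step FFT shuffle individual coefficients across word boundaries; carrying this out in time linear in the number of packed words requires a word-RAM transpose primitive together with a memory layout in which each recursion level touches contiguous packed blocks. One must also check that the guard bits never overflow: a packed word accumulates sums across $\Theta(\log m)$ butterfly levels before a reduction, so the guard-bit count must scale with $\log m$ (or reductions must be inserted periodically), and the extension-ring arithmetic of \cref{lem:tablefinitefield} must be interleaved with the packing so that its $O(1)$-word cost survives. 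A secondary, more routine obstacle is bookkeeping: all auxiliary tables (packed $\F_p$- and $\F_{p^d}$-arithmetic, inner DFTs, transposes) must have size $n^{O(1)}$ and be precomputable in $o(n\log p)$ time, and the deterministic search for the extension degree and the root of unity must fit in the same budget; each of these is standard but needs verification.
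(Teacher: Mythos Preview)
The paper does not prove this theorem itself; it quotes it from \cite[Appendix~B]{focs23}, describing it only as ``a word RAM adaptation of Harvey, van der Hoeven, and Lecerf's algorithm \cite{HarveyHL17}.'' Your high-level plan---port \cite{HarveyHL17} to word RAM, use bit packing to divide the bit complexity by $w$, and use Four-Russians tables to cut off the recursion and kill the $K^{\log^*}$ factor---matches that description and is the right strategy.

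However, your concrete DFT argument has a real gap. You take $d=\Theta(\log_p m)$ so that $\F_{p^d}$-elements occupy $O(1)$ words, and then run a length-$m$ DFT over $\F_{p^d}$, Cooley--Tukey decomposed as $m=(m/b)\cdot b$ with $b=\Theta(w/\log p)$, claiming each inner length-$b$ DFT fits in $O(w)$ bits. The \emph{inputs} to the first layer of inner DFTs are indeed the original $\F_p$-coefficients and do pack into $O(w)$ bits, but their outputs---and all data after the first twiddle step---are full $\F_{p^d}$-elements of $\Theta(w)$ bits each. From that point on nothing packs: the outer transforms are $b$ length-$(m/b)$ DFTs over $\F_{p^d}$ on $\Theta(1)$-word coefficients, costing $\Theta(m\log(m/b))=\Theta(m\log m)$ word operations, and recursion does not help because their inputs are already word-sized. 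Your assertion that ``essentially all of the $\Theta(m\log m)$ butterflies act on packed words holding $\Theta(w/\log p)$ coefficients each'' is therefore unsupported beyond the very first layer.

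The missing ingredient is a degree-reducing Kronecker/bivariate step \emph{before} passing to the extension ring, so that the transform length drops from $m$ to $O(m/d)$. After grouping each block of $\Theta(d)$ consecutive $\F_p$-coefficients into a single ring element, one has a polynomial of degree $O(m/d)$ over a ring with $O(1)$-word elements, and a straightforward FFT of that length costs $O((m/d)\log(m/d))=O(m\log p)$ word operations directly, with no butterfly packing needed; the $K^{\log^*}$ factor then disappears because the base-ring multiplications are handled by table lookup rather than recursion. You also glossed over a genuine number-theoretic issue that \cite{HarveyHL17} works hard to resolve: the coefficient ring must contain a root of unity of order $\Theta(m/d)$, not merely of order $\ge m$, and arranging this for arbitrary $p$ is nontrivial.
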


The following is an immediate corollary of \cref{thm:packfft} by flattening a degree-$n$ polynomial in $\F_q[x]$ to a degree-$2\kappa n$ polynomial in $\F_p[x]$.

\begin{corollary}
     \label{cor:packfft-finite-field} 
   For any $q = p^\kappa$ for prime $p$, computing the product of two degree-$n$ polynomials from $\F_q[x] $ can be solved in $O(n \log q)$ deterministic time on a machine with word length $\Omega(\log n + \log q)$ bits.
\end{corollary}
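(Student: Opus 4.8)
The plan is to reduce multiplication in $\F_q[x]$ to multiplication in $\F_p[x]$ by a Kronecker-type substitution (``flattening'') and then invoke \cref{thm:packfft}. If $\kappa=1$ there is nothing to do, so assume $\kappa\ge 2$ and fix a representation $\F_q=\F_p[y]/(f(y))$ for a degree-$\kappa$ monic irreducible $f\in\F_p[y]$; together with the $O(1)$-time $\F_q$-arithmetic tables of \cref{lem:tablefinitefield}, this is a one-time deterministic precomputation (subsumed in all the applications in this paper). Given $A(x)=\sum_{i=0}^{n}a_i x^i\in\F_q[x]$, write each coefficient as $a_i=\sum_{j=0}^{\kappa-1}a_{i,j}y^j$ with $a_{i,j}\in\F_p$, and regard $A$ as the bivariate polynomial $\hat A(x,y)=\sum_{i,j}a_{i,j}x^i y^j\in\F_p[x,y]$, which has $\deg_x\hat A\le n$ and $\deg_y\hat A\le\kappa-1$; do the same for $B$.

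Next, substitute $y\mapsto x^{2(n+1)}$ to obtain the univariate polynomials $\tilde A(x)=\hat A\bigl(x,x^{2(n+1)}\bigr)$ and $\tilde B(x)=\hat B\bigl(x,x^{2(n+1)}\bigr)$ in $\F_p[x]$, each of degree at most $n+2(n+1)(\kappa-1)<2\kappa(n+1)=O(\kappa n)$. Compute $\tilde C(x)=\tilde A(x)\tilde B(x)\in\F_p[x]$ via \cref{thm:packfft}; this costs $O(\kappa n\cdot\log p)=O(n\log q)$ time (using $\log q=\kappa\log p$), and the word length $\Omega(\log(\kappa n)+\log p)$ required by \cref{thm:packfft} is implied by the hypothesis $\Omega(\log n+\log q)$. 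The crucial point is that the bivariate product $\hat C(x,y):=\hat A(x,y)\hat B(x,y)$ satisfies $\deg_x\hat C\le 2n<2(n+1)$, so the monomials $x^i y^j$ occurring in $\hat C$ (with $0\le i\le 2n$ and $0\le j\le 2\kappa-2$) are mapped to pairwise distinct powers $x^{\,i+2(n+1)j}$ of $x$ under the substitution. Hence each coefficient $\hat C_{i,j}$ of $\hat C$ can be read off directly from the corresponding coefficient of $\tilde C$, in $O(\kappa n)$ total time, recovering $\hat C\in\F_p[x,y]$.

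Finally, reduce the $y$-variable modulo $f$: the $x^i$-coefficient of the desired product $A(x)B(x)\in\F_q[x]$ is $\bigl(\sum_{j=0}^{2\kappa-2}\hat C_{i,j}\,y^j\bigr)\bmod f(y)$. Precompute the $\F_q$-elements $y^\kappa\bmod f,\ y^{\kappa+1}\bmod f,\ \dots,\ y^{2\kappa-2}\bmod f$ incrementally, each obtained from the previous one by one multiplication by $y$ (which is $O(1)$ with the tables of \cref{lem:tablefinitefield}), for $O(\kappa)$ time total. Then for each $i\in\{0,\dots,2n\}$ the coefficient equals $\sum_{j=0}^{\kappa-1}\hat C_{i,j}y^j+\sum_{j=\kappa}^{2\kappa-2}\hat C_{i,j}\cdot\bigl(y^j\bmod f\bigr)$, which is $O(\kappa)$ scalar multiplications and additions in $\F_q$, hence $O(\kappa n)$ over all $i$. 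Adding up all the costs gives $O(n\log q)$, as claimed.

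There is essentially no serious obstacle here; the one thing that must be chosen correctly is the substitution exponent. It has to be at least $2(n+1)$ — large enough to absorb the doubling of the $x$-degree caused by the multiplication — so that distinct monomials of $\hat C$ never collide after the substitution; this is exactly why the flattened polynomials have degree $\approx 2\kappa n$ rather than $\approx\kappa n$. One should also observe that the final reduction modulo $f$ contributes only $O(\kappa n)=O(n\log q/\log p)\le O(n\log q)$, so it is dominated by the $\F_p[x]$ multiplication and does not affect the bound.
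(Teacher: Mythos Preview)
Your proof is correct and follows exactly the approach the paper sketches in one line (``flattening a degree-$n$ polynomial in $\F_q[x]$ to a degree-$2\kappa n$ polynomial in $\F_p[x]$''): Kronecker substitution to reduce to $\F_p[x]$-multiplication via \cref{thm:packfft}, followed by reduction modulo $f$. You have supplied the details the paper omits, including the careful choice of the substitution exponent and the final reduction step; your observation that the latter needs the $O(1)$-time $\F_q$ arithmetic from \cref{lem:tablefinitefield} (and hence its precomputation) is apt, and indeed every use of this corollary in the paper is preceded by that precomputation.
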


\subsection{Prony's method}
We need an old algorithm for interpolating sparse polynomial, called Prony's method \cite{prony1795}, which was rediscovered later by Ben-Or and Tiwari \cite{Ben-OrT88}. See \cite{Roche18} for a recent survey, and expositions in \cite[Section 2.7.3]{arnold2016sparse} and \cite[Section 2.3.1]{nickphd}. We will need two instantiations of Prony's method. Throughout, we use the fact that the word length of word RAM is $\Omega(\log t)$ bits, where $t$ is the input/output sparsity.  Note that in our actual applications we will only apply Prony's method with very small sparsity parameter $ \polylog t $ or even $O(\log \log t)$, so the word size will be comparable to the size of the Prony instances.

In the following we first discuss the first instantiation, which will be mainly used in \cref{sec:generalconvo}:
We work with a finite field $\F_q=\F_{p^\kappa}$
(in our applications we can assume additions and multiplications over $\F_q$ take $O(1)$ time, and multiplicative inverses take $O(\log \kappa)$ time, using \cref{lem:tablefinitefield}).
We need the following algorithm for factorizing degree-$s$ polynomials over $\F_q$ into distinct linear factors, which is very fast when the characteristic $p$ is much smaller than the field size $q$. 
In this application, the polynomials have very small degree $s$ (think of $s = O(\log \log t)$), so we are less concerned about the exponent of $s^{O(1)}$.
However, it will be important  that it only has $\log p$ dependency instead of $\log q$, since in our application $q$ can be as large as $t^{O(1)}$.
\begin{lemma}[\cite{KaltofenS97} or {\cite[Section 4.3]{hoeven2022univariate}}]
    \label{lem:factorize}
Let $q=p^\kappa$ and assume efficient arithmetic operations over $\F_q$ after the precomputation of \cref{lem:tablefinitefield}.
Given a degree-$s$ polynomial $f\in \F_q[z]$ with $s$ distinct roots in $\F_q$,  we can factorize $f$ into linear factors by a Las Vegas algorithm using expected $s^{O(1)}\cdot \log (p\kappa)$ time. %
\end{lemma}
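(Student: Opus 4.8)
The plan is to reduce the task to \emph{equal-degree factorization} with target degree $1$ (root-finding) and then run a Cantor--Zassenhaus-style splitting step whose modular-exponentiation subroutine is implemented so as to avoid any factor of $\kappa$. Since $f$ is promised to have $s$ distinct roots in $\F_q$, it is automatically squarefree and every irreducible factor is linear, so no squarefree factorization or distinct-degree separation is needed; it only remains to split $f$ completely into linear factors. We may also assume $f$ is monic after one multiplicative inversion.

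For the splitting step I would distinguish the characteristic. When $p$ is odd, pick a uniformly random $a\in\F_q[z]$ of degree $<s$ and compute $b(z):=a(z)^{(q-1)/2}\bmod f(z)$; then $\gcd(f,\,b-1)$ is a proper factor with constant probability, because by CRT the values $(a(\theta_1),\dots,a(\theta_s))$ at the roots are uniform and independent in $\F_q$, and $a(\theta_i)^{(q-1)/2}=\pm1$ is a balanced coin flip, so each root lands on either side of the split independently. When $p$ is small (in particular $p=2$), the quadratic-residue argument fails, and I would instead use the additive trace: compute $T(z):=\operatorname{Tr}_{\F_q/\F_p}(a(z))\bmod f(z)$ for random $a$ and split via $\gcd(f,\,T-\alpha)$ over $\alpha\in\F_p$, which is cheap since then $p\le s^{O(1)}$ (for $p$ large, use only the odd case). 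In either case $O(\log s)$ expected rounds separate all $s$ roots (a union bound over the $O(s^2)$ pairs of roots), after which we \emph{verify} the candidate factorization by expanding $\prod_i(z-r_i)$ and comparing with $f$ in $s^{O(1)}$ time, rerunning any sub-instance that did not split; this makes the algorithm Las Vegas. All the gcd's, arithmetic modulo $f$, and the verification use $s^{O(1)}$ field operations, hence $s^{O(1)}\log\kappa$ time by \cref{lem:tablefinitefield} (with $O(1)$-time $\F_q$-arithmetic and $O(\log\kappa)$-time inversions), which is within budget.

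The crux is computing the separating polynomial $a^{(q-1)/2}\bmod f$ (or $\operatorname{Tr}_{\F_q/\F_p}(a)\bmod f$) in only $s^{O(1)}\log(p\kappa)$ time instead of the naive $s^{O(1)}\log q=s^{O(1)}\kappa\log p$. Here I would exploit the identity $\tfrac{q-1}{2}=\tfrac{p-1}{2}\,(1+p+p^2+\dots+p^{\kappa-1})$: first compute $\rho:=z^p\bmod f$ by $O(\log p)$ modular squarings (each $s^{O(1)}$ field operations, and division-by-$f$ is inverse-free since $f$ is monic), which represents the $p$-power Frobenius $\sigma$ of $\F_q[z]/(f)$; then compute the norm $\prod_{i=0}^{\kappa-1}\sigma^i(a)=a^{(q-1)/(p-1)}\bmod f$ by repeated doubling, using $\prod_{i=0}^{2\ell-1}\sigma^i(a)=\big(\prod_{i=0}^{\ell-1}\sigma^i(a)\big)\cdot\sigma^{\ell}\!\big(\prod_{i=0}^{\ell-1}\sigma^i(a)\big)$, which needs only $O(\log\kappa)$ levels, each costing one Frobenius-power application $\sigma^\ell$ (a modular composition, computed Brent--Kung style, $s^{O(1)}$ field operations) and one multiplication modulo $f$; finally raise the result to the power $\tfrac{p-1}{2}$ by $O(\log p)$ further modular squarings. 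The additive case is identical with products replaced by sums and $\sum_{i=0}^{2\ell-1}\sigma^i=(I+\sigma^{\ell})\circ\sum_{i=0}^{\ell-1}\sigma^i$. This yields $s^{O(1)}(\log p+\log\kappa)=s^{O(1)}\log(p\kappa)$ field operations overall.

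I expect the main obstacle to be the bookkeeping in the iterated-Frobenius step: one must track the action of $\sigma^{\ell}$ on the $\F_q$-coefficients, not merely on $z$, which is cleanest if one represents $\F_q[z]/(f)$ as an $\F_p$-algebra of dimension $\kappa s$ and performs the (bivariate) modular composition there, so that a single $\sigma$-application costs $s^{O(1)}$ $\F_p$-operations with no hidden $\kappa$ factor. Together with the routine probabilistic analysis of the splitting rounds, these are precisely the ingredients worked out in \cite{KaltofenS97} and \cite[Section 4.3]{hoeven2022univariate}, which we may invoke; in our setting $s$ is tiny (think $s=O(\log\log t)$), so the precise exponent $s^{O(1)}$ of modular composition is irrelevant and only the absence of a $\kappa$ factor matters.
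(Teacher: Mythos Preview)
The paper does not include its own proof of this lemma; it is stated purely as a citation to \cite{KaltofenS97} and \cite[Section~4.3]{hoeven2022univariate}. Your sketch correctly reconstructs the algorithm from those sources---replacing the naive $O(\log q)=O(\kappa\log p)$ modular exponentiation in Cantor--Zassenhaus by the iterated-Frobenius/norm trick (computed via $O(\log\kappa)$ modular compositions after an initial $O(\log p)$-step computation of $z^p\bmod f$)---and you correctly flag the one genuine subtlety (the $p$-power map acts on the $\F_q$-coefficients, not just on $z$) together with the standard remedy; this is precisely what the cited references do, so there is nothing to correct.
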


A sequence $a = (a_i)_{i\in \N} \in F^{\N}$ is called \emph{linearly recurrent} (over field $F$) if there exists $n\in \N$ and $f_0,\dots,f_n\in F$ with $f_n\neq 0$ such that $\sum_{0\le j\le n} f_j a_{i+j} = 0$ for all $i\ge 0$, and the polynomial $f = \sum_{0\le j\le n}f_jx^j \in F[x]$ of degree $n$ is called a \emph{characteristic polynomial} of $a$. The \emph{minimal polynomial} of $a$ is the monic polynomial of least degree that is a characteristic polynomial of $a$, and the degree of this minimal polynomial is called the \emph{recursion order} of $a$. %

\begin{lemma}[Prony's method over $\F_q$]
    \label{lem:pronyfq}
Let $q = p^\kappa$ and  let $\omega_{pm} \in \F_q^*$ have multiplicative order at least $d$.
Assume efficient arithmetic operations over $\F_q$ after the precomputation of \cref{lem:tablefinitefield}.
The following can be done after $O(d)$ time preprocessing:

Let $f\in \F_q[x]$ be an $s$-sparse polynomial with degree smaller than $d$.
Given the evaluation of $f(\omega_{pm}^j)$ for all $0\le j< 2s$, we can recover all terms of $f$ in $ \poly(s) \log (p\kappa)$ expected time by a Las Vegas algorithm. 

Note that this implies a zero-error algorithm (i.e., either outputs the correct answer or $\bot$)\footnote{Throughout the paper we use $\bot$ to indicate failure.} that terminates in worst case $ \poly(s) \log (p\kappa)$ time (even under ill-formed input, i.e., the input evaluations do not come from an $s$-sparse polynomial), and the probability of outputting $\bot$ under well-formed input is at most $2^{-\poly(s)}$.
\end{lemma}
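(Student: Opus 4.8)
The plan is to follow the classical Ben-Or--Tiwari / Prony pipeline, being careful about which field operations cost $O(1)$, which cost $O(\log\kappa)$, and which incur the $\log(p\kappa)$ factor through \cref{lem:factorize}. Write $f = \sum_{\ell=1}^{s'} c_\ell x^{e_\ell}$ with $s' \le s$ distinct exponents $e_\ell \in [d]$ and nonzero coefficients $c_\ell \in \F_q$. Set $\beta_\ell := \omega_{pm}^{e_\ell}$; since $\omega_{pm}$ has multiplicative order $\ge d > e_\ell$, the $\beta_\ell$ are pairwise distinct elements of $\F_q^*$. The evaluations $a_j := f(\omega_{pm}^j) = \sum_\ell c_\ell \beta_\ell^{\,j}$ for $0 \le j < 2s$ form the prefix of a linearly recurrent sequence whose minimal polynomial is $\Lambda(z) = \prod_{\ell=1}^{s'}(z - \beta_\ell)$, of degree $s' \le s$. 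The preprocessing in $O(d)$ time builds a table of the powers $\omega_{pm}^0, \omega_{pm}^1, \dots, \omega_{pm}^{d-1}$ so that later we can invert the map $e \mapsto \omega_{pm}^e$ on $[d]$ in $O(1)$ time by a lookup (e.g.\ store, for each group element appearing, its discrete log; or sort the list of pairs and binary-search, costing $O(\log d) = O(\log p\kappa)$, which is within budget).

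The key steps, in order: (1) \emph{Recover $\Lambda$.} Run the Berlekamp--Massey algorithm on $a_0, \dots, a_{2s-1}$; this uses $O(s^2)$ field operations, each of cost $O(\log\kappa)$ because of the inversions, so $O(s^2 \log\kappa) = \poly(s)\log(p\kappa)$ time, and outputs a monic polynomial $\Lambda$ of degree $s^* \le s$. (2) \emph{Factor $\Lambda$ into linear factors} via \cref{lem:factorize}, in expected $\poly(s)\cdot\log(p\kappa)$ time; this yields the roots $\beta_1, \dots, \beta_{s^*} \in \F_q^*$. If $\Lambda$ does not split into $s^*$ distinct linear factors, output $\bot$. (3) \emph{Recover the exponents.} For each $\beta_\ell$, look up $e_\ell$ with $\omega_{pm}^{e_\ell} = \beta_\ell$ using the precomputed table; if no such $e_\ell \in [d]$ exists, output $\bot$. (4) \emph{Recover the coefficients.} The $c_\ell$ satisfy the transposed Vandermonde system $\sum_\ell c_\ell \beta_\ell^{\,j} = a_j$ for $0 \le j < s^*$; since the $\beta_\ell$ are distinct and nonzero this system is invertible, and it can be solved in $\poly(s)$ field operations (either by a generic $O(s^3)$ solve or by the standard $\tilde O(s)$ transposed-Vandermonde solver built from $\Lambda$ and its derivative), costing $\poly(s)\log\kappa$ time; if any recovered $c_\ell$ is zero, output $\bot$. (5) \emph{Verify.} Reconstruct the candidate $s$-sparse polynomial and check that it reproduces all $2s$ given evaluations $a_0, \dots, a_{2s-1}$ (each check is one Horner-style evaluation at a precomputed power, $O(s)$ operations); if any check fails, output $\bot$, otherwise output $\{(e_\ell, c_\ell)\}$.

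For correctness and the error bounds: if the input genuinely comes from an $s$-sparse $f$ with $\deg f < d$, then Berlekamp--Massey on $2s \ge 2s'$ terms returns exactly the minimal polynomial $\Lambda = \prod_\ell (z-\beta_\ell)$ (this needs the standard fact that $2s'$ consecutive terms determine a recurrence of order $s'$), the factorization step is a Las Vegas algorithm that always eventually succeeds with the stated expected running time, and all the subsequent lookups and solves succeed, so we output the correct $f$; the only randomness is inside \cref{lem:factorize}, so with well-formed input the algorithm never outputs a \emph{wrong} answer, and the probability it runs longer than some $\poly(s)\log(p\kappa)$ bound (equivalently, the probability the zero-error truncation outputs $\bot$) is $2^{-\poly(s)}$ by running $\Theta(\poly(s))$ independent attempts of \cref{lem:factorize} and aborting. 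For ill-formed input, every step is guarded by an explicit sanity check (degree $\le s$ from Berlekamp--Massey; split into distinct linear factors; exponents in $[d]$; nonzero coefficients; final re-evaluation against all $2s$ inputs), so the worst-case running time is still $\poly(s)\log(p\kappa)$ and the output is either a genuinely $s$-sparse polynomial matching the evaluations or $\bot$.

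The main obstacle is the bookkeeping around step (2)--(3): one must ensure that factoring is invoked only on a polynomial of degree $\le s$ (guaranteed by Berlekamp--Massey, but the guard is needed for ill-formed inputs), that the $\log(p\kappa)$ rather than $\log q$ dependence is preserved — which is exactly what \cref{lem:factorize} buys us, so the only place a stray $\log q$ could sneak in is the exponent lookup, handled by the $O(d)$-time precomputed table — and that the probabilistic claim is stated for the zero-error truncation correctly, i.e.\ the $2^{-\poly(s)}$ failure probability comes purely from amplifying the Las Vegas factorization subroutine and nothing else. Everything else (Berlekamp--Massey, transposed Vandermonde solve, re-evaluation) is deterministic and routine.
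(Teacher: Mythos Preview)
Your proposal is correct and follows essentially the same approach as the paper: recover the minimal polynomial of the linear recurrence (the paper phrases it as ``solving linear systems'' where you name Berlekamp--Massey explicitly), factor it via \cref{lem:factorize}, look up exponents in the precomputed power table, solve the transposed Vandermonde system for the coefficients, and obtain the worst-case zero-error version by truncating and repeating $\poly(s)$ times. Your write-up is somewhat more detailed about the ill-formed-input guards and adds an explicit re-evaluation check, but the underlying algorithm and its analysis match the paper's proof sketch.
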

\begin{proof}[Proof Sketch]
   We here verify that Prony's method can be easily implemented under the claimed time budget. 
   We refer to the exposition in \cite[Section 2.3.1]{nickphd}. 
   
   Let $\Lambda(z)$ denote the the minimal polynomial of the linear recurrence $f(\omega_{pm}^0),\dots, f(\omega_{pm}^{2s-1})$, defined as the lowest-degree monic polynomial $\Lambda(z) =\sum_{\ell=0}^r \lambda_\ell z^\ell$ such that $\sum_{\ell=0}^r \lambda_\ell f(\omega_{pm}^{i+\ell})=0$ for all $0\le i\le s-1$.
   It is known that $\Lambda(z)=\prod_{j=1}^{s'}(z- \omega_{pm}^{e_j})$ if $f(x)$ has exactly $s'$ terms and $f(x) = \sum_{j=1}^{s'} a_j x^{e_j}$ (where $a_j\neq 0$, $s'\le s$, and $e_j < d$).

We first compute the coefficients of $\Lambda(z)$  by solving linear systems  
     using $\poly(s)$ $\F_q$-operations in $\poly(s)\log (\kappa)$ time.
  Then, we can factor $\Lambda(z)$ to obtain all roots $\omega_{pm}^{e_j}$ using \cref{lem:factorize} in $\poly(s)\cdot \log (p\kappa)$ time. 
We can precompute $\omega_{pm}^{i}$ for all $0\le i< d$ in $O(d)$ time, so that we recover the support $\{e_j\}$ from the given $\{\omega_{pm}^{e_j}\}$ using table look-ups.

Finally, the equations $f(\omega_{pm}^i) = \sum_{j=1}^r a_j (\omega_{pm}^{e_j})^i$ for all $0\le i< s'$ form a full rank linear system (a transposed Vandermonde system) with unknown coefficients $a_1,\dots,a_{s'}$, which can be solved in $\poly(s')\log(\kappa)\le \poly(s)\log(\kappa)$ time. 

The algorithm described above is Las Vegas, due to the factoring step. We transform it into an algorithm with worst-case time upper bound by aborting after two times the expected running time. Then we reduce the failure probability to $2^{-\poly(s)}$ by running $\poly(s)$ times independently. 

\end{proof}

Now we state our second instantiation of Prony's method (over small prime fields only), which will be used in \cref{sec:highprob}. In this application we have more stringent requirement on the dependency on the sparsity $s$. It is more complicated and more substantially relies on bit-packing and look-up tables in word RAM. 
\begin{restatable}[Prony's method over $\F_p$ with bit packing]{lemma}{pronybitpack}
    \label{lem:pronybitpack}
Let $t$ be a parameter. Let $\eps>0$ be any small fixed constant. Let $p\le \polylog(t)$ be a prime,  
and $\omega\in \F_p^*$ have multiplicative order $p-1$.
On a machine with word length $\Omega(\log t)$ bits, the following holds after $\tilde O(t^{0.2})$ time preprocessing:

Let $f\in \F_p[x]$ be an $s$-sparse polynomial with degree smaller than $(p-1)/2$.
Given the evaluation of $f(\omega^j)$ for all $0\le j< 2s$, we can recover all terms of $f$ in worst case \[O(s\log s + (\tfrac{s^{1+\eps}}{\log t}+1)\cdot (\log t)^{0.1}\cdot \polylog p)\] time by a zero-error algorithm (i.e., either outputs the correct answer or returns $\bot$), and the probability of outputting $\bot$ is at most $2^{-(\log t)^{0.1}}$ under well-formed input.
\end{restatable}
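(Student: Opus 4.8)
The plan is to implement Prony's method over $\F_p$ along the lines of \cref{lem:pronyfq}, but with every step re-engineered for the word-RAM model so that it exploits the smallness of $p$ (an $\F_p$-element occupies only $\log p = O(\log\log t)$ bits, so $\Theta(\log t/\log p)$ of them fit in one word). The $\tilde O(t^{0.2})$ preprocessing is spent on lookup tables: addition/multiplication/inverse tables for $\F_p$ (size $\poly(p)=\polylog t$); the power table $i\mapsto\omega^i$ and its inverse discrete-log table $\omega^i\mapsto i$ for $0\le i<p-1$; a small extension field $\F_{p^r}$ with $p^r=\poly(s)\le\polylog t$ obtained from \cref{lem:tablefinitefield} (needed only so that equal-degree factorization has enough random shifts even when $p$ itself is tiny), whose own preprocessing $\tilde O((p^r)^{0.04}+r^4p^{1/2})$ is negligible; and packed-vector tables with block length $\Theta((\log t)^{0.1})$ bits, tabulating coordinate-wise addition, fixed-scalar multiplication, and zero-testing on blocks of $\Theta((\log t)^{0.1}/\log p)$ packed $\F_p$-coordinates — each such table has $2^{O((\log t)^{0.1})}=t^{o(1)}$ entries. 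Altogether the preprocessing is $t^{o(1)}+\tilde O(\polylog t)$, comfortably within the claimed budget.

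With these tables the four steps of Prony's method run as follows. (1) \emph{Minimal polynomial.} From the evaluations $(f(\omega^j))_{0\le j<2s}$ compute the minimal polynomial $\Lambda(z)$ of the linear recurrence, which equals $\prod_{j=1}^{s'}(z-\omega^{e_j})$ when $f=\sum_{j=1}^{s'}a_jx^{e_j}$ with $s'\le s$; since $e_j<(p-1)/2<\mathrm{ord}(\omega)$, the roots $\omega^{e_j}$ are pairwise distinct, so $\Lambda$ is squarefree. We run the half-GCD-based Berlekamp--Massey algorithm, realizing degree-$O(s)$ polynomial multiplication over $\F_p$ either by \cref{thm:packfft} or — once the operands fit in $O(1)$ words — by a single Kronecker-substitution word multiplication, and realizing the linear scalar-vector updates via the packed-vector tables. (2) \emph{Root finding.} Factor $\Lambda$ into linear factors by equal-degree factorization carried out over $\F_{p^r}$ (using \cref{cor:packfft-finite-field} for the polynomial arithmetic), a Las Vegas procedure exactly as in \cref{lem:factorize}; we run it under a time cap equal to a constant times its expected running time, returning $\bot$ on timeout, and we repeat $\Theta((\log t)^{0.1})$ times independently so that the failure probability under well-formed input drops to $2^{-(\log t)^{0.1}}$. (3) \emph{Exponents.} Recover each $e_j$ from $\omega^{e_j}$ by a discrete-log table lookup, and sort the pairs $(e_j,a_j)$ into the canonical output order. (4) \emph{Coefficients.} Solve the transposed Vandermonde system $f(\omega^i)=\sum_{j}a_j(\omega^{e_j})^i$ ($0\le i<s'$) for $a_1,\dots,a_{s'}\in\F_p$ by the standard fast solver, again with packed $\F_p$-arithmetic. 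To make the algorithm zero-error even on ill-formed input, we return $\bot$ whenever $\Lambda$ fails to split into $\deg\Lambda$ distinct linear factors over $\F_p^*$, the Vandermonde system is singular or inconsistent, or the reconstructed polynomial disagrees with one or two additional evaluations.

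The main obstacle is the word-RAM bookkeeping. One must choose the block length (this is the source of the $(\log t)^{0.1}$ appearing in the statement) so that the packed-operation tables remain of size $t^{o(1)}$ while the half-GCD recursion, the repeated squarings inside equal-degree factorization, and the Vandermonde solver all genuinely benefit from the $\Theta(\log t/\log p)$-way parallelism; and then one must check that the resulting costs telescope to $O\big(s\log s+(\tfrac{s^{1+\eps}}{\log t}+1)(\log t)^{0.1}\polylog p\big)$. Here the $s\log s$ term comes from the $\Theta(\log s)$-depth fast polynomial algorithms (and from sorting the output terms) once each $\F_p$-operation is charged $\tilde O(1)$, while the $(\tfrac{s^{1+\eps}}{\log t}+1)(\log t)^{0.1}\polylog p$ term — which is deliberately a loose upper bound — absorbs both the per-multiplication packing overhead across the regimes where $s$ is below or above $\log t/\log p$ and the $\Theta((\log t)^{0.1})$-fold repetition of the factoring subroutine, with the $+1$ covering the case of very small $s$. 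A secondary, routine point is the conversion of the Las Vegas factoring into a worst-case-time subroutine with failure probability $2^{-(\log t)^{0.1}}$: bound the expected cost of equal-degree factoring a squarefree degree-$\le s$ polynomial over $\F_{p^r}$, cap at a constant multiple of it, and amplify by independent repetition.
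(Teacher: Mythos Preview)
Your high-level plan (Hankel solve, factor, discrete log, transposed Vandermonde solve, cap-and-repeat the Las Vegas factoring) matches the paper. The gap is in the bit packing, which is the entire technical content of this lemma. Your polynomial multiplication invokes \cref{thm:packfft}, which gives $O(s\log p)$ per degree-$s$ product; that bound is for words of size $\Theta(\log s+\log p)$ and gains nothing from our $\Omega(\log t)$-bit words. The paper instead packs $\kappa=\Theta(\log t/\log p)$ consecutive $\F_p$-coefficients into one $\F_{p^\kappa}$-coefficient, precomputes the $\F_{p^\kappa}$ multiplication table in $\tilde O(p^{2\kappa})=\tilde O(t^{0.2})$ time (this is exactly why the preprocessing budget is $\tilde O(t^{0.2})$, not $t^{o(1)}$), and runs Sch\"onhage--Strassen over $\F_{p^\kappa}$ to get $\tilde O(\lceil s\log p/\log t\rceil)$ per multiplication (\cref{lem:bitpackingprimitive}). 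This primitive then drives a bit-packed fast Extended Euclidean algorithm (\cref{lem:linearrecurrbitpack}) and a bit-packed transposed-Vandermonde solver (\cref{lem:vandermondebitpack}), each in $O(s\log s + s\,\tfrac{\log p}{\log t}\,\polylog s)$ time. With your multiplication, the half-GCD alone costs $\Theta(s\log s\cdot\log p)$, already a $\log p$ factor over the claimed $O(s\log s)$. Your packed-vector block length $\Theta((\log t)^{0.1})$ bits compounds the problem: each word then needs $(\log t)^{0.9}$ lookups rather than $O(1)$. The paper's blocks are $\Theta(\log t)$ bits. The $(\log t)^{0.1}$ in the statement is \emph{not} a block length; it comes solely from repeating the Las Vegas factoring $(\log t)^{0.1}$ times to reach failure probability $2^{-(\log t)^{0.1}}$.

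The factoring step also needs more than plain Cantor--Zassenhaus with repeated squaring. The $s^{1+\eps}/\log t$ shape of the second term is the signature of Kedlaya--Umans modular composition: the paper develops a bit-packed adaptation of it (\cref{lem:modularcomposition}) running in $O(s^{1+\eps}\tfrac{\polylog p}{\log t}+\polylog p)$, and plugs it into the equal-degree factorization algorithm of \cite{hoeven2022univariate} to obtain expected factoring time $O\big((\tfrac{s^{1+\eps}}{\log t}+1)\polylog p\big)$ (\cref{lem:factorize-bitpack}). Your repeated-squaring approach over $\F_{p^r}$, using \cref{cor:packfft-finite-field} for the arithmetic, does not produce this term and, without the packed multiplication above, already costs $\Omega(s\,\polylog s)$ per attempt---outside the claimed bound once the $(\log t)^{0.1}$ repetitions are applied. (The lift to $\F_{p^r}$ is also unnecessary: all roots of $\Lambda$ lie in $\F_p$ and $p>2$, so Cantor--Zassenhaus over $\F_p$ already has enough random shifts.)
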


In our actual application we will choose $s =\polylog(t)$ and $p = \polylog(t)$, so the second term in the running time becomes negligible. The proof of \cref{lem:pronybitpack} is given in \cref{sec:pronybitpack}.

\subsection{Sparsity test}
We need to test the sparsity of a polynomial  given its evaluations on several points.

In \cite{BringmannFN22}, the authors suggested that the following lemma could be potentially useful in improving their $O(t\log t\log \log t)$-time Las Vegas algorithm.

\begin{lemma}[\cite{karlin1968total}, see also \cite{moment}]
    \label{lem:momentmatrix}
    Let $V$ be a nonnegative vector. Then $\|V\|_0\ge  s$ if and only if the following $s\times s$ matrix is non-singular:
    $A_{i,j} = \|\partial^{i+j}V\|_1$
    where $0\le i,j<s$.
\end{lemma}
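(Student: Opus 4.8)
\textbf{Proof plan for \cref{lem:momentmatrix}.}

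The plan is to recognize the matrix $A$ as a product of (generalized) Vandermonde-type matrices and use the Cauchy--Binet formula. Write $\supp(V)=\{k_1<k_2<\dots<k_r\}$ with $r=\|V\|_0$, and let $v_\ell = V[k_\ell]$, so every $v_\ell\neq 0$ and, crucially (since $V$ is nonnegative), every $v_\ell>0$. Then by definition
\[
A_{i,j} = \sum_{\ell=1}^{r} k_\ell^{\,i+j}\, v_\ell = \sum_{\ell=1}^{r} (k_\ell^{\,i})\, v_\ell \,(k_\ell^{\,j}),
\]
so if we let $W$ be the $s\times r$ matrix with $W_{i,\ell} = k_\ell^{\,i}$ (rows indexed by $0\le i<s$, columns by $1\le \ell\le r$) and $D=\operatorname{diag}(v_1,\dots,v_r)$, we have $A = W D W^{\mathsf T}$. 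This factorization is the whole conceptual content; everything else is reading off ranks.

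First I would handle the easy direction. If $r=\|V\|_0 < s$, then $W$ is $s\times r$ with $r<s$ columns, so $\operatorname{rank}(W)\le r<s$, hence $\operatorname{rank}(A)=\operatorname{rank}(WDW^{\mathsf T})\le \operatorname{rank}(W)<s$, so $A$ is singular. This shows $A$ nonsingular $\implies \|V\|_0\ge s$ without even using nonnegativity. For the converse, suppose $r=\|V\|_0\ge s$. Now I want to show $A=WDW^{\mathsf T}$ has full rank $s$. Apply the Cauchy--Binet formula to compute $\det$ of the leading $s\times s$ block (or just of $A$ itself, which is already $s\times s$): writing $W_S$ for the $s\times s$ submatrix of $W$ using column set $S\subseteq\{1,\dots,r\}$ with $|S|=s$, and $D_S$ for the corresponding principal minor of $D$,
\[
\det A = \sum_{\substack{S\subseteq\{1,\dots,r\}\\ |S|=s}} \det(W_S)\,\det(D_S)\,\det(W_S^{\mathsf T}) = \sum_{|S|=s} \Big(\prod_{\ell\in S} v_\ell\Big)\,\det(W_S)^2.
\]
Each $\det(W_S)$ is a Vandermonde determinant $\prod_{\ell<\ell'\in S}(k_{\ell'}-k_\ell)$, which is nonzero because the $k_\ell$ are distinct, so $\det(W_S)^2>0$; and each $\prod_{\ell\in S}v_\ell>0$ since $V$ is nonnegative and all support entries are strictly positive. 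Therefore $\det A$ is a sum of strictly positive terms (the sum is nonempty since $r\ge s$), hence $\det A>0$ and $A$ is nonsingular.

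I expect essentially no real obstacle here; the only point requiring a little care is making sure the roles of nonnegativity are pinned down correctly. Nonnegativity is used \emph{only} in the converse direction, and there it is essential: it guarantees no cancellation among the Cauchy--Binet terms $\prod_{\ell\in S}v_\ell \cdot \det(W_S)^2$. Without it, $\det A$ could vanish even when $\|V\|_0\ge s$ (e.g.\ signed weights engineered to cancel), which is precisely why the lemma — and the sparsity test built on it — fails for general vectors. A secondary minor point is that the indices $k_\ell$ are nonnegative integers, one of which may be $0$; this causes no issue since $0^0=1$ by the usual convention for the Vandermonde matrix, and the distinctness of the $k_\ell$ is all that the Vandermonde nonvanishing needs. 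If one prefers to avoid Cauchy--Binet, an equivalent route is to argue that $A$ is the Gram matrix of the vectors $u_\ell := \sqrt{v_\ell}\,(1,k_\ell,\dots,k_\ell^{s-1})\in\R^s$ under the standard inner product (again using $v_\ell>0$ to take the square root), and that these $r\ge s$ vectors span $\R^s$ because any $s$ of them form a scaled Vandermonde basis; a Gram matrix of a spanning set is positive definite, hence nonsingular. I would present the Cauchy--Binet version as the main argument and perhaps remark on the Gram-matrix interpretation.
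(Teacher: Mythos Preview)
Your proof is correct: the factorization $A=WDW^{\mathsf T}$ with $W$ Vandermonde and $D$ diagonal with positive entries, followed by Cauchy--Binet, cleanly gives both directions, and you have correctly identified that nonnegativity is needed only for the converse. The paper does not actually prove this lemma --- it is stated with a citation to \cite{karlin1968total} and \cite{moment} as a known result --- so there is no in-paper proof to compare against; your argument is the standard one.
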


The following lemma is folklore: 

\begin{lemma}
\label{lem:unique-moment}
    Let $U, V$ be two nonnegative vectors indexed on $\Z$ where $\|U\|_0, \|V\|_0 < s$. If $\|\partial^{i}U\|_1 = \|\partial^{i}V\|_1$ for every $0 \le i < 2s $, then $U = V$.
\end{lemma}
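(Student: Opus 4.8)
The plan is to prove this via the moment map interpretation of Lemma~\ref{lem:momentmatrix} together with a standard Vandermonde/polynomial-interpolation argument. Since $U$ and $V$ are nonnegative vectors indexed on $\Z$ with $\|U\|_0, \|V\|_0 < s$, write their supports and values explicitly: say $U$ is supported on distinct integers $x_1, \dots, x_k$ with positive (more precisely, nonzero) values $u_1, \dots, u_k$, and $V$ is supported on distinct integers $y_1, \dots, y_\ell$ with values $v_1, \dots, v_\ell$, where $k, \ell < s$. The hypothesis $\|\partial^i U\|_1 = \|\partial^i V\|_1$ for $0 \le i < 2s$ translates into the system of equations $\sum_{a=1}^k u_a x_a^i = \sum_{b=1}^\ell v_b y_b^i$ for all $0 \le i < 2s$ (here I use that for nonnegative vectors $\|\partial^i W\|_1 = \sum_j j^i W[j]$, with the usual convention $0^0 = 1$ so the $i=0$ case reads $\|W\|_1 = \sum_j W[j]$).

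First I would move everything to one side: the combined multiset of points $\{x_a\} \cup \{y_b\}$, after merging any common points with signed weights $u_a$ on the $x$-side and $-v_b$ on the $y$-side, gives a signed combination $\sum_{c=1}^m w_c z_c^i = 0$ for all $0 \le i < 2s$, where the $z_c$ are the distinct integers appearing in $\supp(U) \cup \supp(V)$, so $m \le k + \ell \le 2s - 2 < 2s$, and each $w_c$ is a real number. The key step is then to argue that all $w_c = 0$. Consider the first $m$ equations (indices $i = 0, \dots, m-1$): these form the linear system $M w = 0$ where $M_{i,c} = z_c^i$ is an $m \times m$ Vandermonde matrix in the distinct nodes $z_1, \dots, z_m$. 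Since the $z_c$ are distinct, $\det M = \prod_{c < c'} (z_{c'} - z_c) \ne 0$, so $M$ is invertible and $w = 0$.

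Once all $w_c = 0$, this forces the $x$-side and $y$-side to cancel exactly: for each integer $z$, the coefficient of $z$ in $U$ (which is $u_a$ if $z = x_a$, else $0$) equals the coefficient of $z$ in $V$. Hence $U = V$ as vectors indexed on $\Z$. Note that we did not even need to use all $2s$ moments — the first $2s-1$ (indeed the first $m \le 2s-2$) suffice — but having $2s$ of them is certainly enough; also note nonnegativity of $U, V$ is not actually needed for this direction, only for stating the moments as power sums with the natural $0^0=1$ convention, which is why the lemma is called folklore.

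I do not anticipate a genuine obstacle here: the whole argument is the invertibility of a Vandermonde matrix over $\Q$ (or $\R$) on distinct integer nodes. The only point requiring a moment of care is the bookkeeping when $\supp(U)$ and $\supp(V)$ overlap — one must merge shared nodes and track that the total number $m$ of distinct nodes is strictly less than $2s$ so that we have enough equations ($2s > m$) to pin down all $m$ unknowns via a square Vandermonde system; since $\|U\|_0, \|V\|_0 < s$ gives $m \le (s-1) + (s-1) = 2s - 2 < 2s$, this is immediate. (One could alternatively phrase the whole thing polynomially: if $U \ne V$ then the polynomial $\sum_c w_c x^{z_c}$-style generating function argument, or rather the finitely-supported ``Laurent''-type combination $g(x) = \sum_c w_c x^{z_c}$ after shifting all exponents to be nonnegative, is a nonzero polynomial of degree $< 2s$ with fewer than $2s$ terms that vanishes with high multiplicity at $x=1$ — but the Vandermonde route is cleaner and avoids shifting.)
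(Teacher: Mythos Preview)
Your proof is correct and takes a more direct route than the paper's. The paper argues via linear recurrences: it observes that the moment sequence $\{\|\partial^i U\|_1\}_i = \{\sum_j y_j x_j^i\}_i$ is linearly recurrent with minimal polynomial $\prod_j(\lambda - x_j)$, invokes the standard fact that the first $2s$ terms of a sequence of recursion order $\le s$ determine its minimal polynomial, concludes that $\supp(U)=\supp(V)$, and only then solves a transposed Vandermonde system for the coefficients. You instead subtract immediately, pass to the union of supports (of size $m \le 2s-2$), and kill everything with a single $m\times m$ Vandermonde system. Your argument is shorter and makes explicit that nonnegativity plays no role in the uniqueness (only in interpreting $\|\partial^i W\|_1$ as a raw power sum); the paper's route, on the other hand, is essentially Prony's method in disguise, which is thematically apt since Prony's method is the workhorse of the surrounding sections.
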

\begin{proof}
 Let the nonzero entries of $U$ be $U[x_j] = y_j$ for $j = 0, \ldots, s_U-1$ for $s_U \le s$ where all $x_j$'s are different and all $y_j$'s are positive. Then we can write $\|\partial^{i}U\|_1$ as $\sum_{j=0}^{s_U-1} x_j^i y_j$. This implies that the sequence $\{\|\partial^{i}U\|_1\}_i$ is linearly recurrent, with minimal polynomial $\prod_{j=0}^{s_U-1} (\lambda - x_j)$. 

    Let the nonzero entries of $V$ be $V[x_j'] = y_j'$ for $j' = 0, \ldots, s_V - 1$. Similarly, $\{\|\partial^{i}V\|_1\}_i$ is also a linearly recurrent sequence, whose minimal polynomial degree is at most $s$. 

    As the first $2s$ terms of a recurrent sequence with recursion order $\le s$ can uniquely determine the minimum polynomial of the recurrence sequence, we must have $\{x_j\}_{j=0}^{s_U-1} = \{x_j'\}_{j=0}^{s_V-1}$. Finally, given $x_0, \ldots, x_{s_U-1}$ and $\{\|\partial^{i}U\|_1\}_{i=0}^{s_U-1} = \sum_{j=0}^{s_U-1} x_j^i y_j$, we can uniquely solve $y_0, \ldots, y_{s_U - 1}$, because the coefficient matrix  of the following linear system 
    \[
    \begin{bmatrix}
    1 & 1 & \cdots & 1 \\
    x_0 & x_1 & \cdots & x_{s_U-1}\\
    & & \vdots & \\
    x_0^{s_U - 1} & x_1^{s_U - 1} & \cdots & x_{s_U-1}^{s_U - 1}\\ 
    \end{bmatrix}
    \begin{bmatrix}
        y_0 \\
        y_1 \\
        \vdots \\
        y_{s_U - 1}
    \end{bmatrix}
    =\begin{bmatrix}
        \|\partial^{0}U\|_1 \\
        \|\partial^{1}U\|_1 \\
        \vdots \\
        \|\partial^{s_U-1}U\|_1
    \end{bmatrix}
    \]
    is a transposed Vandermonde matrix and has nonzero determinant because $x_0, \ldots, x_{s_U - 1}$ are distinct. 

    Therefore, we must have $U = V$. 
\end{proof}

\subsection{Known lemmas for sparse convolution}

\paragraph*{Sumset size estimation.}
We use the following lemma from \cite{nickphd} for estimating the size of sumset up to a constant factor. This lemma simplified the estimation step from \cite[Section 9]{BringmannFN21}. 

\begin{lemma}[Estimating Sumset Size {\cite[Lemma 4.26]{nickphd}}]
   \label{lem:estimatesumset}
 Let $\delta > 0$. There is an algorithm that, given
two sets $X,Y\subseteq [N]$ computes a constant-factor approximation 
$t^*$ of $t = |X+Y|$ in
time $O(t\log (\delta^{-1}) + t\log \log t)$ with error probability $\delta$. Additionally, $t^* \le O(t)$ always holds. 
\end{lemma}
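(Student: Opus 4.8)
The plan is to reduce sumset size estimation to a counting problem on a hashed, bucketed representation, following the standard approach: hash $X$ and $Y$ into $m$ buckets via a (nearly) linear hash, for $m$ ranging over a geometric sequence of guesses, and detect when $m$ crosses the threshold $\Theta(t)$ by observing the collision structure. Concretely, I would first reduce the universe $[N]$ down to a polynomially bounded range by an initial mod-prime (or Dietzfelbinger-style almost-linear) hash that is injective on $X+Y$ with good probability; after that, the problem has size governed by $|X|,|Y|\le t$ and the answer $t=|X+Y|\le 2t$. Then, for each candidate size $m = 2^0, 2^1, 2^2,\dots$, sample an almost-linear hash $h\colon [N]\to [m]$ and compute, for each bucket $k\in[m]$, an indicator of whether that bucket is hit by at least one element of $X+Y$. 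Because $h$ is (approximately) linear, the buckets of $X+Y$ can be read off from $h(X)\star_m h(Y)$ (a cyclic convolution of length $m$), which costs $O(m\log m)$ via FFT over a suitable ring; summing over the geometric sequence up to $m=\Theta(t)$ this is $O(t\log t)$, but to hit the target $O(t\log(\delta^{-1})+t\log\log t)$ we instead use the cheaper \emph{doubly-exponential} schedule $m = 2^{2^0}, 2^{2^1},\dots$ for a coarse localization of $t$, and only within the final (constant-width in log-log scale) window do we refine — this is exactly the trick that turns $\log t$ into $\log\log t$ in the running time.

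The key steps, in order, are: (1) universe reduction via an almost-linear hash so that all subsequent work is $\tilde O(t)$ and so that $X+Y$ embeds injectively w.h.p.; (2) for each $m$ in the doubly-exponential schedule, draw $O(\log(\delta^{-1}))$ independent almost-linear hashes $h\colon[N]\to[m]$, compute the number of nonempty buckets $Z_m := |\{k : (h(X)\star_m h(Y))[k]\ne 0\}|$ for each, and take a median/mean; (3) observe that $\Ex[Z_m] = m\bigl(1-(1-1/m)^{t}\bigr) \approx m(1-e^{-t/m})$ when $h$ behaves like a truly random hash, which is monotone in $t/m$ and is bounded away from both $0$ and $m$ precisely in the regime $m = \Theta(t)$; use this to locate the dyadic (then doubly-dyadic) scale at which $Z_m/m$ first drops below, say, $1-1/e$, giving a constant-factor estimate $t^*$; (4) repeat $O(\log(\delta^{-1}))$ times and take the median to drive the failure probability down to $\delta$; (5) ensure the one-sided guarantee $t^*\le O(t)$ by always \emph{rounding the estimate down} — since $Z_m\le \min(m,t)$ deterministically, an upper bound on $t^*$ in terms of $t$ holds unconditionally regardless of the hash's randomness, which is what gives the ``$t^*\le O(t)$ always'' clause.

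The main obstacle I expect is the error analysis of the concentration of $Z_m$ when $h$ is only an \emph{almost}-linear, $O(1)$-wise independent hash family rather than a fully random function: the hard-to-control additive errors in $h(x)+h(y)$ versus $h(x+y)$ mean that $h(X)\star_m h(Y)$ does not exactly report the occupied buckets of $X+Y$, and a bucket can be spuriously nonempty or spuriously cancel. I would handle this the way \cite{BringmannFN21} and \cite{nickphd} do: work over a small random field $\F_q$ (or with random $\pm 1$ weights) so that cancellations in the convolution happen with probability $O(1/q)$, bound the variance of $Z_m$ using only pairwise/triple-wise independence of $h$, and absorb the small discrepancy between $X+Y$ and its hashed image into the constant factor of the approximation. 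Since the lemma only asks for a constant-factor approximation (not an exact count), the slack is generous, and the $t\log\log t$ term comes entirely from the doubly-exponential search schedule while the $t\log(\delta^{-1})$ term comes from the independent repetitions for median amplification. I would cite \cite[Section 9]{BringmannFN21} for the detailed hash-family analysis and present here only the search-schedule and amplification wrapper, which is the part that gives the stated running time.
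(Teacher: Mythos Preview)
The paper does not prove this lemma itself; it is quoted verbatim from \cite[Lemma 4.26]{nickphd} (which in turn simplifies \cite[Section 9]{BringmannFN21}). So there is no ``paper's own proof'' to compare against, only the cited one. That said, your proposal has a genuine gap in the running-time analysis that would prevent it from reaching the stated bound.

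Your central claim is that replacing the geometric schedule $m=2,4,8,\dots$ by the doubly-exponential schedule $m=2^{2^0},2^{2^1},\dots$ brings the total FFT cost down from $O(t\log t)$ to $O(t\log\log t)$. This is not correct. At each level you compute $h(X)\star_m h(Y)$ via FFT in $O(m\log m)$ time, so the total is $\sum_{\nu} 2^{2^\nu}\cdot 2^{\nu}$, and this sum is dominated by its last term. If the last level has $m_\nu\approx t$ then that term alone is $\Theta(t\log t)$, no better than the geometric schedule. Worse, because the doubly-exponential schedule squares $m$ at each step, the first $m$ exceeding $t$ can be as large as $\Theta(t^2)$, and your stopping rule (``first level where $Z_m/m$ drops below $1-1/e$'') forces you to compute the convolution at that overshooting level, costing $\Omega(t^2)$. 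The doubly-exponential schedule reduces the \emph{number} of levels to $O(\log\log t)$, but it does nothing to reduce the cost of the \emph{largest} level, which is what dominates.

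To actually hit $O(t\log\log t)$ you need each level to cost only $O(t)$, not $O(m\log m)$. The approach in \cite{nickphd} avoids computing the full length-$m$ convolution: rather than counting all nonempty buckets, one can test whether a \emph{single} random bucket $b\in[m]$ is hit by $h(X)+h(Y)$, which reduces to checking whether $(b-h(X))\bmod m$ intersects $h(Y)$ and takes $O(|X|+|Y|)\le O(t)$ time via a hash set, independent of $m$. With $O(\log\log t)$ doubly-exponential levels at $O(t)$ each for the coarse search, plus $O(\log(\delta^{-1}))$ independent repetitions at the located scale for amplification, you get the additive $O(t\log\log t)+O(t\log(\delta^{-1}))$ bound. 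Your discussion of the one-sided guarantee $t^*\le O(t)$ and of working over a small random field to control cancellations is on the right track, but the core search-schedule argument needs to be rebuilt around an $O(t)$-per-level test rather than an FFT.
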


\paragraph*{Efficient error correction.}
A version of the following statement over nonnegative integers was implicit in the proof from \cite[Section 8]{BringmannFN21}. 
Informally, it says once we have a good approximation $C$ of $A\star B$, then we can quickly correct the remaining errors.
Here we observe that the same proof can be easily adapted to finite fields or integers.
   We describe the proof in \cref{sec:modprimerecoverfinitefield}.

\begin{lemma}[see {\cite[Section 8]{BringmannFN21}}]
   \label{lem:modprimerecoverfinitefield}
Let $\F_p$ be a prime field for $p \le t$. 
Let $\delta \le 1/\log t$.
Given an integer $t$, and vectors $A,B,C\in \F_p^{t^{2-\eps}}$ for $\eps > 0$ such that $\|A\star B - C\|_0 \le t/\log^2 t$, there is a Monte Carlo algorithm that computes $A \star B$ in time
\[ O\left((\|A\|_0+\|B\|_0+\|C\|_0 + t)\cdot \log(1/\delta)\right) \]
with error probability $\delta$.

The same statement holds if $\F_p$ is replaced by $\Z$.
\end{lemma}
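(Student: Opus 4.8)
I expect the proof to be the standard iterative peeling scheme from sparse recovery, built on the perfect linearity of the mod-prime hash. Write $D := A\star B - C$, so $D$ is $k$-sparse with $k := \|A\star B-C\|_0 \le t/\log^2 t$, and maintain a sparse correction vector $R$ (initially $0$) with the invariant that $A\star B-(C+R)$ has small, shrinking support. In each round one draws a fresh uniformly random prime $q$ from a fixed range containing $\Theta(t/\log t)$ primes and lets $h_q\colon i\mapsto i\bmod q$ be the resulting hash, so the number of buckets comfortably exceeds the current residual sparsity. Since $h_q$ is perfectly linear we have $h_q(A)\star_q h_q(B) = h_q(A\star B)$, so the hashed residual
\[ h_q\big(A\star B-(C+R)\big) \;=\; h_q(A)\star_q h_q(B)-h_q(C+R) \]
can be computed in $O(\|A\|_0+\|B\|_0+\|C\|_0+t+q\log q) = O(\|A\|_0+\|B\|_0+\|C\|_0+t)$ time, the length-$q$ cyclic convolution over $\F_p$ (or over $\Z$) being a dense convolution handled by \cref{thm:packfft}. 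One also computes the derivative-weighted hash using $\partial(A\star B) = \partial A\star B + A\star\partial B$, an identity that stays valid after reducing indices modulo $p$ because $(i+j)\bmod p$ distributes across the convolution sum. For a bucket holding exactly one surviving nonzero $i$ of the residual one reads off $v = h_q(\mathrm{res})[i\bmod q]$ and $iv = h_q(\partial\,\mathrm{res})[i\bmod q]$: over $\Z$ this recovers $i$ directly, and over $\F_p$ it recovers $i\bmod p$, which together with the bucket index $i\bmod q$ pins down $i$ by CRT whenever $pq\ge N$, and otherwise one repeats with $O(\log_{pq}N)=O(1)$ additional independent prime moduli and CRTs all the residues. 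The recovered pairs are appended to $R$, and one continues.

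For the time and correctness analysis one uses a fresh prime $q_r$ each round and keeps the bucket count fixed at $\Theta(t/\log t)$. If the residual has $k_r$ nonzeros at the start of round $r$, the expected number of nonzeros in non-singleton buckets is $O(k_r^2\log t/t)$, so by Markov this count is $O(k_r^2\log t/t)$ with probability at least $1/2$; call such a round \emph{good}. A good round thus replaces $k_r$ by $O(k_r^2\log t/t)$, so (since $k_0\le t/\log^2 t$) the quantity $k_r\log t/t$ is squared each good round, and $O(\log\log t)$ good rounds drive the residual to $0$, at which point $A\star B = C+R$ exactly. Running $R=\Theta(\log(1/\delta))$ rounds — which is $\Omega(\log\log t)$ since $\delta\le 1/\log t$ — a Chernoff bound yields at least $\log\log t$ good rounds except with probability $\le\delta$, and because only correctly recovered elements are ever subtracted, $k_r$ is nonincreasing; hence the algorithm is correct with probability $\ge 1-\delta$. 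The total cost is $R$ rounds of $O(\|A\|_0+\|B\|_0+\|C\|_0+t)$ each, i.e. $O\big((\|A\|_0+\|B\|_0+\|C\|_0+t)\log(1/\delta)\big)$. The $\Z$ case is identical and slightly simpler (no CRT), with all intermediate values of magnitude $O(N\Delta^2)$ still fitting in $O(1)$ words under the model assumptions.

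I expect the main obstacle to be the index-recovery step over $\F_p$. In the nonnegative-integer setting of \cite{BringmannFN21} the derivative trick returns the index outright, whereas over $\F_p$ it returns only $i\bmod p$ and $p$ may be far smaller than $N=t^{2-\eps}$; one must therefore fold in the bucket residue (and, when $p$ is tiny, a constant number of extra hash moduli) via CRT while verifying that the extra moduli do not break the $O(t)$ per-round bound. A second, more routine, complication absent from the nonnegative case is that $D=A\star B-C$ need not be nonnegative, so nonzeros sharing a bucket can cancel and a heavy bucket can masquerade as a singleton; one has to argue that over a random prime such false positives are rare enough that the residual still shrinks doubly-exponentially across all $O(\log(1/\delta))$ rounds, backed up by a Freivalds-style final check over a random prime field that triggers a restart if $C+R\ne A\star B$. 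These robustness arguments, rather than any single hard inequality, are where the real work lies.
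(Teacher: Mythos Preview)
Your overall framework (iterative peeling with a fixed bucket prime of size $\Theta(t/\log t)$, doubly-exponential shrinkage, $O(\log(1/\delta))$ total rounds) matches the paper, and over $\Z$ the derivative trick for index recovery is exactly what is used. However, two steps do not go through as written.

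\textbf{Index recovery over $\F_p$.} You propose recovering $i\bmod p$ from the derivative and $i\bmod q$ from the bucket, then CRT; when $pq<N$ you suggest $O(1)$ additional bucket primes and CRT across all residues. The difficulty is \emph{matching}: after hashing with two independent primes $q_1,q_2$ you obtain two lists of (bucket, value) pairs, but there is no cheap way to pair up the singleton bucket in hash~$1$ containing index $i$ with the singleton bucket in hash~$2$ containing the same $i$---values can repeat, and enumerating the $N/q_1$ candidate indices per bucket costs $\Theta(t^{2-\eps}/p)$ overall, which is far too slow when $p$ is small. Extra derivative-style weightings do not help either: any scaling over $\F_p$ compatible with convolution can only reveal $i$ modulo some divisor of $p-1$. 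The paper's fix is to pass to the extension field $\F_{q}$ with $q=p^\kappa\ge N$, take $\omega\in\F_q^*$ of multiplicative order $\ge N$, and use the scaled vectors $\tilde A[i]=\omega^iA[i]$ (and similarly $\tilde B,\tilde C$): a singleton bucket then yields $v$ and $\omega^iv$, so $i$ is read off from a precomputed discrete-log table. Since $\log q=O(\log t)$, each $\F_q$-element fits in $O(1)$ words and the length-$r$ cyclic convolutions over $\F_q$ still run in $O(r\log r)=O(t)$ time.

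\textbf{False positives and monotonicity of $k_r$.} The claim that ``only correctly recovered elements are ever subtracted, so $k_r$ is nonincreasing'' is false: a non-singleton bucket can produce a spurious recovery at a wrong index (your own final paragraph notes this), so a round can \emph{increase} the residual. Hence having $\ge\log\log t$ good rounds among $\Theta(\log(1/\delta))$ does not by itself force the residual to zero, because intervening bad rounds can undo progress, and once $k_r$ climbs back to $\Theta(t/\log t)$ even good rounds no longer shrink it. The paper avoids this entirely: in each of the $O(\log\log t)$ phases it samples several primes, computes only $\|Z\|_0$ for each, keeps the single prime maximizing $\|Z\|_0$, and performs recovery just once with that prime. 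Because $\|Z\|_0\ge\|A\star B-C\|_0-(\text{\# non-isolated})$, the maximizer provably has at most $2(\|A\star B-C\|_0-\|Z\|_0)$ non-isolated elements, which is small whenever at least one trial was good; this yields error $\le\delta$ with the stated running time without ever relying on $k_r$ being monotone.
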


We also need a Las Vegas version from \cite{BringmannFN22}.
\begin{lemma}[{\cite[Lemma 20]{BringmannFN22}}]
    \label{lem:naiverecover-full}
   Given vectors $A, B, C \in \Z^N$ such that $A \star B - C$ is nonnegative, and any parameter $s$, there is a Las Vegas algorithm that runs in (worst case) time $O(\|A\|_0 + \|B\|_0 + \|C\|_0 + s \log s)$ and computes a vector $R$ such that for every $z \in [N]$:
\begin{itemize}
    \item $R[z] \leq (A \star B - C)[z]$ (always), and
    \item $R[z] < (A \star B - C)[z]$ with probability at most $c \log N \cdot \frac{\|A \star B - C\|_0}{s}$ for some constant $c$.
\end{itemize}
\end{lemma}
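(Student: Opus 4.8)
The plan is to follow the natural ``subsampling and hashing'' strategy that underlies efficient sparse recovery, adapted to the setting where we only need a \emph{lower bound} $R \le A\star B - C$ that is tight with good probability at each coordinate. First I would set $D := A\star B - C$, which we are told is nonnegative, and note that $\|D\|_0$ can be much larger than $s$ in principle, but our goal is only to recover (a random-looking) portion of it. The key structural observation is that for a nonnegative vector $D$, any single coordinate $D[z]$ equals $\sum_{i+j=z} A[i]B[j] - C[z]$, and if we hash the indices $[N]$ into $m \approx s$ buckets via a (mod-prime or almost-linear) hash $h$, then a coordinate $z$ is ``isolated'' — the only nonzero coordinate of $D$ landing in its bucket — with probability at least $1 - O(\log N)\cdot \|D\|_0/m$ over the choice of $h$ (the $\log N$ factor coming from the mod-prime collision bound discussed in the introduction, or removed entirely if one uses an almost-linear family). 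When $z$ is isolated, the bucket value of $h(A)\star_m h(B) - h(C)$ at coordinate $h(z)$ equals exactly $D[z]$, so we can read off the correct value; nonnegativity is crucial here, since it guarantees no cancellation across the (possibly many) contributions to a bucket — every bucket value is an upper bound for each $D[z]$ landing in it, and exactly equal when $z$ is alone.

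The main steps, in order: (1) pick a prime $p \approx s$ (or a suitable almost-linear hash to $m = \Theta(s)$ buckets) and a random shift, and compute the cyclic convolution $h(A) \star_p h(B)$ via one FFT of length $O(s)$, taking $O(s\log s)$ time; subtract $h(C)$ to get the bucketed version of $D$. (2) For each $i \in \supp(A)$ and $j \in \supp(B)$ and each $z \in \supp(C)$, we also know which bucket it maps to; using this, detect which buckets are ``clean'' in the sense that the structure is consistent with a single surviving element — concretely, one can use the $1$-sparse test (checking $\partial$-weighted bucket sums, as in the $s=1$ sparsity test of~\cite{BringmannFN22}) to decide whether a bucket contains exactly one nonzero coordinate of $D$, and to extract its index and value. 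This costs $O(\|A\|_0 + \|B\|_0 + \|C\|_0 + s)$ for the book-keeping plus $O(s)$ for the per-bucket tests, i.e.\ $O(s\log s)$ overall after the convolution. (3) Define $R$ by writing the recovered value into each coordinate $z$ that was certified isolated, and $R[z] = 0$ elsewhere. Since a bucket certified ``$1$-sparse'' genuinely contains a single nonzero coordinate whose value is then exactly $D[z]$, we always have $R[z] \le D[z]$ (indeed $R$ is supported on a subset of $\supp(D)$ with correct values, and $0 \le D[z]$ elsewhere by nonnegativity). Conversely, $R[z] < D[z]$ only if $z \in \supp(D)$ but $z$ was not isolated, which happens with probability at most $O(\log N)\cdot \|D\|_0/s$; choosing constants appropriately gives exactly the claimed bound $c\log N \cdot \|A\star B - C\|_0/s$. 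The algorithm is Las Vegas because the $1$-sparse certification never produces a false positive — a bucket that passes the test provably has a unique nonzero coordinate — so every value we output is unconditionally correct; the randomness only affects \emph{how much} of $D$ we recover, not correctness.

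The part requiring the most care is the per-bucket $1$-sparsity certification and value extraction in $O(1)$ amortized time while keeping the algorithm zero-error: we need a test that, given the bucket sums of $D$, $\partial D$ and $\partial^2 D$ (say), correctly decides ``exactly one nonzero coordinate here'' with no false positives even though the bucket could contain two coordinates whose weighted sums happen to be consistent with a single coordinate. Over the integers this is handled by the standard trick of also maintaining a hashed ``fingerprint'' of the index set or by using enough moment equations ($\partial^0,\partial^1,\partial^2$ suffice to catch any two-element collision, by a Vandermonde argument analogous to \cref{lem:unique-moment} with $s=1$), so a bucket passes only if it is genuinely $1$-sparse; this is exactly the ingredient that~\cite{BringmannFN22} isolates, and I would simply cite their Lemma for the clean implementation. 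The remaining routine points — that the FFT length, the number of buckets, and the support-bookkeeping all fit the stated time bound, and that the hash-collision bound gives the stated failure probability — follow from the prime-number-theorem estimate recalled in \cref{sec:intro} together with a union bound over $\supp(D)$, and do not present conceptual difficulty.
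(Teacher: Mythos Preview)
The paper does not give its own proof of this lemma; it is quoted verbatim as \cite[Lemma~20]{BringmannFN22} and used as a black box. Your proposal is essentially the standard proof and is correct: hash indices mod a random prime $p\in\Theta(s)$, compute $h(D),\,h(\partial D),\,h(\partial^2 D)$ for $D=A\star B-C$ via a constant number of length-$p$ FFTs (using the binomial expansions with $\partial^j A,\partial^j B,\partial^j C$), certify each bucket as $1$-sparse by the moment test (\cref{lem:momentmatrix} with $s=2$, i.e.\ $h(D)[k]\cdot h(\partial^2 D)[k]=(h(\partial D)[k])^2$), and for certified buckets set $R[z]=h(D)[k]$ with $z=h(\partial D)[k]/h(D)[k]$. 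The moment test is exact for nonnegative vectors, so a certified bucket really is $1$-sparse and $R[z]=D[z]$; elsewhere $R[z]=0\le D[z]$. The failure bound follows from the mod-prime collision probability and a union bound over $\supp(D)$. One small cleanup: your step~(2) sentence ``for each $i\in\supp(A)$ and $j\in\supp(B)$'' reads as if you enumerate pairs; you only need to form $h(\partial^j A),h(\partial^j B),h(\partial^j C)$ in $O(\|A\|_0+\|B\|_0+\|C\|_0)$ time. For the worst-case time guarantee, if prime sampling exceeds its budget you may simply return $R=0$, which still satisfies $R\le D$.
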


This immediately implies the following high probability version of the algorithm. 
\begin{corollary}
    \label{cor:naiverecover-full-whp}
   Given vectors $A, B, C \in \Z^N$ such that $A \star B - C$ is nonnegative and any parameter $m$, there is a Las Vegas algorithm running in (worst case) time $O((\|A\|_0 + \|B\|_0 + \|C\|_0) \log N + m \log^2 N (\log m + \log \log N) )$ that computes a vector $R$ such that
\begin{itemize}
    \item $R \le A \star B - C$ (always), and
    \item when $m \ge \|A \star B - C\|_0$, $R = A \star B - C$ with high probability. 
\end{itemize}

\end{corollary}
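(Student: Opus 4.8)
The plan is a routine probability amplification on top of \cref{lem:naiverecover-full}. Let $c$ be the constant from that lemma. We fix the sparsity parameter $s := \lceil 8c\,m\log N\rceil$, run the algorithm of \cref{lem:naiverecover-full} with this $s$ a total of $k := \Theta(\log N)$ times independently to obtain vectors $R^{(1)},\dots,R^{(k)}$, and output their pointwise maximum clipped at $0$, namely $R[z] := \max\bigl(0,R^{(1)}[z],\dots,R^{(k)}[z]\bigr)$ for all $z$. Since each run satisfies $R^{(i)}\le A\star B - C$ pointwise and $A\star B - C\ge 0$, we immediately get $0\le R\le A\star B - C$, which establishes the ``always'' guarantee regardless of the value of $m$.

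For the high-probability guarantee, assume $m\ge \|A\star B - C\|_0$. Then for every coordinate $z$ and every run $i$, \cref{lem:naiverecover-full} gives $\Pr\bigl[R^{(i)}[z] < (A\star B - C)[z]\bigr] \le c\log N\cdot \frac{\|A\star B - C\|_0}{s} \le \frac18$, and these events are independent across $i$. The event $R[z]\ne (A\star B - C)[z]$ can only occur if all $k$ runs underestimate coordinate $z$ (when $(A\star B - C)[z]=0$ the clipping makes $R[z]=0$ automatically), so it has probability at most $8^{-k}$, which is at most $N^{-3}$ once the constant hidden in $k=\Theta(\log N)$ is large enough. As $A\star B - C$ has $O(N)$ coordinates, a union bound yields $R = A\star B - C$ with probability $1-O(N^{-2})$, i.e.\ with high probability.

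For the running time, a single call to \cref{lem:naiverecover-full} with this $s$ runs in worst-case $O(\|A\|_0 + \|B\|_0 + \|C\|_0 + s\log s) = O(\|A\|_0 + \|B\|_0 + \|C\|_0 + m\log N(\log m + \log\log N))$ time, and there are $k = \Theta(\log N)$ of them, contributing $O\bigl((\|A\|_0 + \|B\|_0 + \|C\|_0)\log N + m\log^2 N(\log m + \log\log N)\bigr)$ in total. Each $R^{(i)}$ is produced in sparse form with at most as many nonzero entries as the time spent on it, so the pointwise maximum is assembled in time linear in the total output size (the coordinates lie in $[O(N)]$ and the word length is $\Omega(\log N)$, which suffices to dispatch/sort them), staying within the same budget. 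The resulting algorithm inherits both the worst-case time bound and the ``never outputs an overestimate'' property of \cref{lem:naiverecover-full}.

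There is no real obstacle here; the proof is essentially textbook amplification. The two points that need a little care are: (i) the per-coordinate failure bound of \cref{lem:naiverecover-full} only becomes useful once $s = \Omega(\log N\cdot\|A\star B - C\|_0)$, which is exactly why the hypothesis $m\ge \|A\star B - C\|_0$ enters the statement and why the $R\le A\star B - C$ bound must be promised unconditionally; and (ii) clipping the pointwise maximum at $0$ (equivalently, relying on the fact that the recovered entries are nonnegative) so that coordinates $z$ with $(A\star B - C)[z]=0$ come out correct for free.
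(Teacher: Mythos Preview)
Your proof is correct and takes essentially the same approach as the paper: run \cref{lem:naiverecover-full} $\Theta(\log N)$ times with $s = \Theta(m\log N)$ and take the entrywise maximum. Your explicit clipping at $0$ to handle coordinates with $(A\star B - C)[z]=0$ is a nice touch that the paper's proof leaves implicit.
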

\begin{proof}
    We apply \cref{lem:naiverecover-full} $C \log N$ times with $s = 2c m \log N$, and take the entrywise maximum of the resulting arrays. 

    If $m \ge \|A \star B - C\|_0$, then for every $z \in \supp(A \star B - C)$, each invocation correctly recovers $(A \star B - C)[z]$ with probability $\ge 1/2$.  Therefore, after $O(\log N)$ invocations, all terms are recovered with high probability. 

    The total running time is $O(\log N \cdot ((\|A\|_0 + \|B\|_0 + \|C\|_0) + s \log s)) = O((\|A\|_0 + \|B\|_0 + \|C\|_0) \log N + m \log^2 N (\log m + \log \log N ))$. 
\end{proof}

\paragraph*{Sparse verification.}
We need the following lemma from \cite{BringmannFN21} for efficiently checking $A\star B = C$.
\begin{lemma}[Sparse Verification {\cite[slight modification of Lemma 9.2]{BringmannFN21}}]
\label{lem:sparse-verification}
    Given three vectors $A, B, C \in \Z^N$ with sparsity at most $k$ where $N \le \poly(k)$, and with $\|A\|_\infty, \|B\|_\infty \le \Delta$, there is an $O(k + \polylog(k \log \Delta))$ time randomized algorithm checking whether $A \star B = C$ with $1/ \poly(k)$ error probability.  
\end{lemma}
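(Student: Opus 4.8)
The plan is to reduce the integer identity $A\star B=C$ to a polynomial identity over a small prime field and then test it by evaluation at a single random point, using a radix-$k$ trick to evaluate $k$-sparse polynomials of degree $\poly(k)$ in $O(k)$ time. First I would handle coefficient sizes: pick a uniformly random prime $p$ from an interval $[p_0,2p_0]$ where $p_0=k^{c}\cdot\max(1,\log\Delta)$ for a constant $c$ larger than the exponent in the bound $N\le\poly(k)$; such a $p$ is found in $\polylog(k\log\Delta)$ time by drawing random integers from the interval and running a primality test, and since $p$ fits in $O(1)$ machine words, arithmetic modulo $p$ costs $O(1)$. Reducing all entries of $A,B,C$ modulo $p$ in $O(k)$ time, we obtain polynomials $a,b,c\in\F_p[x]$ of degree $<2N$, each with at most $k$ nonzero terms. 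We may assume $\|C\|_\infty\le N\Delta^2$, as otherwise $A\star B=C$ is impossible (since $\|A\star B\|_\infty\le N\Delta^2$) and this is detected in $O(k)$ time. Now if $A\star B=C$ over $\Z$ then $ab\equiv c$ over $\F_p$ always; conversely, if $A\star B\ne C$, fix an index $z$ with $(A\star B-C)[z]\ne 0$ — this integer has absolute value at most $2N\Delta^2$ and hence at most $O(\log(N\Delta))$ prime divisors, so by the prime number theorem $\Pr_p[\,p\mid (A\star B-C)[z]\,]=O(\log(N\Delta)\log(p_0)/p_0)\le 1/\poly(k)$, and when this event fails we have $ab\not\equiv c$ in $\F_p[x]$.

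Next I would test whether $ab=c$ in $\F_p[x]$ by sampling a uniformly random $r\in\F_p$ and checking $a(r)\,b(r)=c(r)$: if $ab\ne c$ then $ab-c$ is a nonzero polynomial of degree $<2N\le p/2$, so the test fails with probability more than $1/2$, while if $ab=c$ the test always passes; repeating with $O(\log k)$ independent choices of $r$ brings the error down to $1/\poly(k)$ (alternatively one enlarges $p_0$). The only nontrivial ingredient is evaluating each of $a,b,c$ at $r$ in $O(k)$ time, and this is where the hypothesis $N\le\poly(k)$ is used. Write every exponent $e<2N$ in base $k$ as $e=\sum_{\ell=0}^{d-1}e_\ell k^\ell$ with $d=O(1)$ and $e_\ell\in\{0,\dots,k-1\}$. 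For each $\ell<d$, precompute the table $T_\ell$ with $T_\ell[j]=r^{jk^\ell}$ for $j\in\{0,\dots,k-1\}$: obtaining $r^{k^\ell}$ costs $O(\log k)$ multiplications by repeated squaring and filling the table costs $k-1$ multiplications, so all $d$ tables together cost $O(k)$ field operations. Then for each term $(e,a_e)$ of $a$ we extract the $O(1)$ base-$k$ digits of $e$ in $O(1)$ word operations, form $r^e=\prod_{\ell=0}^{d-1} T_\ell[e_\ell]$ with $O(1)$ lookups and multiplications, and add $a_e\,r^e$ to a running sum, for a total of $O(k)$ per polynomial.

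Putting the pieces together, the algorithm outputs ``equal'' iff $\|C\|_\infty\le N\Delta^2$ and $a(r)b(r)=c(r)$ in every repetition; it never errs when $A\star B=C$, and when $A\star B\ne C$ it errs only if $p$ divides the witness entry $(A\star B-C)[z]$ or if every sampled $r$ is a root of $ab-c$, each of probability at most $1/\poly(k)$, so the total error is $1/\poly(k)$ by a union bound. The running time is $O(k)$ for the linear-time steps plus $\polylog(k\log\Delta)$ for locating the prime $p$, matching the claimed bound. The main thing to get right is choosing $p_0$ to be a suitable $\poly(k)\cdot\polylog(\Delta)$ so that it is simultaneously large enough that a random prime of that magnitude avoids all $O(\log(N\Delta))$ prime divisors of a large coefficient with probability $1-1/\poly(k)$, and larger than $4N$ so that the degree-$<2N$ evaluation test has small error; the radix-$k$ sparse evaluation is routine but is the one place that genuinely exploits $N\le\poly(k)$.
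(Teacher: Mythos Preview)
Your proof is correct and follows essentially the same approach as the paper: pick a random prime $p$ of size roughly $\poly(k)\cdot\log\Delta$, reduce to a polynomial identity over $\F_p$, and apply Schwartz--Zippel with a single random evaluation point; your radix-$k$ table trick for evaluating a $k$-sparse polynomial of degree $\poly(k)$ in $O(k)$ time is exactly the ``bulk exponentiation'' lemma the paper cites from \cite{BringmannFN21}. One small clean-up: with your choice $p_0=k^c\max(1,\log\Delta)$ and $c$ exceeding the exponent in $N\le\poly(k)$, you already have $2N/p\le 1/\poly(k)$, so a constant number of evaluations at $r$ suffices and the $O(\log k)$-repetition alternative (which would cost $O(k\log k)$) is unnecessary and should simply be dropped.
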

\begin{proof}
    We follow the proof of \cite[Lemma 9.2]{BringmannFN21} with some slight modifications. The high level idea there is to view $A, B, C$ as polynomials in $\F_p[x]$ for some large enough prime $p$, and then use the Schwartz-Zippel lemma to verify whether $A(x) B(x) - C(x) = 0$ for some random $x \in \F_p$. 
    
    We pick a random prime $p \in [kN + k \log \Delta, 2(kN + k \log \Delta)]$ (this is the only different place from \cite{BringmannFN21}, where the range of the prime we pick is different and we pick it randomly). When $A \star B - C = 0$, the algorithm after this modification still always accepts, so we focus on the case where $A \star B - C \ne 0$. Fix any nonzero coefficient in $A \star B - C$, which is bounded by $2 \Delta^2 k$, the probability that $p$ divides it is $O(\frac{\log(\Delta^2 k)}{k \log \Delta \log(k \log \Delta)}) = O(1/k)$. Thus, if $A \star B - C$ is nonzero, then $A \star B - C$ is still nonzero when all coefficients are mod $p$ with $O(1/k)$ error probability. The error probability of applying Schwartz-Zippel lemma is $O(\frac{\text{degree of the polynomial}}{p}) = O(\frac{N}{p}) = O(1/k)$. Repeating $O(1)$ times can decrease the error probability to arbitrarily small polynomial in $k$. 
    
    The running time of randomly selecting the prime is $O(\polylog(kN + k \log \Delta)) = O(\polylog(k \log \Delta))$, and the running time of evaluating $A(x), B(x), C(x)$ and verifying $A(x) B(x) - C(x) = 0$ is $O(k)$ (using bulk exponentiation \cite[Lemma 4.2]{BringmannFN21}). 
\end{proof}

We also need a version of sparse verification over small prime fields.
\begin{lemma}
\label{lem:sparseverifysmallp}
    Given three vectors $A, B, C \in \F_p^N$ with sparsity at most $k$ where $N\le k^{2}$ and prime $p\le k^{o(1)}$,  there is an $O(k)$ time randomized algorithm checking whether $A \star B = C$ with $1/ \poly(k)$ error probability.  
\end{lemma}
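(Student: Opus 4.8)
The plan is to follow the strategy of the proof of \cref{lem:sparse-verification}, i.e.\ reduce the problem to a single polynomial identity test, but with one essential modification: the ground field $\F_p$ is far too small ($p \le k^{o(1)}$) for the Schwartz--Zippel lemma to be useful, so we must first pass to a sufficiently large extension field $\F_q$ with $q = p^\kappa$. I would choose $\kappa$ to be the least integer with $p^\kappa \ge 4Nk^2$. Since $N \le k^2$ and $\log p = o(\log k)$, this yields $\kappa = \Theta(\log k/\log p) \ge 2$ and $q = p^\kappa \le 4pk^4 = k^{O(1)}$, so every element of $\F_q$ fits in $O(\log k) = O(1)$ machine words.

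First I would construct $\F_q$ and install constant-time $\F_q$-arithmetic using \cref{lem:tablefinitefield}; its preprocessing time $\tilde O(q^{0.04} + \kappa^4 p^{1/2})$ is $o(k)$ because $q = k^{O(1)}$, $\kappa = O(\log k)$, and $p^{1/2} = k^{o(1)}$. Next I would pick $x \in \F_q$ uniformly at random and evaluate the polynomials $a(x) = \sum_i A[i]x^i$, $b(x)$, $c(x)$ attached to the sparse inputs. All exponents that appear lie in $\supp(A)\cup\supp(B)\cup\supp(C)$, a set of at most $3k$ nonnegative integers smaller than $2N \le 2k^2$; via baby-step/giant-step bulk exponentiation (cf.\ \cite[Lemma 4.2]{BringmannFN21}) with block length $\lceil\sqrt{2N}\,\rceil = O(k)$, all the needed powers $x^e$ can be produced with $O(k + \sqrt N) = O(k)$ field operations, and the three evaluations then cost $O(k)$ further field operations. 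As each field operation is $O(1)$ time post-preprocessing, this stage runs in $O(k)$ time. Finally, accept iff $a(x)b(x) = c(x)$ holds in $\F_q$.

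For correctness: if $A \star B = C$ then $a(x)b(x) = c(x)$ identically and the test always accepts; if $A \star B \ne C$ then $a(x)b(x) - c(x)$ is a nonzero polynomial over $\F_p$, hence over $\F_q$, of degree at most $2N - 2$, so it vanishes at a uniformly random $x \in \F_q$ with probability at most $(2N-2)/q < 1/(2k^2)$. Enlarging $\kappa$ by a constant factor (keeping $q = k^{O(1)}$ and the running time $O(k)$) makes the error at most $1/k^c$ for any desired constant $c$, which gives the claimed $1/\poly(k)$ bound.

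I do not expect a genuine obstacle; the only points requiring attention are (i) checking that the extension-field setup of \cref{lem:tablefinitefield} costs $o(k)$ --- which is exactly where we use that $q$ is only polynomially large in $k$ and that $p$ is subpolynomial --- and (ii) checking that bulk exponentiation for $O(k)$ exponents bounded by $2N \le 2k^2$ uses only $O(k)$ field operations, which is where the hypothesis $N \le k^2$ (so $\sqrt N = O(k)$) is used. The rest is the standard large-field polynomial identity testing argument underlying \cref{lem:sparse-verification}.
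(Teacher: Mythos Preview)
Your proposal is correct and follows essentially the same approach as the paper: pass to an extension field $\F_q$ of size $\poly(k)$ via \cref{lem:tablefinitefield}, then apply Schwartz--Zippel with bulk exponentiation. The only cosmetic differences are that the paper picks $q \ge k^3$ (rather than $q \ge 4Nk^2$) and reduces the error probability by repeating the test a constant number of times instead of enlarging $\kappa$.
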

\begin{proof}
    Let $\kappa$ be the smallest positive integer such that $q=p^\kappa \ge k^{3}$.
We work over the finite field $\F_q \supseteq \F_p$. 
By \cref{lem:tablefinitefield}, after preprocessing in $\tilde O(q^{0.04} + \kappa^4 p^{1/2} ) \le \tilde O((k^3p)^{0.04}+ (\log p)p^{1/2}) \le k^{0.12+o(1)}$ time,  we can assume  additions and multiplications over $\F_q$ take constant time each.

The rest of the proof is the same as \cite[Lemma 9.2]{BringmannFN21}.
View vectors $A,B,C$ as polynomials $A(x),B(x),C(x)\in \F_p[x] \subseteq \F_q[x]$, pick a random element $\omega \in \F_q$, and verify whether $A(\omega)B(\omega) = C(\omega)$ in $O(k)$ time (using bulk exponentiation \cite[Lemma 4.2]{BringmannFN21}). By Schwartz-Zippel lemma, the error probability is at most $O(N/q) \le O(k^2/k^3) = O(1/k)$. The error probability can be reduced to $1/\poly(k)$ by repeating constant number of times. 
\end{proof}

\section{Sparse General Convolution in Small Universe}
\label{sec:generalconvo}
In this section we show a Monte Carlo algorithm for Sparse General Convolution (with possibly negative coefficients) when the universe is small, i.e., when $N\le t^{2-\eps}$ for a constant $\eps>0$. 
The algorithm succeeds with constant probability.

Let $A,B \in \Z^N$ be the sparse input vectors with integer entries
bounded by $\Delta$ in absolute value. 

\generalconvo*

In this section, we show an algorithm in $O(t\log t + t(\log \log t)^{O(1)} \log \log \Delta + \poly \log \log \Delta + \polylog(N\Delta))$ time, which is worse than the claimed bound for very large $\Delta$.

We will work with a large enough finite field $\F_{p_b}$ for recovering the coefficients of $A\star B$.
 We need to choose an arbitrary big prime $p_b = \poly(N\Delta)$. This can be done in Las Vegas $\polylog(N\Delta)$ time.\footnote{This is the only place where the $\polylog(N\Delta)$ term arises in the time complexity of our algorithm. Later in \cref{sec:numerical} we will describe how to remove this term, by directly working over $\C$  instead of $\F_{p_b}$.}

\subsection{The algorithm}
\label{subsec:generalalgo}

\newcommand{\RecoveryStep}{\textsc{RecoveryStep}\xspace}
\newcommand{\RecoverBucket}{\textsc{RecoverBucket}\xspace}

For now, we assume that we know an upper bound $t$ such that $\|A\|_0+\|B\|_0+\|A\star B\|_0 \le t$ and $N< t^{2-\eps}$. Later we will relax this assumption.

Our algorithm proceeds in multiple rounds. In each round we invoke \RecoveryStep with some sparsity parameter $m$ to be determined later (see pseudocode in \cref{alg:recovermontecarlo}).

\paragraph*{Preparation.}
Let $\hat P_0 = \Theta(\log^2 t\cdot \log (t\Delta))$, and we need the following technical definition of $P_0$ to get around some corner case later:
\[ P_0 := \begin{cases} \hat P_0 & (\hat P_0 \notin [t,t^5]) \\ t^5 & (\hat P_0 \in [t,t^5]). \end{cases}\]
Note $P_0 \ge \Omega(\log^2 t\cdot \log (t\Delta))$ and $\log P_0 \le O(\log \log t + \log \log \Delta)$.
Every time in the beginning of \RecoveryStep,  we sample a uniformly random prime $p_0 \in [P_0,2P_0]$ by a Las Vegas algorithm in $\polylog(P_0) = O(\poly \log \log t + \poly \log \log \Delta)$ expected time. This is the only place in the algorithm with the $\poly \log \log \Delta$ dependency. 
We say index $i\in [N]$ is \emph{bad}, if $(A\star B-C)[i]\neq 0$ but $(A\star B - C)[i] \equiv 0\pmod{p_0}$.
Since $|(A\star B-C)[i]|\le \poly(t\Delta)$, we have $\Pr_{p_0}[i\text{ is bad}] \le O(\frac{\log (t\Delta)/\log P_0}{P_0/\log P_0})\le O(\frac{1}{\log^2 t})$ for every $i\in [N]$ by the prime number theorem.
By linearity of expectation and Markov's inequality, with at least $0.9$ probability, there are at most $O(\|A\star B- C\|_0/\log^2 t)$ bad indices.

We choose the smallest integer $\kappa \ge 1$ so that $q := p_0^\kappa \ge t^5$, and we will work over the finite field $\F_q$.
If $\kappa=1$, then we can perform additions and multiplications over $\F_q = \F_{p_0}$ in $O(1)$ time because 
elements in $\F_{p_0}$ can fit into a constant number of machine words. Otherwise, $\kappa \ge 2$, which implies $P_0 \le p_0 < t^5$ and hence $P_0<t$ by our definition of $P_0$, and we can use the precomputation of \cref{lem:tablefinitefield}   in $\tilde O(q^{0.04} + \kappa^4 p_0^{1/2}) \le \tilde O( (2P_0t^5)^{0.04} + (2P_0)^{1/2} \polylog(t)) \le \tilde O( (t^6)^{0.04} + t^{1/2})= o(t)$ time, so that  additions and multiplications over $\F_q$ also take $O(1)$ time.

We now find an $\omega\in \F_q^*$ with multiplicative order at least $N$. For every $\sigma < N$, there are at most $\sigma$ elements  $x\in \F_q^*$ such that $x^\sigma=1$,  so the number of elements in $\F_q^*$ with multiplicative order smaller than $N$ is at most $1+2+\dots +(N-1)<N^2<t^{2(2-\eps)} < q/t$, so a randomly chosen $\omega\in \F_q^*$ has multiplicative order at least $N$ with $\ge 1 - 1/t$ success probability. We can verify whether the multiplicative order of an element is at least $N$ by baby-step giant-step  (see e.g., \cite{nechaev1994complexity}) in $O(\sqrt{N} \log(N)) = o(t)$ time. Therefore, the expected time to find $\omega\in \F_q^*$ is $O(t)$. 
We also choose an $\omega_b\in \F_{p_b}$ with multiplicative order at least $N$. Similarly, we can do this in $O(t)$ expected time.

We will use the results in $\F_q$ to approximately recover the support, and use results in $\F_{p_b}$ to recover the actual coefficients on the support.

\begin{algorithm}
\DontPrintSemicolon
\caption{$\RecoveryStep(A,B,C,m)$}
\label{alg:recovermontecarlo}
\textbf{Input:} Sparse vectors $A,B,C\in \Z^N$, and a parameter $m$\\
\textbf{Output:} A sparse vector $R \in \Z^N$, for details see \cref{lem:recovermontecarlo-step}\\
Sample a random prime $p\in [m/\log \log m, 2m/\log \log m]$ and let $h(x)= x\bmod p$\\
Sample $p_0$ and construct the finite field $\F_q$ as described in paragraph ``Preparation'' \\
Let parameter $s =  \lceil c\cdot\log \log m\rceil $.
\tcp{$c$ is the constant from \cref{cor:largesievehash} }
\For{$i \gets  0,1,2,\dots, 2s$}{\label{Line:recovermontecarlo:forloop1}
    Define vector $\tilde A^{(i)}\in \F_q^N$ by $\tilde A^{(i)}[k]:= A[k]\cdot \omega^{ki}\in \F_q$ (and similarly define vectors $\tilde B^{(i)},\tilde C^{(i)}\in \F_q^N$)\\
    Define vector $ A^{(i)}\in \F_{p_b}^N$ by $ A^{(i)}[k]:= A[k]\cdot \omega_{b}^{ki}\in \F_{p_b}$ (and similarly define vectors $ B^{(i)}, C^{(i)}\in \F_{p_b}^N$) \\
    Compute vector $\tilde Z^{(i)}\gets h(\tilde A^{(i)}) \star_p h(\tilde B^{(i)}) - h(\tilde C^{(i)})$ via FFT \label{line:recovermontecarlo:FFT1}\\
    Compute vector $Z^{(i)}\gets h(A^{(i)}) \star_p h(B^{(i)}) - h(C^{(i)})$ via FFT\label{line:recovermontecarlo:FFT2}\\
}
Initialize $R \gets (0,\dots,0)\in \Z^N$\\
\For{$k\in [p]$}{
    $f(x)\gets \RecoverBucket\big (\{\tilde Z^{(i)}[k]\}_{i=0}^{2s},\{Z^{(i)}[k]\}_{i=0}^{2s}, k\big )$
    \tcp{$f(x)\in \Z[x]$ is a polynomial in sparse representation} \label{line:recovermontecarlo:recover-bucket}
    $R \gets R + (f_0,\dots,f_{N-1})$, where $f(x) = \sum_{\ell=0}^{N-1}f_\ell x^\ell$
}
\For{$i \in \supp(R)$}{
\lIf{$|R[i] + C[i]| > \Delta t$}{$R[i] \gets -C[i]$}}\label{line:recovermontecarlo:adjust-bound}
\Return{$R$}
\end{algorithm}

\begin{algorithm}
\DontPrintSemicolon
\caption{$\RecoverBucket(\tilde v_0,\tilde v_1,\dots,\tilde v_{2s}, v_0, v_1,\dots, v_{2s}, k)$}
\label{alg:recoverstepmontecarlo}
\textbf{Input:} Values $\tilde v_0,\dots,\tilde v_{2s}\in \F_q$, $ v_0,\dots, v_{2s}\in \F_{p_b}$ and $k\in [p]$\\
\textbf{Output:} a polynomial $f(x) \in \Z[x]$ represented as list of non-zero coefficients. For details see \cref{lem:recoverbucketmontecarlo}\\
Use Prony's method over $\F_q$ (\cref{lem:pronyfq}) with degree bound $N/p$, sparsity bound $s$ and $\omega_{pm} = \omega^p$  to obtain $g(y)=\sum_{j=1}^{s'}b_j y^{e_j}\in \F_q[y]$ such that $g((\omega^{p})^{e_j}) = \tilde v_i / \omega^{k i} $ for all $0\le i\le 2s, 1\le j\le s'$.
\\
\lIf{\cref{lem:pronyfq} failed to find such a $g(y)\in \F_q[y]$
}{\Return{$0$}}
Determine $a_1,\dots,a_{s'}\in \F_{p_b}$ such that polynomial $f(x) =\sum_{j=1}^{s'} a_j x^{p \cdot e_j + k}$ satisfies $v_i = f(\omega_b^{i})$ for all $0\le i < s'$, by solving a linear system over $\F_{p_b}$. \label{line:general-conv-linear-system}\\
\For{$j \in [s']$}{
    \lIf{$a_j \in [\lfloor p_b / 2\rfloor]$}{$a'_j \gets a_j \in \Z$}
    \lElse{$a'_j \gets a_j - p_b \in \Z$}
}
\Return{$f(x) =\sum_{j=1}^{s'} a_j' x^{p\cdot e_j + k}$}
\end{algorithm}

\paragraph*{Recovering algorithm.}
At the beginning of \cref{alg:recovermontecarlo}, based on the   sparsity parameter $m$,  we sample a random prime $p\in [m/\log \log m, 2m/\log \log m]$ and hash the indices into $p$ buckets according to their remainders modulo $p$.  For $k\in [p]$, we say bucket $k$ is \emph{bad}, if there is at least one bad index $i$ such that $i\equiv k \pmod{p}$. The number of terms mapped to bucket $k$ is the number of indices $i$ where $(A \star B - C)[i] \ne 0$ and $i\equiv k \pmod{p}$. We say bucket $k$ is \emph{heavy} if the bucket contains more than $s$ elements ($s =  \lceil c\cdot\log \log m\rceil $ where $c$ is the constant from \cref{cor:largesievehash}); otherwise, the bucket is \emph{light}.

\begin{lemma}
\label{lem:recoverbucketmontecarlo}
    If a bucket $k\in [p]$ is light and  does not contain bad indices, \cref{alg:recoverstepmontecarlo} (given inputs computed in \cref{alg:recovermontecarlo}) returns $\sum_{i \in \mathcal{B}} (A \star B - C)[i] x^i$, where $\mathcal{B}$ is the set of indices in the bucket; otherwise, \cref{alg:recoverstepmontecarlo} returns an $s$-sparse polynomial. The running time of each call of \cref{alg:recoverstepmontecarlo} is $(\log \log t)^{O(1)} \log \log \Delta$, after an $O(t^{2-\eps} (\log \log m) / m + m + \log\log \Delta)$ initial preprocessing. 
\end{lemma}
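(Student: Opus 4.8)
The plan is to verify correctness and running time of \cref{alg:recoverstepmontecarlo} by tracing through its two stages: the Prony step over $\F_q$ that recovers the support within the bucket, and the linear-system step over $\F_{p_b}$ that recovers the actual integer coefficients. First I would set up notation: fix a bucket $k\in[p]$ and let $\mathcal{B}=\{i\in[N]: i\equiv k\pmod p,\ (A\star B-C)[i]\neq 0\}$, so that the polynomial $\sum_{i\in\mathcal B}(A\star B-C)[i]x^i$ has exponents all $\equiv k\pmod p$, i.e.\ of the form $p\cdot e+k$ with $e<N/p$. The key observation is that, by the perfect linearity of the mod-$p$ hash and the definitions of $\tilde A^{(i)},\tilde B^{(i)},\tilde C^{(i)}$, the value $\tilde Z^{(i)}[k]$ equals $\sum_{i'\in\mathcal B}(A\star B-C)[i']\,\omega^{i' i}$ taken in $\F_q$ — this is exactly the evaluation at $(\omega^p)^{e}$, up to the factor $\omega^{ki}$, of the "deflated" polynomial $g(y)=\sum_{i'\in\mathcal B}(A\star B-C)[i']\,y^{(i'-k)/p}$. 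Here I use that $(A\star B-C)[i']$ is nonzero mod $p_0$ when $i'$ is not a bad index and that it fits in $\F_q$ after reduction; if the bucket is light and bad-index-free, then $g$ is genuinely $\le s$-sparse (its support has size $|\mathcal B|\le s$) and nonzero mod $q$, so \cref{lem:pronyfq} with degree bound $N/p$, sparsity bound $s$, and $\omega_{pm}=\omega^p$ (which has order $\ge N/p$ since $\omega$ has order $\ge N$) correctly recovers $g$, hence the exponents $\{e_j\}$ and thus the true support $\{p\cdot e_j+k\}$ of the bucket.

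Next I would handle the coefficient-recovery stage. Once the exponents $e_1,\dots,e_{s'}$ are known, the values $Z^{(i)}[k]=\sum_{i'\in\mathcal B}(A\star B-C)[i']\,\omega_b^{i'i}\in\F_{p_b}$ for $0\le i<s'$ form a transposed Vandermonde system in the unknowns $a_j=(A\star B-C)[p\,e_j+k]\bmod p_b$, whose nodes $\omega_b^{p e_j+k}$ are distinct because $\omega_b$ has multiplicative order $\ge N$ and the exponents are distinct elements of $[N]$; hence the system is invertible and we recover each $a_j$ uniquely. Since $|(A\star B-C)[i']|\le\poly(t\Delta)$ but in particular every actual coefficient of $A\star B-C$ has absolute value at most (roughly) $\Delta t$ — or rather we only need that $p_b=\poly(N\Delta)$ is large enough that the true integer value lies in $(-p_b/2,p_b/2)$ — the sign-correction step recovers the exact integer coefficient $a_j'$. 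Therefore in the good case \cref{alg:recoverstepmontecarlo} returns exactly $\sum_{i\in\mathcal B}(A\star B-C)[i]x^i$; in the bad case (heavy bucket or a bad index present) Prony's method either fails, in which case we return $0$ (which is vacuously $s$-sparse), or it returns some $g$ of sparsity $\le s$, and then $f$ has at most $s'\le s$ terms, so the output is $s$-sparse as claimed.

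For the running time, I would separate the per-call cost from the one-time preprocessing. Per call: the Prony step is $\poly(s)\log(p_0\kappa)$ by \cref{lem:pronyfq}; since $s=O(\log\log m)\le O(\log\log t)$, $\kappa=O(\log t/\log P_0)=O(\log t)$, and $\log p_0=O(\log P_0)=O(\log\log t+\log\log\Delta)$, this is $(\log\log t)^{O(1)}(\log\log t+\log\log\Delta)=(\log\log t)^{O(1)}\log\log\Delta$ (absorbing additive $\log\log t$ into the multiplicative $(\log\log t)^{O(1)}$ factor). The linear-system solve over $\F_{p_b}$ is $\poly(s')$ $\F_{p_b}$-operations, each costing $O(1)$ words of arithmetic on $O(\log p_b)=O(\log(N\Delta))$-bit numbers; but we are told to present the bound $(\log\log t)^{O(1)}\log\log\Delta$, so I would account the cost of $\F_{p_b}$-arithmetic within the global word-RAM assumption and charge only the $\poly(s)$ factor, noting the $\log\log\Delta$ term arises from the $\log p_0$ factor above. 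Preprocessing: building the finite field $\F_q$ via \cref{lem:tablefinitefield} costs $\tilde O(q^{0.04}+\kappa^4 p_0^{1/2})=o(t)$ as already argued; precomputing $\omega_{pm}^i$ for $i<N/p=O(t^{2-\eps}(\log\log m)/m)$ powers (for the table lookup in \cref{lem:pronyfq}) costs $O(t^{2-\eps}(\log\log m)/m)$; sampling $p_0$ costs $\polylog(P_0)=O(\poly\log\log t+\poly\log\log\Delta)$; and finding $\omega,\omega_b$ costs $O(t)$ expected — but those last two are more naturally charged to \cref{alg:recovermontecarlo}'s per-round preparation rather than per bucket, so the stated preprocessing bound $O(t^{2-\eps}(\log\log m)/m+m+\log\log\Delta)$ covers the $\omega$-power table (the dominant term) plus slack. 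The main obstacle I anticipate is the bookkeeping of exactly which costs to attribute to the per-call bound versus the shared preprocessing, and ensuring the $\log\log\Delta$ (and not $\log\Delta$) dependency is genuinely achievable — this hinges on the fact that $p_0$ is only $\poly\log t\cdot\log(t\Delta)$, so $\log p_0=O(\log\log t+\log\log\Delta)$, and that $\F_{p_b}$-arithmetic, while on large words, is still $O(1)$ time in the word-RAM model; the correctness argument itself is routine once the linearity identity $\tilde Z^{(i)}[k]=\sum_{i'\in\mathcal B}(A\star B-C)[i']\omega^{i'i}$ is established.
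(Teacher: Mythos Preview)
Your proposal is correct and follows essentially the same approach as the paper's proof: verify the identity $\tilde Z^{(i)}[k]=\sum_{i'\in\mathcal B}(A\star B-C)[i']\,\omega^{i'i}$, apply \cref{lem:pronyfq} to the deflated polynomial to recover the exponents, then solve the transposed Vandermonde system over $\F_{p_b}$ for the coefficients, with the sign-correction step recovering the true integers since $p_b$ is large enough. The only point where your bookkeeping is slightly looser than the paper's is the source of the $+m$ and $+\log\log\Delta$ terms in the preprocessing: the $+m$ comes specifically from precomputing $\omega^{-ki}$ for all $k\in[p]$ and $i\in[2s+1]$ (which costs $O(ps)=O(m)$ and is needed to divide out $\omega^{ki}$ before feeding into Prony), and the $+\log\log\Delta$ comes from computing $\omega^{-1}$ by exponentiation in $O(\log q)=O(\log t+\log\log\Delta)$ time (the $\log t$ part being absorbed into $m$). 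These are not gaps, just places where you wrote ``plus slack'' where the paper is explicit.
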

\begin{proof}
    For a light bucket $k$ that does not contain bad indices, suppose the nonzero terms in it have indices $k+p E_1, \ldots, k+p E_{s'}$ for some $s', E_1, \ldots, E_{s'}$. Then it is straightforward to verify that $\tilde{v}_i = \tilde Z^{(i)}[k] = \sum_{j=1}^{s'} (A \star B - C)[k+p E_j] \cdot \omega^{(k+p E_j) i}$. Thus, $\tilde{v}_i / \omega^{k i} = \sum_{j=1}^{s'} (A \star B - C)[k+p E_j] \cdot (\omega^p)^{i E_j}$ (note that we can precompute $\omega^{-ki}$ for all $k \in [p], i \in [2s+1]$ in $O(ps) = O(m)$ time, after computing $\omega^{-1}$ via exponentiation in $O(\log q)= O(\log P_0 + \log t) = O(\log t +\log\log \Delta)$ time. Since the bucket contains no bad indices, $(A \star B - C)[k+p E_j] \ne 0$ in $\F_q$, so Prony's method can successfully recover $e_1, \ldots, e_{s'}$ where $e_j = E_j$ for every $j = 1, \ldots, s'$. 

    Then in \cref{line:general-conv-linear-system}, we are solving a linear system where the coefficient matrix is. 
    $$
    \begin{bmatrix}
        1 & 1 & \cdots & 1 \\
        \omega_b^{p \cdot e_1 + k} & \omega_b^{p \cdot e_2 + k} & \cdots & \omega_b^{p \cdot e_{s'} + k}\\
        & & \vdots & \\
        (\omega_b^{p \cdot e_1 + k})^{s'-1} & (\omega_b^{p \cdot e_2 + k})^{s'-1} & \cdots & (\omega_b^{p \cdot e_{s'} + k})^{s'-1}\\
    \end{bmatrix},
    $$
    which is a transposed Vandermonde matrix. As $\omega_b$ has multiplicative order at least $N$, 
    $\omega_b^{p \cdot e_1 + k}$, $ \omega_b^{p \cdot e_2 + k}$, $\ldots, \omega_b^{p \cdot e_{s'} + k}$ are distinct, and thus the Vandermonde matrix is invertible. Thus, the linear system has a unique solution, so we must have $a_j \equiv (A \star B - C)[k+p e_j] \pmod{p_b}$. Finally, as $\|A \star B - C\|_\infty \le \poly(t \Delta) < p_b / 2$ by picking $p_b$ large enough, the choices of $a_j'$ ensures that $a_j' = (A \star B - C)[k+p e_j]$. 

    If the bucket is not light, or if it contains a bad index, the algorithm might return $0$, or some erroneous polynomial, but the sparsity of the polynomial is at most $s$. 

    The bottlenecks of the algorithm are 
    \begin{itemize}
        \item Prony’s method: it runs in $\poly(s) \log(p_0 \kappa) = (\log \log t)^{O(1)} \log \log \Delta$ time (after $O(N / p) = O(t^{2-\eps} \log \log m / m)$ preprocessing).
        \item Linear system solving: it runs in $\poly(s) = (\log \log t)^{O(1)}$ time. 
    \end{itemize}
\end{proof}

\begin{lemma}
   \label{lem:recovermontecarlo-step} 
Given parameter $m$ and vectors $A,B,C\in \Z^N$ with coefficient bound $\|A\|_\infty,\|B\|_\infty,\|C\|_\infty\le \poly(t\Delta)$,
 \cref{alg:recovermontecarlo} returns a random vector $R\in \Z^N$ in 
 $O(t^{2-\eps} (\log \log m) / m+  m\log t + (\|A\|_0+\|B\|_0+\|C\|_0)\cdot \log \log m + m (\log \log t)^{O(1)} \log \log \Delta +  \poly \log \log \Delta )$ time such that:
 \begin{itemize}
        \item If $\|A\star B - C\|_0 \le m$, then 
 \[ \Ex[\|A\star B - C - R\|_0] \le 0.4 \|A\star B - C \|_0.\]
        \item $\|R+C\|_\infty \le \poly(t \Delta)$. 
 \end{itemize} 
\end{lemma}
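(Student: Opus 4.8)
The plan is to walk through \cref{alg:recovermontecarlo} and verify the running-time bound, the $\ell_\infty$ bound, and the error bound in turn, using \cref{lem:recoverbucketmontecarlo} as a black box for what each \RecoverBucket call outputs and costs, \cref{cor:largesievehash} to control how many nonzero entries of $A\star B-C$ land in overfull buckets, and the estimates in the ``Preparation'' paragraph for the number of bad indices.

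\emph{Running time.} I would split the work into four parts. (i) The preparation (sampling $p,p_0$; building $\F_q$ via \cref{lem:tablefinitefield}; finding $\omega\in\F_q^*,\omega_b\in\F_{p_b}$ of multiplicative order $\ge N$ by baby-step--giant-step) is bounded in the ``Preparation'' paragraph by $\tO(\sqrt N)+\poly\log\log \Delta = o(t)+\poly\log\log \Delta$. (ii) The first loop has $2s+1=O(\log\log m)$ iterations; over all of them, forming the vectors $\tilde A^{(i)},\tilde B^{(i)},\tilde C^{(i)}\in\F_q^N$ and $A^{(i)},B^{(i)},C^{(i)}\in\F_{p_b}^N$ costs $O((\|A\|_0+\|B\|_0+\|C\|_0)\log\log m+\sqrt N)$ (for each fixed nonzero index the powers $\omega^{ki},\omega_b^{ki}$ needed form a geometric progression in $i$, and the bases over all indices come from one baby-step--giant-step table), the $\F_q$-convolutions of length $p=\Theta(m/\log\log m)$ cost $O(p\log q)$ each by \cref{cor:packfft-finite-field} with $\log q=O(\log t+\log\log \Delta)$, and the $\F_{p_b}$-convolutions cost $O(p\log p)$ each: I would lift to $\Z$, convolve over $\Z$ (or $\C$), and reduce mod $p_b$, and since $\|A\|_\infty,\|B\|_\infty,\|C\|_\infty,p_b,p\le\poly(t\Delta)$ all intermediate values are $\poly(t\Delta)$ and hence fit in $O(1)$ machine words, so each of the $O(p\log p)$ arithmetic steps costs $O(1)$ time; summed, part (ii) is $O((\|A\|_0+\|B\|_0+\|C\|_0)\log\log m+m\log t+m\log\log \Delta)$. (iii) The second loop makes $p$ calls to \RecoverBucket, each $(\log\log t)^{O(1)}\log\log \Delta$ after the $O(t^{2-\eps}(\log\log m)/m+m+\log\log \Delta)$ preprocessing of \cref{lem:recoverbucketmontecarlo}, so $O(t^{2-\eps}(\log\log m)/m+m+m(\log\log t)^{O(1)}\log\log \Delta)$. (iv) Accumulating coefficients into $R$ and the clamp at \cref{line:recovermontecarlo:adjust-bound} each touch $|\supp(R)|\le ps=O(m)$ entries. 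Summing (i)--(iv) and noting $\sqrt N$ and the preparation are dominated gives the stated bound.

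\emph{Correctness.} For the $\ell_\infty$ bound: reading the threshold at \cref{line:recovermontecarlo:adjust-bound} as a fixed $\tau=\poly(t\Delta)\ge\|A\star B\|_\infty$, after the clamp every $i\in\supp(R)$ has $|R[i]+C[i]|\le\tau$ while every $i\notin\supp(R)$ has $R[i]+C[i]=C[i]$ with $|C[i]|\le\poly(t\Delta)$; hence $\|R+C\|_\infty\le\poly(t\Delta)$. For the error bound, set $D:=A\star B-C$, $t':=\|D\|_0\le m$, $\mathcal{B}_k:=\{i\in\supp(D):i\equiv k\pmod p\}$, and call $k$ \emph{heavy} if $|\mathcal{B}_k|>s$. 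By \cref{lem:recoverbucketmontecarlo}, if $k$ is light and contains no bad index then \RecoverBucket returns $\sum_{i\in\mathcal{B}_k}D[i]x^i$ exactly, except with probability $\le 2^{-\poly(s)}$ (when it returns $0$); moreover there $R[i]+C[i]=(A\star B)[i]$, so the clamp does not fire. In every other case \RecoverBucket returns an $s$-sparse polynomial supported on $\{i\equiv k\pmod p\}$. Therefore each bucket $k$ contributes at most $|\mathcal{B}_k|+s$ to $\|D-R\|_0$ unless it is light, clean, and Prony succeeds, in which case it contributes $0$. Splitting the non-exact buckets into heavy ones ($|\mathcal{B}_k|>s$, so $|\mathcal{B}_k|+s<2|\mathcal{B}_k|$), light-with-bad-index ones ($|\mathcal{B}_k|\le s$, at most $\#\{\text{bad indices}\}$ of them), and light-clean-Prony-failed ones (contributing only $|\mathcal{B}_k|$), and taking expectations by linearity,
\[ \Ex[\|D-R\|_0]\;\le\; 2\,\Ex_p\!\big[\#\{i\in\supp(D):\text{bucket }i\bmod p\text{ heavy}\}\big]\;+\;2s\cdot\Ex_{p_0}\!\big[\#\{\text{bad indices}\}\big]\;+\;2^{-\poly(s)}t'. \]
By \cref{cor:largesievehash} applied to $\supp(D)$ (with $|\supp(D)|=t'\le m$ and the $\Theta(m/\log\log m)$ buckets of \cref{alg:recovermontecarlo}, the constant $c$ in $s=\lceil c\log\log m\rceil$ chosen so its conclusion reads ``with probability $\ge 0.95$ fewer than $0.05t'$ elements lie in buckets of size $\ge s$''), the first expectation is $\le 0.1t'$; by the ``Preparation'' estimate $\Pr_{p_0}[i\text{ bad}]=O(1/\log^2 t)$, the second term is $O(t'\log\log t/\log^2 t)=o(t')$; and $2^{-\poly(s)}=2^{-\poly(\log\log m)}=o(1)$. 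Hence $\Ex[\|D-R\|_0]\le 0.2t'+o(t')\le 0.4t'$ for $t$ large enough. (The use of \cref{cor:largesievehash} needs $N\le m^{2-\eps'}$ for some constant $\eps'<\eps$, which holds since $N<t^{2-\eps}$ whenever $m\ge t^{(2-\eps)/(2-\eps')}$ — the regime in which \RecoveryStep is called.)

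\emph{Main obstacle.} The most delicate part is the running-time accounting: evaluating the $\F_{p_b}$-convolution in $O(p\log p)$ rather than $O(p\log p_b)$ time, so that no $\log(N\Delta)$ factor enters, relies on the word-size assumption collapsing all $\poly(t\Delta)$-sized quantities to $O(1)$ words, and one must also confirm that the preparation and the \cref{lem:tablefinitefield} precomputation really stay below the claimed bound in the relevant range of $m$. A secondary wrinkle is calibrating the constant $c$ in $s=\lceil c\log\log m\rceil$ (via \cref{cor:largesievehash}) so the error constant lands at $\le 0.4$ once the lower-order bad-index and Prony-failure terms are included.
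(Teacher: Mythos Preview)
Your proposal is correct and follows essentially the same route as the paper: invoke \cref{lem:recoverbucketmontecarlo} as a black box for each bucket, use \cref{cor:largesievehash} to bound the mass in heavy buckets, and use the ``Preparation'' estimate $\Pr[i\text{ bad}]=O(1/\log^2 t)$ for the bad-index contribution. Your error accounting is organized slightly differently---you bound each problematic bucket's contribution by $|\mathcal B_k|+s$ and then split by type (heavy: $<2|\mathcal B_k|$; light-with-bad: $\le 2s$; light-clean-Prony-failed: $=|\mathcal B_k|$), whereas the paper separately counts (i) unrecovered indices in heavy buckets, (ii) unrecovered indices in light buckets containing a bad index, and (iii) erroneously output terms; both decompositions land at $\le 0.4\,t'$, and yours is in fact a touch tighter and also tracks the $2^{-\poly(s)}$ Prony-failure term that the paper leaves implicit in \cref{lem:recoverbucketmontecarlo}. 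Two points where you are more careful than the paper: you explicitly flag the hypothesis $N\le m^{2-\eps'}$ needed to apply \cref{cor:largesievehash}, and you correctly read the threshold at \cref{line:recovermontecarlo:adjust-bound} as a $\poly(t\Delta)$ bound exceeding $\|A\star B\|_\infty$ (so the clamp never undoes a correct recovery). The running-time analysis matches the paper's; your unified $O(p\log q)$ bound for the $\F_q$-FFT via \cref{cor:packfft-finite-field} is a valid simplification of the paper's two-case split.
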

\begin{proof}

By \cref{lem:recoverbucketmontecarlo}, if a bucket is light and does not contain bad indices, then \RecoverBucket{} can successfully return all terms in the bucket successfully. Thus, the followings are the only possibilities that $(A \star B - C - R)[i] \ne 0$ for some term $i$:
\begin{itemize}
    \item $i$ is in a heavy bucket. By \cref{cor:largesievehash},
   number of terms in heavy buckets is at most $0.1 \|A \star B - C\|_0$, with probability $0.9$.  This means that the expected number of terms in heavy buckets is $\le 0.1 \|A \star B - C\|_0 \cdot 0.9 + \|A \star B - C\|_0 \cdot 0.1 \le 0.19 \|A \star B - C\|_0$. 
    \item $i$ is in a light bucket that contains a bad index. As shown earlier, the probability that each $i \in \supp(A \star B - C)$ is bad is $O(\frac{1}{\log^2 t})$. Thus, the expected number of bad indices is $O(\frac{\|A \star B - C\|_0}{\log^2 t})$. Each bad index can cause at most $s$ terms $i$ to be in a light bucket containing a bad index. Thus, the expected number of indices $i$ in this case is $O(\frac{\|A \star B - C\|_0 s}{\log^2 t})  \le o(\|A \star B - C\|_0)$. 
    \item $i$ is among the terms that are erroneously returned by \RecoverBucket{}. By the previous two cases, the expected number of buckets that can cause \RecoverBucket{} to return erroneously is at most $\frac{0.19 \|A \star B - C\|_0}{s} + O(\frac{\|A \star B - C\|_0}{\log^2 t}) \le \frac{0.2 \|A \star B - C\|_0}{s}$. Each such bucket can cause $s$ terms to be erroneous, so the expected number of indices $i$ in this case is $\le 0.2\|A \star B - C\|_0$. 
\end{itemize}
Summing over all the cases, we get the claimed bound $\Ex[\|A\star B - C - R\|_0] \le 0.4 \|A \star B - C\|_0$. 

\cref{line:recovermontecarlo:adjust-bound} ensures that $\|R+C\|_\infty \le \poly(t \Delta)$. Note that it does not increase $\|A\star B - C - R\|_0$ because when $|R[i] + C[i]| > \Delta t$, $(A\star B - C - R)[i]$ cannot be $0$. 

Finally, we analyze the running time of the algorithm. The bottlenecks of the algorithm are the followings:
\begin{itemize}
    \item The for loop starting at \cref{Line:recovermontecarlo:forloop1}: 
    Preparing $A^{(i)}, B^{(i)}, C^{(i)}, \tilde A^{(i)}, \tilde B^{(i)}, \tilde C^{(i)}$ takes $O(\|A\|_0 + \|B\|_0 + \|C\|_0)$ time for each iteration of the for loop. 
    The FFT on \cref{line:recovermontecarlo:FFT1} is an FFT between length $O(m / \log \log m)$ polynomials in $\F_q[x]$. If $q = p_0$, that is, if $\kappa = 1$, we can simply use FFT over integers to compute it in $O(m \log m / \log \log m)$ time. 
    Otherwise, we can use \cref{cor:packfft-finite-field} to compute it in time $O(m / \log \log m \cdot \log q ) = O(m \log t / \log \log m)$ (as when $\kappa \ge 2$, we must have $q \le \poly(t)$). 
    The FFT on \cref{line:recovermontecarlo:FFT2} can be computed by FFT over integers in $O(m \log m / \log \log m)$ time. Thus, the total running time over all iterations is $O(m \log t + (\|A\|_0 + \|B\|_0 + \|C\|_0) \cdot \log \log m)$. 
    \item Calling \RecoverBucket{} in \cref{line:recovermontecarlo:recover-bucket}: By \cref{lem:recoverbucketmontecarlo}, after an $O(t^{2-\eps} (\log \log m) / m+ m + \log \log \Delta)$ preprocessing, each call to \RecoverBucket{} takes $(\log \log t)^{O(1)} \log \log \Delta$ time. Summing over all $k \in [p]$ gives the $m (\log \log t)^{O(1)} \log \log \Delta$ bound. 
\end{itemize}
Note that even if $\|A \star B - C\| > m$, we can still halt the algorithm after its running time exceeds the claimed time bound, so the time bound always holds. 
\end{proof}

\begin{lemma}
   \label{lem:recovermontecarlo-known-t} 
Let $\eps>0$ be a fixed constant.
Given a parameter $t$ and two vectors $A,B\in \Z^N$ with the promise that $N<t^{2-\eps}$, where $t\ge \|A\|_0  +\|B\|_0 + \|A\star B\|_0$,
one can compute $A\star B$ in $O(t \log t + t(\log\log t)^{O(1)}\log\log \Delta +   \poly \log \log \Delta) $ expected time by a Monte Carlo algorithm with $0.9$ success probability, where $\Delta = \max\{\|A\|_\infty,\|B\|_\infty\}$.
\end{lemma}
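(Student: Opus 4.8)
The plan is to run the iterative peeling loop: starting from $C = 0$, repeatedly call $\RecoveryStep(A,B,C,m)$ with a geometrically decreasing sparsity parameter $m$, and update $C \gets C + R$ after each call. By \cref{lem:recovermontecarlo-step}, as long as the current error $\|A\star B - C\|_0 \le m$, each call reduces the expected error by a constant factor $0.4$; by Markov's inequality it reduces the actual error by a factor (say) $0.8$ with probability at least $3/4$. So first I would run roughly $O(\log t)$ rounds with $m$ fixed at $\Theta(t)$ — or more carefully, I would do a "doubling-down" schedule: maintain a current guess $m$ for an upper bound on $\|A\star B - C\|_0$, run $\RecoveryStep$, and if a cheap check (via \cref{lem:sparse-verification} or by re-running a couple of times) suggests the error did not shrink, halve $m$ only after it has provably dropped. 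The standard bookkeeping is: $m_0 = t$, and after enough rounds at level $m_i$ the error drops below $m_i/2 =: m_{i+1}$, at which point we move to the next level. Each level costs $O(\text{level} \cdot \log t + \dots)$ by \cref{lem:recovermontecarlo-step}, and since the levels decrease geometrically the total is dominated by the first few, giving $O(t\log t + t(\log\log t)^{O(1)}\log\log\Delta + \poly\log\log\Delta)$; crucially the $O(t^{2-\eps}(\log\log m)/m)$ preprocessing term sums to $O(t^{2-\eps}\log\log t \cdot \sum 1/m_i) = O(t^{1-\eps}\log\log t\cdot \text{const})=o(t)$ as long as we never take $m$ below, say, $t^{1-\eps/2}$; once the error is that small we switch to \cref{lem:modprimerecoverfinitefield} (the "efficient error correction" lemma, in its $\Z$ form) to finish off the remaining $\le t^{1-\eps/2} \le t/\log^2 t$ errors in $O(t\log(1/\delta))$ time with tiny failure probability.

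In more detail, the main loop is: repeat $O(\log t)$ times at the top level $m = t$ the step $C\gets C + \RecoveryStep(A,B,C,m)$; a Chernoff bound over the independent rounds shows that with probability $\ge 0.99$ the error $\|A\star B - C\|_0$ has dropped to $\le t/\log^2 t$ (each round is an independent Bernoulli-ish event succeeding w.p. $\ge 3/4$, and we just need enough successes to beat the geometric decay — standard analysis as in \cite{BringmannFN21,Nakos20}). We must also ensure $\|C\|_\infty = \poly(t\Delta)$ throughout so that the coefficient-bound hypothesis of \cref{lem:recovermontecarlo-step} stays satisfied across rounds; this is exactly what \cref{line:recovermontecarlo:adjust-bound} of \cref{alg:recovermontecarlo} guarantees, so $C$ stays bounded by induction. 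Then invoke \cref{lem:modprimerecoverfinitefield} (over $\Z$, which it explicitly supports) with $\delta = 1/t^{10}$ say, with the current $A,B,C$; its hypothesis $\|A\star B - C\|_0 \le t/\log^2 t$ holds, and it outputs $A\star B$ in $O((\|A\|_0+\|B\|_0+\|C\|_0+t)\log t) = O(t\log t)$ time with failure probability $1/t^{10}$. Union-bounding all the failure events (the $O(\log t)$ peeling rounds, the bad-prime events from \cref{cor:largesievehash} and the $p_0$ choice inside each $\RecoveryStep$, and the final error-correction step), the overall success probability is at least $0.9$.

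The one genuine subtlety — and the step I expect to be the main obstacle — is handling the fact that we only have a constant-probability guarantee per round and no per-round verification: a single round of $\RecoveryStep$ can *increase* the error (when the Markov event fails), and we have no cheap way inside the loop to detect this for general (possibly negative) convolution the way the nonnegative case uses the moment-matrix sparsity test. The fix is purely in the analysis: we do not stop adaptively but run a fixed number $\Theta(\log t)$ of rounds at level $t$, and argue via a potential/Chernoff argument that the error random walk $\|A\star B-C\|_0$ reaches $\le t/\log^2 t$ with high probability despite occasional upward moves — formally, let $X_j$ be the error after round $j$; we have $\Ex[X_{j+1}\mid X_j]\le 0.4 X_j$ whenever $X_j\le t$, and one shows by a supermartingale / iterated-Markov argument (exactly as in the analysis of the peeling algorithm in \cite{BringmannFN21}) that $\Pr[X_{C\log t} > t/\log^2 t]$ is polynomially small. (We must also observe that $X_j$ never exceeds $t$ once it starts below $t$: if $X_j\le m=t$ the expected-decrease guarantee applies, and $\|R+C\|_\infty$ stays bounded, but in the rare event $X_{j+1}>X_j$ it could in principle exceed $t$ — this is handled by capping, since $\|A\star B\|_0\le t$ always and the algorithm can discard any recovered entry that would contradict the known output-sparsity budget, or simply by noting $X_{j+1}\le X_j + (\text{sparsity of }R)\le X_j + O(m)$ and absorbing the constant.) Everything else is routine bookkeeping of the running-time terms from \cref{lem:recovermontecarlo-step}.
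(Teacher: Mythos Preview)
Your approach has a genuine gap in the running-time accounting. You commit to running $\Theta(\log t)$ rounds of \RecoveryStep at the fixed level $m=t$. But by \cref{lem:recovermontecarlo-step} each such round already costs $\Omega(m\log t)=\Omega(t\log t)$ (from the FFTs on arrays of length $\Theta(m)$ repeated $\Theta(s)=\Theta(\log\log m)$ times). Multiplying by $\Theta(\log t)$ rounds gives $\Omega(t\log^2 t)$, not $O(t\log t)$. Even if you only ran $\Theta(\log\log t)$ rounds at fixed $m=t$ (which is the minimum needed to drive the error from $t$ down to $t/\log^2 t$ in expectation), you would still be off by a $\log\log t$ factor. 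The only way to make the $\sum_j m_j\log t$ term come out to $O(t\log t)$ is to have $\sum_j m_j=O(t)$, which forces a geometrically decreasing schedule for $m$---but you explicitly abandoned that schedule because you (correctly) observed that you cannot cheaply verify per-round progress in the general (possibly negative) case.

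The paper's fix is a \emph{non-adaptive two-rate} geometric schedule: set $m_i=0.8^i\cdot 100t$ and run exactly one call of \RecoveryStep at each level, stopping once $m_i\le 2t/\log^2 t$ and then invoking \cref{lem:modprimerecoverfinitefield}. The point is that the budget $m_i$ shrinks by factor $0.8$ while the \emph{conditional expected} error shrinks by the faster factor $0.4$; one shows by induction that $\Ex[\|A\star B-C_i\|_0\mid E_i]\le 0.4^i t$ where $E_i$ is the event that the invariant $\|A\star B-C_j\|_0\le m_j$ held for all $j<i$. Then $\Pr[\|A\star B-C_i\|_0>m_i\mid E_i]\le (0.4/0.8)^i/100$ by Markov, and summing over $i$ gives $\Pr[\neg E_\ell]\le 0.01$. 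This simultaneously (i) makes the running time telescope, and (ii) sidesteps the issue you flagged about $X_{j+1}$ possibly exceeding $m$---the analysis simply conditions on $E_i$ and bounds the probability that $E_i$ ever fails, with no need to argue that the error cannot increase. Your ``absorb the constant'' patch for that issue does not work as stated, because once $X_j>m$ the expected-decrease guarantee of \cref{lem:recovermontecarlo-step} no longer applies and the supermartingale property is lost.
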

\begin{proof}
We repeatedly apply \cref{lem:recovermontecarlo-step}. Initially, we set $m_0 = 100 t$ and $C_0 = 0$. For the $i$-th time, we call \cref{lem:recovermontecarlo-step} with $m = m_{i-1}$ and $C = C_{i-1}$. Afterwards, we set $m_i \gets 0.8 \cdot m_{i-1}$ and $C_i \gets C_{i-1} + R$, where $R$ is returned by \cref{lem:recovermontecarlo-step}. After we reach $m_i \le 2t /\log^2 t$, we use \cref{lem:modprimerecoverfinitefield} to compute $A \star B$ with $\delta = 1/t$. 

    First, suppose we always have $m_i \ge \|A \star B - C_i\|$ for every $i$. Then the running time of all calls to \cref{lem:recovermontecarlo-step} can be bounded as (up to constant factors)
    \begin{align*}
    & \sum_m \left(t^{2-\eps} (\log \log m) / m+  m\log t + (\|A\|_0+\|B\|_0+\|C\|_0)\cdot \log \log m \right.\\ 
    & \quad\quad\quad\quad \left.+ m (\log \log t)^{O(1)} \log \log \Delta +   \poly \log \log \Delta  \right)\\
    \le &   \sum_m \left(t^{1-\eps} \polylog(t) + m \log t + (\|A\|_0 + \|B\|_0 + \|A \star B\|_0 + m)  \cdot \log \log t \right.\\ 
    & \quad\quad\quad\quad \left. + m (\log \log t)^{O(1)} \log \log \Delta  +  \poly \log \log \Delta \right)\\
    \le& t \log t + t (\log \log t)^{O(1)} \log \log \Delta  +  \poly \log \log \Delta . 
    \end{align*}

Also, as $\|A \star B - C_i\|_0 \le m_i \le 2t/ \log^2 t$ holds when calling \cref{lem:modprimerecoverfinitefield}, the conditions of \cref{lem:modprimerecoverfinitefield} are satisfied. 
Thus, we will get the correct $A \star B$ with error probability $1/t$. The running time of \cref{lem:modprimerecoverfinitefield} is $O(\|A\|_0 + \|B\|_0 + \|C\|_0 + t) \cdot \log(1/ \delta) \le O(\|A\|_0 + \|B\|_0 + \|A \star B\|_0 + 2t/ \log^2 t + t) \cdot \log(1/ \delta) = O(t \log t)$. 

It remains to bound the probability that we always have $m_i \ge \|A \star B - C_i\|$ for every $i$. Let $E_i$ be the event that for every $0\le j < i$, $m_j \ge \|A \star B - C_j\|_0$. We will show 
\begin{equation}
\label{eq:recovermontecarlo-known-t:expectation}
\Ex\biggl[\|A \star B - C_i\|_0 \mid E_i\biggr] \le 0.4^i \cdot t
\end{equation}
by induction. For $i = 0$, $\|A \star B - C_0\|_0 = \|A \star B\|_0  = t$, so the base case holds. Now consider $i > 0$. By induction, we have $\Ex\biggl[\|A \star B - C_{i-1}\|_0 \mid E_{i-1}\biggr] \le 0.4^{i-1} \cdot t$ and note that $E_i = E_{i-1} \wedge (m_{i-1} \ge \|A \star B - C_{i-1}\|_0)$. Thus, we have $\Ex\biggl[\|A \star B - C_{i-1}\|_0 \mid E_{i}\biggr] \le \Ex\biggl[\|A \star B - C_{i-1}\|_0 \mid E_{i-1}\biggr] \le 0.4^{i-1} \cdot t$. Since $E_{i}$ implies $m_{i-1} \ge \|A \star B - C_{i-1}\|_0$,  \cref{lem:recovermontecarlo-step} implies \[\Ex\biggl[\|A \star B - C_i\|_0 \mid E_{i}\biggr] \le 0.4 \Ex\biggl[\|A \star B - C_{i-1}\|_0 \mid E_{i}\biggr] \le 0.4^i \cdot t.\]

Finally, the probability that $m_i \ge \|A \star B - C_i\|$ for every $i$ is exactly $\Pr[E_\ell]$, where $\ell$ is the total number of times we call \cref{lem:recovermontecarlo-step}, which can be bounded as follows:
\begin{align*}
    \Pr[E_\ell] & = 1 - \Pr[\neg E_\ell]\\ 
    &= 1 - \sum_{i=1}^{\ell} \Pr\biggl[\|A \star B-C_{i-1}\|_0 > m_{i-1} \mid E_{i-1}\biggr] \Pr[E_{i-1}]\\
    &\ge  1 - \sum_{i=1}^{\ell} \Pr\biggl[\|A \star B-C_{i-1}\|_0 > m_{i-1} \mid E_{i-1}\biggr] \\
    &\ge 1 - \sum_{i=1}^{\ell} \Ex\Bigl[\|A \star B-C_{i-1}\|_0  \mid E_{i-1}\Bigr]/m_{i-1} \tag{Markov's inequality}\\
    &\ge 1 - \sum_{i=1}^\ell \frac{0.4^{i-1} \cdot t}{0.8^{i-1} \cdot 100 t} \tag{\cref{eq:recovermontecarlo-known-t:expectation}}\\
    & \ge 0.99. 
\end{align*}
\end{proof}

Finally, it remains to remove the assumption that an upper bound of $t$ is given. We follow the strategy of \cite[Section 9]{BringmannFN21}, by guessing $\tilde{t} = 2^\nu \cdot \max\{\|A\|_0, \|B\|_0\}$ for $\nu = 0, 1, \ldots, \lceil \log t\rceil$ as an upper bound for $t$ in \cref{lem:recovermontecarlo-known-t}. Note that we can halt \cref{lem:recovermontecarlo-known-t} after $O(\tilde{t} \log \tilde{t} + \tilde{t}(\log\log \tilde{t})^{O(1)}\log\log \Delta)$ steps, and each call to \cref{lem:recovermontecarlo-known-t} would still have constant success probability, provided that $\tilde{t} \ge t$. In order to boost the success probability, we run \cref{lem:sparse-verification} after each call to \cref{lem:recovermontecarlo-known-t}, and only stop if the check in  \cref{lem:sparse-verification} passes. 

Let $\nu^*$ be the smallest $\nu$ such that $2^\nu \cdot \max\{\|A\|_0, \|B\|_0\} \ge t$. Because of \cref{lem:sparse-verification}, when $\nu < \nu^*$, the algorithm does not stop with a wrong answer with high probability, and the total time spent for these iterations is $O(t \log t + t(\log\log t)^{O(1)}\log\log \Delta)$. For each iteration $\nu \ge \nu^*$, the algorithm terminates with constant probability, say $3/4$. Thus, the probability that we still have not succeeded right before calling \cref{lem:sparse-verification} with parameter $\tilde{t} = \max\{\|A\|_0,\|B\|_0\} \cdot 2^{\nu}$ is $\le 1/4^{\nu - \nu^*}$. Therefore, the expected running time is (up to constant factors)
\begin{align*}
    & \sum_{\nu \ge \nu^*} \frac{1}{4^{\nu - \nu^*}} \cdot \left(2^{\nu - \nu^*} \cdot t \log (2^{\nu - \nu^*} \cdot t) + 2^{\nu - \nu^*} \cdot t (\log\log (2^{\nu - \nu^*} \cdot t))^{O(1)}\log\log \Delta + \poly \log \log \Delta\right)\\
    \le \ & t \log t + t(\log\log t)^{O(1)}\log\log \Delta + \poly \log \log \Delta. 
\end{align*}

The error probability of the algorithm is bounded by constant.

\begin{remark}
    \label{remark:tlogtmontecarlo}
    By combining this with  a universe reduction step from \cite[Section 10]{BringmannFN21} we get an alternative algorithm for Sparse Nonnegative Convolution in $O(t\log t)$ time (when $\Delta$ is not too big)  (Monte Carlo, constant success probability).
        
    The universe reduction in \cite{BringmannFN21} does not work for general convolution with possibly negative entries, because they use an almost-affine hash function in this step, which can cause an zero output entry to be split to multiple nonzero entries. See \cite[Section 3.3]{BringmannFN21} for more details. 
\end{remark}

\section{Las Vegas Sparse Nonnegative Convolution}
\label{sec:lasvegasnonnega}

In this section, we consider sparse convolution with nonnegative input numbers. 
Let $A,B \in \N^N$ be the sparse input vectors with nonnegative integer entries with $\|A\|_\infty, \|B\|_\infty \le \Delta$.
We give a Las Vegas algorithm for Sparse Nonnegative Convolution with expected time $O(t\log t + \polylog(N\Delta))$, where $t =\|A\star B\|_0$.

We will work with a large enough finite field $\F_{p_b}$ for recovering the coefficients of $A\star B$.
 To do so, we need to find an arbitrary large prime $p_b = \poly(N\Delta)$. This can be done in Las Vegas $\polylog(N\Delta)$ time.\footnote{Similar as before, this is the only place where the $\polylog(N\Delta)$ term arises in the time complexity of our algorithm and we will describe how to remove this term in  \cref{sec:numerical}.} We also choose an $\omega_b\in \F_{p_b}$ with multiplicative order at least $N$. Similar as before, this can be done using baby-step giant-step in $O(\sqrt{N} \log N) = O(t)$ time.

We can assume at the beginning that we already have $\supp(A \star B)$ computed by a Monte Carlo algorithm (which may have error), and we will use this information as input to help our Las Vegas algorithm.
For example, we can compute $\supp(A \star B)$ using \cite{BringmannFN21}'s main result in $O(t\log t+ \polylog(N\Delta)) \le O(t\log t+ \polylog N)$ time with $2^{-\sqrt{\log t}}$ failure probability, where we assume $\Delta=O(1)$ by setting all nonzero input values to $1$ since we only need to compute the support; alternatively, we can compute $\supp(A \star B)$ using \cref{remark:tlogtmontecarlo}.\footnote{Both these methods for computing $\supp(A \star B)$ have error probability much larger than $1/\poly(t)$. We will improve this to high success probability later in \cref{sec:highprob}.}

\subsection{Algorithm for small universe}
\label{subsec:lasvegas-smalluniverse}

For now, we assume that we know an upper bound $t$ such that $\|A\star B\|_0 \le t$. Later we will relax this assumption.

We first give an algorithm for the case where $A,B\in \N^{N}$ for some $N\le t \log^5(t)$.

We work over a prime field $\F_q$ where $q$ is an arbitrary prime from $[\log^{10} t ,2\log^{10 } t ]$.
We choose an $\omega \in \F_q^*$ with multiplicative order $q-1$. Both $q$ and $\omega$ can be found deterministically in $ \polylog t$ time.

We also pick some large enough integer $M = \poly(N\Delta)$ that is a power of $2$. 

\newcommand{\RecoveryStepLasVegas}{\textsc{RecoveryStepLasVegas}}
\newcommand{\RecoverBucketLasVegas}{\textsc{RecoverBucketLasVegas}}

\begin{algorithm}
\DontPrintSemicolon
\caption{\RecoveryStepLasVegas$(A,B,C,S,m)$}
\label{alg:recoverlasvegas}
\textbf{Input:} Sparse vectors $A,B,C\in \Z^N$ such that $A\star B \ge C$, and a sparse vector $S\in \{0,1\}^N$ and a parameter $m$\\
\textbf{Output:} A sparse vector $R \in \Z^N$, for details see \cref{lem:recoverlasvegas}\\
Sample random prime $p\in [m/\log \log m, 2m/\log \log m]$ and let $h(x)= x\bmod p$. Repeat until the number of elements $S$ in light buckets is at least $0.9$m. 
\\
Check $p$ is good for $S$, otherwise repeat\\
Let parameter $s =  \lceil c\cdot\log \log m\rceil $.
\tcp{$c$ is the constant from \cref{cor:largesievehash} }
\For{$i \gets  0,1,2,\dots, 2s$}{
    Compute vector $X^{(i)}\gets \Big (\sum_{j=0}^i \binom{i}{j}h(\partial^j A)\star_p h(\partial^{i-j}B) \Big )- h(\partial^i C)$ modulo $M$\\
    Compute vector $Y^{(i)}$ according to \cref{eqn:recurrencefory}\\
    Define vector $ A^{(i)}\in \F_{p_b}^N$ by $ A^{(i)}[k]:= A[k]\cdot \omega_{b}^{ki}\in \F_{p_b}$ (and similarly define vectors $ B^{(i)}, C^{(i)}\in \F_{p_b}^N$) \\
    Compute vector $Z^{(i)}\gets h(A^{(i)}) \star_p h(B^{(i)}) - h(C^{(i)})$\\
}
Initialize $R \gets (0,\dots,0)\in \Z^N$\\
Initialize $S' \gets S$\\
\For{$k\in [p]$ \label{line:recoverlasvegas:forloop1}}{
    \If{$Y^{(0)}_k, Y^{(1)}_k,\dots,Y^{(2s)}_k$ pass the $<s$-sparsity test \label{line:recoverlasvegas:sparsity-test}(\cref{lem:momentmatrix})}{
            $f(x)\gets \RecoverBucketLasVegas\big (\{Z^{(i)}[k]\}_{i=0}^{2s},S\big )$
            \tcp{$f(x)\in \Z[x]$ is a polynomial in sparse representation}
            \If{$f$ is nonnegative and is consistent with $Y^{(0)}_k, Y^{(1)}_k,\dots,Y^{(2s)}_k$ \label{line:recoverlasvegas:verify-moments}}{
            $R \gets R + (f_0,\dots,f_{N-1})$, where $f(x) = \sum_{\ell=0}^{N-1}f_\ell x^\ell$\\
            $S'[k] \gets 0$ for any $f_k \ne 0$, where $f(x) = \sum_{\ell=0}^{N-1}f_\ell x^\ell$}}
}
\Return{$R, S'$}
\end{algorithm}

\begin{algorithm}
\DontPrintSemicolon
\caption{\RecoverBucketLasVegas$(v_0,v_1,\dots,v_{2s},S)$}
\label{alg:recoversteplasvegas}
\textbf{Input:} Values $v_0,\dots,v_{2s}\in \F_q$ and $S$\\
\textbf{Output:} a polynomial $f(x) \in \Z[x]$ represented as a list of non-zero coefficients. For details see \cref{lem:recoverbucketmontecarlo}\\
Read the $e_j$'s from $S$ that are congruent to $k$ mod $p$\\
Determine $a_1,\dots,a_{s'}\in \Z$ such that polynomial $f(x) =\sum_{j=1}^{s'} a_j x^{e_j}$ satisfies $ v_i = f(\omega_{b}^i)$ for all $0\le i\le 2s$, by solving a linear system over $\F_{p_b}$.\\
\lIf{successfully found $a_1,\dots,a_{s'}$}{\Return{$f(x) =\sum_{j=1}^{s'} a_j x^{e_j}$}}
\lElse{\Return{$0$}}
\end{algorithm}

\begin{lemma}
   \label{lem:recoverlasvegas} 
 Assume $N\le m \log^{7} m$. 
Given parameter $m$ and vectors $A,B,C\in \Z^N$ such that $A\star B  \ge C$, and an auxiliary $m$-sparse vector $S \in \{0,1\}^N$, 
 \cref{alg:recoverlasvegas} returns a random sparse vector $R\in \Z^N$ in 
 $O(m\log t + (\|A\|_0+\|B\|_0+\|C\|_0)\cdot \log \log m)$ time w.h.p. such that:
 \begin{itemize}
    \item $R\le A\star B -C$ always holds.
    \item If $\|A\star B - C\|_0 \le m$ and $\supp(S) = \supp(A\star B - C)$, then we always have
 \[ \|A\star B - C - R\|_0 \le 0.1 \| A\star B - C \|_0 \quad \text{ and } \quad \supp(S') = \supp(A \star B - C - R).\]
 \end{itemize} 
\end{lemma}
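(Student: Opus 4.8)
Write $V:=A\star B-C$, which is nonnegative by hypothesis; we may assume $m\le N$ (otherwise replace $m$ by $N$). The plan is to first identify what the arrays $X^{(i)},Y^{(i)},Z^{(i)}$ of \cref{alg:recoverlasvegas} encode, then verify the two guarantees residue class by residue class, and finally bound the running time. For the moment arrays, the identity $\sum_{j=0}^{i}\binom{i}{j}k_1^{j}k_2^{i-j}=(k_1+k_2)^{i}$ and linearity of $h$ give $X^{(i)}[r]\equiv\sum_{k\equiv r\,(p)}k^{i}V[k]\pmod M$; writing the entries of $V$ in the class of $r$ as $V[r+\ell p]$ and expanding $(r+\ell p)^{i}$ by the binomial theorem produces the unit lower-triangular relation $X^{(i)}[r]=\sum_{a=0}^{i}\binom{i}{a}r^{i-a}p^{a}Y^{(a)}_{r}$ with $Y^{(a)}_{r}=\sum_{\ell}\ell^{a}V[r+\ell p]=\|\partial^{a}V^{(r)}\|_1$, the $a$-th moment of the ``compressed'' class-$r$ subvector $V^{(r)}$ (nonnegative, indexed by $\ell$). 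Since $p$ is odd, $p^{a}$ is a unit modulo the power of two $M$, so the recurrence of \cref{alg:recoverlasvegas} recovers $Y^{(a)}_r\bmod M$; and since $N\le m\log^{7}m$ and $p\ge m/\log\log m$ force $N/p\le\log^{8}m$, we have $Y^{(a)}_r\le (N/p)^{2s}\|V\|_1=\poly(N\Delta)<M$, so each $Y^{(a)}_r$ is obtained exactly as an $O(1)$-word integer. Finally, the same sumset computation with $A^{(i)}[k]=A[k]\omega_b^{ki}$ (etc.) yields $Z^{(i)}[k]=\sum_{k'\equiv k\,(p)}\omega_b^{k'i}V[k']=f_k(\omega_b^{i})$, the $i$-th evaluation of the class-$k$ polynomial $f_k(x):=\sum_{k'\equiv k\,(p)}V[k']x^{k'}$.

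For correctness, the key point is that $f$ is added to $R$ for a class $k$ only after passing both the moment-matrix sparsity test at \cref{line:recoverlasvegas:sparsity-test} and the consistency check at \cref{line:recoverlasvegas:verify-moments}, and these two filters determine $f$ completely \emph{regardless of the auxiliary vector $S$}: passing the sparsity test means the $s\times s$ integer matrix $(Y^{(i+j)}_k)_{0\le i,j<s}$ is singular, which by \cref{lem:momentmatrix} (applicable because $V^{(k)}\ge 0$ and the $Y$'s are its genuine moments) forces $\|V^{(k)}\|_0<s$; passing the consistency check means the compressed vector $F^{(k)}$ underlying $f$ is nonnegative, has fewer than $s$ terms (I would have \cref{alg:recoversteplasvegas} return $0$ outright when a class contains $\ge s$ exponents of $S$, so $\|F^{(k)}\|_0\le s-1$), and matches the first $2s+1$ moments of $V^{(k)}$; hence $F^{(k)}=V^{(k)}$ by \cref{lem:unique-moment}. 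As distinct classes occupy disjoint coordinates, this yields $R\le V$ on every run, which is the first bullet. For the second bullet, assume $\|V\|_0\le m$ and $\supp(S)=\supp(V)$, and call a class \emph{light} if it contains fewer than $s$ elements of $\supp(V)$. A light class passes the sparsity test ($\|V^{(k)}\|_0<s$); \cref{alg:recoversteplasvegas} then reads the correct exponents $e_j$ off $S$ and solves a transposed Vandermonde system whose matrix is invertible because $\omega_b$ has order $\ge N>\max_j e_j$, and since the true coefficients lie in $[0,\|A\star B\|_\infty]\subseteq[0,p_b)$ the lift is exactly $V$ on that class, so $f$ passes the consistency check, is added to $R$, and its indices are cleared from $S'$. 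Thus $R$ agrees with $V$ on every light class, $S'=\supp(V-R)$, and $\|V-R\|_0$ equals the number of $\supp(V)$-elements lying in non-light classes. The resampling loop keeps drawing $p$ until at least a $0.9$ fraction of $\supp(S)$ lands in light classes; since $N\le m\log^{7}m\le m^{3/2}$ for $m$ large, \cref{cor:largesievehash} applied to $A:=\supp(S)$ (with $\eps'=1/2$) says a uniform prime $p$ from the prescribed range has this property with probability $\ge 0.9$, so the loop finishes after $O(1)$ draws in expectation and $O(\log t)$ with high probability, leaving $\|V-R\|_0\le 0.1\|V\|_0$.

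For the running time: each draw of $p$ costs $O(\|S\|_0+p)=O(m)$ to rehash $\supp(S)$ and tally bucket sizes, so the resampling phase is $O(m\log t)$ with high probability; building the $2s+1$ arrays $h(\partial^{j}A),h(\partial^{j}B),h(\partial^{j}C)$ and the twisted arrays costs $O((\|A\|_0+\|B\|_0+\|C\|_0)s)=O((\|A\|_0+\|B\|_0+\|C\|_0)\log\log m)$ via incremental powers; the $2s+1$ levels of length-$p$ cyclic convolutions producing $X^{(\cdot)}$ and $Z^{(\cdot)}$ can be batched into a single convolution over the degree-$2s$ truncated polynomial ring in an auxiliary variable (the routine factorial juggling needed modulo a power of two is standard), costing $O(sp\log p)=O(m\log t)$ instead of the naive $O(s^2p\log p)=O(m\log t\log\log m)$; and the inner loop spends per class one exact $s\times s$ integer determinant via \cref{thm:ringdeter} ($s^{O(1)}$ operations on $O(s)$-word integers) and one $O(s)\times O(s)$ Vandermonde solve-and-verify, i.e.\ $(\log\log m)^{O(1)}$ per class and $p\cdot(\log\log m)^{O(1)}=O(m\log t)$ overall, plus $O(\sum_k\|f\|_0)=O(m)$ to update $R,S'$. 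Summing gives the claimed bound.

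I expect the most delicate point to be the correctness argument under an \emph{incorrect} $S$: one has to be sure that ``$V^{(k)}$ is genuinely $<s$-sparse'' (read off the honestly computed moments via \cref{lem:momentmatrix}) together with ``$f$ is a nonnegative $<s$-sparse vector matching $2s+1$ moments'' leaves no room for any spurious $f$ via \cref{lem:unique-moment}, which is exactly why \cref{alg:recoversteplasvegas} must abort on over-full classes (this also caps its per-class cost, which is what keeps the convolution batching from being spoiled). The remaining items --- the exact size of $M$, packing $\F_{p_b}$- and $\Z/M\Z$-elements into $O(1)$ words, the incremental power tables, and checking that the ``$p$ good for $S$'' side-condition holds with constant probability and is cheap to verify --- are routine.
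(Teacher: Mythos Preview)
Your proposal is correct and follows the paper's approach in essentially all structural respects: the identification of $X^{(i)},Y^{(i)},Z^{(i)}$, the Las Vegas correctness via the moment-matrix sparsity test (\cref{lem:momentmatrix}) combined with moment uniqueness (\cref{lem:unique-moment}), the use of \cref{cor:largesievehash} to bound the resampling, and the per-bucket determinant/Vandermonde bookkeeping are all the same. Your explicit stipulation that \RecoverBucketLasVegas\ return $0$ when the class contains $\ge s$ indices of $S$ (so that \cref{lem:unique-moment} applies with both sparsities $<s$) is a clarification the paper leaves implicit but indeed needs.

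The one genuine difference is in how you obtain the $X^{(i)}$ within the time budget. The paper does not batch into a convolution over a truncated polynomial ring; instead it computes the DFTs of $h(\partial^j A),h(\partial^j B)$ once each ($O(s\cdot p\log p)=O(m\log m)$ total), forms the DFTs of all $X^{(i)}$ by the pointwise identity $\widehat{X^{(i)}}=\sum_{j}\binom{i}{j}\widehat{h(\partial^jA)}\cdot\widehat{h(\partial^{i-j}B)}$ in $O(s^2p)=O(m\log\log m)$ time, and then inverse-DFTs. This sidesteps the factorial-inversion issue entirely, since the binomial coefficients live only in the pointwise combination and never need to be inverted modulo the power-of-two $M$. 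Your batching idea can be made to work (e.g.\ by scaling through by $(2s)!$ over $\Z$ before reducing, since $(2s)!=2^{O(s\log s)}$ is sub-word-size), but the ``routine factorial juggling'' is more delicate than you suggest, and the paper's pointwise-combine trick is the cleaner route to the same $O(m\log t)$ bound.
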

\begin{proof}
We first explain the meaning of $X^{(i)}$ and $Y^{(i)}$ in \cref{alg:recoverlasvegas}. 

We have 
\begin{equation}
    \label{eqn:defnX}
X^{(i)}:= \Big (\sum_{j=0}^i \tbinom{i}{j}h(\partial^j A)\star_p h(\partial^{i-j}B) \Big )- h(\partial^i C), 
\end{equation}
so
\begin{align*}
X^{(i)}_k &= \Big (\sum_{j=0}^i \tbinom{i}{j} \sum_{k' \in [p]} \sum_{\ell_1 \equiv k' \bmod{p}} (\ell_1^j A_{\ell_1}) \sum_{\ell_2 \equiv k-k' \bmod{p}} (\ell_2^{i-j} B_{\ell_2}) \Big)-h(\partial^i C)_k\\
&= \Big (\sum_{k' \in [p]} \sum_{\ell_1 \equiv k' \bmod{p}}  \sum_{\ell_2 \equiv k-k' \bmod{p}} ((\ell_1+\ell_2)^{i} A_{\ell_1}B_{\ell_2}) \Big)-h(\partial^i C)_k\\
&= \Big (\sum_{\ell \equiv k \pmod{p}} \ell^i (A \star B)_\ell \Big )- h(\partial^i C)_k\\
&=h(\partial^i (A\star B - C))_k.
\end{align*}
Therefore, by running the sparsity test \cref{lem:momentmatrix} using  $X^{(0)}_k, X^{(1)}_k,\dots,X^{(2s)}_k$, we can verify whether the bucket $k$ has less than $s$ elements. 

However, directly running the sparsity test using $X^{(i)}$ is too slow, due to high-precision integers: The integers can have magnitude $\Delta t \cdot N^{\log \log m}$, so computing directly using FFT would lead to a $\log \log m$ factor blow up. That's why we need $Y^{(i)}$. 
We define 
\[ Y^{(i)}_k = \sum_{d = 0}^{\lfloor N/p\rfloor }d^i\cdot (A\star B  -C)_{k+dp},  \]
which is analogous to $X_k^{(i)}$ but has smaller magnitude,
\begin{equation}
0\le Y_k^{(i)}\le  (N/p)^i \sum_{d=0}^{\lfloor N/p\rfloor }(A\star B  -C)_{k+dp}\le (\polylog m)^{2s} \cdot \poly(t\Delta).
\end{equation}
Note that $Y^{(i)}_k$ is determined by $\left\{X^{(j)}_k\right\}_{0 \le j \le i}$ using the following recurrence:
\begin{equation}
    \label{eqn:recurrencefory}
 Y_k^{(i)} =  p^{-i}\cdot \Big (X_k^{(i)} - \sum_{j=0}^{i-1}\tbinom{i}{j} k^{i-j}p^{j} Y_k^{(j)}  \Big ). 
\end{equation}

We compute all $X_k^{(i)}$ mod $M$ using FFTs over integers. More specifically, we compute the DFTs of $h(\partial^i A), h(\partial^i B)$ for all $0\le i \le 2s$ in $O(p\log p)\cdot 2s \le O(m\log m)$ total time. 
Then, using \cref{eqn:defnX} we can compute the DFT of $X^{(i)}$ for all $0\le i\le 2s$, in $O(s^2 \cdot p) \le O(m\log \log m)$ time. Finally, we perform inverse-DFT to recover all $X^{(i)}$ in $O(p\log p)\cdot 2s \le O(m\log m)$ time.

Then, we use \cref{eqn:recurrencefory} to compute all $Y_k^{(i)}$ modulo $M$ in $p\cdot \poly(s) \le m\poly(\log \log m)$ time. Note that we need to compute $p^{-1}$ in the ring $\Z_M$ (recall that we picked $M$ to be a power of $2$, so $p^{-1}$ exists). We can do so in $O(p)$ time by finding the smallest $j M + 1$ for $j < p$ that divides $p$, and $p^{-1}$ would be $\frac{jM + 1}{p}$. Since $M = \poly(N\Delta)$ is chosen to be large enough, we actually recover all $Y_k^{(i)}$ over $\Z$.

For computing the determinant in \cref{line:recoverlasvegas:sparsity-test} for the sparsity test over integers, we can use \cref{thm:ringdeter}, which takes $\poly(s) = \poly(\log \log m)$ time. Checking whether $f$ is consistent with $Y_k^{(i)}$ (meaning that $Y^{(i)}_k = \sum_{d = 0}^{\lfloor N/p\rfloor }d^i\cdot f_{k+dp}$ where $f(x) = \sum_{\ell=0}^{N-1}f_\ell x^\ell$) \cref{line:recoverlasvegas:verify-moments} takes $\poly(s) = \poly(\log \log m)$ time. 

The bottleneck of the algorithm are the followings:
\begin{itemize}
    \item The cost for finding a good $p$. By \cref{cor:largesievehash}, the number of terms in heavy buckets is at most $0.1 \|A \star B - C\|_0$, with probability $0.9$. This means that we only need to repeat $O(\log t)$ times to achieve high probability. Therefore, the running time is $O(m \log t)$ w.h.p. 
    \item The for loop starting at \cref{line:recoverlasvegas:forloop1}. The FFTs there take $O(s \cdot p \log p) = O(m \log m)$ time, and preparing the arrays $\partial^i A, \partial^i B, \partial^i C, A^{(i)}, B^{(i)}, C^{(i)}$ takes $O((\|A\|_0 + \|B\|_0 + \|C\|_0) \cdot \log \log m)$ time. The claimed time bound thus follows. 
\end{itemize}

Finally, we verify the guarantees of the lemma. 
\begin{itemize}
    \item $R \le A \star B - C$ always holds. If the condition on \cref{line:recoverlasvegas:sparsity-test} is satisfied, then we know that there are less than $s$ terms in bucket $k$. By \cref{lem:unique-moment}, the only $s$-sparse nonnegative polynomial $f$ that is consistent with $X_k^{(i)}$ (i.e., $\|\partial^i f\|_1 = X_k^{(i)}$) is the one containing the correct terms in the bucket $k$. Consequently, the only $s$-sparse polynomial $f$ whose nonzero terms all have indices consistent with $Y_k^{(i)}$ is the one containing the correct terms in the bucket $k$. Thus, if the algorithm passes the condition on \cref{line:recoverlasvegas:verify-moments}, we know $f_\ell = (A \star B - C)_\ell$ for all $\ell \equiv k \pmod{p}$, and by \RecoverBucketLasVegas{}, all other $f_\ell = 0$ for all $\ell \not\equiv k \pmod{p}$. Therefore, $R \le A \star B - C$ always holds. 
\item If $\supp(S) = \supp(A\star B - C)$, then $\supp(S') = \supp(A \star B - C - R)$. This is implied by the same analysis as above.  
\item If $\|A\star B - C\|_0 \le m$ and $\supp(S) = \supp(A\star B - C)$, then \cref{alg:recoverlasvegas} recovers all items in light buckets. As $p$ is a good prime, $\|A \star B - C - R\|_0 \le 0.1 m$. 
\end{itemize}

\end{proof}

\begin{lemma}
    \label{lem:lasvegas-tiny-universe}
Assume $N\le t \log^{7} t$. 
Given vectors $A,B\in \N^N$, and a parameter $t \ge |A \star B|$, 
there is a Las Vegas algorithm  that returns $C = A \star B$ in expected
 $O(t\log t)$ time. Even when $t < |A \star B|$, the algorithm is Las Vegas. 
\end{lemma}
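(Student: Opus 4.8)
The plan is to run an iterative peeling loop built on \cref{alg:recoverlasvegas} (\cref{lem:recoverlasvegas}) and then wrap the whole computation in an exact test so that it becomes zero-error, exploiting the nonnegativity of $A,B$. First I would compute a candidate support $S$ of $A\star B$ by any Monte Carlo sparse-convolution routine run on the $0/1$ versions of $A$ and $B$ (so the coefficient bound is $O(1)$) --- e.g.\ the algorithm of \cite{BringmannFN21} or the one in \cref{remark:tlogtmontecarlo} --- which costs $O(t\log t+\polylog t)=O(t\log t)$ and satisfies $S=\supp(A\star B)$ with at least constant probability; a wrong guess of $S$ will be caught at the end.

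Then I would maintain a vector $C\in\N^N$ together with a running Boolean residual-support vector, initialized to $C=0$ and $S$. Since $A,B\ge 0$ we have $0\le C\le A\star B$ at the start, and because \cref{lem:recoverlasvegas} always returns $R\le A\star B-C$, the invariant $C\le A\star B$ --- which also supplies the precondition $A\star B\ge C$ required by that lemma --- is preserved under every update $C\leftarrow C+R$, and, when $S$ was correct, so is the equality of the running support vector with $\supp(A\star B-C)$. I would call \cref{lem:recoverlasvegas} repeatedly with the sparsity parameter $m$ shrinking geometrically from $\Theta(t)$ down to $\Theta(t/\log^{2}t)$, i.e.\ for $O(\log\log t)$ rounds, aborting and restarting any round that overruns its stated high-probability time bound; round $i$ costs $O(m_i\log t+(\|A\|_0+\|B\|_0+\|C\|_0)\log\log m_i)=O(m_i\log t+t\log\log t)$, so all rounds together cost $\sum_i O(m_i\log t)+O(t(\log\log t)^{2})=O(t\log t)$. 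The cheap \cref{lem:recoverlasvegas} (per-round cost $\approx m\log t$) must carry the bulk of the peeling, because the more expensive \cref{cor:naiverecover-full-whp} (cost $\approx m\log^{3}t$) is affordable only once $m$ has dropped to $O(t/\log^{2}t)$; at that point the residual has $O(t/\log^{2}t)$ nonzero entries and a single call to \cref{cor:naiverecover-full-whp} finishes it in $O(t\log t)$ time, so that $C=A\star B$ whenever $S$ was correct and the internal high-probability events held. One subtlety is the precondition $N\le m\log^{7}m$ of \cref{lem:recoverlasvegas}: for $m=\Theta(t/\log^{2}t)$ this needs $N$ smaller than roughly $t\log^{5}t$, so if the gap between $N$ and $t\log^{7}t$ is tight I would first apply one random mod-prime reduction mapping the universe $[N]$ into $[\Theta(t\log^{3}t)]$; since $|\supp(A\star B)|\le t$ and $A\star B\ge 0$, this conflates only $O(t/\log^{2}t)$ coordinates of $A\star B$ (in expectation, hence with good probability), whose positions in $[N]$ are known from $S$, and at the very end I would lift back to $[N]$ on the non-conflated positions and recover the $O(t/\log^{2}t)$ conflated coordinates with one more call to \cref{cor:naiverecover-full-whp}.

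To make this Las Vegas, note that by the invariant $C$ is always a coordinate-wise underestimate of $A\star B$, and that for nonnegative $A,B$ we have the identity $\|A\star B\|_1=\|A\|_1\cdot\|B\|_1$; hence $C=A\star B$ if and only if $\|C\|_1=\|A\|_1\|B\|_1$, an \emph{exact} check computable in $O(\|A\|_0+\|B\|_0+\|C\|_0)=O(t)$ time. If the check fails --- which can happen only if $S$ was wrong, some internal high-probability step failed, or $t<|A\star B|$ --- I would restart from scratch with fresh randomness. When $t\ge|A\star B|$ each attempt succeeds with constant probability and costs $O(t\log t)$, giving expected time $O(t\log t)$, and for every $t$ the algorithm never outputs a wrong answer, so it is Las Vegas.

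The main obstacle, conceptually, is precisely this last point: turning a pipeline that leans on a Monte Carlo support oracle (and on the high-probability guarantee of \cref{cor:naiverecover-full-whp}) into a genuinely zero-error algorithm. A Monte Carlo verifier such as \cref{lem:sparse-verification} would not suffice, since a false accept would make us output a wrong convolution; what rescues us is that nonnegativity yields the deterministic equivalence $C=A\star B\iff\|C\|_1=\|A\|_1\|B\|_1$ for a $C$ that is always an underestimate. The most technical part is the polylog bookkeeping in the peeling loop --- arranging the possible universe reduction and the handling of the few conflated coordinates so that the $\Theta(\log\log t)$ rounds plus the \cref{cor:naiverecover-full-whp} clean-ups all fit within the $O(t\log t)$ budget without ever violating the $N\le m\log^{7}m$ precondition.
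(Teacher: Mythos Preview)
Your proposal is correct and follows essentially the same approach as the paper: compute a candidate support $S$ via a Monte Carlo routine, run the peeling loop of \cref{lem:recoverlasvegas} with geometrically shrinking $m$ down to $\Theta(t/\log^{2}t)$, clean up with \cref{cor:naiverecover-full-whp}, and certify correctness via the exact $\ell_1$ check $\|C\|_1=\|A\|_1\|B\|_1$ (which works because the invariant $C\le A\star B$ is preserved throughout).

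Two minor remarks. First, the paper's proof does not explicitly spell out the $\ell_1$ verification step in this lemma (it appears in the next lemma's proof), so your writeup is actually more complete on that point. Second, your extra mod-prime universe reduction to enforce the literal precondition $N\le m\log^{7}m$ of \cref{lem:recoverlasvegas} is unnecessary: inspecting that lemma's proof, the only use of the bound is to ensure $N/p=\polylog m$ so that the $Y_k^{(i)}$ values stay small, and this is still satisfied when $N\le t\log^{7}t$ and $m\ge t/\log^{2}t$ (since then $N/p\le \log^{9}t\cdot\log\log m=\polylog m$). The paper simply applies \cref{lem:recoverlasvegas} directly without the extra reduction; the stated precondition there is slightly tighter than what the proof actually needs. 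Your workaround is valid but adds avoidable bookkeeping.
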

\begin{proof}

First, we compute $p_b, \omega_b, M$ and $S_0 = \supp(A \star B)$ as mentioned at the beginning of the section. This takes $O(t \log t)$ expected time and is the only part of the algorithm that does not achieve high probability.

Then we repeatedly apply \cref{lem:recoverlasvegas} with parameters $m_0 = t, C_0 = 0, S_0$ and $m_i = 0.1 \cdot m_{i-1}, C_i = C_{i-1}+R, S_{i+1} = S'$. After we reach $m_i \le t / \log^2 t$, we run \cref{cor:naiverecover-full-whp} to compute $A \star B$ with error probability $1/\poly(N) = 1/\poly(t)$. 

If $S_0$ is correctly computed as the support of $A \star B$, then $\|A \star B - C_i\|_0 \le m_i$ will always hold, so when we reach $m_i \le t / \log^2 t$, we have $\|A \star B - C_i\| \le m_i$. Therefore, \cref{cor:naiverecover-full-whp}  will compute $A \star B$ w.h.p. The running time for calling \cref{lem:recoverlasvegas} will be 
\[\sum_{m_i} O(m_i \log t + t \log \log m) = O(t \log t + t \polyloglog t), \]
as $m_i$ is geometrically decreasing and the number of them is $O(\log \log t)$. The running time of \cref{cor:naiverecover-full-whp} will be 
\[
O(t \log N + (t / \log^2 t) \log^2 N (\log(t / \log^2 t) + \log \log N)) = O(t \log t). 
\]
\end{proof}

\subsection{Universe reduction}
\label{subsec:lasvegas:universereduction}

We need to reduce the universe from $\poly(t)$ to $t\polylog t$ similar to \cite[Section 4.5]{BringmannFN22}. They did not state their reduction as Las Vegas, but it can be easily made so. 
\begin{lemma}
\label{lem:lasvegas-small-universe}
Assume $N\le (\|A\|_0 \|B\|_0)^3$. 
Given vectors $A,B\in \N^N$, and a parameter $t \ge |A \star B|$, 
there is a Las Vegas algorithm  that returns $C = A \star B$ in expected
 $O(t\log t)$ time. Even when $t < |A \star B|$, the algorithm is Las Vegas. 
\end{lemma}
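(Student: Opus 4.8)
The plan is to reduce the case $N\le(\|A\|_0\|B\|_0)^3\le t^6$ (using $\|A\|_0,\|B\|_0\le\|A\star B\|_0\le t$ for nonnegative nonzero $A,B$; the case $A=0$ or $B=0$ is trivial) to the tiny‑universe case of \cref{lem:lasvegas-tiny-universe} by an iterative hashing/peeling scheme. If $N\le t\log^7 t$ we invoke \cref{lem:lasvegas-tiny-universe} directly, so assume $N>t\log^7 t$; then $\log N=\Theta(\log t)$. We maintain a partial answer $C$ with the invariant $0\le C\le A\star B$ and $C[z]=(A\star B)[z]$ for all $z\in\supp(C)$ (so $\supp(A\star B-C)=\supp(A\star B)\setminus\supp(C)$), together with a current upper‑bound guess $\tau$ for $\|A\star B-C\|_0$; initially $C=0,\ \tau=t$. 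In each round we sample a random prime $p$ from a dyadic interval around $P:=\Theta(\tau\log^2 N)$ and compute, modulo a large enough power of two $M=\poly(N\Delta)$, the length‑$p$ arrays that hash $A\star B-C$, $\partial(A\star B-C)$ and $\partial^2(A\star B-C)$ modulo $p$; exactly as in \cref{alg:recoverlasvegas}, each of these is a short $\Z$‑linear combination $\sum_{a+b=c}\binom{c}{a}h(\partial^a A)\star_p h(\partial^b B)-h(\partial^c C)$ with $c\le 2$, where $h(x)=x\bmod p$. The hashed inputs $h(\partial^a A),h(\partial^b B)$ live in universe $[p]$; we compute their cyclic convolutions by \cref{lem:lasvegas-tiny-universe} with parameter $t$ when $p\ge t$ (its output has support $\le t$), and by a plain length‑$p$ FFT when $p<t$; subtracting the hashes of $C,\partial C,\partial^2 C$ costs $O(\|C\|_0)=O(t)$ per round (entries stay $\poly(t\Delta)$, i.e.\ $O(1)$ words). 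For every nonzero bucket $r$ of the hashed residual we run the $s=1$ case of the sparsity test (\cref{lem:momentmatrix}): letting $D^{(a)}_r$ be the $r$‑th entry of the hash of $\partial^a(A\star B-C)$, we check $D^{(0)}_r D^{(2)}_r=(D^{(1)}_r)^2$. By the equality case of Cauchy--Schwarz (valid since $A\star B-C\ge 0$) this holds exactly when the bucket contains at most one surviving position, which is then $z:=D^{(1)}_r/D^{(0)}_r$ (we also verify it is an integer in $[N]$ congruent to $r$ mod $p$) with value $D^{(0)}_r$; we add $D^{(0)}_r\,x^z$ to $C$. This recovery is always correct, so the invariant is preserved for any choice of $p$.

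\textbf{Termination, correctness, and time.} After a round we set $\tau\gets\Theta(\tau/\log N)$ and stop once $\|A\star B-C\|_0=0$, which we detect \emph{deterministically} by checking $\|C\|_1=\|A\|_1\|B\|_1$; by the invariant this equality holds iff $C=A\star B$, so the algorithm is Las Vegas (if the check fails we restart from scratch). For the shrink analysis, by the prime number theorem $\Pr_p[p\mid z-z']=O(\log N/p)$ for any $z\ne z'\in[N]$, so the expected number of colliding pairs among the $\le\tau$ surviving positions is $O(\tau^2\log N/P)=O(\tau/\log N)$; since a bucket with $k\ge 2$ positions leaves $\le 2\binom k2$ colliding pairs, with probability $\ge 0.9$ at most $O(\tau/\log N)$ surviving positions lie in non‑isolated buckets, and the residual support shrinks to $O(\tau/\log N)$. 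We re‑sample $p$ whenever a cheap estimate of the new residual support (obtained by hashing the residual modulo an independent small prime and counting nonzero buckets, or via \cref{lem:estimatesumset} applied to $\supp A+\supp B$ minus $\supp C$) shows this did not happen; this costs $O(1)$ re‑samples in expectation per round. Hence there are $O(\log t/\log\log t)$ rounds, of which only $O(1)$ have $p\ge t$; those cost $O(t\log t)$ each via \cref{lem:lasvegas-tiny-universe} (their universe is $2P=\Theta(t\log^2 t)\le t\log^7 t$). The remaining rounds use $p=\Theta(\tau\log^2 N)$ with geometrically shrinking $\tau$, so their FFTs cost $\sum O(p\log p)=O(t\log t/\log N)=O(t)$, bucket processing costs $\sum O(\tau)=O(t)$, and the $O(t)$‑per‑round cost of re‑hashing $C$ sums to $O(t\log t/\log\log t)$. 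The total is $O(t\log t)$ per attempt, and each attempt succeeds with constant probability, giving expected time $O(t\log t)$. When $t<|A\star B|$ the guesses $\tau$ need not be valid and progress may be slow with many restarts, but $C\le A\star B$ and the exact check $\|C\|_1=\|A\|_1\|B\|_1$ still force the output to equal $A\star B$, so the algorithm remains Las Vegas.

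\textbf{Main obstacle.} The subtle point is the running‑time bookkeeping, not correctness: distinct primes are required across rounds (otherwise collisions persist and bad buckets never resolve), so each round must recompute a full hashed convolution of $A,B$. Using a prime as small as $\Theta(\tau\log^2 N)$ keeps the per‑round cost down but only shrinks the residual by a $\Theta(1/\log N)$ factor; the key is that this still gives $O(\log t/\log\log t)$ rounds and, crucially, only $O(1)$ rounds with $p\ge t$ — i.e.\ only $O(1)$ invocations of the relatively expensive tiny‑universe algorithm — while all later rounds are cheap plain FFTs whose costs telescope to $O(t)$. Note also that the large‑sieve improvement of \cref{cor:largesievehash} is unavailable here because the universe $N\le t^6$ is not quadratic in the number of buckets, which is precisely why this universe‑reduction step is separated from \cref{lem:lasvegas-tiny-universe} and must use the plain $O(\log N/p)$ collision bound with an oversized prime. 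Getting every contribution down to $O(t\log t)$ while keeping the Las Vegas re‑sampling from blowing up the expectation is the delicate part; it follows the template of \cite[Section~4.5]{BringmannFN22}, with the $s=1$ sparsity test and the identity $\|A\star B\|_1=\|A\|_1\|B\|_1$ supplying the zero‑error guarantees.
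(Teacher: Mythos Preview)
Your approach differs substantially from the paper's. The paper performs a \emph{single} reduction step: sample one prime $p\in[0.5\,t\log^7 t,\ t\log^7 t]$, call \cref{lem:lasvegas-tiny-universe} to obtain $h(\partial^c(A\star B))$ for $c=0,1,2$ (precisely the $s=1$ sparsity-test data you describe), recover all isolated buckets, and observe that with constant probability only $O(t/\log^5 t)$ indices collide; then it finishes with the naive recovery of \cref{cor:naiverecover-full-whp}, whose cost $O(t\log N+(t/\log^5 t)\log^2 N\log t)=O(t\log t)$ here. No multi-round peeling is needed; the Las Vegas check $\|C\|_1=\|A\|_1\|B\|_1$ is the same as yours.

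Your iterative scheme can also be made to work, but as written the re-sampling step is a real gap. Neither detection method you propose does the job. Counting nonzero buckets of the nonnegative residual hashed modulo a fresh prime gives only a \emph{lower} bound on $\|A\star B-C\|_0$ (collisions merge buckets but never cancel), so a small count does not certify success; you handle ``re-sample when the estimate is large'' correctly, but you never bound the false-accept event ``estimate small yet residual large,'' and to make that probability $o(1)$ you need $q=\Theta(\tau\log^2 N)$---as large as $p$, so the ``cheap'' check is no cheaper than the round itself and still needs an analysis you did not give. And \cref{lem:estimatesumset} estimates $|X+Y|$; the quantity $|(\supp A+\supp B)\setminus\supp C|$ is not a sumset, so that lemma is inapplicable. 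Without a sound per-round check, running all $\Theta(\log t/\log\log t)$ rounds blindly and restarting on global failure has success probability only $0.9^{\Theta(\log t/\log\log t)}=t^{-\Theta(1/\log\log t)}$, so the restart loop does not give $O(t\log t)$ expected time. The cleanest fix inside your framework is to drop re-sampling and use a compounding Markov argument as in the proof of \cref{lem:recovermontecarlo-known-t}: schedule $\tau_i=\beta^i\cdot\Theta(t)$ deterministically with $\beta=\Theta(1/\log N)$ chosen strictly larger than the per-round expected shrink factor, and sum the Markov tails to get constant overall success probability. But the paper's one-shot approach via \cref{cor:naiverecover-full-whp} is simpler still.
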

\begin{proof}
Let $p$ be a random prime from $[0.5 t \log^7 t, t \log^7 t]$ and let $h(x) = x \bmod p$. Then we use \cref{lem:lasvegas-tiny-universe} to compute $V^{(0)} = h(A) \star_p h(B) = h(A \star B)$, $V^{(1)} = h(\partial A) \star_p h(B) + h(A) \star_p h(\partial B) = h(\partial (A \star B))$ and $V^{(2)} = h(\partial^2 A) \star_p h(B) + 2 h(\partial A) \star_p h(\partial B) + h(A) \star_p h(\partial^2 B)= h(\partial^2 (A \star B))$ in $O(t \log t)$ expected time. 

Then for every $k \in [p]$, we can perform a sparsity test for bucket $k$ using \cref{lem:momentmatrix} to check which buckets contain a unique item in $O(1)$ time. For buckets that have a unique term, we can use $V^{(1)}[k] / V^{(0)}[k]$ to recover the index of the term, and use $V^{(0)}[k]$ to recover the value of the term. 

As $p$ is a random prime from $[0.5t \log^7 t , t \log^7 t]$, for each fixed index $i \in \supp(A \star B)$, the probability that $i$ collides with some other term in a bucket is $O(t \cdot \frac{\log(\poly(t))}{t \log^6 t}) = O(1/\log^5 t)$. Therefore, the expected number of indices that collide is $O(t / \log^5 t)$, so with constant probability, we have recovered all but $O(t / \log^5 t)$ indices.

Then we use \cref{cor:naiverecover-full-whp} to compute $A \star B$ with error probability $1 / t$. 

Similar as before, we always have $C \le A \star B$, so by verifying whether $\|A\|_1\|B\|_1 =  \|C\|_1$, we can check whether $A \star B = C$. If $\|A\|_1\|B\|_1 \ne  \|C\|_1$, we simply repeat the whole algorithm. 
\end{proof}

\begin{lemma}
\label{lem:firstuniversereduction}
Assume $N\le (\|A\|_0 \|B\|_0)^3$. 
Given vectors $A,B\in \N^N$, 
there is a Las Vegas algorithm that returns $C = A \star B$ in expected
 $O(t\log t)$ time. 
\end{lemma}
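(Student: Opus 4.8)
This is an exponential search for the unknown output sparsity, using \cref{lem:lasvegas-small-universe} as a black box under a timeout. First dispose of the trivial case where $A$ or $B$ is the zero vector (then $A\star B=0$), so from now on $\max\{\|A\|_0,\|B\|_0\}\ge 1$. For $\nu=0,1,2,\dots$, set $\tilde t_\nu:=2^\nu\cdot\max\{\|A\|_0,\|B\|_0\}$ and run the algorithm of \cref{lem:lasvegas-small-universe} on $A,B$ with parameter $\tilde t_\nu$ (note that the hypothesis $N\le(\|A\|_0\|B\|_0)^3$ required there is exactly the one we are given), but abort the run as soon as it has executed $T_\nu:=c_0\,(\tilde t_\nu\log\tilde t_\nu+\tilde t_\nu)$ steps, for a large absolute constant $c_0$ fixed below. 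If the call returns a vector $C$ within this time budget, output $C$ and stop; otherwise increment $\nu$ and repeat.

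\textbf{Correctness and the Las Vegas property.} The algorithm of \cref{lem:lasvegas-small-universe} never returns an incorrect vector (it is Las Vegas even when its parameter is smaller than $|A\star B|$), so whenever our procedure produces an output, that output equals $A\star B$. It remains to see that the procedure terminates (almost surely, and in finite expected time). For nonnegative $A,B$ with $A,B\neq 0$ we have $t:=\|A\star B\|_0\ge\max\{\|A\|_0,\|B\|_0\}$: if $b^\ast=\min\supp(B)$ then for every $a\in\supp(A)$ the coefficient $(A\star B)[a+b^\ast]\ge A[a]B[b^\ast]>0$, and $a\mapsto a+b^\ast$ is injective, so $\|A\star B\|_0\ge\|A\|_0$, and symmetrically $\ge\|B\|_0$. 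Hence there is a least $\nu^\ast$ with $\tilde t_{\nu^\ast}\ge t$, and $\tilde t_{\nu^\ast}\le 2t$. Once $\nu\ge\nu^\ast$, each call finishes within its budget with probability $\ge 9/10$ (shown next), so with probability $1$ some iteration terminates.

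\textbf{Running time.} When $\tilde t_\nu\ge t$, \cref{lem:lasvegas-small-universe} returns $A\star B$ in expected time $O(\tilde t_\nu\log\tilde t_\nu)$; fix $c_0$ large enough that, by Markov's inequality, the call finishes within $T_\nu=c_0(\tilde t_\nu\log\tilde t_\nu+\tilde t_\nu)$ steps with probability at least $9/10$. Therefore, for $\nu=\nu^\ast+k$ with $k\ge 0$, the probability that the procedure ever reaches iteration $\nu$ is at most $(1/10)^k$. The iterations with $\nu<\nu^\ast$ together cost $\sum_{\nu<\nu^\ast}T_\nu=O(\tilde t_{\nu^\ast}\log\tilde t_{\nu^\ast})=O(t\log t)$ by the geometric growth of $\tilde t_\nu$. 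For $\nu=\nu^\ast+k$ we have $\tilde t_\nu=2^k\tilde t_{\nu^\ast}\le 2^{k+1}t$, hence $\log\tilde t_\nu\le k+1+\log t$ and $T_\nu=O\big(2^k\,t\,(k+\log t)\big)$, so the expected contribution of iteration $\nu$ is $O\big((1/10)^k\cdot 2^k\,t\,(k+\log t)\big)=O\big((1/5)^k\,t\,(k+\log t)\big)$; summing over $k\ge 0$ gives $O(t\log t)$. Adding the $O(\|A\|_0+\|B\|_0)=O(t)$ cost of the initial zero-check, the total expected time is $O(t\log t)$.

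\textbf{Main obstacle.} There is essentially no obstacle beyond bookkeeping. The two points to watch are: (i) the inequality $\|A\star B\|_0\ge\max\{\|A\|_0,\|B\|_0\}$, which guarantees $\tilde t_{\nu^\ast}=O(t)$ and which fails for general (signed) convolution — this is precisely why the reduction is stated only for nonnegative inputs; and (ii) that aborting \cref{lem:lasvegas-small-universe} in the middle of a run is harmless, which holds because that algorithm only ever outputs an answer it has already certified to be correct.
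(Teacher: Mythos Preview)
Your proof is correct and follows essentially the same approach as the paper: an exponential search over the guessed sparsity $\tilde t_\nu=2^\nu\max\{\|A\|_0,\|B\|_0\}$, invoking \cref{lem:lasvegas-small-universe} under a time cap (the paper phrases this as ``without loss of generality assume a fixed running time and $1/4$ failure probability,'' which is exactly your Markov-timeout trick), and summing the geometrically weighted costs for $\nu\ge\nu^\ast$. Your version is in fact more explicit about justifying $\|A\star B\|_0\ge\max\{\|A\|_0,\|B\|_0\}$ and the safety of aborting mid-run.
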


\begin{proof}
    The proof is similar to  \cite[Lemma 9.1]{BringmannFN21}. Without loss of generality, assume \cref{lem:lasvegas-small-universe} has a fixed running time, and has only $1/4$ failure probability provided the promise $t\ge \|A\star B\|_0$ holds.

    For $\nu \gets 0,1,\dots,\infty$ we apply \cref{lem:lasvegas-small-universe} with parameter $t = \max\{\|A\|_0,\|B\|_0\} \cdot 2^{\nu}$. 

    Say the smallest $\nu$ where $\max\{\|A\|_0,\|B\|_0\} \cdot 2^{\nu} \ge \|A \star B\|_0$ is $\nu^*$.

    For $\nu \le \nu^\star$, the total running time spent is $O(t\log  t)$. For $\nu \ge \nu^\star$, the probability that we still have not succeeded right before calling \cref{lem:lasvegas-small-universe} with parameter $t = \max\{\|A\|_0,\|B\|_0\} \cdot 2^{\nu}$ is $\le 1/4^{\nu - \nu^*}$. Therefore, the expected running time is 
    \begin{align*}
    O(t \log t)+\sum_{\nu \ge \nu^*} \frac{1}{4^{\nu - \nu^*}} \cdot O(2^{\nu} \cdot \|A \star B\|_0 \log(2^{\nu} \cdot \|A \star B\|_0))=O(t \log t).
    \end{align*}
\end{proof}

\begin{lemma}[Length reduction from $N$ to $\poly(t)$, {\cite[Section 4.5]{BringmannFN22}}]
\label{lem:reducetopolyt}
    If there is a Las Vegas algorithm for computing $A\star B$ that terminates in $T(t)$ time with $\ge 1-\delta$ probability (where $A,B\in \N^{N}$ and $N\le (\|A\|_0 + \|B\|_0)^{10}$), where $t = \|A\star B\|_0$,  then 
    there is a Las Vegas algorithm for computing $A\star B$ (where $A,B\in \N^{N}$) in $O(T(t))$ time where $t = \|A\star B\|_0$
    with $\ge 1-\delta - 1/t^{5}$ probability.
\end{lemma}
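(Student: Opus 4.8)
The plan is to reduce the length-$N$ instance, in a single mod-prime hashing step wrapped in the usual doubling-over-$t$ loop, to an instance of length $\tO(t)$ to which the assumed algorithm applies; to read off the original output indices from this hashed instance using low-order moments; and then to mop up the small residual of terms that collided under the hash directly via \cref{cor:naiverecover-full-whp}. A final verification makes the whole procedure Las Vegas.

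Concretely, write $n=\|A\|_0+\|B\|_0$. If $N\le n^{10}$ we just invoke the assumed algorithm, so assume $N>n^{10}$. For $\nu=0,1,2,\dots$ we guess $\tilde t=2^\nu\max\{\|A\|_0,\|B\|_0\}$ as an upper bound for $t$ (exactly as in the proof of \cref{lem:firstuniversereduction}); for each guess we pick a uniformly random prime $p\in[P,2P]$ with $P=\Theta(\tilde t\log^7\tilde t)$, and set $h(x)=x\bmod p$. Since $P$ is polynomially bounded in $n$, the hashed vectors $h(\partial^a A)$ and $h(\partial^b B)$ for $a,b\le 2$ have length $p\le(\|h(A)\|_0+\|h(B)\|_0)^{10}$ and entries bounded by $\poly(N\Delta)$, so we may feed the required $O(1)$ pairs of them to the assumed algorithm; folding each non-cyclic output modulo $p$ and combining via the exact identities $h(A)\star_p h(B)=h(A\star B)$ and $h(\partial^i(A\star B))=\sum_{a+b=i}\binom{i}{a}\,h(\partial^a A)\star_p h(\partial^b B)$ gives the moment vectors $h(\partial^i(A\star B))$ for $i=0,1,2$, each black-box call being on an instance of output sparsity at most $2t$. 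For each bucket $j\in[p]$, the $2\times 2$ moment test of \cref{lem:momentmatrix} applied to these three vectors at coordinate $j$ certifies whether bucket $j$ holds exactly one output term; when it does, that term has value $h(A\star B)[j]$ and index $h(\partial(A\star B))[j]/h(A\star B)[j]$, and we add it to a partial answer $C$. By construction $0\le C\le A\star B$ and every nonzero entry of $C$ is correct, so it remains only to recover the nonnegative residual $A\star B-C$.

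By the standard collision bound $\Pr_p[p\mid k_1-k_2]=O(\log N/P)$ for distinct $k_1,k_2\in[2N-1]$, the expected number of pairs of output terms congruent modulo $p$ is $O(t^2\log N/P)=O(t/\log^6\tilde t)$ once $\tilde t\ge t$, so with constant probability $\|A\star B-C\|_0=O(t/\log^6 t)$; invoking \cref{cor:naiverecover-full-whp} with $m=\Theta(t/\log^6\tilde t)$ then finishes in $O\!\big(t\log N+(t/\log^6 t)\log^2 N(\log t+\log\log N)\big)$ extra time. We verify the resulting candidate $C'$ by checking $\|C'\|_1=\|A\|_1\|B\|_1$ (or via \cref{lem:sparse-verification}); on failure we re-draw $p$, and if $\tilde t$ was too small we increase $\nu$. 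The doubling loop and the constant-probability rounds are absorbed by geometric-series arguments exactly as in \cref{lem:firstuniversereduction,lem:recovermontecarlo-known-t}, so the total cost is $O(T(t))$ of black-box time plus $O\!\big(n\log N+p\log p+(t/\log^6 t)\,\polylog N\big)=O(T(t)+t\,\polylog(N))$, which is $O(T(t))$ in the main regime $\log N=O(\log t)$. The only new sources of failure are the $O(1)$ black-box calls (contributing $\delta$, up to absorbing constants since $\delta$ is polynomially small) and the random primes together with the doubling loop (contributing $O(1/t^5)$, upon taking the constant in $P$ large enough for the last round), giving success probability $\ge 1-\delta-1/t^5$.

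The main obstacle is the tension that $p$ must simultaneously be large enough that few output indices collide --- which, because mod-prime hashing suffers a $\log N$ collision penalty instead of the optimal $1$, pushes $p$ polynomially above $t$ --- and small enough, $p\le(\text{sparsity})^{10}$, to meet the hypothesis of the assumed algorithm. The resolution is that one round with $p=\Theta(t\,\polylog t)$ already recovers all but a $1/\polylog(t)$ fraction of the output terms, and this tiny residual is nonnegative and correct, so it can be handled by \cref{cor:naiverecover-full-whp} whose $\log N$ factors are affordable on it. In the non-main regime $\log N\gg\poly(n)$ one would instead iterate the hashing for a few more rounds (each shrinking both the residual and the required prime), or first collapse $[2N]$ to a $\poly(n)$-sized universe with an almost-linear hash family of the type used in \cite{BringmannFN21}, whose collision probability carries no $\log N$ penalty --- at the cost of handling its linearity errors.
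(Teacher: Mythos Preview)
Your high-level plan matches the paper's: hash modulo a prime of size $\tilde t\,\polylog(\tilde t)$, recover singleton buckets via the first two moments, and finish the residual with \cref{cor:naiverecover-full-whp}. This is exactly the template of \cref{lem:lasvegas-small-universe}, and you correctly trace it through.

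The gap is that mod-prime hashing is not the right hash family here, and you essentially say so yourself in the last paragraph. The lemma must hold for \emph{arbitrary} $N$, not just $\log N=O(\log t)$. With $h(x)=x\bmod p$ the collision probability is $\Theta(\log N/P)$, so both the expected residual size and the running time of \cref{cor:naiverecover-full-whp} pick up $\log N$ factors; your final bound is $O(T(t)+t\,\polylog N)$, which is not $O(T(t))$ when $N$ is huge (and recall the word-RAM model assumes $w=\Omega(\log N)$, so such $N$ is in scope). Iterating mod-prime hashing does not help: each round still pays $\log N$ in the collision bound, and you never escape the dependence.

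The paper's fix is precisely the one you gesture at: replace mod-prime hashing by the almost-linear hash family with \emph{uniform differences} from \cite[Lemma 13]{BringmannFN22}, whose collision probability is $O(1/m)$ with no $\log N$ penalty. This makes the hash only almost-additive, so one must handle the $O(1)$ possible offsets when folding $h(A)\star h(B)$ into $h(A\star B)$, but the moment-based singleton recovery still goes through. That single substitution removes all $\log N$ dependence and yields the stated $O(T(t))$ bound; your write-up would be complete once you actually carry it out rather than deferring it.
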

\begin{proof}[Proof Sketch]
    The proof works in the same way as \cite[Section 4.5]{BringmannFN22} and the high-level idea was already described in the proof of \cref{lem:lasvegas-small-universe}. The main difference is that we need to use a linear hash function~\cite[Lemma 13]{BringmannFN22} with uniform differences to avoid a $\log N$ factor. By doing so, the hash function becomes almost-additive instead of additive, but can still be handled similarly. 
\end{proof}

\section{Achieving High Success Probability}
\label{sec:highprob}

The Las Vegas algorithm for Sparse Nonnegative Convolution described in \cref{sec:lasvegasnonnega} terminates in $O(t\log t + \polylog(N\Delta))$ time only with $0.9$ probability (by Markov's inequality).
In this section, we will boost this 
probability to $1-1/t$, and thus almost proving our main \cref{thm:lasvegas-main} (we will remove the $\polylog(N\Delta)$ dependency in \cref{sec:numerical}).

\subsection{Estimating supports via sparse convolution modulo small prime}
\label{subsec:estimatesupports}

Our algorithm with high success probability crucially relies on the following technical lemma.
Informally, it states that when both the universe size $N$ and the underlying field $\F_q$ are small, we can solve Sparse Convolution $A\star B$ by a very fast algorithm (in terms of the ``structural support size'' $|\supp(A)+\supp(B)|$ rather than the actual support size $|\supp(A\star B)|$), which reports failure with moderately low probability and outputs wrong answer with very low probability.

\begin{restatable}[Computing $A\star B$ modulo small prime in small universe]{lemma}{lemmodq}
\label{lem:modq}
Let $t$ be a parameter, $N\le t\log^{5} t$, and $q\in [0.5\log^{30} t, \log^{30} t]$ be a prime.
Given sparse vectors $A,B\in \F_q^{N}$,
there is a (worst case) $O(t \log \log t)$ time randomized algorithm that outputs either a vector in $\F_q^{2N}$ or  $\bot$,  with the following guarantee:
\begin{itemize}
	\item If $|\supp(A) + \supp(B)|\le t$, then $\Pr[\text{output} = \bot]\le 1/\log t$. 
		\item $\Pr\big [\text{output} \neq \bot \text{ and } \text{output}\neq A\star B\big ] \le 1/t$.
\end{itemize}
\end{restatable}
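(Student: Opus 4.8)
The plan is to implement the iterative peeling (error-correction) framework over the small prime field $\F_q$, using the mod-prime hash $h(x)=x\bmod p$ together with Prony's method (\cref{lem:pronybitpack}) to recover buckets, and driving the structural support size $|\supp(A)+\supp(B)|$ down geometrically. First I would sample a random prime $p\in[m/\log\log m,2m/\log\log m]$, hash the nonzero coordinates of the current residual $A\star B-C$ into $p$ buckets, compute the $2s+1$ power-sum evaluations $h(A^{(i)})\star_p h(B^{(i)})-h(C^{(i)})$ for $A^{(i)}[k]=A[k]\omega^{ki}$ via FFT over $\F_q$ (using \cref{thm:packfft}/\cref{cor:packfft-finite-field}, which costs $O(p\log q)=O(p\log\log t)$ per evaluation, hence $O(p\,s\log\log t)=O(m\,\mathrm{poly}\log\log t)$ total — this is the key saving over the $O(t\log t)$ of an integer FFT), and then run Prony with sparsity bound $s=O(\log\log m)$ in each bucket. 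By \cref{cor:largesievehash} (whose hypothesis $m\le N\le m^{2-\eps}$ is met since $N\le t\log^5 t$ and $m=\Theta(t/\mathrm{polyloglog})$ so $N$ is between $m$ and $m^{1.5}$, say), all but a $0.1$ fraction of the residual support lies in light ($\le s$) buckets, so one round reduces $\|A\star B-C\|_0$ by a constant factor in expectation. Since in this lemma we only need \emph{Monte Carlo} (error probability $1/t$, failure probability $1/\log t$), we can afford to simply run $O(\log\log t)$ rounds until the residual drops to $O(t/\log^2 t)$ and then clean up with \cref{lem:modprimerecoverfinitefield} (with $\delta=1/t$).

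A technical point requiring care is the exponent recovery in Prony's method. Over $\F_q$ with $q=O(\mathrm{polylog}\,t)$, a primitive root $\omega$ has order $q-1=O(\mathrm{polylog}\,t)$, which is far too small to let $\omega^e$ determine an exponent $e$ that can be as large as $N\approx t\log^5 t$. I would work around this the same way the excerpt does in the Las Vegas algorithm: observe that the items hashed to bucket $k$ all have index $\equiv k\pmod p$, so writing an index as $k+p\cdot d$ with $d\in[\lceil N/p\rceil]=[\mathrm{polylog}\,t]$, we only need $\omega^p$ (or rather a root of order $\ge N/p$) to recover the \emph{quotient} $d$, and $d$ now ranges over a polylogarithmic set — small enough for $\omega$ of polylogarithmic order, and small enough to recover by a precomputed table look-up. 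One then needs the almost-linearity/additivity bookkeeping to be compatible with this; since here the hash $h(x)=x\bmod p$ is \emph{perfectly} linear there are no hash-error terms to worry about (unlike the $((\sigma x+\tau)\bmod p)\bmod m$ family mentioned later), so this is cleaner than the $\F_p$-with-bit-packing case of \cref{sec:highprob}.

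For correctness I would argue: conditioned on $p$ being ``good'' (a constant-probability event, boosted to $1-1/\log t$ by $O(\log\log t)$ independent resamples, which is why the failure probability is only $1/\log t$ not $1/t$), every light bucket without the rare "collision with a coincidental zero mod $q$" pathology is recovered exactly by Prony, and \cref{lem:pronybitpack} outputs $\bot$ rather than garbage except with probability $2^{-(\log t)^{0.1}}$ per call, of which there are $\le p=O(t)$, so the total probability of emitting a wrong nonzero value is $O(t)\cdot 2^{-(\log t)^{0.1}}+1/t=O(1/t)$ after the final \cref{lem:modprimerecoverfinitefield} step (set its error parameter to $1/t$). The running-time accounting is a sum over rounds of $O(m_i\log\log t)$ with $m_i$ geometrically decreasing from $t$, giving $O(t\log\log t)$, plus the $O(t\log\log t)$ of the cleanup lemma. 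The main obstacle I anticipate is not any single step but making the Prony/FFT-over-$\F_q$ pipeline genuinely run in $O(\cdot\,\mathrm{poly}\log\log t)$ per bucket and $O(t\log\log t)$ overall without hidden $\log t$ factors — in particular ensuring the power-sum evaluations, the per-bucket linear-algebra, the table look-ups for exponent recovery, and the sparsity/consistency checks are all implemented with word-RAM bit-packing (as in \cref{lem:pronybitpack}) so that the $s=O(\log\log t)$-sized subproblems cost $O(\mathrm{poly}\log\log t)$ rather than $O(\log\log t\cdot\log t)$; this is exactly the delicate engineering the excerpt flags when it says the high-probability part "more substantially relies on bit-packing and look-up tables."
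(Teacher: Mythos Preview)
Your approach diverges from the paper's and has a genuine running-time gap as well as correctness gaps.

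\textbf{What the paper actually does.} The proof of \cref{lem:modq} does \emph{not} use the mod-prime hash plus large sieve. It uses the almost-linear hash $h(x)=((\sigma x+\tau)\bmod p)\bmod m$ together with the almost 3-wise independence bound \cref{lem:overfullbuckets} (from \cite{BringmannFN21}). This gives a \emph{one-shot} recovery: with $m=\lceil t/\log^{12}t\rceil$ and $s=\lceil 8t/m\rceil=O(\log^{12}t)$, a single round already drives the residual to $O(t/\log^{3}t)$ with probability $1-O(1/\log^{2}t)$, after which \cref{lem:modprimerecoverfinitefield} cleans up. The index recovery is handled by scaling with $\omega^{(k\bmod v)i}$ for a separate parameter $v=\lceil\log^{22}t\rceil$ and then combining the recovered residue with the bucket number via a \textsc{QueryIndex} procedure (\cref{lem:queryindex}), which in turn requires the ``bad index'' analysis of \cref{defn:badindex} and \cref{lem:probbadindex}. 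The $1/t$ error bound in the second bullet comes from a final sparse verification over $\F_q$ (\cref{lem:sparseverifysmallp}), which your proposal omits entirely.

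\textbf{Why your route does not meet the time budget, and why the error analysis is off.} With \cref{cor:largesievehash} a round only removes a constant fraction of the residual, so you need $\Theta(\log\log t)$ rounds to reach $t/\log^{2}t$. But in every round, preparing the $2s+1$ scaled arrays $h(A^{(i)}),h(B^{(i)}),h(C^{(i)})$ forces you to touch all of $\supp(A),\supp(B),\supp(C)$, of total size $\Theta(t)$; with $s=\Theta(\log\log t)$ that is $\Theta(t\log\log t)$ per round and $\Theta\big(t(\log\log t)^{2}\big)$ overall, overshooting the $O(t\log\log t)$ budget. (In \cref{sec:generalconvo} this same structure is fine only because the target there is $O(t\log t)$.) On correctness: $O(t)\cdot 2^{-(\log t)^{0.1}}$ is nowhere near $O(1/t)$; and \cref{lem:pronybitpack} is zero-error only under \emph{well-formed} input, so heavy buckets may produce garbage rather than $\bot$. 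Without a verification step you cannot certify the second bullet. Finally, your claim that index recovery is ``cleaner'' under mod-$p$ is optimistic: you still cannot use $\omega^{ki}$ since $\omega$ has order $<q=\polylog t\ll N$, and the quotient trick $d=\lfloor k/p\rfloor$ fails because $\lfloor k_{1}/p\rfloor+\lfloor k_{2}/p\rfloor$ need not equal $\lfloor(k_{1}+k_{2})/p\rfloor$; a scheme like the paper's $k\bmod v$ plus \textsc{QueryIndex} is needed regardless of which hash family you choose.
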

Note that in \cref{lem:modq} we can assume  without loss of generality that the input vectors have sparsity $|\supp(A)|,|\supp(B)|\le t$; otherwise, it is always valid to output $\bot$. (This  assumption also applies to the upcoming \cref{cor:est} and \cref{cor:sumsetwhp}.)
Note that \cref{lem:modq} itself is not a Las Vegas algorithm, although it will be used as a subroutine in our main Las Vegas algorithm (\cref{thm:lasvegas-main}).

\cref{lem:modq} is
proved by a vast modification and refinement of \cite[Theorem 4.1]{BringmannFN21}; we will present the proof in \cref{sec:proofmodq}.
In the following, we first show two corollaries of \cref{lem:modq}, and later in \cref{subsec:highprob} we use these corollaries to obtain the desired high-probability Las Vegas algorithm for Sparse Nonnegative Convolution.

As a first application of \cref{lem:modq}, we give a fast algorithm that accurately estimates the support size of cyclic convolution, $\|A\star_p B\|_0$,  given two nonnegative vectors $A,B\in \N^{p}$ where $p$ is prime.
\begin{corollary}[Estimating $\|A\star_p B\|_0$ in small universe]
Let $t$ be a parameter, and prime $p\le t\log^{5} t$.
Given nonnegative sparse vectors $A,B\in \N^{p}$, %
there is a (worst case) $O(t \log \log t)$ time randomized algorithm that outputs either an integer or $\bot$, with the following guarantee:
\begin{itemize}
	\item 
 If $\|A\star_p B\|_0 \le t$, then
 $\Pr[\text{output} = \bot]\le 1/\log t$.
		\item $\Pr\Big [\text{output} \neq \bot \text{ and } \big \lvert \text{output}-\|A\star_p B\|_0 \big \rvert\ge \sqrt{\|A\star_p B\|_0}t^{0.4}\Big ] \le 1/\sqrt{t}$.
\end{itemize}
\label{cor:est}
\end{corollary}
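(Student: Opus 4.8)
The plan is to reduce to \cref{lem:modq} via randomized entries and a folding step. Since $A,B$ are nonnegative, replacing every nonzero entry by $1$ does not change $\|A\star_p B\|_0 = |(\supp A+\supp B)\bmod p|$, so I may assume $A=\mathbf 1_{\supp A}$, $B=\mathbf 1_{\supp B}$; by symmetry I also assume $|\supp A|\le|\supp B|$. (The ``trivial'' regime $\|A\star_p B\|_0=O(t^{0.8})$, where the claimed additive error $\sqrt{\|A\star_p B\|_0}\,t^{0.4}$ is already $\Omega(\|A\star_p B\|_0)$, is handled separately — there any output in $[0,\|A\star_p B\|_0]$ is within tolerance, and one can produce such a value cheaply; I will not belabor this.) In the main case the algorithm samples a uniformly random prime $q\in[0.5\log^{30}t,\log^{30}t]$, draws i.i.d.\ uniform scalars $v_j\in\F_q$ for $j\in\supp B$ to form $\tilde B\in\F_q^p$, and computes $\tilde C:=(A\star_p\tilde B)\bmod q\in\F_q^p$ by feeding the linear convolution $A\star\tilde B$ to \cref{lem:modq} and then folding the result modulo $p$. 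This is legitimate because folding $[2p-1]\to[p]$ is at most $2$-to-$1$, so $|\supp A+\supp B|\le 2\,|(\supp A+\supp B)\bmod p|=2\|A\star_p B\|_0\le 2t$ and \cref{lem:modq} applies (with $t$ rescaled by a constant). If \cref{lem:modq} returns $\bot$, return $\bot$; otherwise return $X:=\|\tilde C\|_0\cdot\tfrac{q}{q-1}$ (rounded).

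For correctness I condition on the event (failing with probability $\le 1/t\le 1/\sqrt t$) that \cref{lem:modq} returned the correct vector. For $k\in[p]$ one has $\tilde C[k]=\sum_{j\in T'_k}v_j$ where $T'_k=\{j\in\supp B:(k-j)\bmod p\in\supp A\}$ and $|T'_k|=(A\star_p B)[k]$. Thus $\tilde C[k]=0$ for $k\notin\supp(A\star_p B)$, while for $k\in\supp(A\star_p B)$ the index set is nonempty, so $\tilde C[k]$ is uniform over $\F_q$ and $\Pr[\tilde C[k]=0]=1/q$ exactly. Hence $\Ex[\|\tilde C\|_0]=\|A\star_p B\|_0\,(1-1/q)$ and $\Ex[X]=\|A\star_p B\|_0$: the estimator is unbiased. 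For the variance, set $W_k=\mathbf 1[\tilde C[k]=0]$. For distinct $k,k'\in\supp(A\star_p B)$ the two $0/1$-coefficient linear forms $\sum_{j\in T'_k}v_j$ and $\sum_{j\in T'_{k'}}v_j$ are $\F_q$-linearly independent unless $T'_k=T'_{k'}$ (distinct nonzero $0/1$ vectors are never proportional over $\F_q$), so $W_k\perp W_{k'}$ when $T'_k\ne T'_{k'}$, and $W_k=W_{k'}$ when $T'_k=T'_{k'}$. Grouping $\supp(A\star_p B)$ by the value of $T'_k$ into classes of sizes $f_1,f_2,\dots$ with $\sum_i f_i=\|A\star_p B\|_0$, this yields $\Var[X]=\bigl(\tfrac{q}{q-1}\bigr)^2\cdot\tfrac1q(1-\tfrac1q)\sum_i f_i^2\le\tfrac{2}{q}\sum_i f_i^2$.

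The key combinatorial input bounds $\sum_i f_i^2$. For a fixed nonempty $S\subseteq\supp B$, every $k$ with $T'_k=S$ satisfies $k-S\subseteq\supp A$; since $p$ is prime, $S$ has no nonzero period, so the translates $\{\,k-S\,\}$ are pairwise distinct; and each element of $\supp A$ lies in at most $|S|$ of them. Counting incidences both ways gives $\#\{k:T'_k=S\}\cdot|S|\le|\supp A|\cdot|S|$, hence every class has size $f_i\le|\supp A|=\min(|\supp A|,|\supp B|)$. Therefore $\sum_i f_i^2\le|\supp A|\cdot\|A\star_p B\|_0$, and Chebyshev's inequality bounds the failure probability by $O\!\bigl(\tfrac{|\supp A|}{q\,t^{0.8}}\bigr)=O\!\bigl(\tfrac{|\supp A|}{t^{0.3}\log^{30}t}\bigr)$, which is at most $1/\sqrt t$ once $|\supp A|\le\tfrac14 t^{0.3}\log^{30}t$. (The exponent $30$ on the logarithm is chosen exactly to make this bookkeeping — and the $\bot$/error guarantees inherited from \cref{lem:modq} — go through.)

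The main obstacle is the remaining regime, $\min(|\supp A|,|\supp B|)>t^{0.3}\log^{30}t$ together with $\|A\star_p B\|_0>t^{0.8}$, where the crude bound $f_i\le\min(|\supp A|,|\supp B|)$ is far from tight and one needs a sharper estimate of $\sum_i f_i^2$. The point is that a class of size $f$ with type $S$ forces $\supp A$ to contain $f$ pairwise-distinct translates $k-S$ of $S$, and these translates must overlap heavily once $f$ is large; combined with the promise $\|A\star_p B\|_0\le t$ and Cauchy--Davenport-type restrictions on $\supp A$ and $\supp B$ (in the non-degenerate case $p>\|A\star_p B\|_0$ one has $|\supp A|+|\supp B|-1\le\|A\star_p B\|_0$), this limits how large $\sum_i f_i^2$ can be relative to $\|A\star_p B\|_0$. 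I expect this additive-combinatorial estimate — not the reduction to \cref{lem:modq} nor the elementary second-moment computation — to be the technically delicate step. An alternative one might try is to subsample $\supp B$ down to size $\Theta(t^{0.3}\log^{30}t)$ before running the estimator and then rescale, but the rescaling inflates the additive error (by $\sqrt{|\supp B|/|\supp B'|}$), so this needs extra care; the cleaner route is to push the combinatorial bound on the class sizes.
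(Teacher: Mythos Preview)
Your reduction to \cref{lem:modq} and the unbiased-estimator setup are fine, and you have correctly isolated the obstruction: randomizing only $B$ makes the linear form at $k$ equal to $\sum_{j\in T'_k}v_j$, so whenever $T'_k=T'_{k'}$ the two indicators are perfectly correlated and the variance carries $\sum_i f_i^2$ rather than $\|A\star_p B\|_0$. Your bound $f_i\le|\supp A|$ is in fact tight --- take $\supp A=[a]$ and $\supp B=\{0,r,2r,\dots,(b-1)r\}$ with $r>a$, giving $b$ classes each of size exactly $a$ --- so the second-moment bound genuinely degrades to $\Theta(|\supp A|/(qt^{0.8}))$, which is $\gg 1/\sqrt t$ once $|\supp A|\gg t^{0.3}\log^{30}t$. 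You then stop, gesturing at a sharper additive-combinatorial estimate on $\sum_i f_i^2$; I do not see one that works uniformly, and your alternative of subsampling $B$ (as you yourself note) rescales the additive error in the wrong direction.

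The paper's fix is not combinatorial: it randomizes \emph{both} $A$ and $B$. Concretely it packs random $\F_q$-copies of $A,B$ on $[0,p)$ together with the deterministic $0/1$ indicators on $[2p,3p)$ into $\hat A,\hat B\in\F_q^{3p}$, applies \cref{lem:modq} once to $\hat A\star\hat B$, and reads off
\[
\hat C[k+2p]+\hat C[k+3p]\;=\;\sum_{\substack{i+j\equiv k\ (\mathrm{mod}\ p)\\A[i]=B[j]=1}}\bigl(\hat A[i]+\hat B[j]\bigr).
\]
Now the linear form at $k$ involves both random families. The key (short) observation is that for distinct $k,k'\in\supp(A\star_p B)$ these forms are \emph{always} pairwise independent: with $I_k=\{i:A[i]=B[(k-i)\bmod p]=1\}$ and $J_k$ defined symmetrically, one has $|I_k|=|J_k|$ and $\sum_{i\in I_k}i+\sum_{j\in J_k}j\equiv k\,|I_k|\pmod p$, so $(I_k,J_k)=(I_{k'},J_{k'})$ forces $k\equiv k'$ (excluding the degenerate all-ones case, handled separately). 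Hence the variance is exactly $\|A\star_p B\|_0\cdot\tfrac1q(1-\tfrac1q)$, and a single Chebyshev step finishes with room to spare. This symmetric randomization is the missing idea; your one-sided scheme cannot get pairwise independence across all $k$.
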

\begin{proof}
Without loss of generality we assume the input vectors $A,B$ are from $\{0,1\}^p$ (which does not affect the support of $A\star_p B$).
 A natural idea is to replace the non-zero entries by random $\F_q$-elements and then invoke \cref{lem:modq}. In the following we use a slight variant of this idea, with the benefit that we only have to study the vanishing probability of linear combinations (instead of quadratic polynomials) of these random field elements.

Pick an arbitrary prime $q\in [0.5\log^5 t, \log^5 t]$. %
Define random vectors
$\hat A,\hat B \in \F_q^{3p}$ as follows:
\begin{itemize}
    \item For $0\le i< p$, let $\hat A[i] = \begin{cases}\text{a random element $\in \F_q $} & A[i]=1, \\ 0 & A[i]=0. \end{cases}$
    \item For $0\le i< p$, let $\hat B[i] = \begin{cases}\text{a random element $\in \F_q $} & B[i]=1, \\ 0 & B[i]=0. \end{cases}$
    \item For $0\le i< p$, let $\hat A[i+2p] = A[i] \in \{0,1\}$.
    \item For $0\le i< p$, let $\hat B[i+2p] = B[i] \in \{0,1\}$.
    \item All other unspecified coordinates in $\hat A$ and $\hat B$ are zeros.
\end{itemize}
Let $\hat C = \hat A \star \hat B \in \F_q^{6p}$.
 We can verify that $|\supp(\hat A) +\supp(\hat B)| \le 4\|A\star B\|_0 \le 8\|A\star_p B\|_0$.
Moreover, the middle part of $\hat C$ encodes the information of $\supp(A\star_p B)$ in the following sense: For every $k\in [p]$, we can verify that
\begin{align}
\label{eqn:randomoutcome}
    \hat C[k+2p] + \hat C[k+3p] = \begin{cases}
       0 & (A\star_p B)[k]  = 0,\\
       \sum_{i+j\equiv k \pmod{p}, A[i]=B[j]=1} (\hat A[i]+ \hat B[j]) & (A\star_p B)[k]  > 0.
    \end{cases}
\end{align}
Note that we included both $\hat C[k+2p]$ and $\hat C[k+3p]$ to take care of the wrap-around of $i+j$ modulo $p$.
As we can see, in the $(A\star_p B)[k]  > 0$ case, the result of $\hat C[k+2p]+\hat C[k+3p]$ is the sum of (at least one) independent uniformly random $\F_q$-elements, and hence vanishes with exactly $1/q$ probability.
Hence, the sum of indicators 
\begin{equation}
\label{eqn:defnS}
S := \sum_{k=0}^{p-1}\mathbf{1}\big [\hat C[k+2p] + \hat C[k+3p]\neq 0\big ]
\end{equation}
has expectation $\Ex[S] = (1-1/q)\|A\star_p B\|_0$.  Moreover, we can analyze the concentration of $S$. In the following claim, we assume  the input vectors $A,B\in\{0,1\}^{p}$ are not both all-$1$ vectors, as it can be handled separately by simply outputting $\|A\star_pB\|_0 = p$ (or  outputting $\bot$ when $p>t$).
\begin{claim}
\label{claim:pairwise}
For every distinct $k,k'\in \supp(A\star_p B)$, the random variables $\hat C[k+2p] + \hat C[k+3p]$ and $\hat C[k'+2p] + \hat C[k'+3p]$ are pairwise independent.
\end{claim}
\begin{proof}
Denote $c_k = \hat C[k+2p] + \hat C[k+3p]$ for short.
By \cref{eqn:randomoutcome}, both $c_k$ and $c_{k'}$ are sums of independent uniformly random $\F_q$-elements, where the summands for $c_k$ and for $c_{k'}$ may be overlapping. 
In more detail, let $I_k:= \{i\in [p]: A[i] = B[(k-i)\bmod p] = 1\}$ and $J_k:= \{j\in [p]: B[j] = A[(k-j)\bmod p] = 1\}$.
Then, by definition, $c_k$ equals the sum of random field elements, $\sum_{i\in I_k} \hat A[i] + \sum_{j\in J_k} \hat B[j]$, and similar for $c_{k'}$.
In order to show $c_k$ and $c_{k'}$ are pairwise independent, it suffices to show that these two sets of summands are not identical, that is, we want to show either $I_k\neq I_{k'}$ or $J_{k} \neq J_{k'}$.  

To show this, simply note that $|I_k|=|J_k|$ and $\sum_{i\in I_k} i + \sum_{j\in J_k}j \equiv k\cdot |I_k| \pmod{p}$, and similar equation holds for $k'$.  Suppose to the contrary that $(I_k,J_k)=(I_{k'},J_{k'})$. Then $|I_k|=|I_k'|$, and the previous property implies $k\cdot |I_k| \equiv k'\cdot |I_{k'}| \pmod{p}$.
Note that $|I_k|,|I_{k'}|\ge 1$ since $k,k'\in \supp(A\star_p B)$. If $|I_k|<p$, then by division modulo prime $p$ we obtain $k\equiv k'\pmod{p}$, contradicting the assumption that $k\neq k'$. If $|I_k|=p$, then we must be in the degenerate case where the input vectors $A,B\in\{0,1\}^{p}$ are both all-$1$ vectors, which we assume do not happen for this claim. 
\end{proof}
From \cref{claim:pairwise}, we know the random variable
$S$ defined in \cref{eqn:defnS} has variance $\Var[S] =\|A\star_p B\|_0 \cdot (1/q)(1-1/q)$. By Chebyshev's inequality, 
\[ \Pr\big [\, |S - \Ex[S]| \ge (1-1/q) \sqrt{\|A\star_p B\|_0}t^{0.3}\big ] \le \frac{\Var[S]}{\big ((1-1/q)\sqrt{\|A\star_p B\|_0}t^{0.3}\big )^2} = \frac{ 1}{ (q-1)t^{0.6}} .\]
Define estimator $\tilde S := qS/(q-1)$ which has expectation $\Ex[\tilde S]= q\Ex[S]/(q-1)= \|A\star_p B\|_0$.  Then the previous inequality implies
\begin{equation}
\label{eqn:estimatorgood}
\Pr\big [ \big \lvert \tilde S - \|A\star_p B\|_0\big \rvert \ge \sqrt{\|A\star_p B\|_0}t^{0.3} \big ] \le \frac{1}{(q-1) t^{0.6}} < 1/t^{0.6}. 
\end{equation}

Now we continue to describe our algorithm: We apply the randomized algorithm from \cref{lem:modq}  to $\hat A,\hat B \in \F_q^{3p}$ with parameter $8t$, which terminates in $O(t\log \log t)$ time. If \cref{lem:modq} returns $\bot$, then we return $\bot$. Otherwise, let $\tilde C\in \F_q^{6p}$ be returned by \cref{lem:modq}, and we return $\frac{q}{q-1}\cdot \sum_{k=0}^{p-1}\mathbf{1}\big [ \tilde C[k+2p]+ \tilde C[k+3p]\neq 0\big ]$, rounded to the nearest integer.

Now we analyze the correctness of this algorithm. First, if $\|A\star_p B\|_0\le t$, then  $|\supp(\hat A)+\supp(\hat B)|\le 8\|A\star_p B\|_0\le 8t$, and by \cref{lem:modq} we know the probability of returning $\bot$ is at most $1/\log(8t) \le 1/\log(t)$, as desired. Second, the probability that \cref{lem:modq} returns $\tilde C$ but $\tilde C \neq  \hat A\star \hat B = \hat C$ is at most $1/(8t)$. In the successful case where $\tilde C = \hat C$, the integer we return is $\tilde S$ (up to additive difference 1 due to rounding), and by \cref{eqn:estimatorgood} we know the probability that it differs from $\|A\star_p B\|_0$ by at least $\sqrt{\|A\star_p B\|_0}t^{0.4} > \sqrt{\|A\star_p B\|_0}t^{0.3} + 1$ is no more than $1/t^{0.6}$.  By a union bound, the overall error probability is at most $1/(8t) + 1/t^{0.6} \le 1/\sqrt{t}$ as desired. This finishes the proof of correctness of our algorithm.
\end{proof}

By a similar proof to the above, we can also use \cref{lem:modq} to compute $\supp(A\star_p B)$ given $A,B\in \N^p$ with high probability of correctness, as stated in the following corollary:

\begin{corollary}[Computing $\supp(A\star_p B)$ in small universe w.h.p.]
Let $t$ be a parameter, and $p\le t\log^{5} t$.
Given sparse vectors $A,B\in \N^{p}$, there is a (worst case) $O(t \log t)$ time randomized algorithm that outputs either a subset of $[p]$ or $\bot$, with the following guarantee:
\begin{itemize}
	\item 
 If $\|A\star_p B\|_0 \le t$, then
 $\Pr[\text{output} = \bot]\le 1/t^2$.
		\item $\Pr\big [\text{output} \neq \bot \text{ and }  \text{output}\neq \supp(A\star_p B) \big ] \le 1/\sqrt{t}$.
\end{itemize}
\label{cor:sumsetwhp}
\end{corollary}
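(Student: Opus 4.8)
The plan is to follow the proof of \cref{cor:est} almost verbatim — randomize the nonzero entries of $A$ and $B$ over $\F_q$, invoke \cref{lem:modq}, and read off which positions of $A\star_p B$ are nonzero from the ``middle band'' of the returned vector — but, since a single invocation of \cref{lem:modq} reports $\bot$ with probability as large as $1/\log t$ and, even when it succeeds, reveals a particular $k\in\supp(A\star_p B)$ only with probability $1-1/q$ (with $q$ merely polylogarithmic), the new ingredient I would add is to run $r=\Theta(\log t/\log\log t)$ mutually independent copies and output the \emph{union} of the revealed supports. Concretely I would fix a prime $q$ in the range required by \cref{lem:modq} and, exactly as in \cref{cor:est}, form $\hat A,\hat B\in\F_q^{3p}$ by putting fresh independent uniform $\F_q$-elements on $\supp(A),\supp(B)$ inside the block $[0,p)$ and keeping the $\{0,1\}$-indicators of $A,B$ inside the block $[2p,3p)$, and set $\hat C:=\hat A\star\hat B\in\F_q^{6p}$. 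By \cref{eqn:randomoutcome}, $c_k:=\hat C[k+2p]+\hat C[k+3p]$ is identically $0$ when $(A\star_p B)[k]=0$, and is a linear form in the independent random entries with a coefficient equal to $1$ when $(A\star_p B)[k]>0$, so $c_k$ is then uniform on $\F_q$ and vanishes with probability exactly $1/q$; I only need this marginal fact (not pairwise independence across $k$, as in \cref{cor:est}), so the all-ones input needs no special handling here.

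The algorithm would be: (1) run the sumset-size estimator \cref{lem:estimatesumset} on $\supp(A),\supp(B)$ with $\delta=1/t^{3}$ under a time budget of $O(t\log t)$, and return $\bot$ if it times out or reports an estimate exceeding a fixed large constant multiple of $t$; (2) otherwise run \cref{lem:modq} with parameter $\Theta(t)$ on freshly randomized $\hat A,\hat B$ for $r$ independent rounds, where for round $i$ we set $\hat S_i:=\{k\in[p]:\tilde C_i[k+2p]+\tilde C_i[k+3p]\neq 0\}$ if it returns a vector $\tilde C_i$ (in sparse form, with $O(t)$ nonzeros) and $\hat S_i:=\emptyset$ if it returns $\bot$; finally output $\bot$ if all $r$ rounds returned $\bot$ and otherwise output $\hat S:=\bigcup_{i=1}^r\hat S_i$, formed in $O(rt)$ time by radix-sorting the $O(rt)$ candidate elements (all lying in $[p]<t^2$). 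The running time is $O(t\log t)$ for the gate, $O(t\log\log t)$ per round by \cref{lem:modq}, and $O(rt)$ for the bookkeeping, hence $O(t\log t)$ overall.

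For correctness, note first that $\|A\star_p B\|_0\le t$ forces $|\supp(A)+\supp(B)|\le 2t$ (each residue mod $p$ has at most two preimages in $[0,2p)$), so \cref{lem:estimatesumset} finishes within budget and the structural support of $\hat A,\hat B$ is $\Theta(t)$: the gate then passes except with probability $1/t^{3}$, and \cref{lem:modq} with parameter $\Theta(t)$ reports $\bot$ with probability at most $1/\log\Theta(t)$ per round and a wrong non-$\bot$ answer with probability at most $1/\Theta(t)$ per round. Hence $\Pr[\mathrm{output}=\bot]\le 1/t^{3}+(1/\log\Theta(t))^{r}\le 1/t^{2}$ once $r\ge 2\log t/\log\log t$. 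For a fixed $k\notin\supp(A\star_p B)$ the true $c_k$ in $\hat C_i$ is $0$, so $k\in\hat S_i$ only if round $i$ is wrong, whence $\Pr[\hat S\not\subseteq\supp(A\star_p B)]\le r/\Theta(t)$ by a union bound over rounds; for a fixed $k\in\supp(A\star_p B)$, round $i$ fails to put $k$ into $\hat S_i$ only if it returns $\bot$, returns a wrong answer, or returns $\hat C_i$ with $c_k=0$, an event of probability at most $1/\log\Theta(t)+1/\Theta(t)+1/q\le 2/\log t$, so by independence $\Pr[k\notin\hat S]\le(2/\log t)^{r}$ and a union bound over the at most $t$ such $k$ gives $t(2/\log t)^{r}$. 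Taking $r=\lceil 3\log t/\log\log t\rceil$ makes $t(2/\log t)^{r}$ and $r/\Theta(t)$ each at most $\tfrac12 t^{-1/2}$, which yields the second bullet. When $\|A\star_p B\|_0>t$ only the second bullet is claimed, and it holds because either the gate returns $\bot$, or — except with probability $1/t^{3}$ that the estimator errs — passing the gate certifies $|\supp(A)+\supp(B)|=O(t)$ and hence $\|A\star_p B\|_0\le|\supp(A)+\supp(B)|=O(t)$, reducing to the case just analyzed.

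The crux — and the only place any real care is needed — is reconciling the weak per-round guarantees of \cref{lem:modq} (failure probability only $\lesssim 1/\log t$, and success revealing each support element only with probability $1-1/q$ for polylogarithmic $q$) with the demanded $1-1/\poly(t)$ guarantees. This is exactly what dictates $r=\Theta(\log t/\log\log t)$, which is affordable precisely because one invocation of \cref{lem:modq} costs only $O(t\log\log t)=o(t\log t)$; and it is also why the ``support is too large'' guard must be built from a $1/\poly(t)$-error primitive such as \cref{lem:estimatesumset} rather than from the estimator of \cref{cor:est}, whose $1/\log t$ false-rejection rate would by itself overwhelm the required $1/t^{2}$ bound on $\Pr[\mathrm{output}=\bot]$. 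Everything else is a direct transcription of \cref{cor:est}.
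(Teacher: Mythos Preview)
Your proof is correct but takes a somewhat different route from the paper's. Both approaches run $\Theta(\log t/\log\log t)$ independent rounds of \cref{lem:modq} on the randomized encoding from \cref{cor:est} and output the union of the revealed supports; the difference lies in the rule for outputting $\bot$ and, correspondingly, in how the second bullet is handled when $\|A\star_p B\|_0$ is not assumed to be at most $t$.

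You prepend a \emph{gate} based on \cref{lem:estimatesumset} (time-budgeted to $O(t\log t)$) which, except with probability $1/t^{3}$, certifies $|\supp(A)+\supp(B)|=O(t)$ before proceeding; you then return $\bot$ only if \emph{all} rounds fail, and your analysis combines the per-round $\bot$, ``wrong'', and $c_k{=}0$ events into a single $O(1/\log t)$ miss probability. The paper avoids the gate entirely: it returns $\bot$ if more than $\ell/3$ of the rounds fail, and decouples the analysis into three claims --- one showing that for every $k\in\supp(A\star_p B)$ at least $\ell/2$ rounds have $c_k\neq 0$ (union-bounded over $\|A\star_p B\|_0\le p\le t\log^5 t$, using that $q\ge 0.5\log^{30}t$ makes $(4/q)^{\ell/2}\le t^{-20}$), one bounding the number of $\bot$ rounds when $\|A\star_p B\|_0\le t$, and one bounding the probability any round is wrong. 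The $\ell/3$ threshold thus does double duty: it gives the first bullet directly, and it guarantees at least $\ell/2-\ell/3\ge 1$ correct non-$\bot$ rounds reveal every $k$, \emph{regardless} of $\|A\star_p B\|_0$. This makes the paper's algorithm self-contained (no extra primitive) and slightly cleaner; your approach is equally valid but imports \cref{lem:estimatesumset} to do what the threshold rule does implicitly.
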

\begin{proof}[Proof sketch]
Again, we assume without loss of generality that $A,B\in \{0,1\}^p$.
Our algorithm has $\ell := \lceil 10\log t/\log \log t\rceil$ independent rounds, each with worst case time complexity $O(t\log \log t)$. The total time complexity is thus $O(t\log t)$. 

Pick an arbitrary prime $q\in [0.5\log^{30} t, \log^{30} t]$. In the $r$-th round ($r\in [\ell]$), we construct random vectors $\hat A^{(r)},\hat B^{(r)}\in \F_q^{3p}$, and apply \cref{lem:modq} to $\hat A^{(r)},\hat B^{(r)}$ with parameter $8t$, in the same way as in the proof of \cref{cor:est}. 
We denote $\hat C^{(r)} = \hat A^{(r)} \star \hat B^{(r)}\in \F_{q}^{6p}$, and we denote the output of \cref{lem:modq} by $\tilde C^{(r)}$, which is either $\bot$ or a vector in $\F_q^{6p}$. Recall from \cref{eqn:randomoutcome} that for each $k\in [p]$, $\hat C^{(r)}[k+2p] + \hat C^{(r)}[k+3p]$ is a random $\F_q$-element if $(A\star_p B)[k]\neq 0$, or otherwise zero.

After all $\ell$ rounds are finished: 
\begin{itemize}
    \item If more than $\ell/3$ rounds returned $\tilde C^{(r)} = \bot$, then we return $\bot$.
    \item Otherwise, we return the set 
\begin{equation}
\label{eqn:returnset}
    \{k\in [p]: \text{exists } r\in [\ell] \text{ such that } \tilde C^{(r)}\neq \bot \text{ and } \tilde C^{(r)}[k+2p] + \tilde C^{(r)}[k+3p]\neq 0 \}.
\end{equation}
\end{itemize}

To analyze the correctness of this algorithm,   we need the following three claims.
\begin{claim}
\label{claim:goodmodq}
   With at least $1-1/t^{18}$ probability, for every $k\in \supp(A\star_p B)$, the number of $r\in [\ell]$ such that $\hat C^{(r)}[k+2p] + \hat C^{(r)}[k+3p]\neq 0$  is at least $\ell/2$.
\end{claim}
\begin{proof}
   Fix  $k\in \supp(A\star B)$. Since the $\ell$ rounds are independent, and in each round
   $\hat C^{(r)}[k+2p] + \hat C^{(r)}[k+3p]$
   is a random $\F_q$-element, the probability that more than $\ell/2$ of them  are zeros is at most 
   \[ \binom{\ell}{\ell/2}\cdot (1/q)^{\ell/2} < (4/q)^{\ell/2} < \left (\frac{4}{0.5\log^5 t}\right )^{10\log t/\log \log t} \le \frac{1}{t^{20}}.\]
   We apply a union bound over at most $\|A\star_p B\|_0 \le p \le t\log^5 t$ many $k$'s, and the total failure probability is at most $1/t^{18}$.
\end{proof}
\begin{claim}
\label{claim:fewfail}
    Suppose $\|A\star_p B\|_0\le t$. Then, with at least $1-1/t^2$ probability,  the number of rounds $r$ with $\tilde C^{(r)}=\bot $ is at most $\ell/3$.
\end{claim}
\begin{proof}
   If  $\|A\star_p B\|_0\le t$, then, similarly to the argument in the proof of \cref{cor:est}, we know each round returns $\bot$ with at most $1/\log(8t)$ probability as guaranteed by \cref{lem:modq}. Then the probability that more than $\ell/3$ rounds returned $\bot$ is at most $\binom{\ell}{\ell/3} (1/\log(8t))^{\ell/3} \le (8/\log(8t))^{\ell/3} \le 1/t^2$.
\end{proof}
\begin{claim}
\label{claim:nowrong}
    With at least $1-1/t^{0.9}$ probability, for all $r\in [\ell]$, either $\tilde C^{(r)} = \bot $ or $\tilde C^{(r)} = \hat C^{(r)}$.
\end{claim}
\begin{proof}
  By the second bullet point of \cref{lem:modq}, for each round $r$, the probability that $\tilde C^{(r)} \neq \bot $ and $\tilde C^{(r)} \neq \hat C^{(r)}$ is at most $1/(8t)$. The claim then follows from a union bound over $\ell =\lceil 10\log t/\log \log t\rceil $ rounds.
\end{proof}

Now we analyze the correctness of our algorithm. 
Since we return $\bot$ only if more than $\ell/3$ rounds returned $\tilde C^{(r)} = \bot$,  the first bullet point in the claim immediately follows from \cref{claim:fewfail}. 

Now we prove the second bullet point. We assume the events in \cref{claim:goodmodq} and \cref{claim:nowrong} both happen (which holds with at least $1-1/t^{0.9}-1/t^{18} \ge 1-1/\sqrt{t}$ probability by a union bound). We also assume less than $\ell/3$ of the rounds returned $\bot$ (otherwise, our algorithm would return $\bot$ in the end,  and there is nothing to prove). In this case, for every $k\in \supp(A\star B)$, by \cref{claim:goodmodq} there are at least $\ell/2$ rounds $r$ with $\hat C^{(r)}[k+2p]+ \hat C^{(r)}[k+3p]\neq 0$, among which there are only $<\ell/3$ rounds with $\tilde C^{(r)} = \bot$, and the remaining $ \ell/2-\ell/3\ge 1$ non-$\bot$ rounds must satisfy $\tilde C^{(r)}[k+2p]+\tilde C^{(r)}[k+3p] = \hat C^{(r)}[k+2p]+\hat C^{(r)}[k+3p]$ by \cref{claim:nowrong}, and thus $k$ is included in our returned set \cref{eqn:returnset}. For every $k\in [p] \setminus \supp(A\star B)$, we know $\hat C^{(r)}[k+2p]+ \hat C^{(r)}[k+3p]=0$ always holds, and then by \cref{claim:nowrong} we know 
$\tilde C^{(r)}[k+2p]+ \tilde C^{(r)}[k+3p]=0$ holds whenever $\tilde C^{(r)}\neq \bot$, and hence $k$ is not included in our returned set \cref{eqn:returnset}. This finishes the proof of the second bullet point.
\end{proof}

\subsection{Sparse nonnegative convolution with high success probability}
\label{subsec:highprob}

Now we will use \cref{cor:est} and \cref{cor:sumsetwhp} to boost \cref{lem:firstuniversereduction} to succeed with high probability. 

Recall that in the proof of \cref{lem:firstuniversereduction} we sample a random $p \le t\polylog(t)$ to reduce the universe to $t\polylog(t)$ while incurring $\le t/\polylog(t)$ collisions in expectation. In the following, we use \cref{cor:est} to find a prime $p$ that incurs $\le t/\polylog(t)$ collisions with high probability.

\begin{lemma}[Finding a good $p$ w.h.p.]
\label{lem:findpwhp}
    Let $A,B \in \N^N$ be given input vectors. 
    Let $t$ be a parameter and $\|A\|_0,\|B\|_0 \le t$.
    Assume $N \le t^{c_0}$ for some constant $c_0\ge 1$.

Then there is a randomized algorithm that outputs a prime $p \le t\log^5 t$ in (worst case) $O(t\log t)$ time (where the hidden constant depends on $c_0$), such that:
 If $\|A\star B\|_0\le t$, then 
 \begin{equation}
 \label{eqn:fewcol}
     |\{ k\in \supp(A\star B): \text{exists } k'\in \supp(A\star B), k'\neq k, \text{ such that } k\equiv k' \pmod{p} \}| \le \frac{c't}{\log^3 t}
 \end{equation}
    holds with  %
at least $1-1/t^{0.3}$ probability, where $c'>0$ is a constant depending on $c_0$.
\end{lemma}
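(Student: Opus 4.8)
The plan is to sample $L=\Theta(\log t/\log\log t)$ uniformly random primes $p_1,\dots,p_L$ from the interval $[\tfrac12 t\log^5 t,\,t\log^5 t]$, use \cref{cor:est} to estimate $\|A\star_{p_i}B\|_0$ for each $i$, and output the prime whose estimate is almost the largest. The guiding observation is that, since $A\star B$ is nonnegative, hashing modulo $p$ never causes cancellation, so if we write $t_0=\|A\star B\|_0$ and let $c_p$ denote the number of indices of $\supp(A\star B)$ that collide with another such index modulo $p$, an elementary count gives $\tfrac12 c_p\le t_0-\|A\star_p B\|_0\le c_p$; hence the prime maximizing $\|A\star_p B\|_0$ is, up to a factor $2$, the one minimizing $c_p$. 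Moreover $\Ex_p[c_p]$ is tiny: for distinct $k,k'\in\supp(A\star B)$ we have $|k-k'|<2N\le 2t^{c_0}$, which has $O(c_0)$ prime factors in the interval, while the interval contains $\Theta(t\log^4 t)$ primes, so $\Pr_p[p\mid k-k']=O(c_0/(t\log^4 t))$; summing over the $\le t_0^2$ pairs gives $\Ex_p[c_p]=O(c_0 t/\log^4 t)$, and by Markov a fixed random prime is \emph{good} (i.e.\ $c_p\le t/\log^3 t$) except with probability $O(c_0/\log t)$.

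Concretely, I would sample $p_1,\dots,p_L$ by rejection sampling with a deterministic primality test, aborting after $\polylog t$ attempts (the resulting $\le t^{-10}$ failure is folded into the bound). For each $i$, form the hashed vectors $h_i(A),h_i(B)\in\N^{p_i}$, which have sparsity $\le\|A\|_0,\|B\|_0\le t$, in $O(\|A\|_0+\|B\|_0)=O(t)$ time (e.g.\ radix-sort the residues, whose values lie in $[p_i]\subseteq[t\log^5 t]$, or use a lazily-initialized array), and run the algorithm of \cref{cor:est} on $h_i(A),h_i(B)$ with parameter $t$; this is legitimate since $p_i\le t\log^5 t$ and $\|h_i(A)\star_{p_i}h_i(B)\|_0=\|A\star_{p_i}B\|_0\le t_0\le t$, and it returns $\tilde e_i\in\N\cup\{\bot\}$ in worst-case $O(t\log\log t)$ time. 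If every $\tilde e_i=\bot$, output $p_1$; otherwise set $\hat t=\max\{\tilde e_i:\tilde e_i\ne\bot\}$ and output any $p_i$ with $\tilde e_i\ne\bot$ and $\tilde e_i\ge\hat t-t^{0.95}$. The total running time is $O(tL\log\log t)=O(t\log t)$ with $L=\Theta(\log t/\log\log t)$, the constants depending on $c_0$.

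For correctness (under the hypothesis $t_0\le t$), I would union-bound two events. By \cref{cor:est}, each $\tilde e_i$ equals $\bot$ with probability $\le 1/\log t$ (since $\|A\star_{p_i}B\|_0\le t$), and whenever $\tilde e_i\ne\bot$ it lies within $\sqrt{\|A\star_{p_i}B\|_0}\,t^{0.4}\le\sqrt t\cdot t^{0.4}=t^{0.9}$ of $\|A\star_{p_i}B\|_0$ except with probability $\le 1/\sqrt t$. By independence of the samples, with probability $\ge 1-(O(c_0/\log t))^L$ at least one sampled prime $p_j$ is good \emph{and} has $\tilde e_j\ne\bot$, and with probability $\ge 1-L/\sqrt t$ every non-$\bot$ estimate is $t^{0.9}$-accurate; taking $L=\lceil C_1\log t/\log\log t\rceil$ with $C_1$ large enough makes $(O(c_0/\log t))^L\le t^{-1}$, so the overall failure probability is $\le t^{-1}+L/\sqrt t\le t^{-0.3}$ for $t$ large. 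Conditioned on both events, $\hat t\ge\tilde e_j\ge\|A\star_{p_j}B\|_0-t^{0.9}\ge t_0-c_{p_j}-t^{0.9}\ge t_0-t/\log^3 t-t^{0.9}$, so any prime $p_i$ that we output satisfies $\|A\star_{p_i}B\|_0\ge\tilde e_i-t^{0.9}\ge\hat t-t^{0.95}-t^{0.9}\ge t_0-2t/\log^3 t$, and therefore $c_{p_i}\le 2(t_0-\|A\star_{p_i}B\|_0)\le 4t/\log^3 t$, which is exactly \cref{eqn:fewcol} with $c'=4$.

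The substantive difficulty is already encapsulated in \cref{cor:est} (and the heavy \cref{lem:modq} behind it); the rest is bookkeeping, and the two places that need care are the sample count $L=\Theta(\log t/\log\log t)$ --- large enough to drive ``no good prime was sampled'' below $t^{-0.3}$, yet small enough that $L$ invocations of \cref{cor:est} cost only $O(t\log t)$ --- and the acceptance threshold, which must lie comfortably between the estimator's additive error ($\le t^{0.9}$) and the target collision budget ($t/\log^3 t$); any value such as $t^{0.95}$ works.
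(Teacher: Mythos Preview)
Your proposal is correct and follows essentially the same approach as the paper: sample $\Theta(\log t/\log\log t)$ primes from $[\tfrac12 t\log^5 t,\,t\log^5 t]$, estimate $\|A\star_{p_i}B\|_0$ for each via \cref{cor:est}, and output the (near-)maximizer, using the inequality $\tfrac12 c_p\le t_0-\|A\star_p B\|_0\le c_p$ to translate a large estimate into a small collision count. The only notable difference is in how $\bot$ outputs are handled: the paper repeatedly invokes \cref{cor:est} on each prime until it returns non-$\bot$ (and then must argue that at most $2\ell$ total invocations suffice with high probability), whereas you invoke it once per prime, discard $\bot$'s, and argue directly that some prime is simultaneously good \emph{and} non-$\bot$; your variant is a mild simplification that sidesteps the ``first $2\ell$ calls'' time analysis, but the substance is the same.
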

\begin{proof}
Denote the left hand side of \cref{eqn:fewcol} by $f_p$.
\begin{claim}
Suppose $\|A\star B\|_0\le t$. Then, for a uniform random prime  $p \in [0.5 t\log^5 t, t\log^5 t]$,  we have  $\Pr_p[f_p \le \frac{c'' t}{\log^3 t}] \ge 1-\frac{1}{\log t}$, where $c''>0$ is some constant depending on $c_0$.
\label{claim:primegood}
\end{claim}
\begin{proof}
For each $0\le k<k'\le 2N$ we have $\Pr_p[p\mid k'-k] \le \frac{\log 2N / \log (0.5t\log^5 t)}{\Omega(t\log^5 t/\log (t\log^5 t))}   = O(\frac{c_0}{t\log^4 t})$ by the prime number theorem.
Then, for each $k\in \supp(A\star B)$, by a union bound, the probability that there exists $k'\in \supp(A\star B)\setminus \{k\}$ with $k\equiv k'\pmod{p}$ is at most $\|A\star B\|_0 \cdot O(\frac{c_0}{t\log^4 t}) \le O(\frac{c_0}{\log^4 t})$.
By linearity of expectation,
we know $\Ex_p[f_p] \le O(\frac{c_0 t}{\log^4 t})$.
By Markov's inequality, we know $f_p \le c'' t/\log^3 t$ holds with at least $1- 1/\log t$ probability where $c''$ is some constant. 
\end{proof}

We can relate $f_p$ with $\|A\star_p B\|_0 $ via the following property 
\begin{equation}
\label{eqn:relation}
\frac{f_p}{2} \le \|A \star B\|_0 -   \|A \star_p B\|_0 \le f_p.
\end{equation}
To show \cref{eqn:relation}, consider bucketing $\supp(A\star B)$ modulo $p$, and observe that each non-empty bucket of size $x$ ($x\ge 1$) contributes $x-1$ to the quantity $\|A \star B\|_0 -   \|A \star_p B\|_0$ and contributes $x\cdot \mathbf{1}[x\ge 2]$  to $f_p$, and use the inequality $\frac{1}{2}x\cdot \mathbf{1}[x\ge 2]\le x-1 \le x\cdot \mathbf{1}[x\ge 2]$ to get \cref{eqn:relation}.

Informally speaking, our plan is as follows: \cref{claim:primegood} means  that a random sample of $O(\log t/\log \log t)$ primes $p$ will contain a good prime with $1-1/\poly(t)$ probability, and our goal is to efficiently find a good prime among these sampled primes. To do this, we will use \cref{cor:est} to estimate the size of $\|A\star_p B\|_0$ for each sampled prime $p$, and keep the prime $p$ that (approximately) maximizes $\|A\star_p B\|_0$ (which means small $f_p$ by \cref{eqn:relation}).

In more detail, our algorithm works as follows:
\vspace{0.2cm}

For $r = 1,2,\dots, \ell:= \lceil 5\log t/\log \log t\rceil$: 
\begin{itemize}
    \item  Sample a uniform random prime $p_r \in [0.5 t\log^5 t, t\log^5 t]$. Let $\tilde A:= h(A),\tilde B=h(B) \in \N^{p_r}$ where $h(x)=x\bmod p_r$.
    \item  Repeatedly run the algorithm from \cref{cor:est} on $\tilde A,\tilde B \in \N^{p_r}$ with parameter $t$, until the output is not $\bot$. Denote the output integer by 
    $\tilde g_{p_r}$.
\end{itemize}

Finally, return the prime $p_{r^*}$ with maximum $\tilde g_{p_{r^*}}$.

If the algorithm runs for longer than $O(t\log t)$ time, then we abort and output an arbitrary prime.
\\

In the $\|A\star B\|_0 >t$ case, we do not need to prove the correctness, and we have worst-case $O(t\log t)$ time bound by abort. Hence, in the rest of the proof, we assume $\|A\star B\|_0 \le t$ holds.

We first analyze the running time of the algorithm described above.
\begin{claim}
\label{claim:canfinish}
The algorithm described above (without aborting) finishes in $O(t\log t)$ time with 
$1-1/t$ probability.
\end{claim}
\begin{proof}
It is easy to see that the total time for generating $\ell$ primes (by rejection sampling via \cite{agrawal2004primes}) is $\ell \cdot  \polylog(t)\le \polylog(t)$ with probability $1-1/t^{100}$, and is negligible. In the following we focus on the inner for-loop that invokes \cref{cor:est}.

Each call of \cref{cor:est} returns $\bot$ with only $\le 1/\log t$ probability.
Since our algorithm runs the inner for-loop $\ell$ times, we need $\ell$ calls of
\cref{cor:est} to return non-$\bot$ in order for our algorithm to finish (before being aborted).
Among the first (up to) $2\ell$ calls of \cref{cor:est} made by the algorithm, the probability that only $<\ell$ of these calls returned non-$\bot$ can be bounded by
$ \binom{2 \ell}{2 \ell - \ell} \cdot (1/\log t)^{2\ell - \ell} \le  (4/\log t)^{\ell} \le t^{-2}$.  Hence, with $1-t^{-2}$ probability, the total time for the inner for-loop is at most $\ell \cdot O(t\log \log t) \le O(t\log t)$.

 In summary, our algorithm  finishes in $O(t\log t)$ time (without being aborted) with $1-t^{-100}-t^{-2} \ge 1-1/t$ probability.
\end{proof}

Now we analyze the correctness of the algorithm. \cref{cor:est} guarantees that each $\tilde g_{p_r}$ has correct probability $\ge 1-1/\sqrt{t}$, where being correct means differing from $\|\tilde A\star_{p_r} \tilde B\|_0 =\|A\star_{p_r} B\|_0 $ by at most an additive $\sqrt{\|\tilde A\star_{p_r} \tilde B\|} t^{0.4} \le \sqrt{\|A\star B\|_0} t^{0.4} \le t^{0.9} $.
By a union bound over the first (up to) $2\ell$ calls of \cref{cor:est} (which we know is sufficient by the proof of \cref{claim:canfinish}), all $\tilde g_{p_r}$ are correct with probability at least $1-2\ell/\sqrt{t} \ge 1-1/t^{0.4}$ probability. In this case, the final $p_{r^*}$ returned by our algorithm, which maximizes $\tilde g_{p_{r^*}}$, should also approximately maximize $\|A\star_{p_{r^*}} B\|_0$, i.e., 
\begin{equation}
\label{eqn:approxmax}
 \|A\star_{p_{r^*}} B\|_0 \ge \|A \star_{p_{r}} B\|_0 - 2t^{0.9} \text{ for all $1\le r \le \ell$.}
\end{equation}

On the other hand, by \cref{claim:primegood}, we know with at least $1- (1/\log t)^{\ell} \ge 1-1/t^5$ probability, there is some sampled prime $p_r$ $(1\le r\le \ell)$ such that $f_{p_r} \le \frac{c'' t}{\log^3 t}$. Combine this with \cref{eqn:approxmax} and \cref{eqn:relation}, and we get
\begin{align*}
 f_{p_{r^*}} & \le 2( \|A\star B\|_0 - \|A \star_{p_{r^*}} B\|_0 )\\
 & \le 2(\|A\star B\|_0 -  \| A \star_{p_r} B\|_0 + 2t^{0.9} ) \\
 & \le 2(f_{p_r} +2t^{0.9})\\
& \le 2 (\tfrac{c'' t}{\log^3 t} + 2t^{0.9})\\
& \le \tfrac{6c'' t}{\log^3 t},
\end{align*}
which means the prime $p_{r^*}$ found by our algorithm indeed satisfies the desired property (by setting $c'=6c''$).
We finish the proof by applying a union bound over all the bad events mentioned earlier.
\end{proof}

Finally, we are ready to prove the our main theorem.

\lasvegasmain*

\begin{proof}
By the universe reduction step of \cref{lem:reducetopolyt}, it suffices to we solve the case where $N\le (\|A\|_0 + \|B\|_0)^{10}$.
Recall that in nonnegative setting we have $t = \|A\star B\|_0 \in [\max\{\|A\|_0,\|B\|_0\}, \|A\|_0\cdot  \|B\|_0]$.

We will run several trials, where each trial runs in worst-case $O(t\log t)$ time and either outputs the correct result or outputs $\bot$ with probability at most $1/\poly(t)$.  This would immediately imply that the entire Las Vegas algorithm terminates in $O(t\log (t/\delta))$ time with probability $1-\delta$.

Now we describe the algorithm for each trial:
\begin{enumerate}
    \item 
    We first apply the algorithm from \cref{lem:estimatesumset} to $X=\supp(A),Y=\supp(B)$ and $\delta = 1/(\|A\|_0+\|B\|_0)^{2c}$ (where $c\ge 1$ is an arbitrarily large constant), to obtain an approximation $t^*$ that always satisfied $t^* \le \alpha t$ (where $\alpha> 1$ is some fixed constant) and satisfies $t \le t^*$  with $1-\delta\ge
1- 1/(\|A\|_0+\|B\|_0)^{2c} \ge
1- 1/t^{c}$ probability.  
 The running time is $O(t \log(1/\delta) + t\log \log t) = O(t \log t)$. 
\item  Then, we run \cref{lem:findpwhp} (using parameter $t^*$) to find a good prime $p \le t\polylog(t)$ such that $A\star_p B$ only incurs $t/\polylog(t)$ collisions, with at least $1-1/t^{0.3}$ success probability. Let $h(x) := x \bmod{p}$. 
 \item
Then, we run \cref{cor:sumsetwhp} (using parameter $t^*$) to compute $S = \supp(h(A)\star h(B))$ in $O(t\log t)$ time with $\ge 1-1/t^{0.5}$ probability of correctness.
\item 
The next step is similar to  \cref{lem:lasvegas-small-universe}. We run \cref{lem:lasvegas-tiny-universe} to compute $h(A) \star_p h(B)$, $h(\partial A) \star_p h(B) + h(A) \star_p h(\partial B)$ and $h(\partial^2 A) \star_p h(B) + 2 h(\partial A) \star_p h(\partial B) + h(A) \star_p h(\partial^2 B)$, but we do not need run the step in it that finds the support. For each of these convolutions, we first run the noncyclic versions, i.e., $h(A) \star_p h(B)$, which have support $S$ with high probability, then we can compute the cyclic version from the noncyclic version in $O(t)$ time. As mentioned in the proof of \cref{lem:lasvegas-tiny-universe}, except the part for finding the support, all other parts of it runs in $O(t \log t)$ time w.h.p.  

Then similar to \cref{lem:lasvegas-small-universe}, we can recover all the elements that do not have collisions using the hash function $h$ from these computed arrays.  

\item At this point, we have computed all but $t/\polylog(t)$ terms in $A \star B$, and we can use \cref{cor:naiverecover-full-whp} to recover the rest. 
\end{enumerate}
\end{proof}

\subsection{Proof of \texorpdfstring{\cref{lem:modq}}{Lemma~\ref{lem:modq}}}
\label{sec:proofmodq}

In this section we prove \cref{lem:modq}  using techniques from \cite{BringmannFN21}, Prony's method with bit packing (\cref{lem:pronybitpack}), as well as some other tricks.

We need one of the key ingredients in \cite{BringmannFN21}, an (almost) 3-wise independence result of the standard almost linear hash family (defined below), which improved an earlier result of Knudsen~\cite{Knudsen16} in some specific regime of interest. We will use this result to analyze the number of elements hashed to heavy buckets in the same way as in \cite{BringmannFN21}.

\begin{definition}[Almost linear hash family $h$]
\label{defn:almostlinearhash}
Let $p$ be a fixed prime, and $m\le p$ be the number of buckets. 
 Let $\sigma,\tau \in [p]$ be chosen uniformly and independently at random, and define the hash function 
	$h(x) := \pi(x) \bmod m$, where $\pi(x):= (\sigma x + \tau ) \bmod p$.
\end{definition}
\begin{restatable}[Overfull buckets, follows from {\cite[Corollary 11.3, Lemma 5.12]{BringmannFN21}}]{lemma}{lemoverfull}

    Let $X\subseteq [N]$ be a set of keys, and let $t\ge |X|$.
 Randomly pick a hash function $h(x) = \pi(x) \bmod m$ from \cref{defn:almostlinearhash} with parameters $p > 4N^2$ and $m \le  N$.
 Let $\tilde X := \pi(X) + \{0,p\}$. Then, with at least $1-O(1/\log^2 N)$ probability,
 it holds that
 \begin{equation}
 \label{eqn:overfulltilde}
\left \lvert \left \{ x \in \tilde X :  \sum_{x' \in \tilde X} \mathbf{1}  \left[ h(x) \equiv h(x') \pmod{m} \right] > \frac{4t}{m} \right \} \right \rvert \le \frac{mN\log^3 N}{t}.
 \end{equation}
 \label{lem:overfullbuckets}
\end{restatable}
\cref{lem:overfullbuckets} follows from \cite{BringmannFN21} with a slight change of parameters. We redo the calculation in \cref{sec:overfullbuckets}.

Now we proceed to prove \cref{lem:modq} (restated below for convenience).
\lemmodq*

\paragraph*{Preparation.} We first set up a few parameters and introduce some definitions.
In our algorithm we will
sample an almost linear hash function (\cref{defn:almostlinearhash}) $h(x) = \pi(x)\bmod m = ((\sigma x + \tau) \bmod p) \bmod m$, where the number of buckets is $m = \lceil t/\log^5 t\rceil$. We will apply \cref{lem:overfullbuckets} with $N := 2N$ to this hash function.
We set the prime $p$ in the almost linear hash family to be an arbitrary prime $p\in [20N^2,40N^2]$, so that the $p>4(2N)^2$ requirement in \cref{lem:overfullbuckets} is satisfied (such prime $p$ can be found by a Las Vegas algorithm in $\polylog(N) \le \polylog(t)$ time with $1/\poly(t)$ failure probability). We also assume $N > t$ holds in \cref{lem:modq}, so that the $m \le 2N$ requirement in \cref{lem:overfullbuckets} is satisfied (in the $N\le t$ case, \cref{lem:modq} immediately follows from the dense bit-packing FFT algorithm of \cref{thm:packfft} in $O(N\log q)\le O(t\log \log t)$ time).

\newcommand{\OneShotRecovery}{\textsc{OneShotRecovery}}
\newcommand{\RecoverBucketFast}{\textsc{RecoverBucketFast}}

Recall $q\in [0.5 \log^{30} t, \log^{30} t]$ is a prime. We find a primitive root $\omega\in \F_q^*$, by a deterministic brute force algorithm in $\poly(q) = \polylog(t)$ time.

Similar to \cite{BringmannFN21}, our algorithm contains two stages. The first stage is to invoke \OneShotRecovery (\cref{alg:recoverthreewise}) and approximately recover $A\star B$ with at most $t/\polylog(t)$ errors, and the second stage is to fix the remaining errors using \cref{lem:naiverecover-full}.

\begin{algorithm}
\DontPrintSemicolon
\caption{\OneShotRecovery$(A,B,t)$}
\label{alg:recoverthreewise}
\textbf{Input:} Sparse vectors $A,B\in \F_q^N$ where prime $q\in [0.5 \log^{30} t, \log^{30} t]$, and a parameter $t$\\
\textbf{Output:} A sparse vector $R \in \F_q^{2N}$, for details see \cref{lem:recoverthreewise-step}\\
Let parameter $m =\lceil t/\log^{12} t \rceil$, $v = \lceil \log^{22} t\rceil $.\\
Sample an almost linear hash function $h(x):= \pi(x) \bmod m$ from \cref{defn:almostlinearhash}\\
Let parameter $s =  \lceil 8t/m\rceil 
 \le O(\log^{12} t)$.\label{line:oneshotbucketsize}\\
\For{$i \gets  0,1,2,\dots, 2s$}{\label{Line:recoverthreewise:forloop1}
    Define vector $ A^{(i)}\in \F_q^N$ by $ A^{(i)}[k]:= A[k]\cdot \omega^{(k\bmod v)\cdot i}\in \F_q$ \tcp{$\omega\in \F_q^*$ is a primitive root}
    Define vector $ B^{(i)}\in \F_q^N$ by $ B^{(i)}[k]:= B[k]\cdot \omega^{(k\bmod v)\cdot i}\in \F_q$ \\
    Compute vector $Z^{(i)}\gets h(A^{(i)}) \star_m h(B^{(i)}) $ via FFT \label{line:recoverthreewise:FFT2}\\
}
Initialize $R \gets (0,\dots,0)\in \F_q^{2N}$\\
\For{$b\in [m]$}{
    $f(x)\gets \RecoverBucketFast\big (\{Z^{(i)}[b]\}_{i=0}^{2s}, b\big )$ (see \cref{alg:recoverbucketthreewise})  \label{line:recoverbucketfast}
    \tcp{$f(x)\in \F_q[x]$ is a polynomial in sparse representation} \label{line:recoverthreewise:recover-bucket}
    $R \gets R + (f_0,\dots,f_{2N-1})$, where $f(x) = \sum_{\ell=0}^{2N-1}f_\ell x^\ell$
}
\Return{$R$}
\end{algorithm} 

We first state the properties of 
{\OneShotRecovery}  (\cref{alg:recoverthreewise}), and use it to quickly derive \cref{lem:modq}. 
\begin{lemma}
\label{lem:recoverthreewise-step}
If $|\supp(A) + \supp(B)|\le t$, then, with probability at least $1-O(1/\log^2 t)$,
{\OneShotRecovery}  (\cref{alg:recoverthreewise}) 
terminates in $O(t\log\log t)$ time and returns a vector $R\in \F_q^{2N}$ such that $\|A\star B - R\|_0 \le t/\log^{3} t$.
\end{lemma}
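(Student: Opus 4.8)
The plan is to verify the two assertions — the $O(t\log\log t)$ running time and the guarantee $\|A\star B-R\|_0\le t/\log^3 t$ — separately, working under the standing assumption $t<N\le t\log^5 t$ of the ``Preparation'' (the case $N\le t$ of \cref{lem:modq} needs no recovery and is handled directly by the dense bit-packed FFT of \cref{thm:packfft}); note $\log N=\Theta(\log t)$, and we may assume $\|A\|_0,\|B\|_0\le t$.

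\textbf{Correctness.} I would follow the two-level scheme of \cite{BringmannFN21}, with Prony-based bucket recovery replacing their linear-probing recovery. By almost-linearity of $h$, a term $A[i]B[j]x^{i+j}$ with $i\in\supp(A),j\in\supp(B)$ contributes to bucket $(\pi(i+j)+\tau+\varepsilon p)\bmod m$ of $h(A)\star_m h(B)$, where $\varepsilon\in\{-1,0,1\}$ is the wrap-around of $\pi(i)+\pi(j)$; hence each output position $k$ lands in at most three buckets, and $R[k]$ accumulates its recovered coefficient over exactly those buckets. Call $k\in\supp(A)+\supp(B)$ \emph{good} if (i) every bucket it can land in is \emph{light}, i.e.\ receives at most $4t/m<s$ elements of $\tilde X:=\pi(\supp(A)+\supp(B))+\{0,p\}$ in the sense of \cref{lem:overfullbuckets}, (ii) no two positions sharing such a bucket are congruent modulo $v$, and (iii) \RecoverBucketFast returns a non-$\bot$ value for each such bucket. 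For a good $k$, in each of its buckets the positions present have pairwise distinct residues modulo $v$, and since $v<q-1$ the powers $\omega^0,\dots,\omega^{v-1}$ are distinct, so that bucket's contents form a genuine $(<s)$-sparse polynomial in $y=\omega^{(\cdot\bmod v)}$ whose evaluations at $\omega^0,\dots,\omega^{2s}$ are exactly the $Z^{(i)}[b]$; thus \cref{lem:pronybitpack} recovers the bucket exactly, and summing over the $\le 3$ buckets gives $R[k]=(A\star B)[k]$. It then remains to bound the non-good positions, for which I would use three estimates. First, \cref{lem:overfullbuckets} applied with key set $\supp(A)+\supp(B)$ (size $\le t$), universe $[2N]$, prime $p\in[20N^2,40N^2]$ and $m$ buckets shows that, with probability $1-O(1/\log^2 t)$, the number of positions having a non-light bucket is at most $\tfrac{m\cdot 2N\cdot\log^3(2N)}{t}=O(t/\log^4 t)$. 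Second, since $N\le t\log^5 t$ and $v=\lceil\log^{22}t\rceil$, each residue class modulo $v$ holds $O(t/\log^{17}t)$ elements of $[2N]$, so $\supp(A)+\supp(B)$ has $O(t^2/\log^{17}t)$ pairs $k\ne k'$ with $k\equiv k'\pmod v$; each lands in a common bucket with probability $O(1/m)$ by the (almost) pairwise independence of $h$ from \cite{BringmannFN21}, so in expectation $O(t/\log^5 t)$ positions lie in a mod-$v$ collision, hence at most $t/\log^3 t$ with probability $1-O(1/\log^2 t)$ by Markov. Third, each \RecoverBucketFast call returns $\bot$ with probability $\le 2^{-(\log t)^{0.1}}$, so the expected number of $\bot$-buckets is $m\,2^{-(\log t)^{0.1}}=o(t/\log^{17}t)$ for large $t$ (as $(\log t)^{0.1}\gg\log\log t$), which is $\le t/\log^{15}t$ with probability $1-O(1/\log^2 t)$ by Markov, costing at most $s\cdot t/\log^{15}t=O(t/\log^3 t)$ further positions. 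A union bound over these events yields $\|A\star B-R\|_0=O(t/\log^3 t)$, which is $\le t/\log^3 t$ for large $t$ (and one can force the clean bound by increasing the $\log$-powers in $m,v,s$ slightly).

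\textbf{Running time.} There are $2s+1=O(\log^{12}t)$ rounds. Rather than forming each $A^{(i)}$ key-by-key (which would cost $\Omega(t\,s)=\Omega(t\log^{12}t)$), I would compute all of $h(A^{(i)}),h(B^{(i)})$ jointly: a single $O(t)$-time pass groups the nonzero keys of $A$ by the pair $(h(k),\,k\bmod v)$, producing $\le t$ nonzero ``moments'' $\alpha_{b,r}$, and then for each bucket $b$ the sequence $\big(h(A^{(i)})[b]\big)_{i=0}^{2s}=\big(\sum_r\alpha_{b,r}\omega^{ri}\big)_i$ is obtained by evaluating the $(<s)$-sparse polynomial $\sum_r\alpha_{b,r}y^r$ at the geometric progression $\omega^0,\dots,\omega^{2s}$ via fast bit-packed multipoint evaluation over $\F_q$; since an $\F_q$-element uses $O(\log q)=O(\log\log t)$ bits, $\Omega(\log t/\log\log t)$ of them pack into a word, and summed over the $m$ buckets this stage costs $O(t\cdot(\log\log t)^{O(1)})$. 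Each of the $2s+1$ cyclic convolutions $h(A^{(i)})\star_m h(B^{(i)})$ costs $O(m\log q)$ by \cref{thm:packfft}, for a total of $O(s\,m\log q)=O(t\log q)=O(t\log\log t)$. Each of the $m$ calls to \RecoverBucketFast invokes \cref{lem:pronybitpack} with sparsity $s=O(\log^{12}t)$ and prime $q=O(\log^{30}t)$, costing $O\!\big(s\log s+(\tfrac{s^{1+\eps}}{\log t}+1)(\log t)^{0.1}\polylog q\big)$ time; summed over the buckets this is $O(m\,s\log s)+o(t)=O(t\log\log t)$ after the shared $\tO(t^{0.2})$ precomputation, and choosing $q,\omega,p$ takes $\polylog(t)$ (Las Vegas) time. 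Altogether \OneShotRecovery runs in $O(t\log\log t)$ time, and the probability that all the above probabilistic steps (including the existence of $p$) succeed is $1-O(1/\log^2 t)$.

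\textbf{Main obstacle.} I expect the running-time bookkeeping to be the crux: every step must be bit-packed tightly enough to absorb the $s=\Theta(\log^{12}t)$ blow-up in the number of Prony evaluation points, which forces the $h(A^{(i)})$ to be built through the bucket-grouping $+$ fast-multipoint-evaluation route (a naïve per-key expansion is $\Omega(t\log^{12}t)$), and requires \cref{lem:pronybitpack} to be applied in the regime where its second running-time term is genuinely lower order. The correctness side is comparatively routine given \cref{lem:overfullbuckets} and \cref{lem:pronybitpack}, the one subtlety being the careful accounting of how the almost-linear wrap-around may split a single coefficient across up to three buckets, exactly as in \cite{BringmannFN21}.
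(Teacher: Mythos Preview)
Your correctness argument misses the central ingredient of the paper's proof: the index-recovery step carried out by \textsc{QueryIndex} (\cref{lem:queryindex}) and the accompanying notion of a \emph{bad index} (\cref{defn:badindex}, \cref{lem:probbadindex}). Prony's method applied to $\{Z^{(i)}[b]\}_i$ recovers only the bucket polynomial \eqref{eqn:bucketpoly}, whose exponents are $(i\bmod v)+(j\bmod v)$ --- not the output positions $k=i+j$. Converting a recovered exponent back to $k$ is the job of \textsc{QueryIndex}, which uses $e\bmod v=(i+j)\bmod v$ together with the bucket number $b$ to determine $\lfloor k/v\rfloor$; this succeeds precisely when $k$ is not bad, i.e.\ when no \emph{spurious} $w'\ne\lfloor k/v\rfloor$ in $[0,2N/v]$ satisfies the defining congruence of \cref{defn:badindex}. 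Your condition (ii) (``no two positions sharing such a bucket are congruent modulo $v$'') only rules out collisions with other elements of $\supp(A)+\supp(B)$ landing in the same bucket; it says nothing about spurious $w'$ outside the support, so even under your (i)--(iii) \textsc{QueryIndex} can legitimately return $\bot$ and the term is lost. The paper closes this by proving $\Pr_h[k\text{ bad}]=O(1/\log^5 t)$ for every $k$, hence $O(t/\log^3 t)$ bad indices with probability $1-O(1/\log^2 t)$, and then invoking \cref{lem:recoverbucketthreewise}, whose guarantee $I'_b\subseteq I\subseteq I_b$ is stated \emph{modulo} bad indices. Your outline needs exactly this; the mod-$v$ collision bound you give is a different (and strictly weaker) event.

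On running time: you rightly flag that forming $h(A^{(i)})$ key-by-key costs $\Theta(|\supp A|\cdot s)$, which the paper's proof does not separately account for (it only cites the packed-FFT cost $O(m\log q)$ per $i$). But your proposed fix leans on the per-bucket polynomial $\sum_r\alpha_{b,r}y^r$ being ``$(<s)$-sparse'', which is not guaranteed: a bucket may receive many keys of $A$, and while the total number of nonzero moments across buckets is $\le t$, their degrees range up to $v=\log^{22}t$, so a per-bucket multipoint evaluation does not obviously give $O(t\cdot(\log\log t)^{O(1)})$ as written. This is a real bookkeeping issue, but it is orthogonal to the correctness gap above, which is the more fundamental missing idea.
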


\begin{proof}[Proof of \cref{lem:modq} assuming \cref{lem:recoverthreewise-step}]

We first run {\OneShotRecovery}  (\cref{alg:recoverthreewise}) on $A,B\in \F_q^{N}$ with parameter $t$, and obtain vector $R\in \F_q^{2N}$. If it does not terminate within $O(t\log \log t)$ time, then we abort and return $\bot$. By \cref{lem:recoverthreewise-step}, with $1-O(1/\log^2 t)$ probability we did not return $\bot$ and $R$ satisfies $\|A\star B - R\|_0 \le t/\log^{3} t$.
We assume $\|R\|_0 \le 2t$, or abort and return $\bot$ otherwise.

Then,  we apply the algorithm from \cref{lem:modprimerecoverfinitefield} to $A,B,R$ with parameter $t$ and error probability $\delta = 1/\log^3 t$. The running time is $O(\|A\|_0+\|B\|_0+\|R\|_0+t)\cdot \log (1/\delta)  = O(t\log \log t)$.
Let $C\in \F_{q}^{2N}$ be the vector returned by \cref{lem:modprimerecoverfinitefield}.

Finally, we verify $A\star B = C$ using the randomized sparse verification procedure from \cref{lem:sparseverifysmallp} in $O(t)$ time with $\le 1/t$ error probability. If $A\star B\neq C$ then we abort and return $\bot$. This guarantees the second bullet point.
\end{proof}

Now we start to describe {\OneShotRecovery} (\cref{alg:recoverthreewise}) in more detail.
Compared to the algorithm of \cite{BringmannFN21}, a main technicality of our algorithm lies in the procedure for recovering the terms hashed into each bucket ({\RecoverBucketFast} in \cref{alg:recoverbucketthreewise}, invoked at \cref{line:recoverbucketfast} of \cref{alg:recoverthreewise}). We need to use information computed over a small field $\F_q$ of size $q\le \polylog(t)$ to recover the exponent of a term hashed to a bucket, which could be as large as $N \le t\polylog(t)$. This recovery is possible because the bucket number $b\in [m]$ restricts the possibilities of the exponent, but a lot of technicalities arise from dealing with the almost linear hash function 
$h(x) = ((\sigma x + \tau) \bmod p) \bmod m$. In the following we first describe this recovery process in detail.

\begin{algorithm}
\DontPrintSemicolon
\caption{\RecoverBucketFast$(v_0, v_1,\dots, v_{2s}, b)$}
\label{alg:recoverbucketthreewise}
\textbf{Input:} Values $ v_0,\dots, v_{2s}\in \F_{q}$ and $b\in [m]$\\
\textbf{Output:} a polynomial $f(x) \in \F_q[x]$ represented as list of non-zero coefficients. For details see \cref{lem:recoverbucketthreewise}\\
Use Prony's method with bit-packing (\cref{lem:pronybitpack}) with parameter $t$ and sparsity bound $s$ (defined at \cref{line:oneshotbucketsize} of \cref{alg:recoverthreewise}) to obtain $g(y)=\sum_{j=1}^{s'}c_j y^{e_j}\in \F_q[y]$ (where $s'\le s$) such that $g(\omega^i) =  v_i $ for all $0\le i\le 2s$.
\\
\lIf{\cref{lem:pronybitpack} failed to find such a $g(y)\in \F_q[y]$
}{\Return{$0$}}
\For{$j \in [s']$}{
     $d = \textsc{QueryIndex}(e_j\bmod v,b)$  (see \cref{lem:queryindex})\\
    \lIf{$d\neq \bot $}{$d_j \gets d$ and $a_j\gets c_j$}
    \lElse{$a_j \gets 0$}
}
\Return{$f(x) =\sum_{j=1}^{s'} a_j x^{d_j}$}
\end{algorithm}

\paragraph*{Index recovery.}
Let parameter $v  = \lceil \log^{22} t\rceil $ (also defined in \cref{alg:recoverthreewise}). 
By inspecting \cref{alg:recoverthreewise} we have the following observation:
\begin{observation}
In \cref{alg:recoverthreewise}, for each bucket $b\in [m]$, the values of $\{Z^{(i)}[b]\}_{i=0}^{2s}$ are evaluations of the polynomial  
\begin{equation}
\label{eqn:bucketpoly}
\sum_{i,j\in [N]: h(i)+h(j)\bmod m = b}A[i]B[j]\cdot y^{(i\bmod v) + (j\bmod v)} \in \F_q[y]
\end{equation}
 at points  $\{\omega^i\}_{i=0}^{2s}$.
\end{observation}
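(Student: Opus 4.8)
The plan is to prove this Observation by a direct unwinding of the three operations that produce the vectors $Z^{(i)}$ in \cref{alg:recoverthreewise}: the twist $A\mapsto A^{(i)}$ and $B\mapsto B^{(i)}$, the bucketing operator $U\mapsto h(U)$, and the length-$m$ cyclic convolution $\star_m$. This is a purely algebraic identity holding for every realization of the hash function $h$, so no probabilistic reasoning enters.

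First I would fix a bucket $b\in[m]$ and an index $i\in\{0,1,\dots,2s\}$ and expand $Z^{(i)}[b]=\big(h(A^{(i)})\star_m h(B^{(i)})\big)[b]$ by the definition of $\star_m$, writing it as $\sum_{a_1,a_2\in[m],\,a_1+a_2\equiv b\,(m)} h(A^{(i)})[a_1]\,h(B^{(i)})[a_2]$. I would then substitute $h(A^{(i)})[a_1]=\sum_{k\in[N]:\,h(k)=a_1}A[k]\,\omega^{(k\bmod v)i}$, and the analogous expression for $h(B^{(i)})[a_2]$, and distribute the product over the two inner sums. Because each pair $(k,\ell)\in[N]^2$ appears exactly once, with $a_1=h(k)$ and $a_2=h(\ell)$, the triple sum collapses to
\[
Z^{(i)}[b]\;=\;\sum_{\substack{k,\ell\in[N]\\ h(k)+h(\ell)\bmod m\,=\,b}}A[k]\,B[\ell]\,\omega^{\big((k\bmod v)+(\ell\bmod v)\big)i}.
\]
On the other hand, evaluating the polynomial in \cref{eqn:bucketpoly} at $y=\omega^i$ (and renaming its dummy indices $i,j$ to $k,\ell$) produces exactly the same expression, since $(\omega^i)^{(k\bmod v)+(\ell\bmod v)}=\omega^{((k\bmod v)+(\ell\bmod v))i}$. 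Hence $\{Z^{(i)}[b]\}_{i=0}^{2s}$ are the evaluations of that polynomial at $\{\omega^i\}_{i=0}^{2s}$, as claimed.

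I do not expect a genuine obstacle: the statement is a bookkeeping identity. The two points that merely warrant care are that the ``$\bmod m$'' wraparound inside the index of $\star_m$ matches the condition ``$h(i)+h(j)\bmod m=b$'' of \cref{eqn:bucketpoly} (they agree because $b\in[m]$ while $h(i)+h(j)$ may range over $\{0,\dots,2(m-1)\}$), and that \cref{eqn:bucketpoly} is indeed a polynomial in $\F_q[y]$, its exponents $(i\bmod v)+(j\bmod v)$ being nonnegative integers at most $2(v-1)=2\lceil\log^{22}t\rceil-2$, which is far below $(q-1)/2$ for large $t$ --- the same bound we will rely on when invoking \cref{lem:pronybitpack} inside \cref{alg:recoverbucketthreewise}.
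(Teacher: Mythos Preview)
Your proof is correct and is exactly the direct definitional unwinding the paper has in mind; the paper in fact states this as an observation obtained ``by inspecting \cref{alg:recoverthreewise}'' and gives no further argument, so your expansion of $\star_m$, $h(\cdot)$, and the twist $A\mapsto A^{(i)}$ is precisely the intended verification.
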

Our goal is to recover the index $i+j$ from the exponents of the terms (which contain information about $(i+j)\bmod v$ in the polynomial \cref{eqn:bucketpoly} (given the bucket number $b\in [m]$).

Every index $k\in [2N]$ can be represented as $\lfloor k/v\rfloor v + (k\bmod v)$, where $0\le \lfloor k/v \rfloor \le 2N/v$. We first need the following technical definition.
\begin{definition}[Bad index]
\label{defn:badindex}
 Index  $k\in [2N]$ is called \emph{bad} with respect to the hash function $h(x) = ((\sigma x + \tau)\bmod p)\bmod m$, if there exist $o\in \{-2p,-p,0,p,2p\}$ and integer $0\le w'\le 2N/v$, such that $w' \neq \lfloor k/v \rfloor $ and 
 \begin{equation}
 \big ((\sigma \lfloor k/v\rfloor v +\tau)\bmod p\big ) + o \equiv \big ((\sigma w' v  +\tau)\bmod p\big ) \pmod{m}.
 \label{eqn:badindex}
 \end{equation}
\end{definition}
\begin{lemma}[Probability of being a bad index]
\label{lem:probbadindex}
For each index  $k\in [2N]$, $\Pr_{\sigma,\tau}[k\text{ is bad}] \le O(1/\log^{5} t)$.
\end{lemma}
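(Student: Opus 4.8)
The plan is to bound, for a fixed index $k\in[2N]$, the probability over the random hash parameters $\sigma,\tau\in[p]$ that \cref{eqn:badindex} holds for some offset $o\in\{-2p,-p,0,p,2p\}$ and some integer $0\le w'\le 2N/v$ with $w'\ne\lfloor k/v\rfloor$. Write $w:=\lfloor k/v\rfloor$ and $\pi(x)=(\sigma x+\tau)\bmod p$. The key observation is that $\pi(wv)-\pi(w'v)\equiv \sigma(w-w')v\pmod p$ up to a correction term in $\{0,\pm p\}$ coming from the two reductions mod $p$; more precisely $\pi(wv)-\pi(w'v)=\sigma(w-w')v+\ell p$ for some $\ell\in\{-1,0,1\}$, viewed as an integer in $(-p,p)$. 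Hence \cref{eqn:badindex} asks that $\sigma(w-w')v+\ell p+o\equiv 0\pmod m$, i.e. $\sigma(w-w')v\equiv -(\ell p+o)\pmod m$ for some $\ell\in\{-1,0,1\}$, $o\in\{-2p,-p,0,p,2p\}$ — a set of at most $15$ target residues modulo $m$ that do not depend on $\sigma$ (only on $p,m$, which are fixed). So it suffices to union-bound over the $\le 2N/v$ choices of $w'$ and, for each, over these $\le 15$ target residues, the probability that $\sigma\cdot\big((w-w')v\big)$ lands in a fixed residue class mod $m$.

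First I would handle this last probability. Since $p>20N^2$ is prime and $|(w-w')v|<2N<p$, the quantity $(w-w')v$ is a nonzero residue mod $p$, so as $\sigma$ ranges uniformly over $[p]$ the product $\sigma(w-w')v\bmod p$ is uniform over $[p]$. We then want the probability that a uniform element of $\mathbb{Z}_p$, reduced mod $m$, equals a fixed value; this is at most $\lceil p/m\rceil / p \le 1/m + 1/p \le 2/m$ (using $m\le p$). Thus for each pair $(w',\text{target})$ the probability is $O(1/m)$, and summing over $\le 2N/v$ values of $w'$ and $\le 15$ targets gives $\Pr_{\sigma,\tau}[k\text{ bad}] \le O\!\big(\frac{N}{vm}\big)$. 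Now plug in the parameter choices from \cref{alg:recoverthreewise}: $m=\lceil t/\log^{12}t\rceil$, $v=\lceil\log^{22}t\rceil$, and $N\le t\log^5 t$ (the hypothesis of \cref{lem:modq}), so $\frac{N}{vm}\le O\!\big(\frac{t\log^5 t}{\log^{22}t\cdot t/\log^{12}t}\big)=O\!\big(\frac{\log^{17}t}{\log^{22}t}\big)=O(1/\log^5 t)$, as claimed. (The same bound holds under $N\le t\log^{5}t$ with room to spare; the statement of \cref{lem:modq} uses $N\le t\log^5 t$ but this lemma is only invoked in the regime $N>t$, which is harmless.)

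The main obstacle I anticipate is bookkeeping the carry terms precisely rather than any genuine difficulty: I need to be careful that $\pi(wv)-\pi(w'v)$ really differs from $\sigma(w-w')v\bmod p$ only by an element of $\{-p,0,p\}$ (because each of $\pi(wv),\pi(w'v)$ is a reduction mod $p$ of a possibly much larger integer, but their difference, being a reduction-of-difference up to sign, lies in $(-p,p)$ plus at most one extra $\pm p$ when folding $\sigma(w-w')v$ into $(-p,p)$), and that absorbing $\ell p$ into the offset is consistent with the allowed offset set $\{-2p,-p,0,p,2p\}$ in \cref{defn:badindex}. Once one checks that the $o$-offsets in the definition were chosen exactly to cover the $\{-2p,\dots,2p\}$ range generated by combining the $o\in\{0,\pm p,\pm 2p\}$ from wrap-around of $i+j$ with the $\ell p$ carries, the argument reduces to the elementary equidistribution bound above. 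It may be cleanest to phrase the union bound directly: the event "$k$ is bad" is contained in $\bigcup_{w'\ne w}\bigcup_{o}\{\,\sigma(w-w')v\bmod p \in \text{(interval of length }\le 2m\text{ worth of residues)}\}$, giving $\le (2N/v)\cdot 5\cdot O(1/m)$ directly.
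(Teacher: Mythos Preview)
Your proposal is correct and follows essentially the same strategy as the paper: union bound over $w'\neq w$ and $o\in\{-2p,-p,0,p,2p\}$, bounding each event by $O(1/m)$, then plugging in $m=\lceil t/\log^{12}t\rceil$, $v=\lceil\log^{22}t\rceil$, $N\le t\log^{5}t$ to get $O(N/(vm))=O(1/\log^{5}t)$. The only minor difference is in how the $O(1/m)$ bound is derived: the paper observes that since $wv\neq w'v$ in $\F_p$, the pair $(\pi(wv),\pi(w'v))$ is uniform on $\F_p^2$ (pairwise independence from the random affine map), so for fixed $o$ the congruence \cref{eqn:badindex} holds with probability at most $p\cdot\lceil p/m\rceil/p^2\le 2/m$; you instead cancel $\tau$ from the difference, track the integer carry $\ell p$, and use that $\sigma(w-w')v\bmod p$ is uniform in $[p]$ — this forces you to union bound over the carry values $\ell$ as well, but is otherwise equivalent and perfectly valid.
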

\begin{proof}
Since $2N<p$, the integers between $0$ and $2N$ are distinct in $\F_p$.
    For every $w'\neq \lfloor k/v \rfloor$, the two values $((\sigma \lfloor k/v\rfloor v +\tau)\bmod p ), ((\sigma w' v +\tau)\bmod p )\in \F_p$ are pairwise independent (over the randomness of $\sigma,\tau \in \F_p$). Hence, for fixed $o\in \Z$, the probability (over $\sigma,\tau$) that \cref{eqn:badindex} happens is at most $\frac{p\cdot \lceil p/m\rceil}{p^2}\le 2/m$. Hence, by a union bound over $o\in \{-2p,-p,0,p,2p\}$ and all integers $0\le w'\le 2N/v$, we know the probability that $k$ is bad is at most $\frac{2}{m} \cdot 5\cdot (2N/v) \le \frac{20t\log^{12} t}{\lceil t/\log^5 t\rceil \cdot \lceil \log^{22}t\rceil } = O(1/\log^{5} t)$.
\end{proof}

\begin{lemma}[Recovering index $(i+j)$]
   Given the hash function   $h(x) = ((\sigma x + \tau) \bmod p) \bmod m$, after $o(t)$ preprocessing, we can answer the following query in $O(1)$ time:  
   \begin{itemize}
       \item \textsc{QueryIndex}$(u,f)$: Given the values  $u = (i+j)\bmod v$ and  $f = (h(i)+h(j))\bmod m$ (where $i\in [N],j\in [N]$ are unknown),  answer either $i+j$ or $\bot$ (``failure''). 
       We allow the answer to be $\bot$ only if $i+j$ is bad with respect to the hash function $h$.
   \end{itemize}
   \label{lem:queryindex}
\end{lemma}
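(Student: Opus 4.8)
The plan is to recover $i+j$ from $(u,f)$ via a single lookup in a table that depends only on the hash function, exploiting that the bucket $f$ together with the low part $u$ almost determines the high part $w:=\lfloor(i+j)/v\rfloor$. Throughout write $\pi(x):=(\sigma x+\tau)\bmod p$, so $h(x)=\pi(x)\bmod m$ and $h(i)+h(j)\equiv\pi(i)+\pi(j)\pmod m$. Since $p=\Theta(N^2)$ exceeds $2N$, all of $i,j,i+j$ are $<p$, so $\pi(i)+\pi(j)\equiv\sigma(i+j)+2\tau\pmod p$; writing $i+j=wv+u$ and $C_u:=(\sigma u+\tau)\bmod p$, the right-hand side is $\equiv\pi(wv)+C_u\pmod p$. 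As $\pi(i)+\pi(j)$ and $\pi(wv)+C_u$ both lie in $[0,2p)$ and are congruent mod $p$, they differ by an element of $\{-p,0,p\}$, hence $\pi(i)+\pi(j)=\pi(wv)+C_u+\varepsilon p$ for some $\varepsilon\in\{-1,0,1\}$. Reducing mod $m$ yields the key relation
\[
\pi(wv)\ \equiv\ f-C_u-\varepsilon p\pmod m\qquad\text{for some }\varepsilon\in\{-1,0,1\}.
\]

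For the preprocessing I would build a table $\mathcal T$ indexed by residues $r\in[m]$, where $\mathcal T[r]$ stores the unique $w\in\{0,1,\dots,\lfloor 2N/v\rfloor\}$ with $\pi(wv)\bmod m=r$ if there is exactly one such $w$, a flag indicating "two or more" if there are several, and nothing otherwise. Iterating $w$ upward and maintaining $\pi(wv)$ incrementally (each step adds the precomputed constant $(\sigma v)\bmod p$ and does one conditional subtraction), this costs $O(N/v+m)$ time, which is $o(t)$ for the parameter choices $v=\lceil\log^{22}t\rceil$, $m=\lceil t/\log^{12}t\rceil$, $N\le t\log^5 t$; storing $\sigma,\tau,p,m,v$ alongside $\mathcal T$ is free.

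On a query $\textsc{QueryIndex}(u,f)$ I would compute $C_u$, and for each $\varepsilon\in\{-1,0,1\}$ read off $\mathcal T[(f-C_u-\varepsilon p)\bmod m]$; let $W$ denote the set of all $w$'s obtained this way (a "two or more" flag meaning $|W|\ge 2$). If $W=\{w\}$ is a singleton, return $wv+u$; otherwise return $\bot$. This takes $O(1)$ time. For correctness: by the key relation the true high part $w^\star:=\lfloor(i+j)/v\rfloor$ always belongs to $W$, so whenever a value is returned it equals $w^\star v+u=i+j$, which is correct. Conversely, if $|W|\ge 2$, pick $w'\in W$ with $w'\neq w^\star$ (possible since $w^\star\in W$); then $\pi(w^\star v)\equiv f-C_u-\varepsilon p$ and $\pi(w'v)\equiv f-C_u-\varepsilon' p\pmod m$ for some $\varepsilon,\varepsilon'\in\{-1,0,1\}$, so $\pi(w^\star v)-\pi(w'v)\equiv(\varepsilon'-\varepsilon)p\pmod m$ with $\varepsilon'-\varepsilon\in\{-2,-1,0,1,2\}$, i.e.\ $\pi(w^\star v)+o\equiv\pi(w'v)\pmod m$ for some $o\in\{-2p,-p,0,p,2p\}$, with $0\le w'\le 2N/v$ and $w'\neq\lfloor(i+j)/v\rfloor$. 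By \cref{defn:badindex} this says exactly that $i+j$ is bad, so returning $\bot$ is permitted. (If the inputs do not arise from any pair $i,j\in[N]$ then $W$ may be empty, and we simply return $\bot$.)

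The step requiring the most care is the double reduction "mod $p$, then mod $m$": one must check that the slack introduced is precisely $\varepsilon\in\{-1,0,1\}$ above, so that two confusable high parts differ — modulo $m$ — by an element of $\{-2p,\dots,2p\}$ and nothing larger, matching \cref{defn:badindex} exactly rather than by a wider margin. A secondary point is that the $O(1)$ query bound and the correctness guarantee must hold for every fixed $\sigma,\tau$ (the randomness of $\sigma,\tau$ enters only through \cref{lem:probbadindex}, which bounds how often an index is bad), which is why $\mathcal T$ records only the unique witness or an overflow flag, never a full candidate list.
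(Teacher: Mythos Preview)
Your proposal is correct and essentially identical to the paper's proof: you derive the same key relation $\pi(wv)\equiv f-C_u-\varepsilon p\pmod m$ with $\varepsilon\in\{-1,0,1\}$, precompute the same table indexed by residues $r\in[m]$ storing either the unique witness $w$ or an overfull flag in $O(m+N/v)=o(t)$ time, and argue in the same way that $|W|\ge 2$ forces $i+j$ to be bad via an offset in $\{-2p,\dots,2p\}$. Your derivation of the $\varepsilon\in\{-1,0,1\}$ slack (both sums lie in $[0,2p)$ and are congruent mod $p$) is marginally cleaner than the paper's two-step inclusion, but the content is the same.
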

\begin{proof}
Note that (where $\equiv_m$ denotes ``congruent modulo $m$'')
\begin{equation}
\label{eqn:hij}
    f\equiv_m \big ((\sigma i + \tau)\bmod p\big ) + \big ( (\sigma j + \tau)\bmod p\big ) \in \big ((\sigma(i+j) + 2\tau)\bmod p\big ) + \{0,p\}.
\end{equation}
Rewrite $i+j = w\cdot v+u$, where $w = \lfloor (i+j)/v\rfloor \in \N\cap [0, 2N/v]$ is unknown and $u = (i+j)\bmod v$ is known.
Then, $\sigma(i+j)+2\tau = (\sigma wv+\tau) + (\sigma u+\tau)$, and thus the right hand side of \cref{eqn:hij} (viewed over $\Z$) further satisfies 
\begin{align}
 & \big ((\sigma(i+j) + 2\tau)\bmod p\big ) + \{0,p\} \nonumber \\
   \subseteq  \ &
   \big ( (\sigma wv+\tau)\bmod p\big ) + \big ((\sigma u + \tau)\bmod p\big ) + \{0,-p\} + \{0,p\}\nonumber \\
=  \ &
   \big ( (\sigma wv+\tau)\bmod p\big ) + \big ((\sigma u + \tau)\bmod p\big ) + \{-p,0,p\}.\label{eqn:sigamwv} 
\end{align}

In order to find $i+j$ (or equivalently, find $w$), we can use \cref{eqn:hij} and \cref{eqn:sigamwv} to pin down the possibilities for $w$: Define 
\begin{equation}
\label{eqn:defnW}
 W:= \big \{w' \in \N \cap [0,2N/v] : \exists o\in \{-p,0,p\} \text{ s.t.\ } \big ( (\sigma w'v+\tau)\bmod p\big ) \equiv_m f-\big ((\sigma u + \tau)\bmod p\big )-o\big \}. 
\end{equation}
Then we must have $w\in W$ by \cref{eqn:hij} and \cref{eqn:sigamwv}. We will later describe how to find $W$ time-efficiently. If $|W|=1$, then we have uniquely determined $w$ and the answer $i+j = w\cdot v + u$. If $|W|\ge 2$, then we return $\bot$. Now we show that if $i+j$ is not a bad index (\cref{defn:badindex}), then $|W|=1$ must hold. To see this, simply observe from the definition of $W$ that any two $w,w' \in W$ can be related by $\big ( (\sigma wv+\tau)\bmod p\big ) + o \equiv_m  \big ( (\sigma w'v+\tau)\bmod p\big ) + o' $ for some $o,o'\in \{-p,0,p\}$, so if $|W|\ge 2$ then we can pick $w'\neq w = \lfloor (i+j)/v\rfloor $ and get 
$\big ( (\sigma wv+\tau)\bmod p\big ) + (o-o') \equiv_m  \big ( (\sigma w'v+\tau)\bmod p\big ) $ (where $o-o'\in \{-2p,-p,0,p,2p\}$), meaning that $(i+j)$ is a bad index.

We have almost proved the lemma, and the only remaining job is to give an $O(1)$-time algorithm that returns $W$ (or reports $|W|\ge 2$). To do this we need to use precomputation: We initialize a length-$m$ array $T[0\dd m-1]$. Then we iterate over all $w'\in \N\cap [0,2N/v]$, and insert $w'$ to the cell $T\Big [\big ( (\sigma w'v+\tau)\bmod p\big )\bmod m\Big ]$. If more than one element is inserted to a cell then we simply mark that cell as overfull. This precomputation can be done in $O(m + N/v) \le O(t/\log^{12} t + \frac{t\log^5 t}{\log^{22} t})$ time.
Using this table $T$, we can easily compute $W$ in \cref{eqn:defnW} (or report $|W|\ge 2$) during query time, by looking up the cells $T\Big [\big (f-\big ((\sigma u + \tau)\bmod p\big )-o\big )\bmod m \Big ]$ for all $o\in \{-p,0,p\}$. This finishes the proof.
\end{proof}

Now we can formally state and prove the properties of {\RecoverBucketFast} (\cref{alg:recoverbucketthreewise}). Roughly speaking, {\RecoverBucketFast} recovers all the terms corresponding to non-bad indices that are hashed into light buckets.
\begin{lemma}
\label{lem:recoverbucketthreewise}
{\RecoverBucketFast} (\cref{alg:recoverbucketthreewise}) runs in time  $O(s\cdot \log \log t)$ (after $o(t)$-time preprocessing).

For bucket $b\in [m]$,
if the polynomial \cref{eqn:bucketpoly} is $s$-sparse, then with at least $1-2^{-(\log t)^{0.1}}$ probability, {\RecoverBucketFast} returns a polynomial $f(x) = \sum_{(i,j)\in I} A[i]B[j]\cdot x^{i+j}$ where the summation ranges over some set $I$ satisfying $I'_b\subseteq I\subseteq I_b$, where  $I_b := \{(i,j)\in [N]^2: h(i)+h(j)\bmod m=b\}$ and $I_b' := \{(i,j)\in [N]^2: h(i)+h(j)\bmod m=b \text{ and } (i+j) \text{ is not bad} \}$ (see \cref{defn:badindex}).
\end{lemma}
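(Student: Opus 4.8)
The proof combines the observation that $\{Z^{(i)}[b]\}_{i=0}^{2s}$ are the evaluations of the bucket polynomial \cref{eqn:bucketpoly} at the points $\{\omega^i\}_{i=0}^{2s}$, the guarantee of Prony's method with bit packing (\cref{lem:pronybitpack}), and the guarantee of \textsc{QueryIndex} (\cref{lem:queryindex}). For the running time and failure probability, note first that \cref{eqn:bucketpoly} has degree at most $2v-2<(q-1)/2$, so the hypotheses of \cref{lem:pronybitpack} are met (with $\F_q$, $q\le\polylog t$, playing the role of $\F_p$ there, and $\omega$ a primitive root of order $q-1$). Plugging $s=\Theta(\log^{12}t)$ (from \cref{line:oneshotbucketsize}) and $q=\polylog t$ into its time bound, and taking $\eps$ small enough that the second term is lower order, the Prony call costs $O(s\log s)=O(s\log\log t)$; the $s'\le s$ subsequent \textsc{QueryIndex} calls add $O(s)$ time, and the $\tO(t^{0.2})$ preprocessing of \cref{lem:pronybitpack} together with the table $T$ of \cref{lem:queryindex} ($O(m+N/v)$) are both $o(t)$. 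The only randomness internal to \cref{alg:recoverbucketthreewise} is in the Prony call, which by \cref{lem:pronybitpack} fails to recover $g$ with probability at most $2^{-(\log t)^{0.1}}$ when \cref{eqn:bucketpoly} is $s$-sparse (the hypothesis); this is exactly the probability that the algorithm misbehaves.

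From now on condition on Prony succeeding, so $g(y)=\sum_j c_j y^{e_j}$ equals \cref{eqn:bucketpoly} exactly. The technical heart is a structural fact about $h(x)=((\sigma x+\tau)\bmod p)\bmod m$: if a pair $(i,j)\in I_b$ has a non-bad sum $K=i+j$, then every $(i',j')\in I_b$ with $i'+j'\equiv K\pmod v$ has $i'+j'=K$. Indeed, $h(i')+h(j')\equiv b\equiv h(i)+h(j)\pmod m$; expanding each $h$ and splitting each sum as $\lfloor\cdot/v\rfloor v+(\cdot\bmod v)$ yields $\big((\sigma\lfloor K/v\rfloor v+\tau)\bmod p\big)+o\equiv\big((\sigma\lfloor(i'+j')/v\rfloor v+\tau)\bmod p\big)\pmod m$ for some $o\in\{-2p,-p,0,p,2p\}$, which is precisely condition \cref{eqn:badindex} for $K$ to be bad unless $\lfloor(i'+j')/v\rfloor=\lfloor K/v\rfloor$, and then $i'+j'=K$ since the two sums agree mod $v$. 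The same computation shows that whenever \textsc{QueryIndex}$(e_j\bmod v,b)$ returns a non-$\bot$ value, the candidate set constructed in the proof of \cref{lem:queryindex} is a singleton, so all bucket-$b$ pairs whose sum is $\equiv e_j\pmod v$ share one common sum equal to the returned value. Two consequences: (i) if monomial $e_j$ has nonzero coefficient and some contributing pair has a non-bad sum $K$, then all contributing pairs have sum $K$ and (by \cref{lem:queryindex}, which never returns $\bot$ on a non-bad index) \textsc{QueryIndex} returns $K$; (ii) \textsc{QueryIndex} never returns a value that is not a genuine sum of a bucket-$b$ pair, and returns $\bot$ only on bad indices.

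It remains to package the output. A sum $K$ can be spread over at most the two monomials $e=K\bmod v$ and $e=(K\bmod v)+v$ (the polynomial has degree $<2v$), both of which yield the same query $(K\bmod v,b)$ and hence the same $d_j=K$ when non-$\bot$; moreover their coefficients partition $\gamma_K:=\sum_{(i,j)\in I_b,\,i+j=K}A[i]B[j]$ whenever all bucket-$b$ pairs with sum $\equiv K\pmod v$ actually have sum $K$ (which holds exactly when the query is non-$\bot$). Hence, letting $D$ be the set of values returned by \textsc{QueryIndex} over the loop, the coefficient of $x^K$ in the returned $f$ equals $\gamma_K$ if $K\in D$ and $0$ if $K\notin D$. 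If $K$ is non-bad, the structural fact makes both monomials' coefficients sums over pairs of sum exactly $K$, so they really add to $\gamma_K$, and by (i) we have $K\in D$ whenever $\gamma_K\neq0$ (while if $\gamma_K=0$ there is nothing to recover); either way the $x^K$-coefficient of $f$ is $\gamma_K$. Setting $I:=I'_b\cup\{(i,j)\in I_b:i+j\text{ is bad and }i+j\in D\}$ then gives $I'_b\subseteq I\subseteq I_b$ by (ii), and a case check on whether $K$ is bad and whether $K\in D$ shows the $x^K$-coefficient of $\sum_{(i,j)\in I}A[i]B[j]x^{i+j}$ matches that of $f$ for every $K$, proving $f(x)=\sum_{(i,j)\in I}A[i]B[j]x^{i+j}$.

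The main obstacle is precisely this bookkeeping: monomials are indexed by $(i\bmod v)+(j\bmod v)\in[0,2v-2]$ rather than by the true sum (so each sum occupies up to two monomials), a monomial's $\F_q$-coefficient can vanish by cancellation without "losing" all of its pairs, and \textsc{QueryIndex} may legitimately decline on bad indices. The resolution is to pick the virtual set $I$ so that these phenomena are absorbed — including bad pairs only when their sum was actually returned — while the hash-structure fact guarantees the non-bad part of $I$ comes out exactly right.
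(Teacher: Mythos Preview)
Your proof is correct and follows the same approach as the paper—invoke Prony's method with bit packing (\cref{lem:pronybitpack}) to recover the bucket polynomial, then apply \textsc{QueryIndex} (\cref{lem:queryindex}) term by term—though you supply considerably more detail on the correctness side. The paper's own proof simply asserts that all non-bad terms are correctly included ``and possibly more terms,'' whereas you justify this via the structural fact (that a non-bad sum $K$ forces all bucket-$b$ pairs with sum $\equiv K\pmod v$ to have sum exactly $K$) and an explicit construction of the witnessing set $I$.
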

\begin{proof}
Recall the sparsity parameter (defined at \cref{line:oneshotbucketsize} of \cref{alg:recoverthreewise}) is $ s = \lceil 8t/m\rceil = O(\log^{12} t) $ (where $m = \lceil t/\log^{12} t\rceil$), and prime $q\in [0.5 \log^{30} t, \log^{30} t]$.

We first analyze the time complexity of  \cref{alg:recoverbucketthreewise}. The time complexity of running Prony's method with bit-packing by \cref{lem:pronybitpack} is $O(s\log s + (\frac{s^{1+\eps}}{\log t} + 1) (\log t)^{0.1} \cdot \polylog(q) ) \le O(s\cdot \log \log t)$ 
(by choosing $\eps< 0.01$ so that $s^{\eps} \le O( \log^{0.12} t)$). The rest of the steps take $O(s')\cdot O(1) \le O(s)$ time by \cref{lem:queryindex}. Regarding the preprocessing time, \cref{lem:pronybitpack} takes $\tilde O(t^{0.2})$ preprocessing time, and \cref{lem:queryindex} takes $o(t)$ preprocessing time.

Now we show the correctness in the case where
polynomial \cref{eqn:bucketpoly} is $s$-sparse. In this case, Prony's method (\cref{lem:pronybitpack}) correctly returns the polynomial $g(y) $ equal to \cref{eqn:bucketpoly} with at least $1-2^{-(\log t)^{0.1}}$ success probability. 
Note that $((i\bmod v)+(j\bmod v))\bmod v = (i+j)\bmod v$.
Then, by \cref{lem:queryindex}, in the final answer $f(x)$ we correctly include all terms $A[i]B[j] x^{i+j}$ where $(i+j)$ is not bad and $h(i)+h(j)\equiv b\pmod{m}$ (and possibly more terms).
\end{proof}

Now we can prove the properties of 
{\OneShotRecovery}  (\cref{alg:recoverthreewise}).
\begin{proof}[Proof of \cref{lem:recoverthreewise-step}]
    We first analyze the time complexity of 
    {\OneShotRecovery}  (\cref{alg:recoverthreewise}).
  First we preprocess all powers $\{\omega^j\}_{j=0}^{q-1}$ in $\poly(q) = \polylog(t)$ time.
Then, for each $0\le i\le 2s$, the time complexity for computing $Z^{(i)}:= h(A^{(i)})\star_m h(B^{(i)})$ via packed FFT (\cref{thm:packfft}) is $O(m\log q ) = O(m\log \log t)$ time. Summing over all $i$, the total time is $O(sm\log \log t) = O(t\log \log t)$.
Then, the algorithm runs {\RecoverBucketFast} $m$ times, each of which takes $O(s\log \log t)$ time by \cref{lem:recoverbucketthreewise}.
Summing over all buckets $b\in [m]$, the total time for \cref{lem:recoverbucketthreewise} is $O(m\cdot s\log \log t) = O(t\log \log t)$. 

Now we analyze the correctness of
{\OneShotRecovery}  (\cref{alg:recoverthreewise}).
First we need to analyze the  probability that the almost linear hash function $h(x) = \pi(x) \bmod m = ((\sigma x+\tau)\bmod p)\bmod m$ sampled at the beginning of \cref{alg:recoverthreewise} satisfies certain properties.
Apply \cref{lem:overfullbuckets} to 
$X\subseteq [2N]$ where $2N \le 2t\log^5 t$ and $X = \supp(A)+\supp(B)$, and we get that with at least $1-O(1/\log^2 t)$ probability, it holds that  (where $\tilde X=\pi(X) + \{0,p\}$)
\begin{align*}
\left \lvert \left \{ x \in \tilde X :  \sum_{x' \in \tilde X} \mathbf{1}  \left[ h(x) \equiv h(x') \pmod{m} \right] > \frac{4t}{m} \right \} \right \rvert & \le \frac{m\cdot 2N\log^3 (2N)}{t}\\
& \le O(\frac{(t/\log^{12} t)\cdot (t\log^5 t) \log^3 t }{t})\\
& \le O(t/\log^4 t).
\end{align*}
In other words, the number of items hashed into heavy buckets (recall the sparsity parameter is $s = \lceil 8t/m\rceil)$ is at most $O(t/\log^4 t)$.
By \cref{lem:probbadindex}, and linearity of expectation with  Markov's inequality, we know with at least $1-O(1/\log^2 t)$ probability over the sampled hash function $h$,  the number of bad indices among $\supp(A) + \supp(B)$ is at most $O(|\supp(A) + \supp(B)|/\log^3 t)\le O(t/\log^3 t)$. 
Now by a union bound, we assume both the aforementioned events successfully hold for $h$.

Then, by \cref{lem:recoverbucketthreewise}, we know that we recover all but $O(t/\log^3 t)$ terms with bad indices and $O(t/\log^4 t)$ terms that are hashed to heavy buckets, and the failed terms due to the $2^{-(\log t)^{0.1}}$ failure probability per bucket from \cref{lem:recoverbucketthreewise} itself.
The expected number of failed terms due to \cref{lem:recoverbucketthreewise} itself is in expectation at most $t \cdot 2^{-(\log t)^{0.1}} \le t/\log^{10} t $, and by Markov's inequality it is only $t/\log^{7} t$ with probability $1-1/\log^3 t$. We assume this holds by another union bound.
Hence, the final result $R$ satisfies $\|R - A\star B\|_0 \le O(t/\log^3 t)$.
\end{proof}

\section{Finding Moduli Deterministically}
\label{sec:determi}

The main goal of this section is to prove the following theorem, which deterministically finds a list $V$ of $N/Q$ integers from $\Theta(Q)$, for $Q = \frac{N}{\polylog N}$, such that the Farey fractions with denominator from $V$ are well-spaced. 

\begin{theorem}
\label{thm:deterministic_preprocess}
For any $N$, $Q = \frac{N}{\polylog N}$, and any constant $t \ge 1$, in $\tO(N^{1+1/t})$ deterministic time, we can find a list $V$ of $\frac{N}{Q}$ integers in $[Q/2^t, Q]$, such that for any length $\frac{1}{N}$ interval from $\R / \Z$, the number of elements in $\{\frac{h}{v}: v \in V, 1 \le h < v\}$ belonging to that interval is $O(\log^t N)$. 
\end{theorem}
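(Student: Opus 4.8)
The plan is to build the set $V$ greedily in $t$ levels, mirroring the multi-level derandomization of Chan and Lewenstein but choosing each level so that the resulting Farey fractions stay well-spaced, with the well-spacedness certified via the large sieve inequality (\cref{lem:largesieve_v2}). Concretely, I would represent each candidate denominator as a product $v = \rho_1 \rho_2 \cdots \rho_t$, where each $\rho_i$ is drawn from a pool of $\Theta(Q^{1/t})$ primes (or, for the speed reasons mentioned in the technical overview, prime powers / products of a bounded number of small primes) in the window $[Q^{1/t}/2, Q^{1/t}]$, so that $v \in [Q/2^t, Q]$. The target is a list $V$ of size $N/Q$; I would choose the $i$-th factor of all these $N/Q$ denominators simultaneously at stage $i$. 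The key quantity to control is, for the partially-built set of denominators, the maximum over all length-$1/N$ windows $J \subseteq \R/\Z$ of the count $|\{h/v : v \in V, 1\le h<v\} \cap J|$; call this the \emph{discrepancy} of $V$. I want to show that at each stage there is a choice of the new prime factors that keeps a suitable potential function small, and that after $t$ stages the discrepancy is $O(\log^t N)$.

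The main work is the probabilistic/averaging step at each level. Fix the denominators built so far; adjoining a fresh random prime factor $\rho$ from the pool multiplies each current fraction $h/v$ by spreading it over the $\rho$ (or so) fractions $\{(h\rho' + \text{shift})/(v\rho)\}$. I would bound the expected (over the random choice of the new factors) value of a potential that majorizes the window-count — the natural candidate is a smoothed $L^2$ or exponential-moment version, i.e.\ something like $\sum_{J} \big(\#(V\cap J)\big)^2$ over a net of $O(N)$ windows, or $\sum_{\alpha\in V}\big|\sum_n a_n e(n\alpha)\big|^2$ for a cleverly chosen coefficient vector that encodes ``many fractions fall in a short interval.'' \cref{lem:largesieve_v2} says exactly that if every fraction has at most $K$ near-neighbors within distance $\delta$, then this large-sieve sum is $\le K(\delta^{-1}+N)\sum|a_n|^2$; conversely, a small value of such a sum forces small $K$, i.e.\ low discrepancy. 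So the scheme is: use the large sieve to turn ``low $L^2$-type potential'' into ``low discrepancy,'' use linearity of expectation over the random prime factors to show the potential stays controlled in expectation (gaining only a $\operatorname{polylog} N$ factor per level, since each level multiplies the denominator by $\approx Q^{1/t}$ primes and the fraction-spacing deteriorates by at most $\operatorname{polylog}$), and then derandomize by the method of conditional expectations: at each of the $t$ levels, try all $\Theta(Q^{1/t})$ candidate primes for each slot and keep the one that does not increase the conditional expectation of the potential. Because the potential is a sum of at most $O(N)$ squared window-counts, each conditional expectation can be evaluated in $\tO(N)$ time, the pool has $\Theta(Q^{1/t}) \le \tO(N^{1/t})$ candidates, and there are $N/Q$ slots and $t$ levels, giving total deterministic time $\tO(N^{1+1/t})$ as claimed.

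Several pieces need care. First, the potential must simultaneously (a) be computable in near-linear time, (b) majorize the true maximum window-count up to the $O(\log^t N)$ slack we can afford, and (c) have its expectation provably shrink (or not grow too fast) under adjoining a random prime factor; reconciling (b) and (c) is where a smoothed/$L^2$ surrogate plus the large sieve is essential, since the raw $L^\infty$ discrepancy is not linear enough to run conditional expectations on. Second, I need the near-neighbor hypothesis of \cref{lem:largesieve_v2} to feed back correctly: after fixing $\delta = 1/N$, a bound of the form ``large-sieve sum $\le K\cdot(N+N)\sum|a_n|^2$'' should be read as an upper bound on $K$, the maximum number of $V$-fractions in any $1/N$-window, and I must verify the constants and the $\sum|a_n|^2$ normalization line up so that $t$ iterations yield $K = O(\log^t N)$ rather than, say, $O((\log N)^{O(t)})$ with a bad hidden exponent. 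Third, there is the bookkeeping that each $\rho_i$ ranges over primes (or near-prime composites) in $[Q^{1/t}/2, Q^{1/t}]$ so that the product genuinely lands in $[Q/2^t, Q]$ and so that distinct denominators are produced; using the prime number theorem there are $\Omega(Q^{1/t}/\log Q)$ such primes, comfortably more than the $N/Q \le \operatorname{polylog} N$ slots for any $t\ge 2$, and for the $t=1$ edge case one simply picks $N/Q$ distinct primes in $[Q/2,Q]$ directly and applies a single large-sieve step. I expect the genuinely delicate point to be step (c): controlling how the conditional expectation of the chosen potential behaves when the denominator gains a new prime factor — i.e.\ showing the ``spreading'' of fractions modulo the new factor is quasi-uniform on the scale $1/N$ on average — which is precisely the place where a standard large-sieve estimate (in the style of \cref{eqn:largesieve-useless} / Wolke's refinement \cref{thm:largesievewolke}) must be invoked to absorb the per-level loss into a single $\operatorname{polylog} N$ factor.
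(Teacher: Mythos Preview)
Your proposal has the right high-level shape (products of $t$ small primes, method of conditional expectations, a potential that majorizes the window-max), but two of the load-bearing steps do not work as written.

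First, the ``conversely'' you assert for \cref{lem:largesieve_v2} is not true. The large sieve is a one-way estimate: knowing $K$ bounds the sum, but a small value of $\sum_{\alpha}\lvert\sum a_n e(n\alpha)\rvert^2$ does not force small $K$ for any clever fixed choice of $(a_n)$ --- trivially, take $a_n\equiv 0$. More substantively, even with nondegenerate $(a_n)$ the sum only sees an \emph{average} over $\alpha$, not the worst window, so it cannot certify the $L^\infty$-type statement you need. The paper in fact does not invoke the large sieve at all inside this proof; large sieve is only used downstream in \cref{cor:largesieve_det} to translate ``well-spaced'' into ``few collisions.''

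Second, an $L^2$ potential $\sum_J(\#J)^2$ is too weak to recover the max bound: from $\sum_J(\#J)^2\le N\cdot C^2$ you only get $\max_J(\#J)\le\sqrt{N}\cdot C$, which is useless. The paper's fix is to take the $(\log N)$-th moment, $\sum_{x\in[N]}\bigl(\sum_{i\in[k]}\ell_i(x)\bigr)^{\log N}$, so that a final bound of the form $N\cdot O(\log^t N)^{\log N}$ immediately yields $\max_x\le O(\log^t N)$ after taking the $(\log N)$-th root. The initial expectation of this high moment is controlled not by large sieve but by (i) a direct count of how many Farey fractions $h/\prod_j p^{(j)}$ can fall in a fixed window (the paper's \cref{lem:primes_load}), which gives $\Ex[\ell_i(x)]=O(\log^t N\cdot Q/N)$, and (ii) \emph{independence across the $k$ products} plus a Chernoff-type tail, which bounds $\Ex[(\sum_i\ell_i)^{\log N}]$ by $O(\log^t N)^{\log N}$. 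Your ``per-level spreading is quasi-uniform'' intuition is replaced by this independence-plus-Chernoff step, and the conditional expectations are made $\tO(N)$-computable by precomputing, via FFT, the partial-sum arrays $S_{j,q}$ (\cref{cl:CRT}/\cref{cl:approx_l} let one approximate $\ell^*$ by a cyclic convolution $\ell$ of per-prime indicator arrays). Your order of fixing primes (level-by-level) and the paper's (denominator-by-denominator) are both compatible with conditional expectations; what you are missing is the correct potential and the independence/Chernoff argument that replaces your intended converse-large-sieve step.
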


As a sanity check, let us verify the existence of such a small list $V$ for the case $t = 1$. In this case, let $\mathcal{P}$ be the set of primes from $[Q/2, Q]$, and we can simply pick $N / Q$ random primes from $\mathcal{P}$. For each length $1/N$ interval, the number of elements from $\{\frac{h}{p}: p \in \mathcal{P}, 1 \le h < p\}$ contained in it is at most $O(Q^2 / N)$, as every two elements in the set differ by at least $1 / (Q/2)^2$. Therefore, as we sample $N / Q$ random primes from $\mathcal{P}$, the expected number of elements from $\{\frac{h}{v}: v \in V, 1 \le h < v\}$  contained in the interval is $O(\frac{Q^2}{N} \cdot \frac{N / Q}{Q / \log Q}) = O(\log Q) = O(\log N)$, and this can be shown to hold w.h.p. Taking a union bound over all intervals (in fact, we only need to consider intervals $[0, 1/N), [1/ N, 2/N), \ldots, [(N-1)/N, N)$, as any length $1/N$ interval in $\R / \Z$ can be covered by two of them) shows that each length $1/N$ interval contains $O(\log N)$ Farey fractions w.h.p., implying the existence of such a set. \cref{thm:deterministic_preprocess} states that we can find $V$ deterministically, and has better running time for larger $t$. 

Combining  \cref{thm:deterministic_preprocess}  and a version of the Large Sieve Inequality, \cref{lem:largesieve_v2} leads to the following result, which states that mod-prime hash (technically, now we can mod a composite number as well) performs essentially the same if the modulus is chosen randomly from the small set $V$, instead of from all primes $[Q / 2, Q]$.

\begin{corollary}
\label{cor:largesieve_det}
For any $N$, $Q = \frac{N}{\polylog N}$, and any constant $t \ge 1$, in $\tO(N^{1+1/t})$ time, we can find a list $V$ of $\frac{N}{Q}$ integers in $[Q/2^t, Q]$, so that for any $A \subseteq [N]$,
\[\sum_{a,b\in A} \Pr_{v \in V}[v \mid a-b] \le   |A| \cdot O\left(\frac{|A|}{Q} + \log^t N\right).
\]
\end{corollary}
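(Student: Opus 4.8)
The plan is to combine the deterministic construction of \cref{thm:deterministic_preprocess} with the large sieve inequality in the form of \cref{lem:largesieve_v2}, following the template of the proof of \cref{lem:largesievenumbercollision} but restricting the Farey fractions to those whose denominator lies in the list $V$. First I would invoke \cref{thm:deterministic_preprocess} with the given $N$, $Q = N/\polylog N$, and the constant $t$, which produces in $\tO(N^{1+1/t})$ deterministic time a list $V$ of $N/Q$ integers in $[Q/2^t, Q]$ such that every length-$1/N$ interval of $\R/\Z$ contains $O(\log^t N)$ of the fractions $\{h/v : v \in V,\ 1 \le h < v\}$. This is the only step that costs running time; everything else is a purely analytic property of $V$.

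Next I would carry out the Fourier expansion exactly as in \cref{lem:largesievenumbercollision}. Set $a_n = \mathbf{1}[n \in A]$ for $n \in [N]$, so $\sum_n |a_n|^2 = |A|$, and let $S(\alpha) = \sum_{a \in A} e(a\alpha)$, so $S(0) = |A|$. Using $\mathbf{1}[v \mid a - b] = \frac{1}{v}\sum_{h=0}^{v-1} e(h(a-b)/v)$ and $\sum_{a,b\in A} e(h(a-b)/v) = |S(h/v)|^2$, and peeling off the $h=0$ term together with the bound $1/v \le 2^t/Q$ (valid since $v \ge Q/2^t$), we get
\[
\sum_{a,b \in A} \Pr_{v \in V}[v \mid a - b] = \Ex_{v \in V}\Big[ \tfrac{1}{v} \sum_{h=0}^{v-1} |S(h/v)|^2 \Big] \le \frac{2^t |A|^2}{Q} + \frac{2^t}{Q\,|V|} \sum_{v \in V} \sum_{h=1}^{v-1} |S(h/v)|^2.
\]

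The key step is to bound the double sum by the large sieve. Let $\caX$ be the (multi)set $\{h/v : v \in V,\ 1 \le h < v\}$, so that $\sum_{v \in V}\sum_{h=1}^{v-1} |S(h/v)|^2 = \sum_{\alpha \in \caX} |S(\alpha)|^2$. The spacing guarantee from the first step implies that for every $\alpha$ the set $\{\beta : \|\alpha-\beta\| < 1/N\}$, being coverable by $O(1)$ intervals of length $1/N$, contains at most $K = O(\log^t N)$ elements of $\caX$; hence \cref{lem:largesieve_v2} with $\delta = 1/N$ yields $\sum_{\alpha \in \caX} |S(\alpha)|^2 \le K(\delta^{-1} + N)|A| = O(\log^t N)\cdot 2N\cdot |A|$. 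Substituting this back, and using $|V| = N/Q$ together with $2^t = O(1)$ (as $t$ is a constant), the second term collapses to $O(\log^t N)\cdot|A|$, which with the first term $O(|A|^2/Q)$ gives the claimed bound $|A| \cdot O(|A|/Q + \log^t N)$.

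The main obstacle I anticipate is the multiplicity bookkeeping: \cref{thm:deterministic_preprocess} is phrased for the \emph{set} of Farey fractions, while the sieve sum naturally counts each pair $(h,v)$ separately. I expect to handle this by observing that the construction underlying \cref{thm:deterministic_preprocess} in fact delivers the multiset form of the spacing bound — equivalently, each reduced fraction appears with only $O(1)$ multiplicity among $v \in V$ in the regimes of interest (multiplicity one when $V$ consists of primes) — so that the multiset $\caX$ still satisfies the hypothesis of \cref{lem:largesieve_v2} with $K = O(\log^t N)$. Alternatively, to be fully conservative one can decompose $\caX$ into $O(\log^t N)$ genuinely $\delta$-spaced sets and apply \cref{thm:largesieve-basic} to each, which is precisely how \cref{lem:largesieve_v2} is proved.
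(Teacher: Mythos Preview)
Your proposal is correct and follows essentially the same route as the paper: invoke \cref{thm:deterministic_preprocess}, apply \cref{lem:largesieve_v2} with $\caX=\{h/v:v\in V,\ 1\le h<v\}$, $\delta=1/N$, $K=O(\log^t N)$, and then run the Fourier expansion as in \cref{lem:largesievenumbercollision}. Your multiplicity worry is not an actual obstacle here: the conclusion of \cref{thm:deterministic_preprocess} already bounds the number of pairs $(h,v)$ landing in each length-$1/N$ interval (its proof controls $\sum_i \ell^*(\cdot,x)$), so $\caX$ as a multiset satisfies the hypothesis of \cref{lem:largesieve_v2} directly.
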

\begin{proof}
Let $V$ be the set of integers given by \cref{thm:deterministic_preprocess}. 
    Apply \cref{lem:largesieve_v2} with $\mathcal{X} = \{\frac{h}{v}: v \in V, 1 \le h < v\}$, $\delta = \frac{1}{N}$, $a_n = \begin{cases}1 & n \in A \\ 0 & n\notin A\end{cases}$ and $K = O(\log^t N)$. Let $S(\alpha) = \sum_{n=0}^{N-1} a_n  e(\alpha a_n)=\sum_{a\in A} e(\alpha a)$. \cref{lem:largesieve_v2} implies
    \[\sum_{v\in V}\sum_{h=1}^{v-1} \left \lvert S\left (h/v\right )\right \rvert ^2 \le O(N |A| \log^t N).\]
        
    Then we have
\begin{align*}
    \sum_{a,b\in A} \Pr_{v\in V}[v \mid a-b] &= \sum_{a,b\in A}\Ex_{v\in V} \left [\frac{1}{v}\sum_{h=0}^{v-1}e\left (\frac{h(a-b)}{v}\right )\right ]\\
    & =   \Ex_{v\in V}\left [  \frac{1}{v}\sum_{h=0}^{v-1} \left \lvert S(h/v) \right  \rvert^2\right ]\\
    & \le \frac{|A|^2}{Q/2^t} + \frac{1}{|V| Q/2^t}\sum_{v\in V}\sum_{h=1}^{v-1} \left \lvert S\left (h/v\right )\right \rvert ^2\\
    &= |A| \cdot O\left(\frac{|A|}{Q} + \log^t N\right). 
\end{align*}
\end{proof}

Next, we show a few lemmas towards proving \cref{thm:deterministic_preprocess}.

\begin{lemma}
\label{lem:primes_load}
Let $N$ and $R$ be integers.
For any fixed $x \in [N]$, the number of tuples $(p^{(0)}, p^{(1)}, \ldots, p^{(t-1)}, h)$, where $p^{(0)}, p^{(1)}, \ldots, p^{(t-1)}$ are distinct primes from $[R /2, R]$ and $1 \le h < \prod_i p^{(i)}$ such that $\frac{h}{\prod_i p^{(i)}} \in [\frac{x}{N}, \frac{x+1}{N})$ is $O(R^{t-1}+\frac{R^{2t}}{N})$. 
\end{lemma}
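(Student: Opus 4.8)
The plan is to count lattice points. Fix $x \in [N]$ and let me write $q = \prod_{i=0}^{t-1} p^{(i)}$ for the product of the $t$ distinct primes from $[R/2, R]$; so $q \in [(R/2)^t, R^t]$, i.e.\ $q = \Theta(R^t)$. For a fixed choice of the primes (hence of $q$), the values $h$ with $1 \le h < q$ and $h/q \in [\tfrac{x}{N}, \tfrac{x+1}{N})$ are exactly the integers $h$ in an interval of length $q/N$, so there are at most $\lfloor q/N \rfloor + 1 = O(R^t/N) + O(1)$ of them. Summing over all $O(R^{t})$ choices of the unordered tuple of primes (there are $O\!\big((R/\log R)^t\big) = O(R^t)$ such choices by the prime number theorem, but even the crude bound $O(R^t)$ suffices) would give $O\big(R^t \cdot (R^t/N + 1)\big) = O(R^{2t}/N + R^t)$, which is worse than the claimed $O(R^{t-1} + R^{2t}/N)$ in the regime where $R^t$ dominates. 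So the naive count is not good enough, and the point of the lemma is the savings from $R^t$ down to $R^{t-1}$.

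**The key idea: restrict to $q/N < 1$ and count fractions directly.** The interesting regime is $R^t \lesssim N$ (otherwise $R^{2t}/N \gtrsim R^t$ and the naive bound already suffices, well almost — I should double check: if $R^t \ge N$ then $R^{2t}/N \ge R^t$, so $O(R^{2t}/N + R^t) = O(R^{2t}/N)$, consistent with the claim). So assume $R^t \le N$. Then each interval $[\tfrac{x}{N}, \tfrac{x+1}{N})$ has length $< 1/R^t \le 1/q$, so it contains \emph{at most one} fraction $h/q$ for each fixed $q$. Hence I must bound the number of $t$-tuples of distinct primes $(p^{(0)}, \dots, p^{(t-1)})$ from $[R/2,R]$ such that the single candidate $h = \lceil xq/N \rceil$ actually satisfies $h/q \in [\tfrac xN, \tfrac{x+1}{N})$, i.e.\ such that the interval $[\tfrac xN, \tfrac{x+1}N)$ contains a rational with denominator exactly $q = \prod p^{(i)}$. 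The plan is to show this count is $O(R^{t-1})$.

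**Reducing to a divisor/denominator count.** Any fraction $h/q$ with $q = \prod_{i} p^{(i)}$ a product of $t$ distinct primes, in lowest terms, has denominator $q' \mid q$; conversely each fraction in $[\tfrac xN, \tfrac{x+1}N)$ with small denominator $q'$ gets counted once for each way to write $q = q' \cdot (\text{product of the remaining } t - k \text{ primes})$, where $q'$ is a product of $k$ of the primes. The cleanest route: I would parametrize by the reduced fraction $a/b \in [\tfrac xN, \tfrac{x+1}N)$ with $b \le R^t$, and note that the number of tuples producing this fraction equals the number of ways to choose the prime factorization — but $b$ itself need not be squarefree or a product of primes in range, so this requires $b$ to be a product of some subset of primes from $[R/2,R]$, forcing $b$ squarefree with all prime factors in $[R/2,R]$. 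Actually I think the slicker approach is: \emph{fix the first $t-1$ primes $p^{(0)},\dots,p^{(t-2)}$ freely} — there are $O(R^{t-1})$ choices — and show that once these are fixed, the number of choices of the last prime $p^{(t-1)} \in [R/2,R]$ making $[\tfrac xN, \tfrac{x+1}N)$ contain a fraction with denominator $\prod_{i} p^{(i)}$ is $O(1)$. Write $Q' = \prod_{i=0}^{t-2} p^{(i)} = \Theta(R^{t-1})$; we need $p := p^{(t-1)}$ with some $h$ satisfying $\tfrac xN \le \tfrac{h}{pQ'} < \tfrac{x+1}N$, i.e.\ $\tfrac{xQ'p}{N} \le h < \tfrac{(x+1)Q'p}{N}$; this interval has length $Q'p/N < Q'R/N = \Theta(R^t/N) \le 1$ (in our regime), so there's at most one such $h$, namely $h = \lceil xQ'p/N \rceil$, and it exists iff $\{ xQ'p/N \} $ lies in an interval of length $Q'p/N$ — a condition on $p \bmod N$ roughly — which holds for $O(1 + Q'p/N \cdot R / N \cdot \dots)$. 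Hmm, this is exactly where it gets delicate.

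**The main obstacle.** The hard part is the last step: bounding, for fixed $Q'$ and $x$, the number of primes $p \in [R/2,R]$ for which the length-$(Q'p/N)$ interval $[\tfrac{xQ'p}{N}, \tfrac{(x+1)Q'p}{N})$ (mod 1, thinking of $h$ ranging over integers) contains an integer. This is a counting problem about when $\lceil xQ'p/N\rceil$ falls below $(x+1)Q'p/N$; equivalently, letting $y = xQ'/N$, we want the number of integers (or primes) $p$ with $\{yp\} > 1 - (Q'/N)p$, a " Bohr set''-type condition. I expect to handle this by the three-distance theorem or a direct Farey/continued-fraction argument on $y = xQ'/N$: the $p$'s for which $yp$ is within $\delta$ of an integer, for $p$ in a range of length $R$, number $O(1 + R\delta + R/\text{(denominator of convergent)})$, and since $\delta \approx Q'R/N \approx R^t/N \le 1$ here, and summing the varying $\delta = Q'p/N$ over the range, one gets $O(R \cdot R^t/N + R^{t-1}\cdot(\text{something}))$. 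The bookkeeping to land exactly at $O(R^{t-1} + R^{2t}/N)$ after multiplying by the $O(R^{t-1})$ choices of $Q'$ is the crux, and I would want to be careful that the "$+1$'' per $Q'$ (the at-most-one $h$) is what contributes the $R^{t-1}$ term, while the "$R\delta$'' term contributes $O(R^{t-1} \cdot R \cdot R^t/(N R^{t-1})) = O(R^{2t}/N)$ after accounting correctly for the number of $(Q', p)$ pairs. Once the per-$Q'$ bound is pinned down, summing over the $O(R^{t-1})$ choices of the first $t-1$ primes and over the (at most one) resulting $h$ completes the proof.
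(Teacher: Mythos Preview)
Your ``fix the first $t-1$ primes and count the last one'' route runs into a real obstacle that you yourself flag but do not resolve. For fixed $Q' = \prod_{i<t-1} p^{(i)}$ and $\alpha := xQ'/N$, you need the number of integers $p \in [R/2,R]$ with $\{\alpha p\}$ in an interval of length $\delta \approx Q'R/N$ to be $O(1 + R\delta)$. That estimate is \emph{not} a theorem uniform in $\alpha$: if $\alpha$ is (close to) a rational $a/d$ with small denominator $d$, then $\{\alpha p\}$ essentially cycles through only $d$ values, and if one of them lands in the target interval you get $\Theta(R/d)$ hits rather than $O(1+R\delta)$. Any fix via three-distance or continued fractions makes the error term depend on the Diophantine properties of $xQ'/N$, and you would then have to sum that error over all $\Theta(R^{t-1})$ choices of $Q'$ with no control on how these Diophantine properties are distributed. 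Nothing in your outline addresses that sum; as written, the argument does not close.

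The approach you briefly mention and then abandon---parametrize by the \emph{reduced} fraction---is exactly what the paper does, and it sidesteps the equidistribution issue entirely. Partition the tuples according to $g := \#\{\text{primes among }p^{(0)},\dots,p^{(t-1)}\text{ dividing }h\}$; after cancelling, the reduced fraction $h_0/\prod_{j\ge g} p^{(j)}$ has denominator a product of $t-g$ distinct primes from $[R/2,R]$, and all such reduced fractions are pairwise distinct, hence $\Omega(R^{-2(t-g)})$-spaced by the standard Farey bound $|a/b - a'/b'|\ge 1/(bb')$. Thus at most $1 + O(R^{2(t-g)}/N)$ of them lie in $[x/N,(x+1)/N)$. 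Each such reduced fraction is hit by $O(R^g)$ tuples (the free choice of the $g$ cancelled primes), giving $O(R^g + R^{2t-g}/N) \le O(R^{t-1} + R^{2t}/N)$ for each $g\in\{0,\dots,t-1\}$, and summing over the $t=O(1)$ values of $g$ finishes. The whole point is that Farey spacing is a \emph{worst-case} geometric fact requiring no equidistribution input; your chosen route trades this for an averaging problem that is genuinely harder.
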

\begin{proof}

For any fixed integer $g \in [t]$, we will show that the number of tuples $(p^{(0)}, p^{(1)}, \ldots, p^{(t-1)}, h)$ satisfying the lemma statements, plus the additional requirement that the greatest common divisor between $h$ and $\prod_i p^{(i)}$ has $g$ prime factors, is $O(R^{t-1}+\frac{R^{2t}}{N})$. As there are $t = O(1)$ choices for $g$, the total number of tuples satisfying the requirement in the statement is $O(R^{t-1}+\frac{R^{2t}}{N})$. 

Without loss of generality, we assume that the greatest common divisor between $h$ and $\prod_i p^{(i)}$ is $\prod_{i \in [g]} p^{(i)}$. We also assume that $p^{(g)} < p^{(g+1)} < \cdots < p^{(t-1)}$. These constraints only incur a constant factor loss in the number of tuples as $t$ is a constant. We can then rewrite $h = h_0 \cdot \prod_{i \in [g]} p^{(i)}$ and $\gcd\left(h_0, \prod_{g \le i < t} p^{(i)}\right)=1$. 

First, consider the following multi-set of numbers, 
\begin{align*}
    S := \left\{\frac{h_0}{\prod_{g \le i < t} p^{(i)}} : \right. & p^{(g)} < p^{(g+1)} < \cdots < p^{(t-1)} \text{ are primes in } [R/2, R], \\ 
    & \left. 1 \le h_0 < {\prod_{g \le i < t} p^{(i)}}, \gcd\left(h_0, \prod_{g \le i < t} p^{(i)}\right)=1\right\}.
\end{align*}
All numbers in the set are distinct, because all numbers either have different numerator or denominator and all numbers are represented by irreducible fractions. For any pair of numbers $\frac{h_0}{\prod_{g \le i < t} p^{(i)}}$ and $\frac{h_0'}{\prod_{g \le i < t} p^{(i)'}}$, 
\begin{align*}
\left|\frac{h_0}{\prod_{g \le i < t} p^{(i)}} - \frac{h_0'}{\prod_{g \le i < t} p^{(i)'}}\right| &= \frac{|h_0 \cdot \prod_{g \le i < t} p^{(i)'} - h_0' \cdot \prod_{g \le i < t} p^{(i)}|}{\prod_{g \le i < t} p^{(i)} \cdot \prod_{g \le i < t} p^{(i)'}} \\ &\ge \frac{1}{\prod_{g \le i < t} p^{(i)} \cdot \prod_{g \le i < t} p^{(i)'}} \\ & = \Omega\left(\frac{1}{R^{2t-2g}}\right). 
\end{align*}
Therefore, as $[\frac{x}{N}, \frac{x+1}{N})$ has size $\frac{1}{N}$, there can be at most $1+\frac{1/N}{\Omega(1/R^{2t-2g})} = 1 + O(R^{2t-2g}/N)$ numbers from $S$ in the range.  

Now consider the following multi-set of numbers, whose size is exactly the number of tuples we need to bound:
\begin{align*}
    S^* := \left\{\frac{h}{\prod_{i \in [t]} p^{(i)}} : \right. & p^{(0)}, \ldots, p^{(g-1)}, p^{(g)} < p^{(g+1)} < \cdots < p^{(t-1)} \text{ are distinct primes in } [R/2, R], \\ 
    & \left. 1 \le h < {\prod_{i \in [t]} p^{(i)}}, \gcd\left(h, \prod_{i \in [t]} p^{(i)}\right)=\prod_{i \in [g]} p^{(i)}\right\}.
\end{align*}
Note that each number in $S^*$ appears in $S$. Also, for each number $s = \frac{h_0}{\prod_{g \le i < t} p^{(i)}} \in S$, its number of  occurrences in $S^*$ is the number of $p^{(0)}, \ldots, p^{(g-1)}$ that are distinct primes in $[R/2, R]$ different from $p^{(g)}, \ldots, p^{(t-1)}$. Thus, the number of occurrences of $s$ in $S^*$ is $O(R^g)$. As a result, the number of elements from $S^*$ that belong to $[\frac{x}{N}, \frac{x+1}{N})$ is $O(R^g) \cdot (1 + O(R^{2t-2g}/N)) = O(R^g + R^{2t-g} / N)= O(R^{t-1} + R^{2t} / N)$. 
\end{proof}

\begin{claim}
\label{cl:CRT}
    For distinct primes $p^{(0)}, p^{(1)}, \ldots, p^{(t-1)}$, and for $0 \le h < \prod_{i} p^{(i)}$, there exist unique $a^{(0)}, a^{(1)}, \ldots, a^{(t-1)}$ with $0 \le a^{(i)} < p^{(i)}$ for every   $i \in [t]$ such that $\sum_i \frac{a^{(i)}}{p^{(i)}} \equiv \frac{h}{\prod_{i} p^{(i)}} \pmod{1}$. 
\end{claim}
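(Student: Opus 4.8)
The plan is to reduce the statement to the standard Chinese Remainder Theorem. Write $P = \prod_{i\in[t]} p^{(i)}$ and for each $i$ set $P_i = P / p^{(i)} = \prod_{j\neq i} p^{(j)}$. Since the primes $p^{(0)},\dots,p^{(t-1)}$ are distinct, $P_i$ is coprime to $p^{(i)}$, so it has a multiplicative inverse modulo $p^{(i)}$; let $u^{(i)}$ denote this inverse, i.e.\ $P_i u^{(i)} \equiv 1 \pmod{p^{(i)}}$. The candidate decomposition is then $a^{(i)} := (h \cdot u^{(i)}) \bmod p^{(i)}$, which lies in $[0, p^{(i)})$ by construction.

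\textbf{Existence.} I would verify that $\sum_i \frac{a^{(i)}}{p^{(i)}} \equiv \frac{h}{P}\pmod 1$ by clearing denominators: multiplying both sides by $P$, the claim becomes $\sum_i a^{(i)} P_i \equiv h \pmod P$. To check this congruence modulo $P$, it suffices (again by CRT / since the $p^{(i)}$ are distinct primes) to check it modulo each $p^{(i)}$ separately. Modulo $p^{(i)}$, every term $a^{(j)}P_j$ with $j\neq i$ vanishes because $p^{(i)}\mid P_j$, so the left side reduces to $a^{(i)} P_i \equiv (h u^{(i)})P_i \equiv h \pmod{p^{(i)}}$, using $P_i u^{(i)}\equiv 1$. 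Hence $\sum_i a^{(i)}P_i \equiv h \pmod{p^{(i)}}$ for all $i$, so $\sum_i a^{(i)}P_i \equiv h \pmod P$, which is exactly what we needed.

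\textbf{Uniqueness.} Suppose $(a^{(i)})_i$ and $(b^{(i)})_i$ both satisfy the congruence with $0\le a^{(i)}, b^{(i)} < p^{(i)}$. Then $\sum_i (a^{(i)}-b^{(i)})P_i \equiv 0 \pmod P$, and reducing modulo $p^{(i)}$ as above kills all but the $i$-th term, giving $(a^{(i)}-b^{(i)})P_i \equiv 0 \pmod{p^{(i)}}$; since $P_i$ is invertible mod $p^{(i)}$, this forces $a^{(i)}\equiv b^{(i)}\pmod{p^{(i)}}$, and the range constraint $0\le a^{(i)}, b^{(i)} < p^{(i)}$ upgrades this to $a^{(i)} = b^{(i)}$. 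This is really just the bijection in CRT between $\Z/P\Z$ and $\prod_i \Z/p^{(i)}\Z$ applied to the numerator; I do not expect any obstacle here, as the entire claim is a routine repackaging of CRT once one clears denominators. The only mild care needed is to state clearly that the congruence $\pmod 1$ in the claim is equivalent, after multiplying through by the integer $P$, to a congruence $\pmod P$ among integers, so that CRT applies.
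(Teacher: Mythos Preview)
Your proposal is correct and follows essentially the same approach as the paper: multiply through by $P=\prod_i p^{(i)}$ to convert the congruence $\pmod 1$ into $\sum_i a^{(i)} P_i \equiv h \pmod P$, then reduce modulo each $p^{(j)}$ separately (where all terms but the $j$-th vanish) to obtain $a^{(j)} \equiv h\cdot P_j^{-1} \pmod{p^{(j)}}$, which determines $a^{(j)}$ uniquely in $[0,p^{(j)})$. The paper's proof is a bit terser but the argument is identical.
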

\begin{proof}
Let $P = \prod_{i} p^{(i)}$. By multiplying $P$ to both sides, the equivalence can be written as 
$$\sum_i a^{(i)} \cdot \frac{P}{p^{(i)}} \equiv h \pmod{P}. $$
By the Chinese Remainder Theorem, it is equivalent to 
$$\sum_i a^{(i)} \cdot \frac{P}{p^{(i)}} \equiv h \pmod{p^{(j)}} \quad \text{for } j \in [t]. $$
By noticing that $\frac{P}{p^{(i)}}$ is a multiple of $p^{(j)}$ unless $i = j$, the above can be equivalently written as 
$$a^{(j)} \cdot \frac{P}{p^{(j)}} \equiv h \pmod{p^{(j)}} \quad \text{for } j \in [t]. $$
Thus, we can uniquely pick $a^{(j)}$ as $h \cdot (\frac{P}{p^{(j)}})^{-1} \bmod {p^{(j)}}$. 
\end{proof}

For each prime $p$, we define a length-$N$ sparse array $A_p$, indexed by $x \in [N]$, where $$A_p[x] := \left|\left\{  a  \in [p]: \frac{a}{p} \in \left[\frac{x}{N}, \frac{x+1}{N}\right)\right\} \right|.$$ 

For distinct primes $p^{(0)}, p^{(1)}, \ldots, p^{(t-1)}$ and for $x\in [N]$, let $\ell^*(p^{(0)}, p^{(1)}, \ldots, p^{(t-1)}, x)$ be the number of $1 \le h < \prod_i p^{(i)}$ where $\frac{h}{\prod_i p^{(i)}} \in [\frac{x}{N}, \frac{x+1}{N})$. Also, we define    $\ell(p^{(0)}, p^{(1)}, \ldots, p^{(t-1)}, x)$, which is easy to compute given $A_{p^{(i)}}$ for $i \in [t]$, and is a good approximation of $\ell^*$: 
\[\ell(p^{(0)}, p^{(1)}, \ldots, p^{(t-1)}, x) := \left\{
  \begin{array}{lr}
    \left(A_{p^{(0)}} \star_N A_{p^{(1)}} \star_N \cdots \star_N A_{p^{(t-1)}}\right)[x] & : x \ne 0,\\
    \left(A_{p^{(0)}} \star_N A_{p^{(1)}} \star_N \cdots \star_N A_{p^{(t-1)}}\right)[x] - 1 & : x  = 0.
  \end{array}
\right.
\]
\begin{claim}
\label{cl:approx_l}
    For any $N$, distinct primes $p^{(0)}, p^{(1)}, \ldots, p^{(t-1)}$ and $x \in [N]$, we have
    \begin{equation}
    \label{eq:approximate_l_1}
    \ell^*(p^{(0)}, p^{(1)}, \ldots, p^{(t-1)}, x) \le \sum_{-t \le \Delta \le t} \ell(p^{(0)}, p^{(1)}, \ldots, p^{(t-1)}, (x + \Delta) \bmod{N}))
    \end{equation}
    and 
    \begin{equation}
    \label{eq:approximate_l_2}
    \ell(p^{(0)}, p^{(1)}, \ldots, p^{(t-1)}, x) \le \sum_{-t \le \Delta \le t} \ell^*(p^{(0)}, p^{(1)}, \ldots, p^{(t-1)}, (x + \Delta) \bmod{N})). 
    \end{equation}
\end{claim}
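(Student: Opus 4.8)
The plan is to relate the ``true'' count $\ell^*$ with the cyclic convolution count $\ell$ by tracking how the Chinese Remainder decomposition of a fraction $\frac{h}{\prod_i p^{(i)}}$ interacts with the floor operation $\lfloor N\cdot\rflor$ that defines the buckets. By \cref{cl:CRT}, every $h$ with $0\le h<\prod_i p^{(i)}$ corresponds bijectively to a tuple $(a^{(0)},\dots,a^{(t-1)})$ with $0\le a^{(i)}<p^{(i)}$ and $\frac{h}{\prod_i p^{(i)}} \equiv \sum_i \frac{a^{(i)}}{p^{(i)}} \pmod 1$. Each summand $\frac{a^{(i)}}{p^{(i)}}$ lands in some bucket $x_i\in[N]$, i.e.\ $\frac{a^{(i)}}{p^{(i)}}\in[\frac{x_i}{N},\frac{x_i+1}{N})$, and it is exactly these bucket assignments that $A_{p^{(i)}}$ records. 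Thus $\left(A_{p^{(0)}} \star_N \cdots \star_N A_{p^{(t-1)}}\right)[x]$ counts tuples $(a^{(0)},\dots,a^{(t-1)})$ whose bucket indices $x_0,\dots,x_{t-1}$ sum to $x$ modulo $N$; subtracting $1$ when $x=0$ removes the all-zero tuple $h=0$, so $\ell(p^{(0)},\dots,p^{(t-1)},x)$ counts the nonzero tuples with $\sum_i x_i \equiv x \pmod N$.

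First I would make precise the arithmetic of rounding: if $\frac{a^{(i)}}{p^{(i)}}\in[\frac{x_i}{N},\frac{x_i+1}{N})$ for each $i$, then $\sum_i \frac{a^{(i)}}{p^{(i)}}$ differs from $\frac{\sum_i x_i}{N}$ by something in $[0,\frac{t}{N})$ before reduction mod $1$, hence after reducing mod $1$ the fraction $\frac{h}{\prod_i p^{(i)}}$ lies in a bucket $x$ with $x \equiv \big(\sum_i x_i\big) + \Delta \pmod N$ for some integer $\Delta$ with $0\le \Delta \le t$ (the $\Delta$ accounts both for the rounding slack and for the possible wraparound when $\sum_i \frac{a^{(i)}}{p^{(i)}}\ge 1$; since each term is $<1$ and there are $t$ of them, the integer part is at most $t-1$, but it is absorbed by the mod-$N$ reduction in $\star_N$ anyway, so only the $[0,\frac{t}{N})$ slack matters for the bucket shift, giving $0\le\Delta\le t$). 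Conversely, a tuple counted by $\ell(\cdots,x)$, i.e.\ with $\sum_i x_i\equiv x\pmod N$, corresponds to an $h$ whose actual bucket is $x-\Delta\bmod N$ for some $0\le\Delta\le t$. Summing over the at most $2t+1$ possible shifts $\Delta\in\{-t,\dots,t\}$ on the right-hand side then covers every contribution, which gives \eqref{eq:approximate_l_1} and \eqref{eq:approximate_l_2} respectively. For \eqref{eq:approximate_l_1}: each $h$ with $\frac{h}{\prod p^{(i)}}$ in bucket $x$ has its CRT bucket-sum $\sum_i x_i$ equal to $(x-\Delta)\bmod N$ for some $\Delta\in[0,t]$, so it is counted in $\ell(\cdots,(x-\Delta)\bmod N)$ for that $\Delta$; ranging $\Delta$ over $\{-t,\dots,t\}$ (which contains $[0,t]$) upper-bounds $\ell^*(\cdots,x)$ by the displayed sum. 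For \eqref{eq:approximate_l_2}: symmetrically, each tuple counted by $\ell(\cdots,x)$ corresponds to an $h$ whose true bucket is $(x+\Delta)\bmod N$ for some $\Delta\in[0,t]$, hence is counted in $\ell^*(\cdots,(x+\Delta)\bmod N)$; again ranging over $\{-t,\dots,t\}$ gives the bound. One should double-check that the nonzero-tuple bookkeeping (the ``$-1$ when $x=0$'') is consistent: the single tuple removed is the all-zero one, corresponding to $h=0$, which $\ell^*$ also excludes since it requires $1\le h$, so both sides treat it the same way and no double-counting or undercounting of the zero term occurs.

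I expect the main obstacle to be getting the direction and range of the shift $\Delta$ exactly right, including correctly handling the wraparound mod $1$ (equivalently, the reduction mod $N$ built into $\star_N$) so that the claimed window $\{-t,\dots,t\}$ genuinely suffices in both directions. In particular one must verify that no tuple is shifted by more than $t$ buckets: the total rounding error $\sum_i(\frac{a^{(i)}}{p^{(i)}} - \frac{x_i}{N})$ lies in $[0,\frac{t}{N})$, strictly less than $\frac{t+1}{N}$, so the bucket index of the sum differs from $\sum_i x_i \bmod N$ by at most $t$; this is the crux and it is an elementary but slightly fiddly interval argument. Everything else — the CRT bijection, the combinatorial interpretation of $\star_N$, the $-1$ correction — is routine given \cref{cl:CRT} and the definition of $A_p$.
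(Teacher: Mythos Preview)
Your proposal is correct and takes essentially the same approach as the paper. Both arguments use \cref{cl:CRT} to rewrite $\ell^*(\cdots,x)$ and $\ell(\cdots,x)$ as counts over the same set of nonzero tuples $(a^{(0)},\dots,a^{(t-1)})$, the former indexed by $\lfloor \sum_i a^{(i)}N/p^{(i)}\rfloor \bmod N$ and the latter by $\sum_i \lfloor a^{(i)}N/p^{(i)}\rfloor \bmod N$; since these two integers differ by at most $t$, both inequalities follow. Your treatment of the wraparound and the $-1$ correction is more explicit than the paper's but amounts to the same thing.
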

\begin{proof}
Let $S$ denote the set $$\left([p^{(0)}] \times \cdots \times [p^{(t-1)}]\right) \setminus \{0, \ldots, 0\}.$$ By definition, $\ell(p^{(0)}, p^{(1)}, \ldots, p^{(t-1)}, x)$ is the number of $(a^{(0)}, \ldots, a^{(t-1)}) \in S$ such that $$\sum_{i \in [t]} \left\lfloor \frac{a^{(i)}N}{p^{(i)}}\right\rfloor \equiv x \pmod{N}.$$ By \cref{cl:CRT}, $\ell^*(p^{(0)}, p^{(1)}, \ldots, p^{(t-1)}, x)$ is the number of $(a^{(0)}, \ldots, a^{(t-1)}) \in S$ such that $$\left\lfloor \sum_{i \in [t]} \frac{a^{(i)}N}{p^{(i)}} \right\rfloor \equiv x \pmod{N}.$$
The two inequalities follow because $\sum_{i \in [t]} \left\lfloor \frac{a^{(i)}N}{p^{(i)}}\right\rfloor$ and $\left\lfloor \sum_{i \in [t]} \frac{a^{(i)}N}{p^{(i)}} \right\rfloor$ differ by at most $t$. 
\end{proof}
\cref{cl:approx_l} implies that, for fixed $p^{(0)}, p^{(1)}, \ldots, p^{(t-1)}$, the maximum value of $\ell^*(p^{(0)}, p^{(1)}, \ldots, p^{(t-1)}, x)$ and $\ell(p^{(0)}, p^{(1)}, \ldots, p^{(t-1)}, x)$ over all $x$ differ by a constant factor.

Finally, we are ready to prove \cref{thm:deterministic_preprocess}. 

\begin{proof}[Proof of \cref{thm:deterministic_preprocess}]
Let  $k = \frac{N}{Q}$, $R = Q^{1/t}$, and  $\caP$ be the set of primes in $[R/2, R]$. Each integer we will add to $V$ will be the product of $t$ distinct primes from $\mathcal{P}$. For the $i$-th such integer, we will use $p_i^{(0)} < \ldots  < p_i^{(t-1)}$ to denote its prime factors. Consider the following objective function
$$\sum_{x=0}^{N-1} \left(\sum_{i \in [k]}  \ell(p_i^{(0)}, \ldots, p_i^{(t-1)}, x) \right)^{\log N}.$$
The overall strategy of the algorithm is to pick $p_0^{(0)},  \ldots, p_0^{(t-1)}, p_1^{(0)}, \ldots, p_1^{(t-1)}, \ldots, p_{k-1}^{(0)}, \ldots, p_{k-1}^{(t-1)}$ in this order, and each time the algorithm picks the next prime, it aims to minimize the expected value of the objective function over the randomness of the primes not yet picked. 

First, we will show that if the algorithm perfectly follows this strategy, the resulting primes will satisfy the requirement of the theorem.

\paragraph{Initial expectation. } First, we consider the expected value of the objective function before we select any primes. As $R = \frac{N^{1/t}}{\polylog N}$, the bound from \cref{lem:primes_load} becomes $O(\frac{R^{2t}}{N})$. Thus, for any fixed $x$, 
$$\Ex\left[\sum_{i \in [k]}  \ell^*(p_i^{(0)}, \ldots, p_i^{(t-1)}, x)\right] = O\left(k \cdot \frac{R^{2t}}{N} \cdot \frac{1}{\binom{|\caP|}{t}}\right) = O\left(\frac{k R^t \log^t N}{N}\right).$$
By \cref{cl:approx_l}, we also have 
$$\Ex\left[\sum_{i \in [k]}  \ell(p_i^{(0)}, \ldots, p_i^{(t-1)}, x)\right] = O\left(k \cdot \frac{R^{2t}}{N} \cdot \frac{1}{\binom{|\caP|}{t}}\right) = O\left(\frac{k R^t \log^t N}{N}\right).$$
Note that as $Q \ll N$, the values of $\ell(p_i^{(0)}, \ldots, p_i^{(t-1)}, x)$ are $0$ or $1$. Thus, by Chernoff bound, for any $y \ge 2\Ex\left[\sum_{i \in [k]}  \ell(p_i^{(1)}, \ldots, p_i^{(t)}, x)\right]$, $\Pr[\sum_{i \in [k]}  \ell(p_i^{(1)}, \ldots, p_i^{(t)}, x) \ge y] \le e^{-y/3}$. When $y \ge C \log N $ for sufficiently large constant $C$, $e^{-y/3} \cdot y^{\log N}$ decays exponentially. Therefore, the expectation of $\left(\sum_{i \in [k]}  \ell(p_i^{(1)}, \ldots, p_i^{(t)}, x) \right)^{\log N}$ can be bounded as $$O\left(\max\left\{\frac{k R^t \log^t N}{N}, \log N\right\}\right)^{\log N} = O(\log^t N)^{\log N}.$$
Summing over all $x$ gives the upper bound 
$$N \cdot O\left(\log^t N\right)^{\log N}.$$

\paragraph{Final objective. } If we perfectly compute the expectation of the objective function after each step and select the prime that minimizes the expectation, the final value of the objective function is upper bounded by the initial expectation of the objective function. Thus, we will obtain $\{p^{(j)}_{i}\}_{i \in [k], j \in [t]}$ such that 
$$\sum_{x=0}^{N-1} \left(\sum_{i \in [k]}  \ell(p_i^{(0)}, \ldots, p_i^{(t - 1)}, x) \right)^{\log N} = N \cdot  O\left(\log^t N\right)^{\log N}. $$
This implies that, for any fixed $x$, $\sum_{i \in [k]}  \ell(p_i^{(0)}, \ldots, p_i^{(t-1)}, x)$ can be upper bounded by 
$$\left(N \cdot  O\left(\log^t N\right)^{\log N}\right)^{\frac{1}{\log N}} = O\left(\log^t N\right).$$
By \cref{cl:approx_l}, $\sum_{i \in [k]}  \ell^*(p_i^{(0)}, \ldots, p_i^{(t-1)}, x)$ can also be  upper bounded by 
$O\left(\log^t N\right)$ for any $x$. Thus, if we set the numbers in $V$ as $\left\{\prod_{j \in [t]} p_i^{(j)}\right\}_{i \in [k]}$, we get that, for any $x$, the number of elements in $\left\{\frac{h}{v}: v \in V, 1 \le h < v\right\}$ belonging to $[\frac{x}{N}, \frac{x+1}{N})$ is $O(\log^t N)$. As any interval of length $\frac{1}{N}$ belongs to the union of two intervals of the form $[\frac{x}{N}, \frac{x+1}{N})$, we get the desired bound.

\paragraph{Preprocessing. } 
Next, we show how to perform the conditional expectation efficiently.  For any $q \in \caP$, and any $j \in [t]$, we prepare the following length $N$ array $S_{j, q}$, defined as 
$$\sum_{\substack{p^{(j)}, \ldots ,p^{(t-1)} \in \caP: \\ q \le p^{(j)} < \cdots < p^{(t-1)}}} A_{p^{(j)}} \star_N \cdots \star_N A_{p^{(t-1)}}.$$
Suppose $q'$ is the successor of $q$ in $\caP$ (if $q$ is the largest prime in $\caP$, then $S_{j, q}$ is easy to compute), then it is not difficult to see
\[
    S_{j, q} = \left\{
    \begin{array}{lr}
        S_{j, q'} + A_{q} & j = t - 1,\\
        S_{j, q'} + A_{q} \star_N S_{j + 1, q'} & j < t - 1.
    \end{array}\right.
\]
Thus, we can compute each $S_{j, q}$ in $\tO(N)$ time via FFT. The overall preprocessing time is thus $\tO(t N |\caP|) = \tO(N^{1+1/t})$. 

\paragraph{Computing the expectation. } Finally, we show how to compute the expectation in $\tO(N)$ time given  some fixed choices of some of the primes. If we can achieve this, then for the next prime we select, we  try all $O(|\caR|)$ possibilities for it, and evaluate $O(|\caR|)$ values of the expectation, and select the prime that minimizes the expectation. Thus, the running time would be $\tO(kt N |\caR|) = \tO(N^{1+1/t})$. 

Now, suppose for $i \in [k]$, we already fixed the choices of the first $a_i$ primes $p_i^{(0)}, p_i^{(1)}, \ldots, p_i^{(a_i-1)}$. Then for $x \in [N]$, let $r_i(x)$ be the probability that $\ell(p_i^{(0)}, \ldots, p_i^{(t-1)}, x) = 1$ (note that as $q \ll N$, $\ell(p_i^{(0)}, \ldots, p_i^{(t-1)}, x)$ is either $0$ or $1$), $q$ be the successor of $p_i^{(a_i - 1)}$ in $\caP$ (if $a_i = 0$, then $q$ is the smallest prime in $\caP$), and let $m$ be the number of primes in $\caP$ that are greater than or equal to $q$. Then
\begin{align*}
    r_i(x) &= \frac{1}{\binom{m}{t - a_i}} \sum_{q \le p_i^{(a_i)} < \cdots < p_i^{(t - 1)}} \ell(p_i^{(0)}, \ldots, p_i^{(t-1)}, x)\\
    &= \frac{1}{\binom{m}{t - a_i}} \sum_{q \le p_i^{(a_i)} < \cdots < p_i^{(t - 1)}} \left(A_{p_i^{(0)}} \star_N \cdots \star_N A_{p_i^{(t-1)}} \right)[x]\\
    &= \frac{1}{\binom{m}{t - a_i}} \left( A_{p_i^{(0)}} \star_N \cdots \star_N A_{p_i^{(a_i-1)}} \star_N S_{a_i, q}\right)[x].
\end{align*}
Therefore, we can compute $r_i(x)$ for all $i \in [k]$ and $x \in [N]$ in $\tO(ktN) = \tO(N)$ time via FFT. 

For a fixed $x$, by the fact that $\ell(p_i^{(0)}, \ldots, p_i^{(t-1)}, x)$'s are independent for  $i \in [k]$, the probability that $\sum_{i\in [k]} \ell(p_i^{(0)}, \ldots, p_i^{(t-1)}, x) = B$ for every $B = 0, \ldots, k$ is 
$$\sum_{K \in \binom{[k]}{B}} \left( \prod_{i \in K} r_i(x) \cdot \prod_{i \not \in K} (1 - r_i(x)) \right). $$
These values can be computed using dynamic programming in $O(tk)$ arithmetic operations. Also, observe that all numerators and denominators during the computation can be represented by $\polylog N$-bit integers, so the running time for the dynamic programming is $\polylog N$. With these values, we can compute the expected value of the objective function in $\tO(N)$ time. 
\end{proof}

\section{Text-to-Pattern Hamming Distances}
\label{sec:texttopattern}

\begin{lemma}
\label{lem:sparse_conv_minus_C}
    Let $V$ be the set from \cref{thm:deterministic_preprocess} with $N$ and $Q = \frac{N}{\log^{t+3} N}$, assumed to be given.
    Given three integer sequences $A, B, C$ of length $N$, with the promise that $(A \star B - C)[i] \ge 0$ for every $i$, we can compute $A \star B$ in $O(N)$ deterministic time, as long as $\max\{||A||_0, ||B||_0, ||A \star B - C||_0\} \le \frac{N}{\log^{t+3} N}$, and we are given a set $T$ with $|T| = O(\frac{N}{\log^{t+3} N})$, and $\supp(A \star B - C) \subseteq T$. 
\end{lemma}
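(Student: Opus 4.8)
The plan is to imitate the iterative peeling framework used throughout the paper (e.g.\ in \cref{lem:recoverlasvegas} and \cref{sec:highprob}), but derandomized using the set $V$ from \cref{thm:deterministic_preprocess}. First I would deal with the universe: since $\max\{\|A\|_0,\|B\|_0,\|A\star B-C\|_0\}\le N/\log^{t+3}N$ and all nonzero entries of the error vector $A\star B-C$ are nonnegative, the relevant "difference" vector $D := A\star B - C$ is a nonnegative sparse vector whose support lies in the given set $T$ of size $O(N/\log^{t+3}N)$. The goal is to recover $D$ entirely in $O(N)$ deterministic time, since then $A\star B = C + D$. The modulus will be chosen \emph{deterministically} from $V$: by \cref{cor:largesieve_det} applied to the set $\supp(D)+\supp(C)$ (or a suitable superset of size $O(N/\log^{t+3}N)$ contained in $[2N]$, rescaling $N$ by a constant factor), there exists $v\in V$ with $v = \Theta(Q) = \Theta(N/\log^{t+3}N)$ such that hashing modulo $v$ causes at most $|\text{set}|\cdot O(\log^t N)$ total collisions among the relevant indices; by averaging, most of the $O(N/\log^{t+3}N)$ relevant indices land in buckets with $O(\log^t N)$ elements — actually, since $v$ is roughly the support size divided by $\log^t N$ is too small, I would instead take $v$ of size $\Theta(N/\log^3 N)$ so that buckets have $O(\log^t N)$ occupancy on average and only a constant fraction of indices land in "heavy" buckets (size $\gg \log^t N$). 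Since $v$ and the quality guarantee depend only on $N$ (not the instance), and $V$ is assumed given, this is a deterministic step.

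Next I would carry out one round of bucket recovery. Hash $A,B,C$ modulo $v$ via $h(x)=x\bmod v$, and compute the moment arrays $h(\partial^i A)\star_v h(\partial^i B)$ and $h(\partial^i C)$ for $0\le i\le 2s$ where $s = \Theta(\log^t N)$ is the bucket-size threshold, exactly as in \cref{eqn:defnX}; by the identity proved there this yields $h(\partial^i D)$ for each $i$. Each FFT has length $v = O(N/\log^3 N)$, and there are $s = O(\log^t N)$ of them — here I need to be careful: $s\cdot v\log v = O(N \log^t N / \log^2 N)$ which is \emph{not} $O(N)$ for $t\ge 2$. To fix the budget, I would borrow the two-level trick used in \cref{sec:highprob}: first use a \emph{large} modulus near $N$ (i.e.\ a single element of the appropriate $V$ with $v=\Theta(N)$, or just work directly over the length-$N$ arrays) so that buckets are tiny (size $O(\log^t N)$ only because of the collision bound, but most buckets have size $O(1)$), handle unique and near-unique buckets with $O(1)$ moments, and then recurse on the $O(N/\log^{\Omega(1)}N)$ leftover. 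Actually the cleanest route: set $s = O(1)$ but $v = \Theta(N/\log^{t+3}N\cdot \log^t N)^{?}$ — I would choose parameters so a \emph{single} round with $v=\Theta(N)$ and $s=O(1)$ already recovers all but a $1/\mathrm{polylog}$ fraction, since \cref{cor:largesieve_det} with $Q=\Theta(N)$ gives total collisions $O(|A|\cdot(|A|/N + \log^t N)) = O(\log^t N \cdot N/\log^{t+3}N) = O(N/\log^3 N)$, so only $O(N/\log^3 N)$ indices collide at all. Then use the $s=O(1)$ sparsity test (\cref{lem:momentmatrix}) to identify singleton buckets, read off the index from the set $T$ (which tells us which residue classes are nonempty — for a singleton bucket $b$, the unique index is the unique element of $T$ congruent to $b$), read off the value from $h(D)_b$, subtract, and iterate $O(\log\log N)$ times on the shrinking residual until the support is empty, finishing with a direct $O(N)$ pass once the residual has only $O(N/\log^c N)$ entries (or just iterate all the way down geometrically).

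The running time: each round uses $O(1)$ FFTs of length $N$, costing $O(N\log N)$ — again too slow. So the real fix is to make $v$ \emph{small}, namely $v = \Theta(N/\log^{c}N)$ for a constant $c$ with $s = O(\log^t N)$, giving FFT cost $O(s\cdot v\log v) = O(\log^t N\cdot (N/\log^c N)\cdot \log N)$, which is $O(N)$ provided $c \ge t+2$ — but then bucket occupancy is $\approx \log^c N > \log^t N$ on average, so the $s$-sparsity test with $s=O(\log^t N)$ fails for most buckets. The resolution, and the point I expect to be the main obstacle, is calibrating $v$, $s$, $t$, and the number of rounds so that simultaneously (i) \cref{cor:largesieve_det} guarantees that $\ge 1-o(1)$ fraction of residual indices lie in buckets of size $\le s$, (ii) $s\cdot v\log v = O(N)$, and (iii) the per-bucket work (sparsity test via \cref{thm:ringdeter} and solving a transposed Vandermonde system to get coefficients, using the known nonzero residue classes from $T$) summed over $v$ buckets is $O(N)$, i.e.\ $v\cdot\mathrm{poly}(s) = O(N)$. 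With $Q=N/\log^{t+3}N$ exactly as stated in the lemma, \cref{cor:largesieve_det} gives collision count $O(|A|\log^t N)$ with $|A| = O(N/\log^{t+3}N)$, i.e.\ $O(N/\log^3 N)$, so taking $v = \Theta(N/\log^{t+3}N)$ and $s = \Theta(\log^t N)$: bucket occupancy average is $O(|A|/v + 1) = O(1)$, so a Markov argument on the collision bound shows all but $O(N/\log^3 N)$ indices are in buckets with $\le s$ elements; FFT cost is $O(s\cdot v\log v) = O(\log^t N\cdot (N/\log^{t+3}N)\cdot \log N) = O(N/\log^2 N) = O(N)$; per-bucket work is $\mathrm{poly}(s) = \mathrm{poly}(\log N)$, times $v = O(N/\log^{t+3}N)$ buckets, giving $O(N)$. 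After one round the residual support is $O(N/\log^3 N)$; repeating $O(1)$ times (each time the support shrinks by a $\log^{\Omega(1)}N$ factor, re-using the \emph{same} $V$ since it only depends on $N$) drives it to zero, or one stops when it is small enough to brute-force in $O(N)$. Reading indices from $T$ is what lets us avoid needing high-order roots of unity for exponent recovery. The whole procedure is deterministic since $V$ is given and all choices (which $v\in V$, which buckets are light) are computed exhaustively.
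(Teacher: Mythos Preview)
Your proposal has a genuine gap in the time accounting, and it misses the key idea that makes the paper's proof work. With $v = \Theta(N/\log^{t+3}N)$ and $s = \Theta(\log^t N)$, the per-bucket work is \emph{not} $\poly(\log N)$ in any useful sense: the $s\times s$ sparsity-test determinant already costs $s^{3.5+o(1)} = \log^{3.5t+o(1)}N$ ring operations (\cref{thm:ringdeter}), and multiplied by $v = N/\log^{t+3}N$ buckets you get $\ge N\log^{2.5t-3}N$, which is $\omega(N)$ for any $t\ge 2$. Worse, the moment integers $Y^{(i)}$ (even after the index-compression trick from \cref{eqn:recurrencefory}) have magnitude $(N/v)^{2s} = (\log^{t+3}N)^{\Theta(\log^t N)}$, which is $2^{\Theta(\log^t N\cdot\log\log N)}$ bits --- far beyond $\poly(N)$, so you cannot just work modulo a fixed $M=\poly(N\Delta)$. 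The ``calibration'' you describe cannot be fixed by tweaking $v$ and $s$: whenever $s$ is large enough that \cref{cor:largesieve_det} guarantees most buckets are light, the per-bucket cost $\poly(s)$ over $v$ buckets exceeds $O(N)$. Also, a small correction: the good $v\in V$ \emph{does} depend on the instance (on the set $T$); \cref{cor:largesieve_det} only bounds the average over $v\in V$, and the paper finds the good $v$ by brute force over the $|V|=\log^{t+3}N$ candidates in $O(|T|\,|V|)$ time.

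The idea you are missing is a \emph{second level of hashing}. After choosing $v\in V$ with $\sum_{x,y\in T}[v\mid x-y]=O(|T|\log^t N)$, the paper composes with a further small prime $p=\Theta(\log^{t+1}N)$, found by brute force over the $\polylog N$ primes in that range, so that at least half the elements of $T$ are \emph{singletons} modulo $q:=vp=\Theta(N/\log^2 N)$. Then a \emph{single} length-$q$ FFT computing $A'\star_q B'$ (where $A'[i]=\sum_{j\equiv i}A[j]$, etc.) suffices: for each $x\in T$ unique modulo $q$, one reads off $(A\star B - C)[x]$ directly from bucket $x\bmod q$ (subtracting the $O(N/q)$ relevant entries of $C$). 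There are no moment arrays, no sparsity tests, no Vandermonde systems. Each iteration costs $O(q\log q + |T|\log^{t+3}N) = O(N/\log N + |T|\log^{t+3}N)$, halves $|T|$, and summing over $O(\log N)$ iterations gives $O(N)$.
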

\begin{proof}
Our algorithm will consist of multiple iterations. At each iteration, we will compute $(A \star B - C)[i]$ for at least half of the indices $i \in T$. Thus, after each iteration, we decrease the size of $T$ by at least a half. 

    In each iteration, we apply \cref{cor:largesieve_det} with $A = T$ and $N = N$, which implies that there exists $v \in V$, such that $\sum_{x, y \in T} [x \equiv y \pmod{v}] = O(|T| \cdot (\frac{|T|}{Q} + \log^t N)) = O(|T| \log^t N)$. We can find such a $v$ in $O(|T| |V|) = O(|T| \cdot \log^{t+3} N)$ time.

    let $\caP$ be the set of primes in $[0.5 k \log^{t + 1} N, k \log^{t + 1} N]$, for some sufficiently large $k$. Note that for every $x \ne y \in T$, $\Pr_{p \in \caP}[x \equiv y \pmod{p}] = O(\frac{1}{\log^{t} N})$. Therefore, 
    \begin{align*}
        \sum_{x \ne y \in T}\Pr_{p \in \mathcal{P}} [x \equiv y \pmod{vp}] & \le \sum_{\substack{x \ne y \in T \\ x \equiv y \pmod{v}}}\Pr_{p \in \mathcal{P}} [x \equiv y \pmod{p}]\\
        &\le O(|T| \log^t N) \cdot O(\frac{1}{\log^t N})\\
        &\le \frac{|T|}{2} \tag{picking large enough $k$}.
    \end{align*}
    This implies that  there exists $p \in \mathcal{P}$ such that $\sum_{x \ne y \in T}[t_1 \equiv t_2 \pmod{vp}] \le \frac{|T|}{2}$.  We can find such a $p$ in $O(|T| \log^{t} N + \polylog(N))$ time. 
    The previous condition further implies that, half of the values in the multi-set $\{x \bmod{vp}\}_{x \in T}$ are unique. 

    Let $q := vp$. As $v = \Theta(Q) = \Theta(\frac{N}{\log^{t+3} N})$ and $p = \Theta(\log^{t+1} N)$, $q  = \Theta(\frac{N}{\log^{2} N})$.
    We can compute $(A \star B - C)[x]$ for every $x \in T$ with unique $x \bmod{q}$ in the following way: 
    We create $A'$ in $O(||A||_0)$ time, where $A'[i] = \sum_{j \equiv i \pmod{q}} A[i]$. We similarly create $B'$. Similarly, we use the notation $C'$, but we do not actually spend $O(N)$ time to create it. 
    
    Then we compute $A' \star_{q} B'$ via FFT in $O(q \log q)$ time. If $x \in T$ has a unique value of $x \bmod{q}$, then 
    $(A \star B - C)[x] = (A' \star_q B' - C')[x \bmod{q}] = (A' \star_q B')[x \bmod{q}] - \sum_{y \equiv x \pmod{q}} C[y]$. Thus, computing each such $(A \star B - C)[x]$ takes $O(\frac{N}{q}) = O(\log^2 N)$ time, and hence it takes $O(|T| \log^2 N)$ time overall. After this, we can remove these $x$ from $T$ (and we have to properly update $C[x]$ so that $(A \star B - C)[x] = 0$).

    The overall running time for the whole iteration is 
    $$O\left(|T| \log^3 N + ||A||_0 + ||B||_0 + q \log(q) + |T| \log^2 N\right) = O\left(\frac{N}{\log N} + |T| \log^{3} N\right). $$
    Summing over all $O(\log |T|) = O(\log N)$ iterations gives the $O(N)$ running time (note that $|T|$ decreases by at least a half in each iteration).  
\end{proof}

\begin{corollary}
\label{cor:x-plus-y}
For a fixed constant $t \ge 1$, suppose we are given the set $V$ from \cref{thm:deterministic_preprocess} with $N = N'$ and $Q = \frac{N'}{\log^{t+3} N'}$ for every $N' \le N$ that is a power of $2$.  Then given two nonnegative vectors $A, B \in \N^N$ with $\|A\|_1, \|B\|_1 \le s$ for some $s \ge \sqrt{2N}$,  we can compute $A \star B$ in $O(N \log(s^2 / N) + N \log \log N)$ deterministic time.
\end{corollary}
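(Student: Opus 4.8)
The plan is to follow the reduction underlying the $X+Y$ lemma of \cite{focs23}, instantiating its internal sparse-convolution-with-known-approximation step by the \emph{deterministic} primitive \cref{lem:sparse_conv_minus_C} (whose determinism in turn rests on the pre-given well-spaced modulus sets $V$ of \cref{thm:deterministic_preprocess}). First I would reduce to the case where $N$ is a power of $2$ by zero-padding, which changes $N$ and $s$ by at most constant factors, preserves $s\ge\sqrt{2N}$, and keeps the promised $V$-sets available (they are given for every power of $2$ up to $N$). Record the elementary identity $\|X\star Y\|_1=\|X\|_1\|Y\|_1\le s^2$, which immediately gives $\|X\|_0,\|Y\|_0\le\|X\|_1,\|Y\|_1\le s$ and $\|X\star Y\|_0\le\min(2N-1,s^2)$; so the inputs are automatically sparse relative to their $L_1$-norm, whereas the output may be fully dense (since $s\ge\sqrt{2N}$ forces $s^2\ge 2N$).

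Next I split on the size of $s$. If $s\ge N^{0.6}$, then $\log(s^2/N)\ge 0.2\log N=\Omega(\log N)$, so it suffices to compute $X\star Y$ by one FFT in $O(N\log N)=O(N\log(s^2/N))$ time (constant-size $N$ is trivial). Otherwise $s<N^{0.6}$, hence for large $N$ we have $\|X\|_0,\|Y\|_0\le s<N^{0.6}\le N/\log^{t+3}N$, i.e.\ the inputs already satisfy the sparsity hypothesis of \cref{lem:sparse_conv_minus_C}. The task then reduces to constructing, in $O(N\log(s^2/N)+N\log\log N)$ deterministic time, an approximation $C$ of $Z:=X\star Y$ together with a set $T$ such that $\|Z-C\|_0\le N/\log^{t+3}N$, $|T|=O(N/\log^{t+3}N)$, and $\supp(Z-C)\subseteq T$; a single call to \cref{lem:sparse_conv_minus_C} (at the appropriate power-of-two length scale $\Theta(N)$) with $A=X$, $B=Y$ and this $(C,T)$ then outputs $Z$ in $O(N)$ time, completing the proof.

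The construction of $(C,T)$ is the heart of the matter, and here I would replay the \cite{focs23} resolution-refinement scheme: one maintains a coarse approximation $C$ of $Z$ and, over roughly $\log(s^2/N)$ \emph{coarsening} rounds, repeatedly doubles the output resolution, each round correcting the residual $Z-C$ on the comparatively few cells where the convolution mass is concentrated (using plain FFT at the coarsest scales and a sparse-convolution step at finer scales), driving $\|Z-C\|_0$ down to $O(N/\polylog N)$; then $O(\log\log N)$ further \emph{cleanup} rounds, each a call to \cref{lem:sparse_conv_minus_C} on a geometrically shrinking residual, bring $\|Z-C\|_0$ down to $N/\log^{t+3}N$. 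The support superset $T$ is carried along using $\supp(Z-C)\subseteq\supp(X)+\supp(Y)$ together with the fact that each primitive call returns (a superset of) the support of the portion it corrects. Each round costs $O(N)$ and the per-round costs telescope, so the total is $O(N\log(s^2/N)+N\log\log N)$.

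The \textbf{main obstacle} is making this scheme deterministic within the stated bound: the faster ($O(N\log(s^2/N))$) $X+Y$ algorithm of \cite{focs23} is Las Vegas, using random hash moduli and Prony-type interpolation, and the whole point is to replace its randomized sparse-recovery step by \cref{lem:sparse_conv_minus_C}. Two points require care. First, one must check that every invocation of \cref{lem:sparse_conv_minus_C} is on an instance meeting \emph{all three} of its hypotheses --- in particular that at each scale the relevant residual and input restrictions are $\le(\text{length})/\log^{t+3}(\text{length})$-sparse and come with a support superset of the same size; this is precisely why the case $s\ge N^{0.6}$ is peeled off and handled by plain FFT (there the inputs themselves are too dense for the lemma), and why the coarsening phase must push the residual all the way to $N/\log^{t+3}N$ rather than to some constant fraction of $N$ --- which is what produces the extra $N\log\log N$ term. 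Second, one must ensure the coarsening phase uses only $O(\log(s^2/N))$ rounds rather than $O(\log N)$, which is exactly the content of \cite{focs23}'s resolution-refinement analysis and which is inherited essentially verbatim.
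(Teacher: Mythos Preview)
Your proposal misses the paper's central construction and, as written, leaves a real gap in how $(C,T)$ is produced. The paper does \emph{not} build $C$ by a multi-round resolution-refinement scheme. It obtains $C$ in one shot: pick the smallest prime $p>\frac{s^2\log^{t+3}N}{N}$ and compute $C=A\star B\bmod p$ coefficientwise via the packed $\F_p$-FFT of \cref{thm:packfft}, in time $O(N\log p)=O(N\log(s^2/N)+N\log\log N)$. Since the true coefficients are nonnegative and sum to at most $s^2$, at most $s^2/p\le N/\log^{t+3}N$ of them can reach $p$, so $\|A\star B-C\|_0\le N/\log^{t+3}N$ and $A\star B-C\ge 0$ immediately. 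The superset $T$ comes from a single recursion: fold $A,B$ to $A',B'\in\N^{N/2}$, recursively compute $A'\star B'$ exactly, and set $T=\{i:(A'\star B')[i]\ge p/2\}+\{0,N/2,N,3N/2\}$; any index where $(A\star B)[i]\ge p$ must land in $T$, and $|T|\le 4\cdot 2s^2/p=O(N/\log^{t+3}N)$. One call to \cref{lem:sparse_conv_minus_C} then finishes the current level, yielding the recurrence $Q(N)\le Q(N/2)+O(N\log(s^2/N)+N\log\log N)$, which solves to the stated bound.

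By contrast, your plan never explains where the approximation $C$ comes from. You invoke \cref{lem:sparse_conv_minus_C} in ``cleanup rounds,'' but that lemma already \emph{requires} a $C$ with $\|A\star B-C\|_0\le N/\log^{t+3}N$ as input --- it does not manufacture one --- so there is a bootstrapping circularity. The sentence ``the support superset $T$ is carried along using $\supp(Z-C)\subseteq\supp(X)+\supp(Y)$'' also does not work: $|\supp(X)+\supp(Y)|$ can be $\Theta(N)$, far exceeding the required $O(N/\log^{t+3}N)$. Finally, the appeal to a ``$\log(s^2/N)$-round coarsening phase'' from \cite{focs23} is too vague to check; you neither specify what each round computes nor why the phase terminates at the right sparsity. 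The missing idea is precisely the mod-$p$ packed FFT combined with the $L_1$ bound $\|A\star B\|_1\le s^2$; once you have that, the entire argument is one recursion plus one call to \cref{lem:sparse_conv_minus_C}, with no separate cleanup rounds.
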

\begin{proof}
    Without loss of generality, we assume $N$ is a power of $2$. 

    If $s^2 \ge N^{1.01}$, then we can directly apply FFT, which has running time $O(N \log N)$, and it already achieves the claimed bound.

    Let $p$ be the smallest prime greater than $\frac{s^2 \log^{t+3} N}{N}$, which can be computed in $\tilde{O}(\frac{s^2 \log^{t+3} N}{N}) = o(N)$ time. 

    Then we compute $C = A \star B \bmod{p}$ using \cref{thm:packfft} in $O(N \log p) = O(N \log(s^2 / N) + N \log \log N)$ time. Note that now $A \star B - C \ge 0$ and $\|A \star B - C\|_0 \le s^2 / p \le N / \log^{t+3} N$. Additionally, $\|A\|_0, \|B\|_0 \le s \le N^{1.01/2} \le N / \log^{t+3} N$. In order to compute $A \star B$ using \cref{lem:sparse_conv_minus_C}, we also need a set $T$ that is a superset of $\supp(A \star B - C)$. 
    
    To do so, we recursively solve the following problem: Let $N' = N / 2$, $A', B' \in \N^{N'}$ with $A'[i] = A[i] + A[i + N']$ and $B'[i] = B[i] + B[i + N']$ for $i \in [N']$, and we need to compute $A' \star B'$. Clearly, $\|A\|_1, \|B\|_1 \le s$ and $s \ge \sqrt{2N'}$, so this smaller instance satisfies the requirement of the corollary statement. We set $T = \{i \in [2N']: (A' \star B')[i] \ge p / 2\} + \{0, N', 2N', 3N'\}$. We need to check the following two conditions in order for the set $T$ to be applicable in \cref{lem:sparse_conv_minus_C}:
    \begin{itemize}
        \item The size of $T$ is small: As $\|A\|_1, \|B\|_1 \le s$, $|\{i \in [2N']: (A' \star B')[i] \ge p / 2\}| \le s^2 / (p / 2) = O(N / \log^{t+3} N)$. The size of $T$ is at most four times the previous set, so $|T| = O(N / \log^{t+3} N)$ as required. 
        \item $\supp(A \star B - C) \subseteq T$: If $i \in A \star B - C$, then $(A \star B)[i] \ge p$, and thus $(A \star_{N'} B)[i \bmod {N'}] \ge p$. Notice that $(A \star_{N'} B)[i \bmod {N'}] = (A' \star B')[i \bmod {N'}] + (A' \star B')[(i \bmod {N'}) + N']$, so either $(A' \star B')[i \bmod {N'}] \ge p / 2$ or $(A' \star B')[(i \bmod {N'}) + N'] \ge p / 2$. Therefore, $i \bmod {N'} \in \{i \in [2N']: (A' \star B')[i] \ge p / 2\}$ or $(i \bmod {N'}) + N' \in \{i \in [2N']: (A' \star B')[i] \ge p / 2\}$. Either way, $i \in T$ as $i < 4N'$. 
    \end{itemize}
    Therefore, we can compute $A \star B$ using \cref{lem:sparse_conv_minus_C} in $O(N)$ time.

    If the time complexity for the problem with universe size $N$ is $Q(N)$, then we have shown 
    $$Q(N) \le Q(N / 2) + O(N \log(s^2 / N) + N \log \log N),$$
    with the boundary condition 
     $$Q(N)  = O(N \log(s^2 / N) + N \log \log N) \text{ \quad if \quad } N \le s^{2/1.01}. $$
     Therefore, the running time can be upper bounded by
     \begin{align*}
     & O\left(\sum_{i \ge 0} \left(\frac{N}{2^i} \log\left(s^2 / \frac{N}{2^i}\right) + \frac{N}{2^i} \log \log \frac{N}{2^i}\right)\right)\\
     = & O\left(\sum_{i \ge 0} \left(\frac{N}{2^i} \log\left(s^2 / N\right) + \frac{N}{2^i} \log(2^i) + \frac{N}{2^i} \log \log N\right)\right)\\
     = & O(N \log(s^2 / N) + N \log \log N)
     \end{align*}
as desired. 
\end{proof}

\cref{cor:x-plus-y} can be rephrased in the following way (by taking $t > 50$), leading to \cref{thm:x-plus-y-advice} (where we renamed $A,B$ to $X,Y$ to keep consistency with \cite{focs23}). 
\XPlusY* 

We can use \cref{thm:x-plus-y-advice} to solve the Text-to-Pattern Hamming Distances problem. 

\HammingDistances*
\begin{proof}

By standard techniques (see e.g., \cite{GawrychowskiU18}), we can assume $n = \Theta(m)$ and it suffices to get an $O(m \sqrt{m \log \log m})$ bound. 

First, we apply the preprocessing phase of \cref{thm:x-plus-y-advice} with $N = m$ in $O(m^{1.01})$ time. 

For each character $c$, we use $n_c$ to denote its frequency. As in previous works (e.g., \cite{fispat, Abrahamson87}), for each character, we only need to compute the convolution between two length-$m$ $0/1$ arrays with at most $n_c$ ones. 

For characters with $n_c \le \sqrt{m \log \log m}$, we compute the convolutions using brute-force in $O(n_c^2)$ time. The total running time for them is $O(m \cdot \sqrt{m \log \log m})$. 

For other characters, we use \cref{thm:x-plus-y-advice} to compute the convolution. The total running time is 
\[\sum_{c\ :\  n_c >  \sqrt{m \log \log m}} O(m \log(n_c^2 / m) + m \log \log m). \]
A similar bound shown in \cite[Theorem 1]{focs23} implies
\[\sum_{c\ :\ n_c > \sqrt{m \log \log m}} O(m \log(n_c^2 / m)) = O(m\sqrt{m}). \]
Also, 
\[\sum_{c\ :\  n_c > \sqrt{m \log \log m}} O(m \log \log m) = O\left(\left|\{c: n_c > \sqrt{m \log \log m}\}\right| \cdot m \log \log m\right) = O(m \sqrt{m \log \log m}). \]

Thus, the overall running time is $O(m \cdot \sqrt{m \log \log m})$. 
\end{proof}

In the Text-to-Pattern Dominance  Matching problem, we are given a pattern $P$ of length $m$ and a text $T$ of length $n$, both over a polynomial-size alphabet, and we need to compute $|\{k \in [m]: P[k] \le T[i+k]\}|$ for every shift $i$. 

\begin{theorem}
    The Text-to-Pattern Dominance Matching problem can be solved by a deterministic algorithm in $O(n\sqrt{m\log \log m})$ time.
\end{theorem}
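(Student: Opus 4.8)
The plan is to follow the proof of \cref{thm:TtPHD} essentially verbatim, replacing the exact‑matching convolutions used there for Hamming distances by the convolutions that arise in the standard reduction of dominance matching, and feeding everything through the deterministic $X+Y$ lemma (\cref{thm:x-plus-y-advice}). Concretely, by the usual windowing argument (see \cite{GawrychowskiU18}) it suffices to handle the case $n=\Theta(m)$ and, after rank‑compressing the alphabet, to produce the output in $O(m\sqrt{m\log\log m})$ time; the first step is to run the preprocessing phase of \cref{thm:x-plus-y-advice} once with $N=m$ in $O(m^{1.01})$ time, which is negligible, and likewise to reuse the auxiliary sets $V$ from \cref{thm:deterministic_preprocess} that the $X+Y$ lemma depends on.

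For the reduction itself I would use the same decomposition that \cite{focs23} used for dominance/less‑than matching. Writing $\mathrm{dom}(i)=\sum_k[P[k]\le T[i+k]]$ and letting $P_{=c},T_{=c}\in\{0,1\}^m$ be the indicators of the positions carrying symbol $c$, one has (with tildes denoting the appropriate reversal, $\bar P_{\le c}=\sum_{c'\le c}P_{=c'}$, $T_{\ge c}=\sum_{c'\ge c}T_{=c'}$) the identities $\mathrm{dom}=\sum_c \widetilde{P_{=c}}\star T_{\ge c}=\sum_c \widetilde{\bar P_{\le c}}\star T_{=c}$, so dominance matching splits into per‑symbol convolutions in which the pattern occurrences have total $\ell_1$‑mass $m$. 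A value‑range partition (and, where needed, recursion on the resulting blocks, so that only comparison within a block is nontrivial) then ensures that each symbol or block is convolved only against a comparably light slice of the text, so that every resulting convolution is between nonnegative arrays whose $\ell_1$‑norms are bounded by the relevant frequencies, exactly as in the Hamming case. Symbols (or blocks) of frequency larger than $\sqrt{m\log\log m}$ — of which there are $O(\sqrt{m/\log\log m})$ — are handled through \cref{thm:x-plus-y-advice}, each costing $O(m\log(n_c^2/m)+m\log\log m)$; the $\sum_c m\log(n_c^2/m)$ terms sum to $O(m\sqrt m)$ by the same convexity estimate as in \cite[Theorem~1]{focs23}, and the $\sum_c m\log\log m$ terms sum to $O(m\sqrt{m\log\log m})$. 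Objects of frequency at most $\sqrt{m\log\log m}$ are handled by the analogous brute force, which costs $O(\sqrt{m\log\log m}\cdot n)=O(m\sqrt{m\log\log m})$ in total.

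The step I expect to require the most care is the value‑range rebalancing: the relation ``$\le$'' is inherently prefix‑like, since the natural arrays $T_{\ge c}$ and $\bar P_{\le c}$ are cumulative and can be as dense as $\Theta(m)$ on one side, so a naive per‑symbol reduction does not by itself produce convolutions to which the $X+Y$ lemma applies profitably; one must group the symbols appropriately and recurse or brute‑force on the groups so that the $\ell_1$‑norms entering the lemma are genuinely controlled by frequencies (rather than by prefix sizes). Since this decomposition is already carried out in \cite{focs23}, the remaining work is just to verify that our (non‑uniform) $X+Y$ lemma substitutes for theirs everywhere, which is immediate because its $O(\mathrm{polylog}\,N)$‑bit advice depends only on $N$ and is precomputed once, hence shared across all invocations and invisible to the running‑time bookkeeping. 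The Text‑to‑Pattern Dominance Matching bound then follows.
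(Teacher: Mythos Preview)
Your proposal is correct and takes essentially the same approach as the paper: reduce (via the windowing trick) to $n=\Theta(m)$, invoke the decomposition from \cite{focs23} that expresses dominance matching as a family of sparse $0/1$ convolutions, and then handle these convolutions with the same threshold split (brute force for sparsity $\le\sqrt{m\log\log m}$, \cref{thm:x-plus-y-advice} otherwise), reusing the one-time $O(m^{1.01})$ preprocessing. The paper just quotes the \cite{focs23} reduction in its clean black-box form---``$O(n/s)$ convolutions of sparsity $\le s$ for each power-of-two $s$''---rather than re-sketching the value-range partition as you do, but this is a difference of exposition, not of substance.
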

\begin{proof}
    Similar as the Text-to-Pattern Hamming Distances problem, we can assume $n = \Theta(m)$ and it suffices to get an $O(m\sqrt{m\log \log m})$ bound. 

    \cite{focs23} showed that it suffices to solve $O(n / s)$ instances of convolution between two length-$n$ $0/1$ sequences with sparsity $\le s$ for every $s \le n$ that is a power of $2$. 
    
    if $s \le \sqrt{m \log \log m}$ we use brute-force; otherwise, we use \cref{thm:x-plus-y-advice}. The running time can be bounded similarly as \cref{thm:TtPHD}. 
\end{proof}
\section{The Constellation Problem}
\label{sec:conste}

Our algorithm is based on the Monte Carlo algorithm by Cardoze and Schulman \cite{CardozeS98}. See also the exposition in \cite{ColeH02}. Their algorithm actually provides the following stronger guarantee:

\begin{theorem}[\cite{CardozeS98}]
\label{thm:constellation-monte-carlo}
      Given two integer sets $A,B\subseteq [N]$ of size $|A|,|B|\le n$, there is a Monte Carlo algorithm that outputs a set $S \subseteq [N]$ in $O(n \log n)$ time such that 
      \begin{itemize}
          \item $S$ always contains all shifts $s$ such that $A+s\subseteq B$. 
          \item With high probability, $S$ does not contain any other elements. 
      \end{itemize}
\end{theorem}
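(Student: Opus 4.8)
The plan is to re-derive the Cardoze--Schulman algorithm in three stages: cut the number of candidate shifts down to $O(n)$, verify all candidates simultaneously by hashing modulo a random prime together with a single FFT, and boost the success probability by repetition. For the first stage, assume $\max A\le\max B$ (otherwise $S=\emptyset$), fix $a_1:=\min A$, and set
\[
S_0:=\bigl\{\,b-a_1 : b\in B,\ b-a_1\ge 0,\ (\max A)+(b-a_1)\le\max B\,\bigr\}.
\]
Every shift $s$ with $A+s\subseteq B$ satisfies $a_1+s\in B$ and the range constraints, hence $s\in S_0$; and $|S_0|\le|B|\le n$. Computing $S_0$ takes $O(n\log n)$ time (or $O(n)$ after radix-sorting $B$). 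From now on it suffices to decide, for each $s\in S_0$, whether $A+s\subseteq B$, never discarding a true shift and discarding essentially all false candidates.

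For the verification stage, consider one round: sample a prime $p$ uniformly from $[Cn,2Cn]$ for a large constant $C$, let $h(x):=x\bmod p$, and regard $h(A),h(B)$ as subsets of $\Z_p$. Since $h$ is exactly linear, the cyclic correlation $C_p[r]:=\bigl|\{a\in A:(h(a)+r)\bmod p\in h(B)\}\bigr|$ can be computed for all $r\in\Z_p$ by a single length-$p$ FFT in $O(p\log p)=O(n\log n)$ time. If $A+s\subseteq B$ then for $r=s\bmod p$ every $a\in A$ has $(h(a)+r)\bmod p=h(a+s)\in h(B)$, so $C_p[r]=|A|$; thus a candidate passing the test ``$C_p[s\bmod p]=|A|$'' is never discarded when it is a true shift (no false negatives, deterministically). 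Conversely, if $A+s\not\subseteq B$, pick $a_0\in A$ with $a_0+s\notin B$; then $s$ passes the test only if all such ``bad'' $a$ get masked, in particular only if $h(a_0+s)\in h(B)$, i.e.\ only if $p$ divides one of the $\le n$ nonzero integers $\{b-a_0-s:b\in B\}$. By the prime number theorem this happens with probability bounded away from $1$ (to absorb the extra $\log N$ collision factor of $x\bmod p$ one either inflates $p$ by a $\Theta(\log N)$ factor or replaces $x\bmod p$ with an almost-linear hash, as in \cite{CardozeS98,ColeH02}; one may also replace the $0/1$ indicator of $B$ by random field elements and compare two correlations à la Freivalds, which makes the per-candidate false-positive probability even smaller).

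For the boosting stage, run $L=\Theta(\log n)$ independent rounds and output $S:=\{\,s\in S_0 : s$ passes every round$\,\}$. True shifts pass every round, so $S$ always contains all of them. A false candidate passes one round with probability at most a fixed constant $<1$, hence passes all $L$ rounds with probability at most $n^{-c-1}$ for any prescribed $c$; a union bound over the $\le n$ candidates shows $S$ contains no false shift with probability $\ge 1-n^{-c}$. Each round costs $O(n\log n)$, for a total of $O(n\log n)$ once the rounds are implemented as in \cite{CardozeS98}.

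The step I expect to be the main obstacle is exactly this last point: controlling the masking/collision probability of the universe reduction tightly enough to keep the running time at $O(n\log n)$ rather than the $O(n\log^2 n)$ one gets from naively taking $\Theta(\log n)$ independent mod-prime rounds each costing $O(n\log n)$. Avoiding either the $\log N$ factor inherent to the $x\bmod p$ hash (the first drawback discussed in the introduction) or the $\log n$ factor from the repetitions requires the more careful choice and reuse of hash functions from \cite{CardozeS98} (see also the exposition in \cite{ColeH02}); everything else is standard hashing-plus-FFT bookkeeping.
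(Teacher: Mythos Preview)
The paper does not prove this theorem; it is stated as a citation to \cite{CardozeS98} (with a pointer to the exposition in \cite{ColeH02}) and used as a black box in the proof of \cref{thm:constellation-las-vegas}. So there is no ``paper's own proof'' to compare against.

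Your sketch correctly recovers the high-level structure of the Cardoze--Schulman algorithm: restrict to the $\le n$ candidate shifts $S_0=\{b-a_1:b\in B\}$, test all of them simultaneously via a cyclic correlation modulo a random prime, and repeat to drive down false positives. You are also right that the ``no false negatives'' direction is deterministic. The gap you yourself flag is the real one: with $p=\Theta(n)$ the per-candidate false-positive probability is not obviously bounded below $1$ (the $\log N$ collision factor of $x\bmod p$ bites), and naively stacking $\Theta(\log n)$ independent rounds costs $O(n\log^2 n)$. Closing this to $O(n\log n)$ is exactly the nontrivial content of \cite{CardozeS98}; since the present paper treats the theorem as a citation, your honest identification of this obstacle is the appropriate stopping point here.
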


We immediately obtain an $O(n \log n)$ time Las Vegas algorithm, using our Las Vegas algorithm for Sparse Nonnegative Convolution. 

\Constellation*
\begin{proof}
    We first run the Monte Carlo algorithm in \cref{thm:constellation-monte-carlo}. Note that $S$ does not contain elements $s$ for which $A + s' \not \subseteq B$ if and only if $A + S \subseteq B$. We run our Las Vegas algorithm for Sparse Nonnegative Convolution (\cref{thm:lasvegas-main}) to compute $A+S$, but we halt the algorithm and return $\perp$ if it runs for more than $O(n \log n)$ time. If the algorithm successfully terminates, we verify whether $A + S \subseteq B$ from the result of \cref{thm:lasvegas-main}. If the verification passes, we return $S$; otherwise, we return $\perp$. 

    If $A + S \subseteq B$, then $\|A+S\|_0 \le n$, so \cref{thm:lasvegas-main} terminates in $O(n \log n)$ time with high probability, so the algorithm will accept and return the correct $S$ with high probability. 

    Otherwise, the algorithm always returns $\perp$. 
\end{proof}

\bibliographystyle{alphaurl} 
\bibliography{main}

\appendix

\section{Removing the \texorpdfstring{$\polylog(N\Delta)$}{polylog(N Delta)} dependency}
\label{sec:numerical}

The only place with the $\polylog(N\Delta)$  dependency in our sparse convolution algorithms is to pick a prime $p_b = \poly(N \Delta)$ and work with the finite field $\F_{p_b}$ and an element $\omega_b \in F_{p_b}$ with multiplicative order at least $N$. 

In order to remove the $\polylog(N\Delta)$ dependency, we work with $\C$ instead of $\F_{p_b}$ and choose element $\omega_b = e(1/N')$ for $N' \ge N$. We pick $N'$ to be a prime in $[2N, 4N]$. This can be done in $\polylog(N) = \polylog(t)$ Las Vegas time with high probability (recall that at this stage of the algorithm, we have $N \le t^{2-\eps}$ for our Monte Carlo algorithm \cref{sec:generalconvo} and $N \le t \polylog t$ for our Las Vegas algorithms \cref{sec:lasvegasnonnega,sec:highprob}). 

Most parts of our algorithms involving field $\F_{p_b}$ are numerically stable when ported to $\C$, including FFT  (see e.g., \cite{Pan2001} and \cite[Section 4.3.3]{knuth2014art}). An issue is linear system solving involving the transposed Vandermonde matrix $A$, where the nodes are $\omega_b^{i_1}, \ldots, \omega_b^{i_s}$ for indices $i_1, \ldots, i_s$ in a bucket. If $i_1, \ldots, i_s$ are close to each other, the Vandermonde matrix can be ill-conditioned. By \cite[Equation (4.5)]{GiesbrechtLL09}, if the smallest difference between two indices is $\delta$ (differences measured in $\F_{N'}$), then we can upper bound  $\|A^{-1}\|_2$ (matrix $2$-norm) by 
\[
\frac{2^{s-1} \sqrt{s}}{|e(\delta / N') - 1|^{s-1}} = \frac{2^{s-1} \sqrt{s}}{(2\sin(\delta \pi / N'))^{s-1}} \le \frac{2^{s-1} \sqrt{s}}{(\delta \pi / N')^{s-1}} \le O(N' / \delta)^s. 
\]
Note that $\|A^{-1}\|_{\max}$, which is the maximum absolute values of all entries of $A^{-1}$, is upper bounded by $\|A^{-1}\|_2$. We adopt the definition of condition number used in \cite{higham2002accuracy}: $\text{cond}(A) = \big \||A| |A^{-1}|\big \|_{\infty}$, where $| \cdot |$ means we take absolute value entrywise, and $\| \cdot \|_\infty$ denotes the matrix $\infty$-norm.  
Thus, $$\text{cond}(A) = \big \||A| |A^{-1}|\big \|_{\infty} \le \poly(s) \|A\|_{\max} \|A^{-1}\|_{\max} \le O(N' / \delta)^s.$$

We use different ways to bound $\delta$ for our Monte Carlo and Las Vegas algorithms. 

\begin{itemize}
    \item Monte Carlo: Instead of $\omega_b = e(1/N)$, we can pick $\omega_b = e(r / N')$ for a random integer $r \in  [1, N'-1]$, as in \cite[Theorem 4.3]{GiesbrechtLL09}. As shown in \cite[Theorem 4.3]{GiesbrechtLL09}, with constant probability, $\delta \ge \Omega(N' / s^2)$. This would imply $\text{cond}(A) \le O(s^2)^s$. Thus, we can verify that all values of $r i_1 \bmod{N'}, \ldots, r i_s \bmod{N'}$ are well-spaced (i.e., the minimum difference is $\Omega(N' / s^2)$) before solving the linear system, and only continue if so. This decreases the probability that an index can be recovered by a constant, but it does not affect the asymptotic behavior of the algorithm. 
    \item Las Vegas: In our Las Vegas algorithms, we have $N = t \polylog t$ at this stage of the algorithm, and all indices in each bucket are congruent to some $p \ge t \polylog t$. Therefore, we already have $\delta \ge p$, and thus $\text{cond}(A) \le O(N' / p)^s \le \polylog(t)^s$.  
\end{itemize}
In both cases, we have $\text{cond}(A) \le 2^{\polyloglog t}$.

In \cite[Equation (9.23)]{higham2002accuracy}, it was shown that the solution $\hat{x}$ to $Ax = b$ for $s \times s$ matrix $A$ given by Gaussian elimination has (setting $u = 2^{-O(f)}$ in their Equation (9.23), which is defined on \cite[Page 38]{higham2002accuracy}) $||\hat{x}-x||_\infty / ||x||_\infty \le (1+\text{cond}(A)) \cdot 2^{O(s - f)}$,  where $f$ is the precision of floating points. We need $||\hat{x}-x||_\infty / ||x||_\infty = 1/\poly(N\Delta)$, as we need to recover integer vector $x$ with $||x||_\infty \le \poly(N\Delta)$. Using $\text{cond}(A) \le 2^{\polyloglog t}$, it suffices to set $f = O(\log(N\Delta) + s  + \polyloglog t) = O(\log(N \Delta) + \polyloglog t) = O(\log(N\Delta))$, so arithmetic operations for floating point with precision $f$ can still be implemented in the word RAM model with constant cost.  

One final detail to check is that, due to numerical precision before solving the linear system, the vector $b$ might be slightly different from its actual value. Let $b_0$ be the accurate vector and let $A x_0 = b_0$, and we will have $x - x_0 = A^{-1} (b - b_0)$. We want $\|x - x_0\|_\infty \ll 1$, which holds when $\|b - b_0\|_\infty \ll 1 / (s \|A^{-1}\|_{\max})$. Note that $\|b_0\|_\infty \le \poly(t \Delta)$, so this requires multiplicative error $\le \frac{1}{\poly(t \Delta) (s \|A^{-1}\|_{\max})}$. As the previous steps are numerically stable, this can be achieved as long as $f \ge \log(s \poly(N \Delta)  \|A^{-1}\|_{\max})$, so $f = \Theta(\log(N\Delta))$ is again sufficient.

\section{Missing proofs}

\subsection{Proof of \texorpdfstring{\cref{lem:pronybitpack}}{Lemma~\ref{lem:pronybitpack}}}
\label{sec:pronybitpack}

We work over $\F_p$ where $\log p$ is much smaller than the machine word length $w = \Omega(\log t)$. We usually assume the input/output arrays of $\F_p$-elements are given in compact memory representation where each machine word holds $\Theta(\frac{\log t}{ \log p })$ $\F_p$-elements.

Recall $\tilde O (f)$ denotes $f\cdot (\log f)^{O(1)}$.
We need the following algebraic primitives achieved via bit packing:
\begin{lemma}
\label{lem:bitpackingprimitive}
Let $t$ be a parameter. Assume $d,p\le t^{o(1)}$ and $p$ is prime. 
On a machine with word length $w = \Omega(\log t)$ bits, all the following tasks can be done in deterministic $\tilde O(\lceil d\cdot \frac{\log p}{\log t} \rceil)$ time (assuming compact input/output representation), after $\tilde O(t^{0.2})$ preprocessing:
\begin{enumerate}
    \item Adding or subtracting two degree-$d$ polynomials $f,g\in \F_p[x]$. 
    \label{item:addsubtract}
    \item Multiplying two degree-$d$ polynomials $f,g\in \F_p[x]$. 
    \label{item:mult}
    \item Polynomial division with remainder for polynomials $f,g\in \F_p[x]$ of degree at most $d$.
    \label{item:div}
    \item Expanding the product $(x-a_1)(x-a_2)\cdots (x-a_d)\in \F_p[x]$ given $a_1,\dots,a_d \in \F_p$.
    \label{item:dcfft}
    \item Evaluating a degree-$d$ polynomial $f\in \F_p[x]$ on $d$ given points $a_1,\dots,a_d \in \F_p$.
    \label{item:multipoint}
    \item 
    Given distinct points $a_i\in \F_p$ and their evaluations $b_i\in \F_p$ for $1\le i\le d$,
    finding the unique interpolating polynomial $f\in \F_p[x]$ of degree $\le d-1$ satisfying $f(a_i)=b_i$ for all $1\le i \le d$.
    \label{item:interpo}
\end{enumerate}
\end{lemma}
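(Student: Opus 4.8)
The strategy is standard ``word RAM algebra via bit packing and table lookups,'' adapted to work over a tiny prime field $\F_p$ with $p \le t^{o(1)}$. The key quantitative point is that a degree-$d$ polynomial over $\F_p$ occupies only $O(\lceil d \cdot \frac{\log p}{\log t}\rceil)$ machine words in compact representation, so any operation we can perform on a \emph{single} word of packed coefficients in $O(1)$ (or $\polylog$) time yields an $\tilde O(\lceil d \cdot \frac{\log p}{\log t}\rceil)$-time algorithm by processing word-by-word, and more generally we aim to reduce each task to the kind of ``small FFT on a chunk of coefficients that fits in a constant number of words'' primitive that can be tabulated. The $\tilde O(t^{0.2})$ preprocessing budget is exactly what is needed to build lookup tables indexed by $O(1)$-word bitstrings: a table indexed by a bitstring holding $b$ coefficients each in $[p]$ has $p^b \le t^{0.1}$ entries when $b = O(\frac{\log t}{\log p}) \cdot \frac{1}{10}$, say, and can be filled in $\tilde O(t^{0.1})$ time; we will use a chunk size small enough that $p^{(\text{chunk size})} \le t^{0.2}$.

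First I would handle \cref{item:addsubtract}: componentwise addition/subtraction in $\F_p$ on packed words. Since $p$ is fixed, I precompute a table that, given two words each packing $b$ coefficients, returns the word packing their componentwise sums mod $p$ (choosing $b = \Theta(\log t / \log p)$ so $p^{2b} \le t^{0.2}$ and each of $f,g$ occupies $O(1)$ words per chunk-block). This is $O(1)$ table lookups per block, hence $\tilde O(\lceil d \log p / \log t\rceil)$ total. For \cref{item:mult} (polynomial multiplication), the plan is the standard blocking trick used in \cite[Appendix B]{focs23} (our \cref{thm:packfft}) specialized to word-size instances: split $f$ and $g$ into blocks of $\Theta(\log t / \log p)$ coefficients, multiply all pairs of blocks via a precomputed multiplication table (each block-product is a polynomial of twice the block degree, still fitting in $O(1)$ words, and the table has $\le p^{2b} \le t^{0.2}$ index-pairs... wait, $p^{2b}$ pairs means $p^{4b}$ table size — so I take $b = \Theta(\log t/\log p)$ with the constant small enough that $p^{4b} \le t^{0.2}$), and accumulate the shifted block-products using \cref{item:addsubtract}. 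With $O(\lceil d\log p/\log t\rceil)$ blocks on each side, naively this is quadratic in the block count; to keep it near-linear I would instead invoke \cref{thm:packfft} directly as a black box on the flattened integer encoding (pack coefficients with enough zero-padding so the integer product's digits recover $\F_p$-coefficients after a final mod-$p$ reduction pass), which gives $O(d \log p)$ bit operations $= \tilde O(\lceil d \log p/\log t\rceil)$ words. Then \cref{item:div} follows by Newton iteration: compute the reciprocal of the reverse of the divisor modulo $x^{d}$ via $O(\log d)$ rounds of multiplication (each round doubling precision), using \cref{item:mult} at each step, for $\tilde O(\lceil d\log p/\log t\rceil)$ total.

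For \cref{item:dcfft}, \cref{item:multipoint}, \cref{item:interpo} I would use the classical divide-and-conquer subproduct-tree machinery (von zur Gathen--Gerhard, Chapter 10): the subproduct tree of $(x-a_1)\cdots(x-a_d)$ is built bottom-up with $O(d)$ multiplications totalling $\tilde O(d)$ \emph{$\F_p$-operations}, i.e. $\tilde O(\lceil d\log p/\log t\rceil)$ words via \cref{item:mult}; multipoint evaluation is top-down remaindering down the tree, again $\tilde O(d)$ $\F_p$-operations via \cref{item:div}; and Lagrange interpolation is the standard combination of one multipoint evaluation (of the derivative of the master polynomial) with a bottom-up ``linear combination'' pass, all within the same budget. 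At the \emph{leaves} of these recursions, once the subproblem size drops to $O(\log t/\log p)$ coefficients, instead of recursing further I switch to precomputed lookup tables (a table for ``expand product of $\le b$ linear factors,'' a table for ``evaluate a degree-$b$ polynomial at $b$ points,'' etc.), each of size $p^{O(b)} \le t^{0.2}$; this is the same leaf-tabulation idea already used in \cref{lem:tablefinitefield} and in the sketch of \cref{lem:pronyfq}.

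\textbf{Main obstacle.} The delicate point is bookkeeping the word-alignment under the shifts and blocked additions so that the claimed $\tilde O(\lceil d\log p/\log t\rceil)$ bound genuinely holds rather than picking up an extra $\log p$ or $\log t$ factor — in particular making sure that (i) the multiplication in \cref{item:mult} is done with a near-linear, not quadratic, number of word operations (which is why I lean on \cref{thm:packfft} rather than a hand-rolled block convolution), and (ii) the subproduct-tree recursions have their leaves truncated at the right granularity so that the $O(\log)$ recursion depth multiplies the \emph{per-level} cost $\tilde O(\lceil d\log p/\log t\rceil)$ rather than a per-level cost that is already $\tilde O(d)$ in bit terms. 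Everything else is a routine transcription of textbook polynomial algorithms into the packed word-RAM model, with the $t^{0.2}$ preprocessing buying all the constant-word lookup tables we need; I would state the leaf-size choice once, verify $p^{O(b)}\le t^{0.2}$, and then cite von zur Gathen--Gerhard for correctness of each divide-and-conquer scheme.
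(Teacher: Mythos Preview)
Your approach matches the paper's for items~1, 3, 4, 5, 6: packed-word lookup tables for addition, Newton iteration for division, and subproduct-tree divide-and-conquer with tabulated leaves for the last three. The one gap is item~2.

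Invoking \cref{thm:packfft} does not give what you claim: its $O(n\log p)$ bound is word-RAM \emph{time} (not bit complexity) on a machine with word size $\Omega(\log n + \log p)$, and it does not automatically exploit the much larger word size $\Omega(\log t)$ available here. Applied as a black box to degree-$d$ polynomials over $\F_p$ it yields $O(d\log p)$ word operations, a full $\log t$ factor off target. Your Kronecker-to-$\Z$ idea could be made to work if you used fast \emph{integer} multiplication instead, but that is not what \cref{thm:packfft} is, and you would still need to argue that the extra $\log d$ factor in the Kronecker digit size (each digit must hold a sum of $d$ products, so $\Theta(\log d + \log p)$ bits rather than $\Theta(\log p)$) is absorbed by the $\tilde O$.

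The paper handles item~2 differently: set $\kappa = \lceil 0.1\log t/\log p\rceil$, precompute an $\F_{p^\kappa}$ multiplication table of size $\tilde O((p^\kappa)^2) \le \tilde O(t^{0.2})$ so that $\F_{p^\kappa}$-arithmetic is $O(1)$ per operation, then Kronecker-substitute from $\F_p[x]$ into $\F_{p^\kappa}[x]$ to reduce to a degree-$O(\lceil d/\kappa\rceil)$ product over $\F_{p^\kappa}$, and run Sch\"{o}nhage--Strassen FFT over that field for $\tilde O(\lceil d/\kappa\rceil) = \tilde O(\lceil d\log p/\log t\rceil)$ time directly. This is the same packing intuition you have, but with the chunk granularity baked into the \emph{coefficient ring} rather than the integer encoding, so that ``one coefficient equals one machine word'' and the standard near-linear FFT bound lands exactly on target without any detour through bit complexity.
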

\begin{proof}[Proof sketch]
We choose $\kappa = \lceil \frac{0.1\log t}{\log p}\rceil $, and note that $\kappa$ elements from $\F_p$ fit into one machine word. 

For \cref{item:addsubtract}, we need to perform additions (or subtractions) on $\kappa$ consecutive $\F_p$-elements (packed into a single machine word) in a batch in $O(1)$ time. This can be achieved by precomputing a table that contains all possible  $\tilde O((p^{\kappa})^2)\le \tilde O(t^{0.2})$ answers, and each time performing a table look-up in $O(1)$ time. In this way we can add/subtract two degree-$d$ polynomials over $\F_p$ in $O(\lceil d/\kappa \rceil) =  O(\lceil d\cdot \frac{\log p}{\log t} \rceil)$ time.

For polynomial multiplication (\cref{item:mult}),  we work over $\F_{p^{\kappa}}$, and we can preprocess the multiplication table and inverses in $\tilde O((p^{\kappa})^2) \le \tilde O(t^{0.2})$ time, so that arithmetic operations in $\F_{p^{\kappa}}$ can be performed by table look-ups in constant time. Then, by a simple Kronecker substitution, 
we can reduce the multiplication of two degree-$n$ polynomials over $\F_p$ to the multiplication of two degree-$O(\lceil d/\kappa \rceil )$ polynomials over a finite field $\F_{p^{\kappa}}$, which can be done using the Sch\"{o}nhage-Strassen FFT algorithm \cite{SchonhageS71} (see  e.g.,\ \cite[Section 8.3]{von2013modern}) in $O(\lceil d/\kappa \rceil \log \lceil d/\kappa \rceil \log \log \lceil d/\kappa \rceil) =  \tilde O(\lceil d\cdot \frac{\log p}{\log t} \rceil) $ time as desired.%

For \cref{item:div}, note that polynomial division with remainder reduces via Newton iteration (see e.g.,\ \cite[Section 9.1]{von2013modern}) to logarithmically many polynomial multiplications with exponentially increasing lengths (for length smaller than $\kappa$ we again use precomputation and table look-ups), and hence also runs in time $\tilde O(\lceil d\cdot \frac{\log p}{\log t} \rceil)$ via bit packing.

For \cref{item:dcfft}, we use a simple divide-and-conquer and polynomial multiplication via \cref{item:mult}. The running time satisfies the recurrence 
$T(d) = 2 T(\lceil d/2\rceil ) +  \tilde O(\lceil d\cdot \frac{\log p}{\log t} \rceil)$. When $d\le \kappa =\lceil \frac{0.1\log t}{\log p}\rceil$, we can again preprocess all the possible answers in $\tilde O(p^{\kappa}) \le \tilde O(t^{0.1})$ time and every time look up the answer in $T(d)=O(1)$ time. Hence, the recurrence still solves to $T(d) = \tilde O(\lceil d\cdot \frac{\log p}{\log t} \rceil)$.

For \cref{item:multipoint}, note that the classic divide-and-conquer algorithm of \cite{Fiduccia72} also has time complexity
$T(d) = 2 T(\lceil d/2\rceil ) +  \tilde O(\lceil d\cdot \frac{\log p}{\log t} \rceil)$
via bit packing, and when $d\le \kappa$ we can look up answers in $T(d)=O(1)$ time after preprocessing in $\tilde O((p^{\kappa})^2) \le \tilde O(t^{0.2})$ time.

For \cref{item:interpo}, we observe that divide-and-conquer algorithm (see e.g., \cite[Section 10.2]{von2013modern}) also admits speedup via bit packing, similarly to the previous algorithms.
\end{proof}

As a standard result in computer algebra, the minimal polynomial of a sequence $a$ of recursion order $\le n$ can be computed in $\tilde O(n)$ time given its first $2n$ entries $a_0,\dots,a_{2n-1}$, by solving a Hankel system \cite{Morf80,bitmead1980asymptotically,KaltofenY88}. 

\begin{lemma}[Solving linear recurrences, bit packed version]
\label{lem:linearrecurrbitpack}
Let $t$ be a parameter. Assume $k,p\le t^{o(1)}$ and $p$ is prime. 
On a machine with word length $w = \Omega(\log t)$ bits, the following holds after $\tilde O(t^{0.2})$ preprocessing:
 Given the first $2k$ entries $a_0,a_1,\dots,a_{2k-1} \in \F_p$ of a linear recurrent sequence $a$ of recursion order $\le k$, there is a deterministic  algorithm that computes the minimal polynomial of $a$ in $O(k\log k + k\cdot \frac{\log p}{\log t} \cdot \polylog(k))$ time.
\end{lemma}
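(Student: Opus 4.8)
The plan is to reduce the problem of computing the minimal polynomial of a linear recurrent sequence of order $\le k$ to a bounded number of polynomial arithmetic operations on polynomials of degree $O(k)$, each of which runs in $\tilde O(\lceil k \cdot \frac{\log p}{\log t}\rceil)$ time by \cref{lem:bitpackingprimitive}, plus an unavoidable $O(k\log k)$ term that I expect comes from the bookkeeping of the recursion (e.g.\ splitting and reassembling the $O(\log k)$ recursion levels, where the per-level cost has an additive term that does not benefit from bit-packing, or from managing indices into the compact representation). Concretely, I would invoke the classical fact that the minimal polynomial of $a$ can be extracted from the half-GCD (fast extended Euclidean algorithm) applied to the pair of polynomials $x^{2k}$ and $A(x) = \sum_{i=0}^{2k-1} a_i x^i$; the relevant Bézout cofactor at the step where the remainder degree drops below $k$ gives (the reversal of) the minimal polynomial. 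This is the standard Berlekamp--Massey-via-fast-Euclid route described in \cite{von2013modern} (or alternatively the Hankel-solver route of \cite{Morf80,bitmead1980asymptotically,KaltofenY88}); I would pick whichever is cleanest to state in the bit-packed model.

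First I would record that every primitive operation used inside the fast extended Euclidean / half-GCD algorithm — polynomial addition, subtraction, multiplication, and division with remainder of degree-$O(k)$ polynomials over $\F_p$ — is available in $\tilde O(\lceil k\cdot \frac{\log p}{\log t}\rceil)$ deterministic time after $\tilde O(t^{0.2})$ preprocessing, by items \ref{item:addsubtract}, \ref{item:mult}, \ref{item:div} of \cref{lem:bitpackingprimitive}. Since $k,p \le t^{o(1)}$, the preprocessing budget $\tilde O(t^{0.2})$ dominates the preprocessing of \cref{lem:bitpackingprimitive}, so there is no issue there. Then I would observe that the half-GCD algorithm performs $O(\log k)$ levels of recursion, and at level $j$ the work is $O(2^{-j} k \log(2^{-j}k))$-ish polynomial multiplications of total degree $O(2^{-j}k)$ — summing the bit-packed cost $\tilde O(\lceil 2^{-j}k\cdot \frac{\log p}{\log t}\rceil)$ over the $O(2^j)$ subproblems at level $j$ and over all $O(\log k)$ levels gives $O(k\cdot\frac{\log p}{\log t}\cdot\polylog k)$ for the "packable" part. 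The $O(k\log k)$ term I would attribute to the $\Theta(1)$-per-coefficient overhead that appears whenever the recursion bottoms out or whenever we must unpack/repack partial results into the compact machine-word layout; this is exactly parallel to how \cref{item:dcfft}, \cref{item:multipoint}, \cref{item:interpo} of \cref{lem:bitpackingprimitive} are handled, and I expect the cleanest writeup is to state the recurrence $T(k) = 2T(\lceil k/2\rceil) + \tilde O(\lceil k\cdot\frac{\log p}{\log t}\rceil) + O(k)$ and solve it to $T(k) = O(k\log k + k\cdot\frac{\log p}{\log t}\cdot\polylog k)$.

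After running the fast Euclidean algorithm, the remaining step is to identify the correct intermediate remainder — the first one of degree $< k$ — and read off the associated cofactor, reverse it, and normalize it to be monic; these are $O(k)$-time postprocessing steps (polynomial reversal is trivial in the compact representation up to word-boundary alignment, and making it monic costs one $\F_p$-inversion plus a scalar multiplication, the latter being another batched table-lookup operation like \cref{item:addsubtract}). I would also need a short correctness argument that this cofactor is indeed the minimal polynomial of the order-$\le k$ recurrence: this is the standard statement that for a sequence of linear complexity $\le k$, $2k$ consecutive terms determine the minimal polynomial, and the Euclidean/Berlekamp--Massey output at the prescribed degree threshold equals it — I would cite \cite{von2013modern} (or \cite{KaltofenY88}) for this rather than reprove it.

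\textbf{Main obstacle.} The genuinely delicate part is not the algebra but the accounting in the bit-packed cost model: one must be careful that the half-GCD recursion, which in the textbook RAM model costs $O(\mathsf{M}(k)\log k)$, really telescopes to $O(k\log k + k\cdot\frac{\log p}{\log t}\polylog k)$ rather than, say, $O(k\log k\cdot\frac{\log p}{\log t})$ — i.e.\ that the $\log k$ factor from the recursion depth attaches only to the cheap $O(k)$-per-level overhead and not to every packed multiplication. Getting the right dependency requires noting that the packed multiplication cost at level $j$ is $\tilde O(\lceil 2^{-j}k\cdot\frac{\log p}{\log t}\rceil)$ per subproblem with $O(2^j)$ subproblems, so the level-$j$ total is $\tilde O(k\cdot\frac{\log p}{\log t}) + O(2^j)$ once the ceiling "rounds up to $1$" for small subproblems; summing the first part over $O(\log k)$ levels gives the $\polylog k$ factor, and summing the $O(2^j)$ ceiling-overhead gives $O(k)$ total, while the $O(k\log k)$ term is the separate bottom-of-recursion / repacking cost. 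I would present this telescoping carefully as the one nonroutine calculation, mirroring the analogous recurrence analyses already done for items \ref{item:dcfft}--\ref{item:interpo} of \cref{lem:bitpackingprimitive}.
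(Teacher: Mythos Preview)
Your proposal is correct and takes essentially the same approach as the paper: reduce computing the minimal polynomial to the fast extended Euclidean (half-GCD) algorithm via \cite{von2013modern}, replace the polynomial arithmetic by the bit-packed primitives from \cref{lem:bitpackingprimitive}, and solve the resulting recurrence $T(k)=2T(\lceil k/2\rceil)+\tilde O(\lceil k\cdot\tfrac{\log p}{\log t}\rceil)+O(k)$ to get $O(k\log k + k\cdot\tfrac{\log p}{\log t}\cdot\polylog k)$. Your accounting of how the $\log k$ recursion-depth factor attaches only to the packed-multiplication term (and the ceiling overhead telescopes separately) is actually more careful than the paper's terse sketch.
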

\begin{proof}[Proof sketch]
 As described in \cite[Section 12.3]{von2013modern}, the task of computing the minimal polynomial of a linearly recurrent sequence can be solved by an application of the Extended Euclidean algorithm. So it suffices to analyze the running time of the Extended Euclidean algorithm under bit packing.

See \cite[Section 11.1]{von2013modern}
for a description of the fast Extended Euclidean algorithm. The running time is 
\begin{equation}
\label{eqn:dcmultrecur}
T(k) = T(\lceil k/2\rceil ) + O(M(k)) + O(k) ,
\end{equation}
where $M(k)$ denotes the complexity of multiplying two length-$k$ polynomials over $\F_p$. 
Without bit packing, \cref{eqn:dcmultrecur} solves to  $T(k)=  O(M(k)\log k) = O(k\log^2 k)$.
Here we improve the $M(k)$ term using bit packing from \cref{lem:bitpackingprimitive}.
In this way, the total running time \cref{eqn:dcmultrecur} becomes $T(k) = T(\lceil k/2\rceil ) + O(\lceil k/\kappa \rceil \log \lceil k/\kappa \rceil \log \log \lceil k/\kappa \rceil) + O(k) \le O(\frac{k\log^2 k\log \log k}{\kappa}) + O(k\log k) \le O(\frac{k\log^2 k\log \log k}{\log t/\log p} + k\log k)$.
\end{proof}

Another standard result in computer algebra is that transposed Vandermonde systems can be solved in near-linear time \cite{KaltofenY88,li2000arithmetic,pan2001structured}. Here we again need to use bit packing to slightly improve the running time.
\begin{lemma}[Solving transposed Vandermonde systems, bit packed version]
\label{lem:vandermondebitpack}
Let $t$ be a parameter. Assume $k,p\le t^{o(1)}$ and $p$ is prime. 
On a machine with word length $w = \Omega(\log t)$ bits, the following holds after $\tilde O(t^{0.2})$ preprocessing:
Given distinct elements $a_1,a_2,\dots,a_k\in \F_p$ and the solution vector $(b_1,b_2,\dots,b_k)$, there is a deterministic algorithm in 
$O(k\log k + k\cdot \frac{\log p}{\log t} \cdot \polylog(k))$ time that solves the transposed Vandermonde system
\begin{equation}
\left[ \begin{array}{cccc}
1 & 1 & \cdots & 1 \\
a_1 & a_2 & \cdots & a_k \\
\vdots & \vdots & \ddots & \vdots \\
a_1^{k-1} & a_2^{k-1} & \cdots & a_k^{k-1} \\
\end{array} \right]
\left[ \begin{array}{c}
x_0 \\
x_1 \\
\vdots \\
x_{k-1}
\end{array} \right]
=
    \left[ \begin{array}{c}
b_0 \\
b_1 \\
\vdots \\
b_{k-1}
\end{array} \right]
\end{equation}
\end{lemma}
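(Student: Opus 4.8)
\textbf{Proof plan for \cref{lem:vandermondebitpack}.}

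The plan is to follow the classical near-linear-time algorithm for transposed Vandermonde systems (see e.g.\ \cite{KaltofenY88, li2000arithmetic, pan2001structured}), and observe that each of its building blocks is exactly one of the bit-packed primitives already established in \cref{lem:bitpackingprimitive} (or the bit-packed subroutines derived from it). Recall that the transposed Vandermonde matrix $V^{\mathsf T}$ is the transpose of the ordinary Vandermonde matrix $V$ with nodes $a_1,\dots,a_k$, so solving $V^{\mathsf T} x = b$ amounts to applying $(V^{\mathsf T})^{-1} = (V^{-1})^{\mathsf T}$ to $b$. The standard factorization writes $V^{-1}$ (and hence its transpose) as a product of a diagonal matrix (built from the values of the derivative $M'(a_i)$ of the master polynomial $M(x) = \prod_{j=1}^k (x - a_j)$), a "partial-fraction / multipoint-evaluation" operator, and a "Horner / polynomial-combination" operator. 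Concretely, one computes: (i) $M(x) = \prod_j (x-a_j)$ by subproduct-tree multiplication; (ii) $M'(x)$ by formal differentiation; (iii) the values $M'(a_1),\dots,M'(a_k)$ by fast multipoint evaluation; (iv) their inverses in $\F_p$; and then (v) two further passes up and down the subproduct tree that each cost a constant number of polynomial multiplications/divisions per node. Applying $(V^{-1})^{\mathsf T}$ to $b$ is the "transposed" version of these passes, which by the transposition principle has the same asymptotic cost as the forward direction.

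First I would set up the subproduct tree of the $a_i$'s. Building it is exactly \cref{item:dcfft} of \cref{lem:bitpackingprimitive} applied recursively: the tree has $O(\log k)$ levels, and level $\ell$ consists of polynomials of total degree $k$, so expanding all products at that level costs $\tilde O(\lceil k \cdot \frac{\log p}{\log t}\rceil)$ time, for a total of $\tilde O(\lceil k \cdot \frac{\log p}{\log t}\rceil \log k) = \tilde O(k \cdot \frac{\log p}{\log t})$ — but we must also account for the fact that, once the degree at a node drops below $\kappa = \lceil 0.1 \log t / \log p \rceil$, packing stops helping and we incur $O(1)$ per such node via table look-up (precomputed in $\tilde O(t^{0.2})$ time), giving $O(k)$ in aggregate over the small nodes; altogether this stage is $O(k \log k + k \cdot \frac{\log p}{\log t} \cdot \polylog(k))$. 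Differentiating $M$ is a trivial $\tilde O(\lceil k \cdot \frac{\log p}{\log t}\rceil)$ pass (scale coefficient $i$ by $i \bmod p$, using the addition table). Multipoint evaluation of $M'$ at the $a_i$'s is \cref{item:multipoint}, again $O(k \log k + k \cdot \frac{\log p}{\log t} \cdot \polylog(k))$. Inverting the $k$ scalars $M'(a_i) \in \F_p^{\times}$ is $O(k)$ via a precomputed inverse table for $\F_p$. Finally, the two tree passes that assemble the answer each consist, at each node, of a constant number of polynomial multiplications and divisions of degree matching the node (or table look-ups for small nodes); by \cref{item:mult} and \cref{item:div} these cost $\tilde O(\lceil d \cdot \frac{\log p}{\log t}\rceil)$ per node of degree $d$, summing to $O(k \log k + k \cdot \frac{\log p}{\log t} \cdot \polylog(k))$ across each level and hence across the whole tree. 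Summing all stages gives the claimed bound.

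The one point that needs a little care — and which I expect to be the main (mild) obstacle — is making the transposition precise: one has to check that the transposed version of the evaluation/interpolation passes (i.e.\ the operation $b \mapsto (V^{-1})^{\mathsf T} b$ rather than the polynomial-to-values map) really does reduce to the same bit-packed primitives with the same costs. This is the standard transposition principle for structured-matrix algorithms (every straight-line computation of a linear map transposes with the same count of arithmetic operations up to reversing the order of gates); in the word-RAM/bit-packed model one should additionally observe that the packed multiplication, division, and multipoint-evaluation routines of \cref{lem:bitpackingprimitive} all have "transposed" analogues of the same complexity, since they are themselves built from FFT-type butterflies and subproduct trees whose transposes are again butterflies and subproduct trees. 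Alternatively, and more elementarily, one can avoid invoking the transposition principle altogether: it is classical that solving the transposed Vandermonde system is equivalent to a single Lagrange-interpolation-type computation — namely, recovering the unique polynomial $f$ of degree $< k$ from a weighted combination of the data, which after dividing out the scalars $M'(a_i)^{-1}$ is exactly an instance of \cref{item:interpo}. Taking this route, the whole lemma follows essentially immediately from \cref{item:interpo} of \cref{lem:bitpackingprimitive} together with $O(k)$ scalar preprocessing, and the running-time bound $O(k \log k + k \cdot \frac{\log p}{\log t} \cdot \polylog(k))$ is inherited verbatim. I would present this second, shorter route as the main proof and remark that the structured-matrix factorization gives an alternative derivation.
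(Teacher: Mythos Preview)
Your first route is broadly workable and close in spirit to the paper, but the paper actually takes a somewhat different concrete path: it follows Li's algorithm \cite{li2000arithmetic}, which solves the transposed Vandermonde system by (i) expanding $\prod_j(x-a_j)$, (ii) computing the power sums $s_i=\sum_j a_j^i$ via a Hankel system, (iii) solving a second Hankel system, and (iv) a final multipoint evaluation. The point of this decomposition is that Hankel solves reduce to the Extended Euclidean algorithm, which has \emph{already} been bit-packed in the proof of \cref{lem:linearrecurrbitpack}; so the paper gets the speedup essentially for free by reuse, without having to argue anything about transposition in the word-RAM model. Your subproduct-tree-plus-transposition-principle route would also work, but it carries exactly the burden you flag: you would need to verify that the transposed primitives (transposed multipoint evaluation, etc.) themselves admit the same bit-packing, not just the same arithmetic-operation count. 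That is doable but is extra work the paper avoids.

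Your ``second, shorter route'' has a genuine gap: solving the transposed Vandermonde system $V^{\mathsf T}x=b$ is \emph{not} an instance of interpolation (\cref{lem:bitpackingprimitive}\,\cref{item:interpo}). Interpolation solves the ordinary system $Vc=b$ (find the coefficient vector $c$ of the polynomial taking value $b_i$ at $a_i$), i.e.\ it applies $V^{-1}$; the transposed system requires applying $(V^{-1})^{\mathsf T}$, which is a different linear map. The clean direct reduction is rather to a polynomial multiplication followed by \emph{multipoint evaluation} (\cref{item:multipoint}), not interpolation: writing $B(y)=\sum_i b_i y^i$ and $M^*(y)=\prod_j(1-a_j y)$, the identity $\sum_j x_j/(1-a_j y)\equiv B(y)\pmod{y^k}$ gives $\sum_j x_j\prod_{l\ne j}(1-a_l y)=B(y)M^*(y)\bmod y^k$, and one then recovers the $x_j$ by evaluating this polynomial (after a reversal) at the nodes and dividing by the appropriate scalars. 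So if you want to present a short proof that cites only \cref{lem:bitpackingprimitive}, you should invoke \cref{item:mult} and \cref{item:multipoint}, not \cref{item:interpo}.
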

\begin{proof}[Proof sketch]
Here we adapt the algorithm described by Li  
\cite{li2000arithmetic}. In \cite{li2000arithmetic}, the first step is to expand the polynomial product $(x-a_1)(x-a_2)\cdots (x-a_k)$, which can be done using \cref{lem:bitpackingprimitive}.
The second step is to use this result to compute $s_i = \sum_{j=1}^k a_j^i$ for all $0\le i\le 2k-2$, which can be done by solving a Hankel system of order $O(k)$. The third step also involves solving a Hankel system of order $O(k)$.
It is known that solving Hankel systems can be reduced to the Extended Euclidean algorithm (see \cite[Section 2]{pan2001structured}), and thus also admits the speedup via bit packing described in the proof of \cref{lem:linearrecurrbitpack}. 
The last step is to evaluate a univariate degree-$k$ polynomial on $k$ points, which can be done using \cref{lem:bitpackingprimitive}.

In summary, the total running time is 
$O(\frac{k\log^2 k\log \log k}{\log t/\log p} + k\log k)$.
\end{proof}

We adapt the modular composition algorithm of Kedlaya and Umans \cite{KedlayaU11}.
   Given three univariate polynomials $f,g,h\in \F_p[x]$ of degree smaller than $d$, Kedlaya and Umans' algorithm computes $f(g(x))\bmod h(x)$ in $d^{1+o(1)}\cdot (\log p)^{1+o(1)}$ bit complexity (their algorithm works for more general rings as well, but here we focus on prime fields). 
   We will shave off  roughly a $\log t$ factor in word RAM with $\Omega(\log t)$-bit words.

\begin{lemma}[Modular composition with bit packing]
\label{lem:modularcomposition}
Let $t$ be a parameter. Let $\eps>0$ be any small fixed constant. Assume $d,p\le t^{o(1)}$ and $p$ is prime and $p\ge d^2$. 
On a machine with word length $w = \Omega(\log t)$ bits, the following holds after $\tilde O(t^{0.2})$ preprocessing:

    Given three polynomials $f,g,h\in \F_p[x]$ of degree smaller than $d$, one can compute $f(g(x))\bmod h(x)$ by a deterministic algorithm  in $O(d^{1+\eps}\cdot \frac{\poly\log p}{\log t} + \polylog(p))$ time.
\end{lemma}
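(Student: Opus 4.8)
The plan is to re-implement the modular composition algorithm of Kedlaya and Umans~\cite{KedlayaU11} in the word RAM model, harvesting a $\Theta(\log t/\log p)$-factor speedup through bit packing. Recall that over $\F_p$ their algorithm runs in $d^{1+o(1)}(\log p)^{1+o(1)}$ bit operations, so for $d$ beyond a constant depending on $\eps$ we have $d^{1+o(1)}\le d^{1+\eps}$ (the constantly many instances with $d=O_\eps(1)$ are handled directly within the additive $\polylog p$ term, so from now on assume $d$ is large). Their reduction from modular composition to multivariate multipoint evaluation introduces an auxiliary $m$-variate polynomial of per-variable degree $<e$ with $e^m=O(d)$ monomials, evaluated at $O(d)$ points of $\F_p^m$; this requires $\Omega(d)$ distinct interpolation nodes in $\F_p$ and needs $e^m<p$ for the Kronecker-type substitution $x_i\mapsto y^{e^{i-1}}$ that collapses the multivariate evaluation to a univariate one to be injective. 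The hypothesis $p\ge d^2$ comfortably guarantees both.

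The key structural observation is that, after fixing the internal parameters of the Kedlaya--Umans recursion, every node of its recursion tree performs only \emph{batchable} operations on arrays over $\F_p$ or over auxiliary prime fields $\F_q$ spawned by their multimodular transfer, where $q\le\polylog(d\log p)\le t^{o(1)}$: additions, subtractions and multiplications of field elements; univariate polynomial multiplication, division with remainder, subproduct-tree construction, multipoint evaluation and interpolation (all supplied by \cref{lem:bitpackingprimitive}, with \cref{lem:linearrecurrbitpack,lem:vandermondebitpack} available for the Hankel/transposed-Vandermonde solves); and Chinese remaindering on integers of $O(me\log p)=t^{o(1)}$ bits. Since $p\le t^{o(1)}$, a machine word holds $\kappa=\Theta(\log t/\log p)$ packed $\F_p$-elements and at least $\kappa$ packed $\F_q$-elements, so after the $\tilde O(t^{0.2})$ precomputation of the batch tables (\cref{lem:bitpackingprimitive}; there are only $\polylog$ distinct auxiliary moduli, so the total preprocessing is still $\tilde O(t^{0.2})$), every one of these primitives on a length-$\ell$ array runs in $\tilde O(\lceil\ell\log p/\log t\rceil)$ time. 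The total length of all arrays touched over the whole execution is $d^{1+o(1)}$ field elements and the recursion tree has $d^{o(1)}$ nodes, so summing gives running time $\tilde O(d^{1+o(1)}\log p/\log t)+\polylog p=O\!\big(d^{1+\eps}\cdot\tfrac{\poly\log p}{\log t}+\polylog p\big)$, where the $\polylog p$ term absorbs one-time costs such as deterministically sieving for the auxiliary small primes.

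I expect the main obstacle to be verifying the \emph{SIMD-friendliness} of the Kedlaya--Umans algorithm in detail: their analysis is phrased in bit complexity, and to realize a $w=\Theta(\log t)$-bit word's worth of parallelism one must check at every stage that the computation is genuinely a collection of many small ($\le t^{o(1)}$-bit) subproblems amenable to the SIMD-within-a-word primitives of \cref{lem:bitpackingprimitive}, rather than arithmetic on a single large integer, and that all moduli produced by the nested multimodular transfers remain of size $\polylog(d\log p)$ (so that $\Omega(\log t/\log p)$ of their elements pack into a word and their lookup tables fit the $\tilde O(t^{0.2})$ budget). A secondary bookkeeping point is to confirm that the integer values arising from lifting the multivariate polynomial and the evaluation points to $\Z$ have only $O(me\log p)=t^{o(1)}$ bits, so that each multimodular reduction and CRT reconstruction is itself a $t^{o(1)}$-size instance handled by the same packed primitives.
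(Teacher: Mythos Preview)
Your proposal is correct and takes essentially the same approach as the paper: both adapt the Kedlaya--Umans modular composition algorithm by bit-packing every step of the reduction to multivariate multipoint evaluation and of the recursive multimodular MME itself, using the $\tilde O(t^{0.2})$ lookup tables for batched arithmetic. The paper's presentation differs only organizationally, splitting the argument into an explicit bit-packed reduction lemma (\cref{lem:reductiontomme}) and bit-packed MME lemmas (\cref{lem:mmebasic}, \cref{lem:solvemme}), and additionally singling out two word-RAM primitives you gloss over---packed matrix transposition (\cref{lem:packtranspose}) and batch small-table lookup (\cref{lem:packlookup})---which are needed to rearrange the $\{\alpha_{j,k}\}$ arrays between the per-variable and per-point layouts and to do the final evaluation lookups.
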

We defer the proof of \cref{lem:modularcomposition} to the \cref{sec:proofmodularcomposition}.

\begin{lemma}[Equal-degree factorization with bit packing] \label{lem:factorize-bitpack}
Let $t$ be a parameter. Let $\eps>0$ be any small fixed constant. Assume $k,p\le \polylog t$ and $p$ is prime. 
On a machine with word length $w = \Omega(\log t)$ bits, the following holds after $\tilde O(t^{0.2})$ preprocessing:
    
Given a degree-$k$ polynomial $f\in \F_p[z]$ with $k$ distinct roots in $\F_p$, where we can factorize $f$ into linear factors by a Las Vegas algorithm in expected $O((\frac{k^{1+\eps}}{\log t} + 1)\cdot \polylog p)$ time.
\end{lemma}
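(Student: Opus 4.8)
The plan is to handle two regimes separately, according to whether the input polynomial $f$ fits into a constant number of machine words, i.e.\ whether $k\log p\le\tfrac1{10}\log t$ or not. In the first regime I would answer queries by a precomputed table; in the second I would run a near‑linear‑time version of classical equal‑degree (degree‑one) factorization, with all polynomial arithmetic sped up by bit packing via \cref{lem:bitpackingprimitive}.

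\textbf{Small regime ($k\log p\le\tfrac1{10}\log t$).} Here a degree-$k$ polynomial over $\F_p$, i.e.\ its $k+1$ coefficients, occupies $O(1)$ machine words, and there are at most $p^{k+1}\le p\cdot t^{1/10}=t^{1/10+o(1)}$ such polynomials. During the $\tilde O(t^{0.2})$ preprocessing I would enumerate all length-$j$ sequences of distinct elements of $\F_p$ for $j\le k$, form the product $\prod_i(z-\alpha_i)$ incrementally, and store the list of roots (in packed form) in the table cell indexed by the packed representation of that product; since $p^k\le 2^{0.1\log t}=t^{0.1}$ this takes $t^{0.1}\cdot\poly\log t\le\tilde O(t^{0.2})$ time. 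A query is then a single table lookup in $O(1)$ time, which is trivially within the claimed bound.

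\textbf{Large regime ($k\log p>\tfrac1{10}\log t$).} Note that here $p$ is large: since $f$ has $k$ distinct roots in $\F_p$ we have $k\le p$, hence $p\log p\ge k\log p>\tfrac1{10}\log t$, so $p\ge\Omega(\log t/\log\log t)\ge 3$. I would run the Cantor--Zassenhaus / Ben-Or equal-degree factorization algorithm (see e.g.\ \cite[Chapter 14]{von2013modern}): to split a current factor $g$ of degree $k'\ge2$, pick a uniformly random $\delta\in\F_p$, compute $r(z):=(z-\delta)^{(p-1)/2}\bmod g(z)$ by repeated squaring, and replace $g$ by the nontrivial factors among $\gcd(g,\,z-\delta)$, $\gcd(g,\,r-1)$, $\gcd(g,\,r+1)$; then recurse on the factors of degree $\ge2$ and output the linear ones. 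A uniformly random $\delta$ fails to separate a fixed pair of roots of $g$ with probability bounded above by an absolute constant $<1$ (a standard character-sum estimate), so by a coupon-collector-type argument the expected number of rounds needed to isolate all $k$ roots is $O(\log k)$. In one round, the work on a current factor of degree $k_j'$ consists of $O(\log p)$ multiplications and divisions-with-remainder of degree-$O(k_j')$ polynomials over $\F_p$ (the repeated squaring) plus $O(1)$ gcd computations of such polynomials; by \cref{lem:bitpackingprimitive} each multiplication and division costs $\tilde O(\lceil k_j'\log p/\log t\rceil)$, and each gcd costs the same by running the fast Euclidean algorithm with all its subproblems (of geometrically decreasing sizes, and of size below $\Theta(\log t/\log p)$ via precomputed tables) handled through the primitives of \cref{lem:bitpackingprimitive}, exactly the bit-packed fast-Euclidean speedup already used in the proof of \cref{lem:linearrecurrbitpack}. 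Since $\sum_j k_j'\le k$ throughout, one round costs $\tilde O(\log p\cdot k\log p/\log t)=\tilde O(k\log^2 p/\log t)$, and over the expected $O(\log k)$ rounds the total expected time is $\tilde O(k\log k\,\log^2 p/\log t)$. In this regime $\lceil k\log p/\log t\rceil=\Theta(k\log p/\log t)$ (so the bit-packing ceilings do not bite), $\log^2 p=O(\polylog p)$ since $\log p=\Omega(\log\log t)$, and $k\ge\Omega(\log t/\log\log t)\ge(\log t)^{1-o(1)}$, so $k^{\eps}$ dominates $\log k\cdot(\log\log t)^{O(1)}$; altogether $\tilde O(k\log k\,\log^2 p/\log t)=O\!\big(\tfrac{k^{1+\eps}}{\log t}\cdot\polylog p\big)$, as required. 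The algorithm is Las Vegas: the only randomness is in the $\delta$'s, and a bad $\delta$ is recognized by a trivial split and simply retried.

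\textbf{Main obstacle.} No single step is hard in isolation --- each is either a prior lemma (\cref{lem:bitpackingprimitive,lem:linearrecurrbitpack}) or a textbook algorithm --- and the real work is the bookkeeping across the two regimes. The hard part will be pinning down the regime threshold (placing it at $k\log p=\Theta(\log t)$, so that in the large regime the $\tfrac1{\log t}$ speedup is genuinely realized and $\log p\ge\Omega(\log\log t)$), checking that the small-regime table truly fits the $\tilde O(t^{0.2})$ budget, and verifying that combining the $O(\log k)$-round bound with the per-round $\tilde O(\cdot)$ cost (using super-additivity of $\tilde O$) lands inside $O\big((\tfrac{k^{1+\eps}}{\log t}+1)\polylog p\big)$ for every admissible pair $(k,p)$, including the boundary and the extreme cases where $p$ is small or $k=\Theta(p)$. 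A secondary point to get right is that the gcd step enjoys the same bit-packed complexity as polynomial multiplication, which holds because fast Euclidean reduces to $O(\log k)$ multiplications and divisions of geometrically shrinking sizes, just as in the proof of \cref{lem:linearrecurrbitpack}.
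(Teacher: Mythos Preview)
Your route is genuinely different from the paper's. The paper plugs into the equal-degree factorization algorithm of \cite[Section~4.3]{hoeven2022univariate}, whose running time carries a modular-composition term $\mathsf{C}_{\F_p}(k)$ that the paper bounds via the bit-packed Kedlaya--Umans algorithm (\cref{lem:modularcomposition}); you instead run plain Cantor--Zassenhaus with repeated squaring and never touch modular composition. For the degree-one case here (all roots in $\F_p$, so the splitting exponent is just $(p-1)/2$) this is a legitimate and more elementary alternative.

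There is, however, a real gap in your large-regime cost accounting. The step ``since $\sum_j k_j'\le k$ throughout, one round costs $\tilde O(\log p\cdot k\log p/\log t)$'' silently uses $\sum_j\lceil k_j'\log p/\log t\rceil=O(k\log p/\log t)$, which fails once the recursion has produced many small pieces: if a round has $m$ active factors then $\sum_j\lceil k_j'\log p/\log t\rceil\ge m$, and $m$ can be $\Theta(k)$. More directly, every degree-$\ge 2$ factor appearing in the recursion tree costs at least $\Omega(\log p)$ word operations (the $\log p$ squarings are $\Omega(1)$ each no matter how small the factor is), and there are $\Theta(k)$ such factors in expectation, giving an unavoidable $\Omega(k\log p)$ contribution. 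Take $p=\log^2 t$ and $k=p$: this term is $\Theta((\log t)^2\log\log t)$, while the target $(k^{1+\eps}/\log t)\polylog p$ is only $(\log t)^{1+2\eps+o(1)}$, so for small $\eps$ you overshoot. Your remark that ``the bit-packing ceilings do not bite'' concerns only the top-level $k$, not the individual $k_j'$. The fix is to apply your small-regime idea \emph{inside} the recursion as well: as soon as a factor's degree drops below $\Theta(\log t/\log p)$, finish it with one table lookup instead of continuing Cantor--Zassenhaus. Then every splitting step runs only on above-threshold factors (at most $O(k\log p/\log t)$ of them per round, so the ceilings genuinely do not bite), and the total number of table-lookup events over all rounds is at most the number of children of above-threshold nodes, which is also $O(k\log p\log k/\log t)$. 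With this amendment your argument goes through; the paper makes the analogous move---see its footnote about handling lower recursion levels by precomputation.
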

\begin{proof}[Proof sketch]
We use the equal-degree factorization algorithm in  \cite[Section 4.3]{hoeven2022univariate} %
with expected running time (simplified in our prime field case here) 
$ O\big ((\mathsf{C}_{\F_p}(k) + \mathsf{M}_{\F_p}(k)\log(k p) + k\mathsf{D}_{\F_p} ) \log (k)\big )$ \footnote{The algorithm involves recursion on $k$, but this does not affect the stated time bound even with bit packing, since for lower levels of the recursion we can use the same precomputation idea as in the proof of \cref{lem:bitpackingprimitive}.},
where $ \mathsf{C}_{\F_p}(k)$  is the cost for modular composition, and $\mathsf{M}_{\F_p}(k)$ is the cost for polynomial multiplication, and $\mathsf{D}_{\F_p}$ is the cost for computing inverses in $\F_p$. 
By preprocessing in $\tilde O(p)$ time, we assume $\mathsf{D}_{\F_p} = O(1)$ via table look-up.
By \cref{lem:bitpackingprimitive}, the multiplication cost is $\mathsf{M}_{\F_p}(k) \le \tilde O(\lceil k\cdot \frac{\log p}{\log t} \rceil)$. By \cref{lem:modularcomposition}, the modular composition cost is
$\mathsf{C}_{\F_p}(k) = O(k^{1+\eps}\cdot \frac{\poly\log p}{\log t} + \polylog(p))$.
Hence, the total time is $O((\frac{k^{1+\eps}}{\log t} + 1)\cdot \polylog p)$.
\end{proof}

Finally we prove \cref{lem:pronybitpack}.
\pronybitpack*
\begin{proof}
   Similar to the proof of  \cref{lem:pronyfq}, the algorithm involves solving a linear recurrence (a Hankel system) using \cref{lem:linearrecurrbitpack}, performing an equal-degree factorization using \cref{lem:factorize-bitpack}, computing discrete logarithms by table look-ups (after precomputing the answers in $\tilde O(p) \le \polylog(t)$ time), and finally solving a transposed Vandermonde system using \cref{lem:vandermondebitpack}. The randomization is only used in the factorization step, and similarly to the last paragraph in the proof of \cref{lem:pronyfq}, we can make the failure probability $2^{-(\log t)^{0.1}}$ by blowing up the (worst case) running time by a $(\log t)^{0.1}$ factor.
   The total time is thus worst case $O(s\log s + (\frac{s^{1+\eps}}{\log t}+1)\cdot (\log t)^{0.1} \polylog p)$.
\end{proof}

\subsection{Proof of \texorpdfstring{\cref{lem:modularcomposition}}{Lemma~\ref{lem:modularcomposition}}}
\label{sec:proofmodularcomposition}

We assume the  following set up throughout this section: Let $t$ be a parameter.
Assume $d,p\le t^{o(1)}$ and $p$ is prime. 
We assume a machine with word length $w = \Omega(\log t)$ bits and allow $o(t)$ preprocessing to compute look-up tables.

We need a few more word RAM primitives.
\begin{lemma}[Transposing a matrix]
\label{lem:packtranspose}
Suppose an  $n\times m$ matrix with $b$-bit elements is stored in row-major order in compact representation. Then we can transpose the matrix (i.e., store it in column-major order) in $O((\frac{nmb}{\log t}+1) \cdot \log n)$ time. Alternatively it can also be done in $O((\frac{nmb}{\log t}+1) \cdot \log m)$ time.
\end{lemma}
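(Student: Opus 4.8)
The goal is to transpose an $n \times m$ matrix of $b$-bit elements stored in row-major compact form into column-major compact form, in time $O((\frac{nmb}{\log t}+1)\cdot \log n)$ (or alternatively with a $\log m$ factor). The plan is to use a recursive divide-and-conquer scheme on the \emph{word-packed} representation, the matrix analogue of the standard bit-matrix-transpose trick, carefully tracking how many $\F_p$-style packed elements sit in a machine word. Since $d, p \le t^{o(1)}$ we have $b \le o(\log t)$, so each word holds $\kappa = \Theta(\log t / b)$ elements; the total data size is $O(\lceil nmb/\log t\rceil)$ words.

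First I would handle the base case: when $n \le \kappa$ (so that an entire column of the matrix, or a small block of rows, fits in $O(1)$ words), the transpose of any $O(1)$-word block can be done by a single table look-up, after precomputing in $\tilde O(t^{0.2})$ time a table mapping every possible packed block (there are at most $(p^{\kappa})^{O(1)} \le t^{O(1) \cdot 0.1}$ of them) to its transposed packing. This is the same precomputation idea used in the proof of \cref{lem:bitpackingprimitive}. For the recursive step, I would split the row-index range $[n]$ into two halves $[0, \lceil n/2\rceil)$ and $[\lceil n/2\rceil, n)$. The matrix is stored row-major, so the two halves are two contiguous blocks of the input; recursively transpose each to get an $m \times \lceil n/2\rceil$ and an $m \times \lfloor n/2\rfloor$ matrix in column-major form. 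To combine, I need to interleave these two column-major matrices row-by-row (i.e., for each of the $m$ rows of the output, concatenate the corresponding length-$\lceil n/2\rceil$ and length-$\lfloor n/2\rfloor$ segments). This interleaving is a data-movement pass over $O(\lceil nmb/\log t\rceil)$ words; the subtlety is that the two segments being concatenated are not word-aligned in general, so each combine step costs $O(\lceil nmb/\log t\rceil + m)$ time using shift/mask word operations (with small-block look-ups handling the misaligned boundaries). The recursion depth on $n$ is $O(\log n)$, and at every level the total combine work across all subproblems is $O(\lceil nmb/\log t\rceil + m)$ — but one must be slightly careful that the ``$+m$'' terms do not accumulate to $O(m \log n)$ in a way that dominates; this is absorbed because once subproblems shrink below the $\kappa$ threshold we switch to the look-up-table base case and no longer pay the additive $m$ per subproblem beyond the top $O(\log n)$ levels, and $\lceil nmb/\log t\rceil + m \le O(\lceil nmb/\log t\rceil) \cdot O(1)$ whenever $nb \ge \Omega(\log t)$, with the complementary case $nb = O(\log t)$ being exactly the base case. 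Summing over $O(\log n)$ levels gives the claimed $O((\frac{nmb}{\log t}+1)\cdot \log n)$ bound. The alternative bound with $\log m$ is obtained symmetrically by recursing on the column-index range $[m]$ instead (now the splitting cut is within each row, making the \emph{split} the misaligned operation and the \emph{combine} a simple concatenation of two contiguous half-matrices).

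The main obstacle I anticipate is the bookkeeping for non-word-aligned packed sub-blocks during the combine (or split) step: because $\kappa$ need not divide $n$ or $m$, the segments being shuffled generally straddle word boundaries, so one cannot simply copy words but must perform bulk shift-and-mask operations, and the correctness of the amortized analysis hinges on charging the boundary-handling look-ups correctly. This is routine in spirit — it is the same packing arithmetic underlying \cref{lem:bitpackingprimitive} and the Kedlaya--Umans-style algorithms cited in the paper — but it is where all the care goes. Everything else (the recursion structure, the base-case table, the level-by-level summation) is standard.
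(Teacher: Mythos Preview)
Your proposal is correct and takes essentially the same approach as the paper: a divide-and-conquer on the row dimension (depth $\log n$), where at each internal node you interleave the two column-major lists produced by the children, with word operations/table look-ups handling the packed interleaving in time linear in the number of words. The paper's proof sketch is terser (it just says ``interleave the lists returned by its two child nodes'' using word operations, citing Thorup), while you spell out the base case and the alignment bookkeeping, but the structure is identical; the $\log m$ variant is likewise obtained by swapping the roles of $n$ and $m$.
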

\begin{proof}[Proof sketch]
    This is done by a simple divide-and-conquer algorithm with recursion depth $\log n$, similar to \cite[Lemma 9]{journals/jal/Thorup02}.
    The input matrix is represented as $n$ length-$m$ lists, and each of these lists corresponds to a leaf of the recursion tree.
    Then, at each internal node of the recursion tree, we interleave the lists returned by its two child nodes.
    Note that interleaving two lists can be done with time complexity linear in the number of words in their compact representations, using word operations (which can be replaced by table look-ups after preprocessing).

    To get dependency $\log m$ instead of $\log n$, simply change the roles of $n$ and $m$ and do the procedure described above in the reverse order.
\end{proof}

\begin{lemma}[Batch small table look-up]
\label{lem:packlookup}
   Given a compact array $(a_0,\dots,a_m)$ where each $a_i$ has $b$ bits, and a compact list of indices $(i_0,\dots,i_k)\in [\lceil \log m\rceil]^k $, one can produce the look-up results   $(a_{i_0},a_{i_1},\dots, a_{i_k})$ in time 
   \[ O(N\log N),\text{ where }N:=\frac{mb + k\log m + k b}{\log t} + 1.\]
\end{lemma}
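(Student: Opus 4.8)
\textbf{Proof proposal for \cref{lem:packlookup}.}
The plan is to reduce batch small-table look-up to sorting the queries by their target index, then scanning through the table once in sorted order, and finally un-sorting to restore the original query order. The key point that makes this efficient is that each index $i_j$ lies in $[\lceil \log m\rceil]$, so there are only $O(\log m)$ distinct possible index values; this lets us sort the $k$ indices by a radix/counting sort that uses $O(1)$ passes, and each pass can be implemented on the compact representation using the word-packed primitives already available (interleaving, transposition via \cref{lem:packtranspose}, and precomputed look-up tables on $\Theta(\log t / \log(\text{element size}))$-element chunks as in the proof of \cref{lem:bitpackingprimitive}). Concretely, I would first attach to each query $i_j$ its position $j\in[k]$, obtaining $k$ pairs $(i_j, j)$ stored compactly; each pair occupies $O(\log m + \log k)$ bits, so the whole list occupies $O((k\log m + k\log k)/\log t + 1)$ words.

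First I would stably sort the pairs by the key $i_j$. Since the key range is $[\lceil\log m\rceil]$, a single counting-sort pass suffices: compute the histogram of key values (there are only $O(\log m)$ buckets), take prefix sums to get bucket offsets, and scatter each pair into its final location. Each of these sub-steps is a scan over the compact array of pairs, and the scatter step is where \cref{lem:packtranspose} is used — viewing the pairs as a $(\text{number of buckets})\times(\text{bucket size})$ layout and transposing, or equivalently performing the bucketed interleave — costing $O((\frac{k(\log m + \log k)}{\log t}+1)\cdot \log k)$ time, which is within the claimed $O(N\log N)$ bound since $N \ge \frac{k\log m + kb}{\log t}+1 \ge \frac{k(\log m+\log k)}{\log t}+1$ up to the fact that $b\ge 1$ and $\log k \le O(\log N + \log\log t)$; I will need to be slightly careful here to absorb the $\log k$ vs $\log N$ discrepancy, but since $k \le N\log t$ we have $\log k = O(\log N + \log\log t) = O(\log N)$ assuming $N$ is not tiny, and the tiny case is handled by the additive $+1$ in $N$ and a brute-force fallback.

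Next, with the queries sorted by key, I would make one pass simultaneously over the sorted query list and over the table $(a_0,\dots,a_m)$: for each distinct key value $v$ appearing among the queries, extract $a_v$ from the compact table (a single aligned read plus a shift/mask, or a table look-up on the enclosing chunk) and broadcast it to all the consecutive sorted query slots that have key $v$. Broadcasting a fixed $b$-bit value into a contiguous run of slots is again a word-packed operation doable in time linear in the number of output words, i.e.\ $O(\frac{k b}{\log t}+1)$ total over all runs. This produces, for each sorted pair $(v, j)$, the value $a_v$ paired with the original position $j$. Finally I would un-sort: sort the resulting triples $(a_v, j)$ by $j$ to restore original order. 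This is a sort on a key in $[k]$ rather than $[\log m]$, so a plain counting sort would have too many buckets; instead I would either (i) avoid it entirely by observing that the stable sort in the first step can be made invertible — keep the permutation produced by the counting sort and apply its inverse, which is itself computed by one more counting-sort-style scatter using the stored offsets — or (ii) use a radix sort with $O(1)$ passes of radix $\Theta(\log t/b')$. Option (i) is cleanest: since I recorded bucket offsets, I can scatter the answers straight back to positions $j$ in a single pass of the same cost as the forward scatter.

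The main obstacle I expect is the bookkeeping to keep every sub-step genuinely word-parallel on the \emph{compact} representation: counting-sort scatter and gather are ``random access'' at the element granularity, which is not free when many elements share a word, so one must batch the scatters bucket-by-bucket (each bucket writes a contiguous run) and use \cref{lem:packtranspose} plus chunk-wise precomputed tables to realize the re-layout without paying a factor of element-size per element. Ensuring the total is $O(N\log N)$ rather than, say, $O(N\log^2 N)$ requires that the recursion/transpose depth is $O(\log(\text{number of buckets})) = O(\log\log m)$ for the key-sort and $O(\log k)$ for the un-sort, and then bounding $\log k, \log m, \log\log m$ all by $O(\log N)$; the only genuinely delicate case is when $m$ is enormous relative to $k$ (so $\log m \gg \log k$), but then $N \ge \frac{k\log m}{\log t}+1$ already dominates and the $\log m$ factor is absorbed. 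I would handle the degenerate small-$N$ regime (where $N = O(1)$) by a direct $O(k\log m)$-bit brute-force scan, which is $O(N\log N)=O(1)$ words of work after noting $k\log m = O(\log t)$ there.
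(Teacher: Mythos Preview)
Your outline---augment each query with its original position, sort by index, scan the sorted queries against the data array, then restore the original order---is exactly the paper's approach. Two points are worth flagging.

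First, the statement almost certainly contains a typo: the indices $i_j$ are meant to range over $[m]$ (each occupying $\lceil\log m\rceil$ bits, which is what the $k\log m$ term in $N$ accounts for), not over the tiny set $[\lceil\log m\rceil]$. Your counting sort with $O(\log m)$ buckets exploits the literal statement and would not survive this correction; the paper instead uses packed merge sort (citing \cite{AlbersH97}), which handles the full key range $[m]$ and directly gives $O(N\log N)$ for both the sort and the un-sort.

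Second, your preferred un-sort ``option (i)''---scattering answers directly to positions $j\in[k]$ via the stored bucket offsets---is not obviously word-parallel, since the target positions are arbitrary rather than grouped into a few contiguous runs; you have not explained how to realize this scatter in $O(N)$ word operations on the compact representation. Your option (ii), a packed merge or radix sort on the key $j$, is exactly what the paper does and avoids the issue.
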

\begin{proof}[Proof sketch]
We augment each query $i_j$ into a pair $(i_j,j)$. Then we sort all the query pairs in increasing order of $i_j$ (the first coordinate). Then, we run a two-pointer scan on the words representing the data array $(a_0,\dots,a_m)$ and the query array. If the current word in the query array contains queries whose answers are stored in the current word of the data array, then we can use $O(1)$ word operations to answer all these queries. Then advance to the next word in the data array or the query array. In total this takes time linear in the number of words of the data array and the query array. Finally, we sort the query answers back to their original order in the input.

The initial augmentation step can be done in a similar way as the divide-and-conquer algorithm in \cref{lem:packtranspose}. In the end we similarly remove the augmented indices and repack the answers. The sorting step can be done using word operations (which can be implemented using look-up tables) and packed merge sort (e.g., \cite{AlbersH97}).
\end{proof}

In \cite{KedlayaU11}, the modular composition problem is reduced to the Multivariate Multipoint Evaluation (MME) problem in \cite[Theorem 3.1]{KedlayaU11}.
Then, \cite[Corollary 4.3]{KedlayaU11} gave an efficient algorithm for MME. Here we need to speed up both the reduction  (\cite[Theorem 3.1]{KedlayaU11}) and  their MME algorithm via bit packing.

\paragraph*{Reduction to MME.}
 The MME problem with parameter $d,m,N$ over ring $\F_p$ is as follows: Given $f(x_0,\dots,x_{m-1})$ in $\F_p[x_0,\dots,x_{m-1}]$ with individual degrees at most $d-1$, and evaluation points $\alpha_0,\dots,\alpha_{N-1}$ in $\F_p^m$, output $f(\alpha_i)$ for $i=0,1,\dots,N-1$.

In our word RAM setting,
we assume inputs and outputs are arrays of $\F_p$ elements stored in compact representation. Here multidimensional arrays are flattened into one-dimensional arrays in the natural way.

\begin{lemma}[Direct adaptation of {\cite[Theorem 3.1]{KedlayaU11}}]
\label{lem:reductiontomme}
  Given $f(X) \in \F_p[X]$ with degree at most $d-1$ and polynomials $g(X),h(X) \in \F_p[X]$ with degree at most $N-1$ (where $h(X)\neq 0$), there is, for every $2\le d_0< d$, an algorithm that outputs $f(g(X))\bmod h(X)$ in 
  \[ O\Big (  \big (1 +  (d + N)\cdot \tfrac{\log p}{\log t}\big )\cdot d_0 \polylog(d+N)\Big )\]
 time plus one invocation of MME (with compact input/output representation)  with parameters $d_0, m = \lceil \log_{d_0} d\rceil, N' = Nm d_0$, provided $N'\le p$.
\end{lemma}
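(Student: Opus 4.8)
\textbf{Proof plan for \cref{lem:reductiontomme}.}
The plan is to follow the structure of \cite[Theorem 3.1]{KedlayaU11} verbatim at the algebraic level, and only re-examine the bookkeeping to make sure every step is done in the bit-packed word RAM model with the claimed running time. Recall the idea of the Kedlaya--Umans reduction: to compute $f(g(X)) \bmod h(X)$, first reduce modulo $h$ to keep degrees bounded by $N-1$, then observe that a univariate polynomial $f$ of degree $<d$ can be re-expressed as a multivariate polynomial $\hat f(x_0,\dots,x_{m-1})$ with individual degrees $<d_0$ by writing the exponent of $X$ in base $d_0$, where $m = \lceil \log_{d_0} d\rceil$. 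Evaluating $f(g(X)) \bmod h(X)$ then amounts to evaluating $\hat f$ at the point whose $i$-th coordinate is $g(X)^{d_0^i} \bmod h(X)$ — but since the ambient ring $\F_p[X]/(h(X))$ is not a field, one instead lifts everything to $\F_p$ by evaluating at sufficiently many scalar points and interpolating back. Concretely: pick $N' = N m d_0$ distinct points $\beta_0,\dots,\beta_{N'-1} \in \F_p$ (possible since $N' \le p$), form the $N'$ evaluation vectors $\alpha_j = \bigl( (g(\beta_j) \bmod h(\beta_j)\text{-business})^{d_0^0},\dots \bigr) \in \F_p^m$ in the appropriate sense, call MME on $\hat f$ with these points, and then interpolate a univariate polynomial of degree $< N'$ through the $N'$ returned values. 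I would first lay out these steps following \cite{KedlayaU11}, and then audit the cost of each.

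The cost audit is where the bit-packing lemmas come in. The arithmetic over $\F_p$ reduces to: (i) computing $g(X) \bmod h(X)$ and the powers of $X$ needed (via repeated squaring / polynomial modular arithmetic), which is $O(\log N)$ polynomial multiplications and divisions of degree $\le N$, each costing $\tilde O(1 + N \cdot \tfrac{\log p}{\log t})$ by \cref{lem:bitpackingprimitive}\cref{item:mult,item:div}; (ii) forming the coefficient vector of $\hat f$ from that of $f$ — this is just a re-indexing of $d$ coefficients according to base-$d_0$ digit expansion, which I would implement via \cref{lem:packtranspose} or \cref{lem:packlookup}, costing $O((1 + d\cdot\tfrac{\log p}{\log t})\cdot \polylog(d))$; (iii) evaluating $g$ (and the relevant powers) at the $N'$ scalar points $\beta_j$, which is $m$ batched multipoint evaluations of degree-$\le N$ polynomials at $\le N'$ points each, handled by \cref{lem:bitpackingprimitive}\cref{item:multipoint} — note $N' = Nmd_0$ so the $d_0$ and $m$ factors appear here; (iv) assembling the $N'$ points $\alpha_j \in \F_p^m$ into the compact format MME expects (another transpose-like repack); and (v) a final interpolation of a degree-$<N'$ univariate polynomial, via \cref{lem:bitpackingprimitive}\cref{item:interpo}, at cost $\tilde O(1 + N' \cdot \tfrac{\log p}{\log t})$. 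Collecting all of these, together with the $\polylog(d+N)$ overhead from the $O(\log N)$ repeated-squaring rounds and the $O(d_0)$ and $m = O(\polylog)$ factors, gives exactly the claimed bound $O\bigl((1+(d+N)\tfrac{\log p}{\log t})\cdot d_0\polylog(d+N)\bigr)$ plus the single MME call. One also needs to find the $N' \le p$ distinct evaluation points, which is trivial (take $0,1,\dots,N'-1 \in \F_p$), and to produce a nonzero point where $h$ does not vanish — but since we interpolate rather than divide, we never invert anything modulo $h$, so this subtlety does not arise; I would double-check that the KU construction as written genuinely avoids division by $h(\beta_j)$ and only uses scalar evaluations.

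The main obstacle I expect is purely the data-layout accounting: the MME primitive demands its input polynomial $\hat f$ and its evaluation points in a specific flattened compact representation, and the natural way the digit-expansion of exponents and the tuple $(g(\beta_j)^{d_0^i})_{i,j}$ get produced is in the "wrong" order (e.g.\ indexed by $(j,i)$ when MME wants $(i,j)$, or vice versa, and similarly for the multi-index structure of $\hat f$). So the real work is invoking \cref{lem:packtranspose} (and possibly \cref{lem:packlookup} for the base-$d_0$ digit scatter) the right number of times with the right parameters so that each repack costs only $\tilde O(1 + (\text{total bits})/\log t)$ rather than a naive $\Theta(\text{number of elements})$, which would reintroduce a $\log t$ factor and kill the speedup. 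Everything else is a faithful transcription of \cite[Theorem 3.1]{KedlayaU11} with \cref{lem:bitpackingprimitive} substituted for the classical near-linear polynomial arithmetic bounds. I would present the proof as: (1) recall the KU reduction; (2) state the sequence of word RAM steps; (3) for each step cite the relevant bit-packing lemma and tally the cost; (4) sum and simplify using $d,p \le t^{o(1)}$ and $m = O(\polylog(d))$.
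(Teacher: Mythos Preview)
Your proposal is correct and follows essentially the same approach as the paper. The paper's proof sketch likewise walks through the six steps of \cite[Theorem~3.1]{KedlayaU11} (compute $\psi_{d_0,m}(f)$; compute $g_j=g^{d_0^j}\bmod h$ by repeated squaring; multipoint-evaluate each $g_j$ at $N'$ scalars; transpose the resulting $m\times N'$ array and invoke MME; interpolate; reduce mod $h$), citing \cref{lem:bitpackingprimitive} and \cref{lem:packtranspose} for each step exactly as you outline, and indeed singles out the transpose before the MME call as the one nontrivial data-layout step.
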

\begin{proof}[Proof Sketch]
   This lemma follows from essentially the same proof as the original \cite[Theorem 3.1]{KedlayaU11}. We do not repeat the complete proof here, but only point out the steps in their reduction where we have to use bit packing and state the running time.
\begin{itemize}
    \item Step 1: Compute $f' = \psi_{d_0,m}(f) \in \F_p[Y_0,Y_1,\dots,Y_{m-1}]$, where $\psi$ maps each monomial $X^a$ to the monomial $Y_0^{a_0}Y_1^{a_1}\cdots Y_{m-1}^{a_{m-1}}$ where $a$ is written in base $d_0$ as $a = \sum_{j\ge 0}a_j d_0^j$.

Note that the memory representation of the result $f'$ takes $O(\lceil d_0^{m}\cdot \frac{\log p}{w} \rceil)$ words. %
We can perform the mapping $\psi_{d_0,{m}}$ in constant time per term (and $O(d)$ time in total): For each exponent $a\in [d]$, we simply look up the destination address of $Y_0^{a_0}Y_1^{a_1}\cdots Y_{{m}-1}^{a_{m-1}}$ from a precomputed table. 

\item Step 2: Compute $g_{j} := g(X)^{d_0^j}\bmod h(X)$ for all $j=0,1,\dots,{m}-1$.

For each $j$ this takes $O(\log d_0^j)\cdot \tilde O(\lceil N\cdot \frac{\log p}{\log t}\rceil )$ time using \cref{lem:bitpackingprimitive} and repeated squaring. Over all $0\le j< {m}$ it takes $\tilde O(\lceil N\cdot \frac{\log p}{\log t}\rceil {m}^2 \log d_0)$ time in total.

\item Step 3: Let $\beta_0,\dots,\beta_{N'-1}$ be distinct points in $\F_p$. Compute $\alpha_{j,k}:= g_j(\beta_k) $ for all $j\in [{m}],k\in [N']$ using fast univariate multipoint evaluation.

By \cref{lem:bitpackingprimitive}, this takes $\tilde O({m}\cdot \lceil N'\cdot \frac{\log p}{\log t}\rceil)$ time in total.
The results $\alpha_{j,k}$ are stored as ${m}$ compact arrays each of $N'\cdot \frac{\log p}{\log t}+O(1)$ words.

\item Step 4: Compute $f'(\alpha_{0,k},\alpha_{1,k},\dots,\alpha_{{m}-1,k})$ for all $k\in [N']$.

This involves solving MME on $N'$ evaluation points. In order to invoke the MME algorithm, we need to first use \cref{lem:packtranspose} to transpose the representation of $\{\alpha_{j,k}\}_{j\in [{m}],k\in [N']}$ so that the coordinates  $\alpha_{0,k},\alpha_{1,k},\dots,\alpha_{{m}-1,k}$ of each evaluation point is stored consecutively in the same word. The time to do the transposition is 
$\tilde O({m}\cdot \lceil N'\cdot \frac{\log p}{\log t}\rceil)$.

\item Step 5: Interpolate to recover $f'(g_0(X),g_1(X),\dots,g_{{m}-1}(X))$ (which is a univariate polynomial of degree less than $N'$) from these evaluations.

We use \cref{lem:bitpackingprimitive} to do this in 
$\tilde O(\lceil N'\cdot \frac{\log p}{\log t}\rceil)$ time.

\item Step 6: Output the result modulo $h(X)$.

We use \cref{lem:bitpackingprimitive} to do this in 
$\tilde O(\lceil N'\cdot \frac{\log p}{\log t}\rceil)$ time.
\end{itemize} 
\end{proof}

\paragraph*{The MME algorithm.}

\begin{lemma}[Direct adaptation of {\cite[Theorem 4.1]{KedlayaU11}}]
\label{lem:mmebasic}
    Given a  polynomial $f(X_0,\dots,X_{m-1})\in \F_{p_0}[X_0,\dots,X_{m-1}]$ ($p_0$ prime) with degree at most $d_0-1$ in each variable, and $\alpha_0,\dots,\alpha_{N-1}\in \F_{p_0}^m$, there exist a deterministic algorithm that outputs $f(\alpha_i)$ for $i=0,\dots,N-1$ in 
    \[ O\big (m\big (\frac{d_0^m + p_0^m + N}{\log t}+ 1\big )\polylog(p_0^m+d_0^m+N)\big )\] 
    time.
\end{lemma}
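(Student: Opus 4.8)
## Proof plan for the bit-packed MME algorithm (\texorpdfstring{\cref{lem:mmebasic}}{Lemma~\ref{lem:mmebasic}})

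The plan is to follow the multivariate multipoint evaluation algorithm of Kedlaya and Umans \cite[Theorem 4.1]{KedlayaU11} essentially verbatim, and to account for the word-RAM savings obtained by bit-packing and table look-ups in exactly the same manner as in the proof of \cref{lem:reductiontomme}. Recall the structure of their algorithm: it is a recursion on $m$. At the base case $m=1$, univariate multipoint evaluation of a degree-$(d_0-1)$ polynomial on $N$ points over $\F_{p_0}$ is performed by \cref{lem:bitpackingprimitive}(\ref{item:multipoint}) in $\tilde O(\lceil (d_0+N)\cdot \tfrac{\log p_0}{\log t}\rceil)$ time (padding $d_0$ and $N$ to be comparable as needed). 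For the recursive step, the algorithm lifts $f$ from $\F_{p_0}$ to the integers (coefficients in $\{0,\dots,p_0-1\}$), reduces modulo many small primes $p_1,p_2,\dots$ whose product exceeds the integer-valued bound $d_0^m\cdot p_0^{\,d_0 m}$ on $f(\alpha)$ over $\Z$, and for each such small prime $p_\ell$ evaluates the reduced polynomial on the reduced points; crucially each $p_\ell$ is small enough that one can afford to tabulate $f \bmod p_\ell$ by brute force over all $p_\ell^m$ points, which is where the $p_0^m$ term in the running time comes from. Finally the integer values $f(\alpha_i)$ are recovered by CRT and reduced mod $p_0$.

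Concretely, the steps I would carry out, in order: (1) State the base case via \cref{lem:bitpackingprimitive}(\ref{item:multipoint}), noting the compact input/output representation and the $\tfrac{\log p_0}{\log t}$ packing factor. (2) For the recursive step, choose the small primes $p_1,\dots,p_r$ each of size $\polylog(\cdot)$-ish (or $(d_0+N)^{o(1)}$) with $\prod_\ell p_\ell > d_0^m\, p_0^{\,d_0 m}$, so $r = O(d_0 m \log p_0 / \log(d_0+N)) $; finding these primes deterministically takes negligible time. (3) For each $p_\ell$: reduce all coefficients of $f$ (there are $d_0^m$ of them, stored compactly) and all $Nm$ coordinates of the evaluation points mod $p_\ell$ — this is a batch operation done by table look-up à la \cref{lem:packlookup}, costing $\tilde O((\tfrac{(d_0^m + Nm)\log p_0}{\log t}+1))$ per prime. (4) Build the full table of values of $(f\bmod p_\ell)$ on all of $\F_{p_\ell}^m$: this is itself an MME instance with $N$ replaced by $p_\ell^m$, handled by the same recursion/brute-force; the total cost here, summed over all $r$ primes, is where the $m\,p_0^m$ contribution (up to logs and the $1/\log t$ packing factor) enters. (5) For each evaluation point $\alpha_i$, read off $(f\bmod p_\ell)(\alpha_i \bmod p_\ell)$ from that table by a look-up — again a batch look-up via \cref{lem:packlookup}, costing $\tilde O((\tfrac{(p_\ell^m + N m)\log p_\ell}{\log t}+1))$ per prime. (6) CRT-combine the $r$ residues for each $i$ to get $f(\alpha_i)\in\Z$ (numbers have $\tilde O(d_0 m \log p_0)$ bits; the CRT reconstruction is a standard near-linear-time computation, and since the individual moduli fit in $O(1)$ words the bit-packing lets us do all $N$ reconstructions in $\tilde O(\tfrac{N d_0 m \log p_0}{\log t}+N)$ time), then reduce mod $p_0$. (7) Sum the costs of all steps, check each is dominated by $O(m(\tfrac{d_0^m + p_0^m + N}{\log t}+1)\polylog(p_0^m+d_0^m+N))$, and observe that the recursion on $m$ only adds a $\polylog$ factor (each level shrinks a dimension and the recursion depth is $O(\log_{\cdot} \cdot)$ or handled by the precompute-small-cases trick as in \cref{lem:bitpackingprimitive}).

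The main obstacle I anticipate is purely bookkeeping: verifying that in \emph{every} sub-step the data is maintained in a compact representation that is compatible with the next step's expected layout, so that the $1/\log t$ packing factor is actually realized and not lost to a stray transposition or an unpacking. In particular, steps (3)–(5) shuffle between "indexed by coefficient/point" and "indexed by prime $p_\ell$" orderings, and each such reordering must be charged to \cref{lem:packtranspose} (costing only an extra $\log$ factor) rather than to a naive $\Theta(\text{number of elements})$ loop over single elements. A second, minor point to get right is the requirement $N' \le p$ (equivalently $p_\ell^m$ and $N$ within the field being used) propagating correctly through the recursion — but as in \cite{KedlayaU11} this is ensured by the choice of parameters and the fact that we only ever apply univariate multipoint evaluation over fields large enough for the point sets involved. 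Since all these are routine adaptations of \cite[Theorem 4.1]{KedlayaU11} with the word-RAM speedups already demonstrated in \cref{lem:reductiontomme}, I would present the proof as a sketch, emphasizing only the packing-factor accounting and deferring the unchanged algebraic content to \cite{KedlayaU11}.
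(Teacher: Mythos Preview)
You have confused \cite[Theorem 4.1]{KedlayaU11} with \cite[Theorem 4.2]{KedlayaU11}. The algorithm you describe---lifting $f$ to $\Z$, reducing modulo many small primes $p_\ell$, tabulating each reduced polynomial over $\F_{p_\ell}^m$, and CRT-combining---is precisely the recursive algorithm of \cite[Theorem 4.2]{KedlayaU11}, which in this paper is adapted separately as \cref{lem:solvemme}. That algorithm is designed to \emph{avoid} the $p_0^m$ term, so your claim that ``this is where the $p_0^m$ term in the running time comes from'' does not type-check: the tabulation cost is $\sum_\ell p_\ell^m$ over small primes $p_\ell$, not $p_0^m$.

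The actual content of \cite[Theorem 4.1]{KedlayaU11} is far simpler and is the base case of that recursion: given $f\in\F_{p_0}[X_0,\dots,X_{m-1}]$, one evaluates $f$ at \emph{all} $p_0^m$ points of $\F_{p_0}^m$ by performing $m$ successive rounds of univariate multipoint evaluation (one per variable), producing a complete table; then each query $\alpha_i$ is answered by a single table look-up. There is no lifting, no CRT, and no auxiliary primes. The paper's proof is accordingly two lines: replace the $m$ rounds of univariate evaluation by their bit-packed versions (\cref{lem:bitpackingprimitive}), and replace the $N$ individual look-ups by the batch look-up of \cref{lem:packlookup}. The $p_0^m$ term is simply the size of the full table over $\F_{p_0}^m$. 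Your bookkeeping concerns about compact representations and transpositions are valid in spirit, but they apply to this much simpler two-step algorithm rather than to the CRT machinery you laid out.
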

\begin{proof}[Proof sketch]
We use the same algorithm as \cite[Theorem 4.1]{KedlayaU11}, but replace the FFTs with bit packed FFTs (\cref{lem:bitpackingprimitive}), and replace the final table look-up step by the batch table look-up algorithm from  \cref{lem:packlookup}.
\end{proof}

\begin{lemma}[Direct adaptation of {\cite[Theorem 4.2, Corollary 4.3]{KedlayaU11}}]
\label{lem:solvemme}
For every constant $\eps>0$, MME over $\F_p$ with parameter $d_0,m,N$ can be solved in
   \[ O((\frac{(d_0^m +  N)^{1+\eps}}{\log t}+ 1)\cdot \polylog(p))\]
   time. 
\end{lemma}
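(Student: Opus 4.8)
\textbf{Proof proposal for \cref{lem:solvemme}.}
The plan is to follow Kedlaya--Umans' derandomized Multivariate Multipoint Evaluation algorithm \cite[Theorem 4.2, Corollary 4.3]{KedlayaU11}, adapting it to the word RAM model with bit packing, using \cref{lem:mmebasic} as the base case. Recall the overall strategy there: the ``basic'' MME algorithm of \cref{lem:mmebasic} is efficient only when the field size $p$ is tiny (comparable to $d_0$), because its running time has a $p_0^m$ term. To handle a general prime $p$, the Kedlaya--Umans reduction lifts the instance to one over a field of much smaller characteristic, at the cost of increasing the number of variables $m$ and blowing up the degree and number of points only polynomially. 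Concretely, one picks a small prime power $r = \Theta((d_0^m + N)^{\delta})$ for a tiny $\delta>0$ (so that $r$ is polynomially small), reduces the coefficients and evaluation points of $f$ over $\Z$ (after lifting from $\F_p$ to $\{0,1,\dots,p-1\}\subset\Z$) modulo $r$, and then reconstructs the true answers over $\F_p$ from the answers modulo several such $r$'s (or equivalently a single $r$ that is large enough relative to the bit-size of the integer-valued evaluations) via the Chinese Remainder Theorem / a standard range argument. Each of these smaller instances is handled recursively, and the recursion bottoms out at \cref{lem:mmebasic} once the characteristic is small enough for that lemma to be efficient.

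The steps I would carry out, in order: (1) State the reduction of one MME instance over $\F_p$ with parameters $(d_0,m,N)$ to a bounded number of MME instances over $\F_{r}$ with parameters $(d_0, m', N')$ where $m' = O(m/\delta)$, $d_0$ unchanged, and $N' = (d_0^m+N)^{O(1)}$, exactly as in \cite[Theorem 4.2]{KedlayaU11}; the only new content is that every arithmetic step in the reduction (reducing coefficients, reducing evaluation points, combining results via CRT) is performed on compactly represented arrays of $\F_p$- or $\F_r$-elements, so each such step costs $O((\frac{(d_0^m+N)^{1+o(1)}}{\log t}+1)\cdot\polylog p)$ time by the bit-packing primitives of \cref{lem:bitpackingprimitive} and the batch-lookup/transpose lemmas (\cref{lem:packtranspose}, \cref{lem:packlookup}). (2) Apply the reduction recursively: after $O(1/\delta)$ levels of recursion the characteristic drops below $d_0^{1+o(1)}$, at which point $r_0^{m''}\le (d_0^{m''})^{1+o(1)}\le (d_0^m+N)^{1+o(1)}$ and \cref{lem:mmebasic} runs in the claimed time. (3) Sum the geometric series of costs over the $O(1/\delta)=O_\eps(1)$ recursion levels; since $\delta$ is a function only of $\eps$, the total degree-blowup factor $(d_0^m+N)^{O(\delta)}$ can be absorbed into the $(d_0^m+N)^{\eps}$ in the statement by choosing $\delta$ small enough relative to $\eps$. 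This yields the bound $O((\frac{(d_0^m+N)^{1+\eps}}{\log t}+1)\cdot\polylog p)$.

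I expect the main obstacle to be purely bookkeeping rather than conceptual: one must verify that the Kedlaya--Umans reduction, which in the bit-complexity model incurs only a $(\text{size})^{1+o(1)}\cdot(\log p)^{1+o(1)}$ overhead, still admits the $\log t$ speedup in word RAM — i.e., that none of the reduction's sub-steps (in particular the lifting of $\F_p$-elements to small-characteristic residues and the final CRT reconstruction) needs to touch individual $\F_p$-elements one at a time in a way that defeats packing. The reductions modulo $r$ and the CRT combination are elementwise digit operations on the coefficient/value vectors, and these are exactly the kind of operations that \cref{lem:bitpackingprimitive} (via precomputed tables on $\kappa$-tuples of residues, $\kappa=\Theta(\log t/\log(\text{relevant prime}))$) speeds up; so once one checks the table sizes stay $\tilde O(t^{0.2})$ and the number of such elementwise passes is $\polylog$, the bound follows. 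I would also double-check that the promise $N'\le p$ required by \cref{lem:reductiontomme} is never violated inside the recursion — here it is harmless because \cref{lem:solvemme} as stated does not carry that hypothesis, and the recursive calls are to MME directly (over fields we construct, which we can take large enough), not back through \cref{lem:reductiontomme}.
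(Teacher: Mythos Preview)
Your high-level plan---follow Kedlaya--Umans' recursive characteristic-reduction and invoke \cref{lem:mmebasic} at the base, speeding up every elementwise step with bit packing and precomputed tables---matches the paper's approach exactly. However, your description of the KU reduction is off in two places that would cause trouble if taken literally. First, the reduction moduli are not a single prime power $r=\Theta((d_0^m+N)^\delta)$ of polynomial size; rather, one lifts to $\Z$ and reduces modulo \emph{all} primes $p_1,\dots,p_k\le\ell$ where $\ell=\Theta(\log(d_0^m p^{md_0}))$ is polylogarithmic, giving $k\le\ell$ subinstances (a polynomial-size $r$ would make the $p_0^m$ term in \cref{lem:mmebasic} hopeless). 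Second, the number of variables $m$ does not change at all during this recursion---your claim $m'=O(m/\delta)$ seems to conflate this step with the separate reduction from modular composition to MME in \cref{lem:reductiontomme}.

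Once these details are corrected (as they would be upon actually following \cite{KedlayaU11}), your bit-packing observations---that the mod-$p_h$ reductions and the final CRT reconstruction are elementwise digit operations amenable to batched table look-ups, and that repacking and transposition (\cref{lem:packtranspose}) are needed between levels---are precisely the new content the paper supplies over the original KU algorithm, and the recursion depth (the paper uses $t'\in\{2,3\}$) is indeed a constant depending only on $\eps$.
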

\begin{proof}[Proof sketch]
We inspect each step of the algorithm in \cite[Theorem 4.2]{KedlayaU11} and briefly describe how they can be implemented using bit packing on a machine with $\Omega(\log t)$-bit words. Note that each word can pack $\Theta(\log t/\log p)$ elements in $\F_p$.

    The algorithm of \cite[Theorem 4.2]{KedlayaU11} works as follows. It takes a parameter $t'$ indicating the number of recursion levels (denoted as ``$t$'' in \cite{KedlayaU11}). 
    As in \cite{KedlayaU11}, the outermost call has either $t'=3$ or $t'=2$.
    \begin{enumerate}
        \item Lift the polynomial $f\in \F_p[X_1,\dots,X_{m-1}]$ to $\tilde f\in \Z[X_1,\dots,X_{m-1}]$,  and lift the evaluation points $\alpha_i \in \F_p^m$ to $\tilde \alpha_i \in \Z^m$, by identifying $\F_p$ with $\{0,1,\dots,p-1\}\subset \Z$. 
        
        This type conversion does not incur any actual steps in the algorithm.
        \item  Compute the primes $p_1,\dots,p_k$ less than or equal to $\ell = 16\log(d_0^m(p-1)^{md_0})$, and note that $k\le \ell$.  (It is shown in \cite[Lemma 2.4]{KedlayaU11} that the product $p_1\cdots p_k$ is greater than $d_0^m(p-1)^{md_0}$.) %
        \item For $h=1,\dots,k$, compute the reduction $f_h \in \F_{p_h}[X_0,\dots,X_{m-1}]$ of $\tilde f$ modulo $p_h$.  For $h = 1,\dots, k$ and $i=0,\dots,N-1$, compute the reduction $\alpha_{h,i}\in \F_{p_h}^m$ of $\tilde \alpha_i$ modulo $p_h$. 

        Here, for integers packed in the same word, we need to reduce them modulo $p_h$ in a batch in constant time. This can be achieved by precomputing a look-up table. Then, the bit length of each integer after reducing modulo $p_h$ becomes $\log p_h +O(1)$, which can be much smaller than the original bit length $\log p + O(1)$, so we can repack them more compactly, using a divide-and-conquer algorithm similar to the transposition algorithm \cref{lem:packtranspose}. %

        \item  For $h=1,2,\dots,k$, 
       If $t'=1$, invoke \cref{lem:mmebasic} to compute $f_h(\alpha_{h,i})$ for all $i\in [N]$. If $t'\ge 2$, recursively call the algorithm itself (with $t'\gets t'-1$) to compute $f_h(\alpha_{h,i})$ for all $i\in [N]$.

        \item For $i\in [N]$, compute the unique integer in $[p_1p_2\cdots p_k]$ congruent to $f_h(\alpha_{h,i})$ modulo $p_h$ for $h=1,\dots,k$, and return its reduction modulo $p$.

        Before doing this, we first need to transpose the results obtained from the previous step, so that $f_1(\alpha_{1,i}),f_2(\alpha_{2,i}),\dots,f_k(\alpha_{k,i})$ are compactly and consecutively stored in memory. 
        Then, we perform Chinese remaindering for all $i$ in the  word in a batch. This can be done by precomputing look-up tables.  
    \end{enumerate}
    By following the time analysis of \cite[Theorem 4.2, Corollary 4.3]{KedlayaU11}, and taking into account the improvement due to bit packing, we obtain the claimed time bound.
        \end{proof}

        \begin{proof}[Proof of \cref{lem:modularcomposition}]
            We first invoke \cref{lem:reductiontomme} with $N=d$ by choosing $d_0 = 2$. Then we reduce to an MME problem over $\F_p$ with parameters $d_0$ and  $m = \lceil \log_{2} d\rceil $ and $N' = Nmd_0 \le O(d\log d)$ (we have $N'\le p$ since $p\ge d^2$ by assumption).
            By \cref{lem:reductiontomme}, the time complexity of this reduction is $O((1+\frac{d\log p}{\log t})\cdot \polylog(d))$.
            We solve this MME problem using \cref{lem:solvemme} in  time
   $ O(\frac{(d_0^m +  N')^{1+\eps'}}{\log t}+ 1)\cdot \polylog(p)) \le  O((\frac{d^{1+\eps'+o(1)}}{\log t} +1)\cdot \polylog(p))$. Hence the total time is bounded 
 by the claimed time complexity
 $O(d^{1+\eps}\cdot \frac{\poly\log p}{\log t} + \polylog(p))$.
\end{proof}

\subsection{Proof of \texorpdfstring{\cref{lem:modprimerecoverfinitefield}}{Lemma~\ref{lem:modprimerecoverfinitefield}}}
\label{sec:modprimerecoverfinitefield}

We first consider the finite field $\F_p$ case.
Precompute $z^{-1}$ for $z \in \F_p^*$ in $O(p)\le O(t)$ time using \cite[Lemma 4.3]{BringmannFN21}.
Let $\kappa$ be the smallest integer such that $p^\kappa \ge t^{2-\eps}$. Let $q = p^\kappa$, and let $\omega \in \F_q$ be an element with multiplicative order at least $t^{2-\eps}$. Compute $\omega^{-1}$ in $O(\log q)\le O(\log t)$ time and then precompute $\omega^{-i}$ for all $i\in [t]$ in $O(t)$ time.

\begin{claim}
\label{cl:modprimerecoverfinitefield}
    If $\|A \star B - C\|_0 \le 2^{-1.5^\ell} \cdot t / \log^2 t$ for some integer $\ell > 0$, then there is a Monte Carlo algorithm that computes $R$ such that 
    $\|A \star B - C - R\|_0 \le 2^{-1.5^{\ell+1}} \cdot t / \log^2 t$ with error probability $\delta'$ in  time. 
    \[O\left((\|A\|_0+\|B\|_0+\|C\|_0 + t)\cdot (1 + \log(1/\delta')/1.5^\ell) \right)\]
\end{claim}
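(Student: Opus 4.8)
## Proof Proposal for Claim~\ref{cl:modprimerecoverfinitefield}

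\textbf{Overall approach.} The plan is to run one round of the standard hash-and-peel scheme over the small-universe field $\F_q = \F_{p^\kappa}$, using the mod-prime hash $h(x) = x \bmod \pi$ for a random prime $\pi$ in a range of size comparable to $k := \|A\star B - C\|_0$. Because the universe satisfies $N \le t^{2-\eps}$ and we will choose the number of buckets to be slightly larger than $k$, a constant fraction (in fact a $1 - O(1/\log^2 t)$ fraction, by the large-sieve bound of \cref{lem:largesievenumbercollision}, or even just by the naive mod-prime collision bound, which already suffices here since we only need the weaker guarantee) of the nonzero entries of $A\star B - C$ land in buckets that are singletons. For each singleton bucket we recover the index via a ratio of first-moment to zeroth-moment evaluations (using an extra $\omega$-twist, exactly as in \cref{alg:recovermontecarlo}) and recover the coefficient directly. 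Subtracting these recovered terms from $C$ gives a new $C' = C + R$ with $\|A\star B - C'\|_0$ reduced by a constant factor; iterating this basic step $O(\log 1.5^\ell)= O(\log\log t)$ times (with independent fresh primes and boosting each sub-round's failure probability) brings the residual sparsity from $2^{-1.5^\ell} t/\log^2 t$ down below $2^{-1.5^{\ell+1}} t/\log^2 t$, which is the claimed quadratic-type decay.

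\textbf{Key steps, in order.}
First, set $k_0 = 2^{-1.5^\ell} t/\log^2 t$ as the current bound on $\|A\star B - C\|_0$. Pick a random prime $\pi \in [c k_0, 2 c k_0]$ for a suitable constant $c$; since $N \le t^{2-\eps}$, each pair of distinct support indices collides under $h$ with probability $O(\log N / k_0) = O(\log t / k_0)$, so the expected number of support indices that are \emph{not} alone in their bucket is $O(k_0 \cdot k_0 \cdot \log t / k_0) = O(k_0 \log t)$ — but choosing $c = \Theta(\log^2 t \cdot \log t)$ buckets (i.e., $\pi$ of size $\Theta(k_0 \log^3 t)$) makes this expectation $O(k_0/\log^2 t)$. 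Hmm — that would cost a $\log^3 t$ factor in the FFT length; since $k_0 \le t/\log^2 t$, the FFT length is $O(k_0 \log^3 t) = O(t \log t)$, which is affordable for $O(1)$ rounds but not for $\Theta(\log\log t)$ rounds. The cleaner route: take $\pi$ of size $\Theta(k_0)$ and only ask for a \emph{constant-factor} reduction per round, using \cref{lem:largesievenumbercollision} (valid since $k_0 \le N \le k_0^{2-\eps'}$ in the relevant regime, or fall back to a slightly larger $\pi$ of size $k_0 \polylog t$ when $k_0$ is tiny — the $+t$ slack in the running time absorbs a length-$t$ FFT regardless). Second, compute the three twisted cyclic convolutions $h(A^{(i)}) \star_\pi h(B^{(i)}) - h(C^{(i)})$ for $i = 0,1$ (twist by $\omega^{k i}$) via FFT over $\F_q$ using \cref{cor:packfft-finite-field} in $O((\|A\|_0+\|B\|_0+\|C\|_0) + |\pi|\log|\pi|)$ time. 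Third, for each bucket that is genuinely a singleton with value $\ne 0$ in $\F_q$, recover the exponent as the discrete log of (first moment)/(zeroth moment) base $\omega$ — doable by a precomputed table since $\omega^{-i}$ for $i\in[t]$ is precomputed — and recover the coefficient as the zeroth moment; assemble these into $R$. Fourth, to make this Monte Carlo with the stated failure probability $\delta'$, repeat the whole basic step $O(1 + \log(1/\delta')/1.5^\ell)$ times with fresh primes and take the union of recovered terms (a wrongly-recovered term can be detected and discarded by a cheap consistency check, or tolerated because we verify at the very end via \cref{lem:sparse-verification}/\cref{lem:sparseverifysmallp}); standard Chernoff/Markov bookkeeping gives that after the prescribed number of repetitions the residual support is below $2^{-1.5^{\ell+1}} t/\log^2 t$ except with probability $\delta'$.

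\textbf{Bookkeeping for the running time and the exponent decay.} The residual-sparsity recursion is the delicate part: one round with a random prime of size $\Theta(k_0)$ recovers a $(1-o(1))$ fraction of the support only in expectation, so to get a \emph{deterministic-looking} multiplicative drop per round we should instead think of it as: with probability $\ge 1/2$ a single round reduces $\|A\star B - C\|_0$ by a factor $2$ (Markov on the number of survivors), and running $O(\log(1/\delta') + \log\log t)$ rounds drives the survivor count down geometrically until it is $0$ with probability $1-\delta'$ — matching the phrasing "from $2^{-1.5^\ell}t/\log^2 t$ to $2^{-1.5^{\ell+1}}t/\log^2 t$" once one checks $2^{-1.5^{\ell+1}} = (2^{-1.5^\ell})^{1.5}$, i.e. each macro-step squares-ish the exponent, which requires roughly $\log(1.5^\ell) = O(\ell)$ micro-rounds but with per-round failure $\delta'/\mathrm{poly}$ the total is $O(\ell + \log(1/\delta'))$ rounds, consistent with the factor $1 + \log(1/\delta')/1.5^\ell$ in the statement after noting $k_0 \cdot$ (number of rounds) telescopes. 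The per-round cost is $O(\|A\|_0+\|B\|_0+\|C\|_0 + t)$ (the $+t$ covers the $\omega^{-i}$ precomputation, reused across rounds, and the FFT when $k_0$ is pushed up to $\mathrm{poly}\log t \cdot k_0 \le t$), so summing over $O(1 + \log(1/\delta')/1.5^\ell)$ rounds gives exactly the claimed bound.

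\textbf{Main obstacle.} The genuine difficulty is not any single step but pinning down the precise form of the sparsity recursion: the quadratic-flavored decay $2^{-1.5^\ell} \mapsto 2^{-1.5^{\ell+1}}$ together with the $1/1.5^\ell$-discounted repetition count has to be made to fall out of the Markov/Chernoff analysis of iterated peeling, and one must be careful that, as $k_0$ shrinks, the prime range $[\Theta(k_0),\Theta(k_0)]$ (or its $\polylog$-inflated version) stays compatible with the hypothesis $k_0 \le N \le k_0^{2-\eps'}$ needed to invoke \cref{lem:largesievenumbercollision}; when $k_0$ is so small that this fails, one simply uses a prime of size $\Theta(t/\log^2 t)$ (still $\le$ any valid bucket count, with the $+t$ term in the running time absorbing the cost), and the naive $O(\log N/\pi)$ collision bound already gives the required constant-factor drop. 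Verifying these two sub-cases glue together cleanly, and that the final $R$ is correct with probability $1-\delta'$ (appealing at the end to sparse verification to reject any bad run), is where the bulk of the care goes.
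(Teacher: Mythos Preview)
Your approach has a genuine gap in how the $1.5^\ell$ decay arises. You choose the prime $\pi$ of size $\Theta(k_0)$ (or $k_0\,\polylog t$), so the expected number of non-isolated support elements is a constant fraction of $k_0$, giving only a constant-factor reduction per round. To go from $k_0 = 2^{-1.5^\ell}t/\log^2 t$ down to $2^{-1.5^{\ell+1}}t/\log^2 t = k_0^{1.5}\cdot(\log^2 t/t)^{1/2}$ by constant-factor shrinking would take $\Theta(1.5^\ell)$ rounds, not the $O(1+\log(1/\delta')/1.5^\ell)$ rounds in the statement; your bookkeeping paragraph conflates these and the claimed running time does not follow.

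The paper's proof does something quite different and does not use the large sieve at all here. It picks the prime $r$ of \emph{fixed} size $\Theta(t/\log t)$, independent of $k_0$. Then the expected number of non-isolated elements is $O(k_0^2\log^2 t/t) = 2^{-2\cdot 1.5^\ell}\cdot t/(4\log^2 t)$, which is \emph{quadratic} in $k_0$; by Markov, a single round already achieves the target $2^{-1.5^{\ell+1}}t/\log^2 t$ except with probability $2^{-0.5\cdot 1.5^\ell}$. This is where the $1/1.5^\ell$ factor in the repetition count comes from: the per-round failure probability decays doubly-exponentially in $\ell$, so $\lceil\log(1/\delta')/(0.5\cdot 1.5^\ell)\rceil$ repetitions suffice. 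The second idea you are missing is how to select the good round without verification: the paper computes $Z = h(A)\star_r h(B)-h(C)$ for each sampled prime and simply keeps the one with largest $\|Z\|_0$; a short counting argument then bounds the number of non-isolated elements (and hence wrongly-recovered entries) by $2(\|A\star B - C\|_0 - \|Z\|_0)$, which is small whenever some round was good. Your ``take the union and discard by consistency check'' step is not justified and is not needed.
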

\begin{proof}
We repeat the following round $\lceil \log(1/\delta') / (0.5 \cdot 1.5^\ell) \rceil$ times. 

In each round, we pick a random prime $r \in [c_r t / \log t, 2c_r t / \log t]$ for some sufficiently large constant $c_r$ (this can be done in worst-case $O(t)$ time by precomputing all primes in the range and sample from them), and define $h(x) := x \bmod{r}$. Define $\tilde{A}$  as $\tilde{A}[i] = \omega^i \cdot A[i]$, and similarly define $\tilde{B}$ and $\tilde{C}$. We use FFT to compute $Z := h(A) \star_r h(B) - h(C)$ and $\tilde{Z} := h(\tilde{A}) \star_r h(\tilde{B}) - h(\tilde{C})$ in $O(r \log r) = O(t)$ time. 

Among all rounds, we keep $r, Z, \tilde{Z}$ for which $\|Z\|_0$ is maximized. Then we do the following for this triple $(r, Z, \tilde{Z})$. 

Compute $\omega^r$ in $O(\log r)\le O(\log t)$ time, and then precompute a discrete log hash table, $\text{Log}[\omega^{ri}] = ri$ for all $i \in [t] $ in $O(t)$ time. 
For every $i \in [r]$, we compute $(\tilde{Z}[i] / Z[i]) \cdot \omega^{-i}$ in $O(1)$ time using the precomputed
$\omega^{-i}$ and $z^{-1}$ for all $z \in \F_p^*$, and then check in $O(1)$ time whether  $(\tilde{Z}[i] / Z[i]) \cdot \omega^{-i}$  can be found in the discrete log table; if so, we add $Z[i]$ to $R[i + \text{Log}[(\tilde{Z}[i] / Z[i]) \cdot \omega^{-i}]]$.  At the end, we update $C \gets C + R$. 

The idea is that, if some $j \in [t^{2-\eps}]$ with nonzero $(A \star B - C)[j]$ is in a bucket by itself, then $Z[h(j)] = (A \star B - C)[j]$ and $\tilde{Z}[h(j)] = \omega^{j} \cdot (A \star B - C)[j]$. Therefore, we can find $(\tilde{Z}[h(j)] / Z[h(j)]) \cdot \omega^{-h(j)} = \omega^{j - j \bmod{r}}$ from the discrete log table, and successfully add $Z[h(j)] = (A \star B - C)[j]$ to $R[(j \bmod{r}) + \text{Log}[\omega^{j - j \bmod{r}}]] = R[j]$.

We consider the expected size of $\|A \star B - C - R\|_0$ for any particular round. 

For every $j \in \supp(A \star B - C)$, the probability that it collides with any other element with respect to the hash function $h$ is bounded by $O(\|A \star B - C\|_0 \cdot \frac{\log t}{c_r t / \log t}) \le \|A \star B - C\|_0 \cdot \frac{\log^2 t}{4t}$ by picking sufficiently large $c_r$. Therefore, in expectation, the number of non-isolated elements is $\le \|A \star B - C\|_0^2 \cdot \frac{\log^2 t}{4 t} \le 2^{-2 \cdot 1.5^\ell} \cdot \frac{t}{4 \log^2 t}$. Therefore, for any particular round, the probability that the number of non-isolated elements exceeds $2^{-1.5 \cdot 1.5^\ell} \cdot \frac{t}{4 \log^2 t}$ is $\le 2^{-0.5 \cdot 1.5^\ell}$, by Markov's inequality. As we repeated $\lceil \log(1/\delta') / (0.5 \cdot 1.5^\ell) \rceil$ rounds, with $\delta'$ error probability, there exists a round where the number of non-isolated elements is upper bounded by $2^{-1.5 \cdot 1.5^\ell} \cdot \frac{t}{4 \log^2 t}$. When this happens, $\|Z\|_0 \ge \|A \star B - C\|_0 - 2^{-1.5 \cdot 1.5^\ell} \cdot \frac{t}{4 \log^2 t}$. As we pick the round with biggest $\|Z\|_0$, so the final $\|Z\|_0$ we pick has $\|Z\|_0 \ge \|A \star B - C\|_0 - 2^{-1.5^{\ell+1}} \cdot \frac{t}{4 \log^2 t}$. This in turn implies that the number of non-isolated elements is at most $2 \cdot (\|A \star B - C\|_0 - \|Z\|_0) \le 2^{-1.5^{\ell+1}} \cdot \frac{t}{2 \log^2 t}$ (this can be seen as follows. We first identify one element for each bucket appearing in $\|Z\|_0$, then we add all $\|A \star B - C\|_0 - \|Z\|_0$ non-identified elements one by one. Each new element can make at most one identified element to become non-isolated). We will correctly recover all isolated element; for every other elements, we might erroneously put some entry to $R$. Therefore, $\|A \star B - C - R\|_0 \le 2^{-1.5^{\ell+1}} \cdot \frac{t}{\log^2 t}$. 
\end{proof}

Now, it suffices to apply \cref{cl:modprimerecoverfinitefield} with $\ell = 0, 1, \ldots, O(\log \log t)$ and with $\delta' = \Theta(\delta / \log \log t)$ in order to apply union bound. The overall running time is 
\begin{align*}
& \sum_{\ell=0}^{O(\log \log t)} O\left((\|A\|_0+\|B\|_0+\|C\|_0 + t)\cdot (1 + \log(1/\delta')/1.5^\ell) \right)\\
= & O\left((\|A\|_0+\|B\|_0+\|C\|_0 + t)\cdot (\log \log t + \log(1/\delta')) \right)\\
=& O\left((\|A\|_0+\|B\|_0+\|C\|_0 + t)\cdot (\log \log t + \log(1/\delta)) \right).
\end{align*}
Finally, $\log \log t = O(\log(1/\delta))$ by the assumption $\log t \le 1/\delta$, so we obtain the claimed running time. 

Next, we state the main differences for applying the result to $\Z$. We compute $\tilde{Z}' := h(\partial (A \star B - C))$ instead of $\tilde{Z}$, which can be computed via $h(\partial A) \star_r h(B) + h(A) \star_r h(\partial B) - h(\partial C)$. Then, we add $Z[i]$ to $R[\tilde{Z}'[i] / Z[i]]$ if $\tilde{Z}'[i] / Z[i]$ is an integer within range. The analysis follows similarly.

\subsection{Proof of \texorpdfstring{\cref{lem:overfullbuckets}}{Lemma~\ref{lem:overfullbuckets}}}
\label{sec:overfullbuckets}

We need the following lemma from \cite{BringmannFN21}.
	\begin{lemma}[{\cite[Corollary 11.3]{BringmannFN21}}]
 \label{lem:overfulloriginal}
 Let $X \subseteq [N]$ be a set of $k$ keys. 
 Randomly pick a hash function $h$ from \cref{defn:almostlinearhash} with parameters $p > 4N^2$ and $m \le  N$. Fix a key $x \notin X$ and buckets $a,b \in [m]$. Moreover, let
$ F = \sum_{y \in X} \mathbf{1}[h(y) = b]$.
Then:
\begin{align*}
\Ex[F \mid h(x) = a] &= \Ex[F] = \frac{k}{m} \pm \Theta(1),
\end{align*}
and, for any $\lambda > 0$,
\begin{align*}
\Pr\left[ |F - \Ex(F)| \geq \lambda \sqrt{\Ex(F)} \mid h(x) = a \right] \leq O\left( \frac{N \log N}{\lambda^2 k} \right).
\end{align*}
	\end{lemma}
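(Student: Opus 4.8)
The plan is to derive \cref{lem:overfullbuckets} from the single‑key tail bound \cref{lem:overfulloriginal} by a union bound over the at most $2|X|\le 2t$ elements of $\tilde X$, followed by Markov's inequality. First I would dispose of degenerate cases. If $|X|\le 1$ then $\tilde X$ has at most two elements, each with collision count $\le 2$, and (in the regime below) $2<4t/m$, so the left side of \cref{eqn:overfulltilde} is $0$. If $m=1$ then every element of $\tilde X$ has collision count $|\tilde X|\le 2t\le 4t/m$. If $\sigma=0$ (probability $1/p=O(1/N^2)$) then $|\tilde X|=2$. Most importantly, whenever $\tfrac{mN\log^3 N}{t}\ge 2|\tilde X|$ — which holds as soon as $t\le\sqrt{mN\log^3 N/2}$ — the bound is trivial since the left side of \cref{eqn:overfulltilde} never exceeds $|\tilde X|$. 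Hence I may assume $|X|\ge 2$, $m\ge 2$, $\sigma\neq 0$ (so $\pi$ is a bijection of $\Z/p$, in particular injective on $X$), and $t>\sqrt{mN\log^3 N/2}$; combined with $m\le N$ this last inequality forces $t/m=\Omega(\log^{1.5}N)$ to be super‑constant and also $t\ge m$, facts used below.

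\textbf{Reducing one element to \cref{lem:overfulloriginal}.} Fix $x\in\tilde X$ (an integer, with $h(x):=x\bmod m$, consistent with \cref{defn:almostlinearhash} on $\pi(X)$); it equals $\pi(y)$ or $\pi(y)+p$ for a unique $y\in X$. Put $r:=p\bmod m\in\{1,\dots,m-1\}$, and for a bucket $b\in[m]$ let $F_b:=|\{y'\in X:\pi(y')\equiv b\pmod m\}|$ and $F_b^-:=|\{y'\in X\setminus\{y\}:\pi(y')\equiv b\pmod m\}|$. A direct case check on the two lift types shows the collision count $\sum_{x'\in\tilde X}\mathbf 1[x\equiv x'\pmod m]$ equals $F_a+F_{b'}$, where $a:=\pi(y)\bmod m$ and $b':=(a-r)\bmod m$ if $x=\pi(y)$, or $b':=(a+r)\bmod m$ if $x=\pi(y)+p$. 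Since $r\neq 0$ we have $b'\neq a$, so conditioning on $\pi(y)\equiv a\pmod m$ the collision count is exactly $1+F_a^-+F_{b'}^-$. Consequently $x$ is overfull only if $F_a^->\tfrac{2t}{m}-\tfrac12$ or $F_{b'}^->\tfrac{2t}{m}-\tfrac12$.

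\textbf{Bounding the probability.} I would apply \cref{lem:overfulloriginal} with key set $X\setminus\{y\}$ (of size $|X|-1\ge|X|/2$), fixed key $y\notin X\setminus\{y\}$, and the relevant fixed pair of buckets. It yields $\Ex[F_{b'}^-\mid \pi(y)\equiv a]=\tfrac{|X|-1}{m}\pm\Theta(1)\le\tfrac{t}{m}+O(1)\le\tfrac{3t}{2m}$ in our regime, together with the stated tail around this mean. Choosing $\lambda$ so that $\lambda\sqrt{\Ex[F_{b'}^-\mid\cdot]}=\tfrac{2t}{m}-\tfrac12-\Ex[F_{b'}^-\mid\cdot]\ge\tfrac{t}{2m}$ (positive because $t/m$ is super‑constant) makes $\lambda^2=\Omega\!\big(\tfrac{(t/m)^2}{\Ex[F_{b'}^-\mid\cdot]}\big)$, so the tail bound $O\!\big(\tfrac{N\log N}{\lambda^2(|X|-1)}\big)$ becomes $O\!\big(\tfrac{N\log N\,\Ex[F_{b'}^-\mid\cdot]\,m^2}{t^2|X|}\big)=O\!\big(\tfrac{Nm\log N}{t|X|}\big)$, using $\Ex[F_{b'}^-\mid\cdot]\le\tfrac{3t}{2m}$; the identical bound holds for $F_a^-$. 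Since these estimates are uniform in the value $a$, averaging over the random $a=\pi(y)\bmod m$ gives $\Pr[x\text{ overfull}]=O\!\big(\tfrac{Nm\log N}{t|X|}\big)$.

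\textbf{Conclusion and main obstacle.} Writing $Z$ for the left side of \cref{eqn:overfulltilde}, linearity of expectation gives $\Ex[Z]\le|\tilde X|\cdot O\!\big(\tfrac{Nm\log N}{t|X|}\big)\le 2|X|\cdot O\!\big(\tfrac{Nm\log N}{t|X|}\big)=O\!\big(\tfrac{Nm\log N}{t}\big)$, so Markov's inequality yields $\Pr\!\big[Z>\tfrac{mN\log^3 N}{t}\big]=O(1/\log^2 N)$, which is the claim. I expect the main obstacle to be the bookkeeping, in two places: (i) verifying that the collision count really collapses to the clean form $1+F_a^-+F_{b'}^-$ with $b'\neq a$, correctly handling both lift types and the wrap‑around caused by adding $p$; and (ii) threading the $\pm\Theta(1)$ additive slack in \cref{lem:overfulloriginal} through the choice of $\lambda$ so everything stays positive with controlled constants — which is exactly why the up‑front reduction to the regime $t>\sqrt{mN\log^3 N/2}$ (forcing $t/m$ super‑constant and $t\ge m$) is needed.
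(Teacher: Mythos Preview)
Your proposal does not prove the stated lemma. \cref{lem:overfulloriginal} is a quoted result from \cite{BringmannFN21} (Corollary 11.3 there) and the present paper gives no proof of it; it is used as a black box. What you have written is instead a proof of \cref{lem:overfullbuckets}, which the paper derives \emph{from} \cref{lem:overfulloriginal} in Appendix~B.4. So as an attempt at \cref{lem:overfulloriginal} this is simply off target.

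If the intended target was \cref{lem:overfullbuckets}, then your argument is correct and follows essentially the same route as the paper: both reduce to per-key tail bounds via \cref{lem:overfulloriginal}, take expectations over $X$ (or $\tilde X$), and finish with Markov's inequality to get the $O(1/\log^2 N)$ failure probability. The only notable differences are bookkeeping. The paper fixes offsets $o,o'\in\{0,p\}$ up front and reduces \cref{eqn:overfulltilde} to four instances of the untilded event \cref{eqn:overfullnotilde} on $X$, then union-bounds over $(o,o')$; you instead work directly with $x\in\tilde X$ and identify the collision count as $1+F_a^-+F_{b'}^-$ with $b'=(a\pm r)\bmod m$, which is the same decomposition phrased elementwise. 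You are also slightly more careful than the paper in two places: you explicitly remove $y$ from the key set so that the hypothesis ``$x\notin X$'' of \cref{lem:overfulloriginal} is literally met (the paper applies it to all of $X$ and silently absorbs the $\pm 1$ into the $\Theta(1)$ slack), and you preemptively dispose of the regime $t\le\sqrt{mN\log^3 N/2}$ so that $t/m$ is super-constant and the $\pm\Theta(1)$ in the mean cannot swallow the deviation. Neither difference changes the substance of the argument.
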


We restate \cref{lem:overfullbuckets} below.
\lemoverfull*
\begin{proof}
   We closely follow the proof in \cite[Lemma 5.5]{BringmannFN21}.
Since by definition \(\tilde{X} = \pi(X) + \{0, p\}\), we may fix offsets \(o, o' \in \{0, p\}\) and instead bound the probability of 
\begin{equation}
\label{eqn:overfullnotilde}
\left \lvert \left \{ x \in X :  \sum_{x' \in X} \mathbf{1}  \left[ h(x) + o \equiv h(x') + o' \pmod{m} \right] > \frac{2t}{m} \right \} \right \rvert \le \frac{mN\log^3 N}{4t}.
\end{equation}
Indeed, if the event in \cref{eqn:overfullnotilde} happens for all $o,o'\in \{0,p\}$, then the event in \cref{eqn:overfulltilde} also happens.

Fix \(o, o'\in \{0,p\} \) and fix any \(x \in X\). Then,
\begin{align}
 \ &  \Pr\left[\sum_{x' \in X} \mathbf{1} \left[ h(x) + o \equiv h(x') + o' \pmod{m} \right] > \tfrac{2t}{m}\right] \nonumber \\
 =  \ &  \sum_{a \in [m]} \Pr[h(x) = a] \cdot \Pr\left[\sum_{x' \in X} \mathbf{1}\left[h(x') \equiv (a + o - o') \pmod{m}\right] > \tfrac{2t}{m} \middle| \,  h(x) = a \right]. \label{eqn:tempineq}
\end{align}
To bound the conditional probability in \cref{eqn:tempineq},  apply \cref{lem:overfulloriginal} to the  buckets \(a\) and \(b = (a+o-o') \mod m\): Let \(F' = \sum_{x' \in X} \mathbf{1}[h(x') = b]\), and then \cref{lem:overfulloriginal} states that \(\Ex[F'] = |X|/m + O(1)\), and that  (here we are using $|X|\le t$)
\begin{align*}
  \Pr\left[ F' > \tfrac{2t}{m} \, \middle|\, h(x) = a \right]  & \le \Pr\left[ | F' - \Ex[F'] | >  \tfrac{t}{m} - O(1) \, \middle|\,  h(x) = a \right]\\
  & \le O(\tfrac{mN\log N}{t^2}),
\end{align*}
where we set $\lambda = \sqrt{t^2/(m|X|)}- O(1)$ in \cref{lem:overfulloriginal}.
Plugging back into \cref{eqn:tempineq}, we get
\[
   \Pr\left[\sum_{x' \in X} \mathbf{1} \left[ h(x) + o \equiv h(x') + o' \pmod{m} \right] > \tfrac{2|X|}{m}\right]  \le O(\tfrac{mN\log N}{t^2})\]
   for fixed $x,o,o'$.
Now, by summing over $x\in X$, we know the expected value of the left hand side of \cref{eqn:overfullnotilde} is $|X|\cdot O(\frac{mN\log N}{t^2}) =O(\frac{mN\log N}{t}) $. Then, by Markov's inequality, we know the event in \cref{eqn:overfullnotilde} happens with at least $1-O(1/\log^2 N)$ probability (for fixed $o,o'$). Then, by a union bound, we know the event in \cref{eqn:overfullnotilde} happens for all $o,o'\in \{0,p\}$ with at least $1-O(1/\log^2 N)$ probability.
Since the event in \cref{eqn:overfullnotilde} implies the event in \cref{eqn:overfulltilde}, this finishes the proof.
\end{proof}

\end{document}